\title{Current Algorithms for Detecting Subgraphs of Bounded Treewidth are Probably Optimal}
\titlerunning{Detecting Subgraphs of Bounded Treewidth}
\author{Karl Bringmann}{Saarland University, Saarland Informatics Campus, Germany \and Max Planck Institute for Informatics, Saarland Informatics Campus, Germany}{bringmann@cs.uni-saarland.de}{}{This work is part of the project TIPEA that has received funding from the European Research Council (ERC) under the European Unions Horizon 2020 research and innovation programme (grant agreement No.\ 850979).}
\author{Jasper Slusallek}{Saarland University, Saarland Informatics Campus, Germany}{s8jaslus@stud.uni-saarland.de}{}{}
\authorrunning{K.\,Bringmann and J.\,Slusallek}
\keywords{subgraph isomorphism, treewidth, fine-grained complexity, hyperclique}
\DeclareMathOperator{\ParSol}{ParSol}
\DeclareMathOperator{\ParSolE}{ParSolE}
\DeclareMathOperator{\ValConf}{ValConf}
\DeclareMathOperator{\MM}{MM}
\DeclareMathOperator{\tw}{tw}
\DeclareMathOperator{\pw}{pw}
\DeclareMathOperator{\poly}{poly}
\DeclareMathOperator{\polylog}{polylog}
\DeclareMathOperator{\parent}{par}
\DeclareMathOperator{\children}{children}
\DeclareMathOperator{\width}{width}
\DeclareMathOperator{\help}{help}
\DeclareMathOperator{\MP}{MP}
\DeclareMathOperator{\TWL}{TWL}
\def\hidetodo{1}     
\def\islongversion{1}  
\newcommand{\shortonly}[1]{\ifx\islongversion\undefined {#1}\fi}
\newcommand{\longonly}[1]{\ifx\isshortversion\undefined {#1}\fi}
\newcommand{\todo}[1]{\ifx\hidetodo\undefined {\color{red}[TODO: #1]} \fi}
\newcommand{\karl}[1]{\ifx\hidetodo\undefined {\color{orange}[Karl: #1]} \fi}
\tikzstyle{tikzfig}=[baseline=-0.25em,scale=0.5]
\tikzstyle{none}=[inner sep=0mm]
\newcommand{\tikzfig}[1]{%
{\tikzstyle{every picture}=[tikzfig]
\IfFileExists{#1.tikz}
  {\input{#1.tikz}}
  {%
    \IfFileExists{./figures/#1.tikz}
      {\input{./figures/#1.tikz}}
      {\tikz[baseline=-0.5em]{\node[draw=red,font=\color{red},fill=red!10!white] {\textit{#1}};}}%
  }}%
}
\tikzstyle{every loop}=[]
\tikzstyle{small-node}=[fill=white, draw=black, shape=circle]
\tikzstyle{k-wise-node}=[fill=white, draw=black, shape=circle, scale=1.9]
\tikzstyle{k-wise-border}=[fill=white, draw=black, shape=circle, scale=1.55]
\tikzstyle{text node}=[text width=4cm, scale=0.7]
\tikzstyle{tree-edge}=[-]
\tikzstyle{area-border}=[-, draw=red, thick]
\tikzstyle{new edge style 0}=[->, draw={black!30!green}, thick]
\tikzstyle{area-pointer}=[->, draw=red, thick]
\tikzstyle{new edge style 1}=[-, draw=gray, dashed]
\tikzstyle{test}=[fill={rgb,255: red,128; green,0; blue,128}, draw=black, shape=circle, minimum size=100]
\tikzstyle{H-node}=[fill={rgb,255: red,191; green,0; blue,64}, draw={rgb,255: red,191; green,0; blue,64}, shape=circle, scale=0.8, very thick]
\tikzstyle{Hollow H}=[draw=blue, shape=circle, minimum size=50pt, very thick]
\tikzstyle{G-node}=[fill=blue, draw=blue, shape=circle, scale=0.7]
\tikzstyle{big invisible}=[scale=2pt]
\tikzstyle{H-edge}=[-, draw={rgb,255: red,191; green,0; blue,64}, line width=1.5pt]
\tikzstyle{dashedline}=[-, dashed]
\tikzstyle{G-edge}=[-, draw=blue, line width=.7pt]
\tikzstyle{pointed}=[->, thick]
\tikzstyle{thick edge}=[-, thick]
\tikzstyle{implies}=[{|->}, ultra thick]
\tikzstyle{center node}=[draw=black, shape=circle, minimum size=100pt, thick, dotted]
\tikzstyle{Rnodes}=[draw=red, shape=circle, fill=red, scale=0.6]
\tikzstyle{Rnodes 2}=[fill={rgb,255: red,255; green,179; blue,0}, draw={rgb,255: red,255; green,179; blue,0}, shape=circle, scale=0.6]
\tikzstyle{text node}=[text width=4.5cm, scale=0.7]
\tikzstyle{Hollow Hnode}=[draw=blue, shape=circle, minimum size=50pt]
\tikzstyle{Gnode}=[fill={rgb,255: red,85; green,0; blue,255}, draw={rgb,255: red,85; green,0; blue,255}, shape=circle, scale=0.5]
\tikzstyle{Gnode R}=[fill=cyan, draw=cyan, shape=circle, scale=0.5]
\tikzstyle{light hollow Hnode}=[draw=cyan, shape=circle, minimum size=50pt]
\tikzstyle{R-edge}=[-, draw={rgb,255: red,255; green,179; blue,0}, very thick]
\tikzstyle{Gedge}=[-, draw=cyan, thick]
\tikzstyle{description}=[->, densely dotted]
\tikzstyle{dashedline2}=[-, dashed]
\tikzstyle{Cedge}=[-, draw=red, very thick]
\tikzstyle{dashed Redge}=[-, draw={rgb,255: red,255; green,179; blue,0}, dashed]
\begin{document}

\maketitle


\begin{abstract}

  The Subgraph Isomorphism problem is of considerable importance in computer science. We examine the problem when the pattern graph $H$ is of bounded treewidth, as occurs in a variety of applications. This problem has a well-known algorithm via color-coding that runs in time $O(n^{\tw(H)+1})$ [Alon, Yuster, Zwick'95], where $n$ is the number of vertices of the host graph $G$. While there are pattern graphs known for which Subgraph Isomorphism can be solved in an improved running time of $O(n^{\tw(H)+1-\varepsilon})$ or even faster (e.g. for $k$-cliques), it is not known whether such improvements are possible for all patterns. The only known lower bound rules out time $n^{o(\tw(H) / \log(\tw(H)))}$ for any class of patterns of unbounded treewidth assuming the Exponential Time Hypothesis [Marx'07].

In this paper, we demonstrate the existence of maximally hard pattern graphs $H$ that require time $n^{\tw(H)+1-o(1)}$. Specifically, under the Strong Exponential Time Hypothesis (SETH), a standard assumption from fine-grained complexity theory, we prove the following asymptotic statement for large treewidth $t$:

\begin{center}
For any $\varepsilon > 0$ there exists $t \ge 3$ and a pattern graph $H$ of treewidth $t$ such that \\ Subgraph Isomorphism on pattern $H$ has no algorithm running in time $O(n^{t+1-\varepsilon})$.
\end{center}

\noindent
Under the more recent 3-uniform Hyperclique hypothesis, we even obtain tight lower bounds for each specific treewidth $t \ge 3$:

\begin{center}
For any $t \ge 3$ there exists a pattern graph $H$ of treewidth $t$ such that for any $\varepsilon>0$ \\ Subgraph Isomorphism on pattern $H$ has no algorithm running in time $O(n^{t+1-\varepsilon})$.
\end{center}

\noindent
In addition to these main results, we explore \textbf{(1)} colored and uncolored problem variants (and why they are equivalent for most cases), \textbf{(2)} Subgraph Isomorphism for $\tw < 3$, \textbf{(3)} Subgraph Isomorphism parameterized by pathwidth instead of treewidth, and \textbf{(4)} a weighted variant that we call Exact Weight Subgraph Isomorphism, for which we examine pseudo-polynomial time algorithms. For many of these settings we obtain similarly tight upper and lower bounds.
\end{abstract}

\clearpage
\section{Introduction}

The \textsc{Subgraph Isomorphism} problem is commonly defined as follows: Given a graph $H$ on $k$ vertices, and a graph $G$ on $n$ vertices, is there a (not necessarily induced) subgraph of G which is isomorphic to $H$?

\textsc{Subgraph Isomorphism} generalizes many problems of independent interest, such as the \textsc{$k$-path} and \textsc{$k$-clique} problems. The problem is also of considerable interest when $H$ is less structured, with applications to discovering patterns in graphs that, for example, arise from biological processes such as gene transcription or food networks, from social interaction, from electronic circuits, from neural networks~\cite{milo2002network}, from chemical compounds~\cite{sussenguth1965graph} or from control flow in programs~\cite{bruschi2006detecting}. In some fields, the problem is sometimes referred to as the search for ``network motifs'', i.e. subgraphs that appear more often than would normally be expected.

In its general form, the problem is NP-hard. We are interested in solving the problem when the pattern graph $H$ is ``tree-like'' or ``path-like'', i.e. when the treewidth $\tw(H)$ or the pathwidth $\pw(H)$ of $H$ is bounded. Such pattern graphs of low treewidth or pathwidth often arise in practice when considering the structure of chemical compounds, the control flow of programs, syntactic relations in natural language, or many other graphs from practical applications (see e.g. \cite{bodlaender1994tourist, bodlaender2005discovering}). On the theoretical side, many restricted classes of graphs have bounded treewidth, see also~\cite{bodlaender1998partial}. Restricting NP-hard problems to graphs of bounded tree- and pathwidth often yields polynomial-time algorithms, and \textsc{Subgraph Isomorphism} is no exception. Most notably, the classic Color-Coding algorithm by Alon, Yuster and Zwick~\cite{alon1995color} solves the problem by a Las Vegas algorithm in expected time $O(n^{\tw(H)+1}g(k))$, or by a deterministic algorithm in time $\widetilde{O}(n^{\tw(H)+1}g(k))$, where $g$ is a computable function (and \(\widetilde{O}(\cdot)\) is used to suppress factors that are polylogarithmic in the input size). In other words, if the pattern graph $H$ has treewidth bounded by some constant, the problem is fixed-parameter tractable when parameterized by $k$. The Color-Coding algorithm is also relevant for practical purposes: Recently, it has received an efficient implementation, which tested well against state-of-the-art programs for \textsc{Subgraph Isomorphism}~\cite{malik2019efficient}.

Many researchers wondered whether the Color-Coding algorithm can be improved. This question has been studied in many different directions, including the following: 
\begin{itemize}
\item Marx~\cite{marx2007can} showed that no algorithm solves the \textsc{Subgraph Isomorphism} problem in time $O(n^{o(\tw(H) / \log(\tw(H)))}g(k))$ unless the Exponential Time Hypothesis (ETH) fails, and this even holds when restricted to any class of pattern graphs of unbounded treewidth. 
\item A series of work has improved the computable function $g$, see e.g.~\cite{amini2009counting,fomin2012faster,pratt2019waring}. 
\item For many special pattern graphs faster algorithms have been found; the most famous example is the $k$-Clique problem, which can be solved in time $O(n^{k\omega/3}g(k))$~\cite{nevsetvril1985complexity}\footnote{This bound assumes that $k$ is divisible by 3; there are similar results for general \(k\)~\cite{eisenbrand2004complexity}.}.
\end{itemize}

In this paper, we use a different angle to approach the question whether Color-Coding can be improved. We ask whether there exist ``hard'' pattern graphs:
\begin{center}
  \emph{Do there exist pattern graphs $H$ for which \textsc{Subgraph Isomorphism} \\ cannot be
  solved in time $O(n^{\tw(H)+1-\varepsilon})$ for any constant $\varepsilon>0$?}
\end{center}
To the best of our knowledge, this question has not been previously studied.
As our main result, we (conditionally) give a positive answer to this question. 
More precisely, we show that for every $t \ge 3$ there exists a pattern graph $H$ with $\tw(H)=t$ for which \textsc{Subgraph Isomorphism} cannot be solved in time $O(n^{\tw(H)+1-\varepsilon})$ for any constant $\varepsilon > 0$, assuming the 3-uniform $k$-Hyperclique hypothesis; see Section~\ref{sec:results} for details on this hypothesis. 
We also show a slightly weaker statement under the Strong Exponential Time Hypothesis. 
This conditionally shows that the Color-Coding algorithm by Alon, Yuster and Zwick cannot be significantly improved while still working for all pattern graphs.

For the case of $\tw(H) = 2$, an algorithm of Curticapean, Dell and Marx~\cite{curticapean2017homomorphisms} can be adapted such that it solves \textsc{Subgraph Isomorphism} in time $\widetilde{O}(n^\omega g(k))$. We unify this with the algorithm of Alon, Yuster and Zwick by showing that both time bounds can be achieved within a simple framework. In particular, we use so-called $k$-wise matrix products, an operation which was introduced in its general form in~\cite{gnang2011spectral} and studied further in~\cite{lincoln2018tight}. 

We also study the \textsc{Subgraph Isomorphism} problem when the pathwidth of $H$ is bounded, and specialize our framework to show slight improvements in running time compared to the case of bounded treewidth. Here, we use rectangular matrix products, for which faster-than-naive algorithms are known~\cite{gall2018improved}.

In further results, our focus is on the weighted variant \textsc{Exact Weight Subgraph Isomorphism}, where the subgraph must also have total weight equal to zero. In this work, we consider both the node-weighted and the edge-weighted variant of this problem, for both bounded treewidth and bounded pathwidth, allowing the maximum absolute weight~$W$ to appear in the running time (i.e.\ the pseudopolynomial-time setting). We show that our algorithms for the unweighted case can be adapted to the weighted case. We also speed up the weighted algorithms by using the fact that fast convolution (or rather, sumset computation), a folklore technique that lies at the core of many fast algorithms for problems with weights (e.g.~\cite{chan2015clustered, bringmann2017near, kunnemann2017fine, koiliaris2017faster, bremner2006necklaces, bateni2018fast} and~\cite[exercise 30.1.7]{cormen2009introduction}), can easily be adapted to work with rectangular matrices and tensors. We furthermore show tight conditional lower bounds in many cases. Last but not least, we show that our algorithms can be slightly improved for the case of node-weighted instances for which either the pathwidth of $H$ is bounded, or $H$ is a tree. These algorithms also rely on fast rectangular matrix products.

\subsection{Related Work}
\label{sec:related-work}

Additional to the conditional lower bound of $O(n^{o(\tw(H) / \log(\tw(H)))}g(k))$ by Marx~\cite{marx2007can}, there is an unconditional lower bound of $O(n^{\kappa(H)})$ for the size of any $\mathrm{AC}^0$-circuit, for some graph parameter $\kappa(H) = \Omega(\tw(H)/\log(\tw(H)))$, which holds even when considering the average case~\cite{li2017ac}. Interestingly, the factor of $1/\log(\tw(H))$ does not seem to be an artefact of the proof: There is an $\mathrm{AC}^0$-circuit of size $O(n^{o(\tw(H))}g(k))$ that solves the problem on certain unbounded-treewidth classes in the average case~\cite{rosenthal2019beating}.

In a different direction, Dalirrooyfard et al.~\cite{dalirrooyfard2019graph} design various reductions from \(k\)-Clique to \textsc{Subgraph Isomorphism}, among other results. They also present results on the detection of induced subgraphs (we focus on non-induced subgraphs).

For the weighted variant of \textsc{Subgraph Isomorphism}, lower bounds under the \textsc{$k$-Sum} hypothesis for stars, paths, cycles and some other pattern graphs  
are presented in~\cite{abboud2013exact}. 
Edge-weighted triangle detection has a by-now classic $O(n^{3-\varepsilon})$ lower bound under both the \textsc{3Sum} hypothesis and the \textsc{APSP} hypothesis~\cite{abboud2018matching}. On the other hand, in~\cite{abboud2014losing}, it is proven that finding node-weighted $k$-cliques can be done almost as quickly as finding unweighted $k$-cliques. We are not aware of any results on the \textsc{Exact Weight Subgraph Isomorphism} problem when $W$ may appear in the running time (i.e.\ a pseudopolynomial-time algorithm), which is what we focus on here.

In our work, we pose no restrictions on the host graph $G$.
For an extensive classification of \textsc{Subgraph Isomorphism} with respect to various parameters of both $G$ and $H$, see~\cite{marx2013everything}. 






\subsection{Hardness Assumptions}

The most standard hypothesis from fine-grained complexity theory is the Strong Exponential Time Hypothesis (SETH)~\cite{impagliazzo2001complexity}, which postulates that for any $\varepsilon > 0$ there exists $k \geq 3$ such that $k$-\textsc{Sat} on $n$ variables cannot be solved in time $O^*(2^{(1-\varepsilon)n})$. 

More recent is the Hyperclique hypothesis. In the \(h\)-uniform \(k\)-\textsc{Hyperclique} problem, for a given \(h\)-uniform hypergraph we want to decide whether there exist a set of \(k\) vertices such that every size-\(h\) subset of these vertices forms a hyperedge. For any $k>3$, the 3-uniform $k$-Hyperclique hypothesis postulates that this problem cannot be solved in time $O(n^{k-\varepsilon})$ for any $\varepsilon>0$. This hypothesis has also been formulated when replacing 3 with any $h < k$, getting progressively more believable with larger $h$. For a more in-depth discussion of the believability of this hypothesis we refer to~\cite[Section 7]{lincoln2018tight}.

Note that we will also use the $h$-uniform Hyperclique hypothesis for various $h$, which is simply the conjecture that the $h$-uniform $k$-Hyperclique hypothesis is true for all $k>h$.

Related to this is the $k$-Clique conjecture, which postulates that the $k$-Clique problem (which is the 2-uniform $k$-Hyperclique problem) cannot be solved in time $O(n^{\omega k /3 - \varepsilon})$ for any constant $\varepsilon > 0$, where $\omega < 2.373$~\cite{le2014powers} is the exponent of matrix multiplication.

\subsection{Our Results}
\label{sec:results}

\subparagraph*{Unweighted Subgraph Isomorphism with Bounded Treewidth}
First, consider the case of the unweighted \textsc{Subgraph Isomorphism} problem for bounded-treewidth pattern graphs \(H\). As was said, and as we will re-prove with a unified algorithm later, this problem has an algorithm running in time \(\widetilde{O}(n^{\tw(H)+1})\) for \(\tw(H) \geq 3\). We show tight conditional lower bounds by proving the following obstacles to faster algorithms, which use the \(k\)-clique hypothesis and the \(h\)-uniform \(k\)-hyperclique hypothesis. Note that when we say, for some \(x\), that an algorithm has running time \(O(n^{x-\varepsilon})\), what we mean is that the algorithm runs in time \(O(n^{x-\varepsilon})\) for some constant \(\varepsilon > 0\).

\begin{theorem}\label{corollary-lower-bound-detection}
  The following statements are true. 
  \begin{enumerate}
  \item For each \(t\geq 3\) and each \(3\leq h \leq t\), there exists a connected, bipartite pattern graph \(\mathcal{H}_{t,h}\) of treewidth \(t\) such that there cannot be an algorithm solving the \textsc{Subgraph Isomorphism} problem on pattern graph \(\mathcal{H}_{t,h}\) in time \(O(n^{t+1-\varepsilon})\) unless the \(h\)-uniform \(h(t+1)\)-hyperclique hypothesis fails.
  \item For each \(t\geq 2\) and each \(h \geq 3\), there exists a connected, bipartite pattern graph \(\mathcal{H}_{t,h}\) of treewidth \(t\) such that there cannot be an algorithm solving the \textsc{Subgraph Isomorphism} problem on pattern graph \(\mathcal{H}_{t,h}\) in time \(O(n^{t-\varepsilon})\) unless the \(h\)-uniform \(ht\)-hyperclique hypothesis fails.
  \item For each \(t \geq 2\), there exists a connected, bipartite pattern graph \(\mathcal{H}_{t}\) of treewidth \(t\) such that there cannot be an algorithm solving the \textsc{Subgraph Isomorphism} problem on pattern graph \(\mathcal{H}_{t}\) in time \(O(n^{(t+1)\omega/3-\varepsilon})\) unless the \((t+1)\)-\textsc{Clique} hypothesis fails.
  \end{enumerate}
\end{theorem}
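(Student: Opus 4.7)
My plan is to prove each of the three lower bounds by an efficient reduction from the stated hypothesis to \textsc{Subgraph Isomorphism}, where the pattern graph is engineered so that its treewidth equals exactly~$t$ and the host graph is compact enough to make the reduction tight.

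For Part~(1), I would reduce from $h$-uniform $h(t+1)$-\textsc{Hyperclique}. Take the pattern $\mathcal{H}_{t,h}$ to be the bipartite graph with ``group'' vertices $v_1,\ldots,v_{t+1}$ on one side and, on the other side, an auxiliary vertex $u_R$ for every $h$-subset $R$ of ``positions'' $\{(i,j) : i \in [t+1], j \in [h]\}$, where $u_R$ is joined to each $v_i$ such that $R$ contains a position in group~$i$. A tree decomposition with a central bag $\{v_1,\ldots,v_{t+1}\}$ and, for each $u_R$, a leaf bag $\{u_R\} \cup \{v_i : R \text{ meets group } i\}$ witnesses $\tw(\mathcal{H}_{t,h}) \le t$; the constraint $h \le t$ is precisely what keeps each leaf bag of size at most $t+1$. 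The reverse inequality follows since $\mathcal{H}_{t,h}$ contains a subdivision of $K_{t+1}$ via length-$2$ paths of the form $v_i$--$u_R$--$v_j$.

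Given an $h$-uniform hypergraph $\mathcal{G}$ on $N$ vertices, I would build the host graph $G$ as follows. For each $i \in [t+1]$ and each tuple $(a_{i,1},\ldots,a_{i,h}) \in [N]^h$, introduce a copy of $v_i$ labeled by this tuple. For each auxiliary $u_R$, each hyperedge $e$ of $\mathcal{G}$, and each bijection $f \colon R \to e$, introduce a copy of $u_R$ labeled by $(e,f)$. Connect a copy of $u_R$ labeled $(e,f)$ to a copy of $v_i$ labeled $(a_{i,1},\ldots,a_{i,h})$ exactly when $f((i,j)) = a_{i,j}$ for every position $(i,j) \in R$ whose first coordinate equals~$i$. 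Subgraph copies of $\mathcal{H}_{t,h}$ in $G$ now correspond to $h(t+1)$-hypercliques in $\mathcal{G}$: the labels on the group vertices expose $h(t+1)$ vertices of $\mathcal{G}$, and the labels on the auxiliary vertices certify that all $\binom{h(t+1)}{h}$ size-$h$ subsets are hyperedges. Because $G$ has $n = O(N^h)$ vertices, an algorithm running in $O(n^{t+1-\varepsilon})$ would solve the hyperclique instance in $O(N^{h(t+1)-h\varepsilon})$ time, contradicting the hypothesis.

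Part~(2) follows by the analogous construction using only $t$ group vertices, reducing from $h$-uniform $ht$-\textsc{Hyperclique} to obtain the weaker bound $n^{t-\varepsilon}$ that applies for all $h \ge 3$; in parameter regimes where the resulting pattern has treewidth below~$t$, I would attach via a short path (keeping the graph connected and bipartite) a $K_{t+1}$-subdivision that raises the treewidth to exactly~$t$ without disturbing the reduction. For Part~(3), I would reduce from $(t+1)$-\textsc{Clique}: take $\mathcal{H}_t$ to be a suitable bipartite subdivision of $K_{t+1}$ and a colored host graph in which each color class has only $O(N)$ vertices, obtained by sharing edge gadgets across the pattern edges so that $n = O(N)$; then an $O(n^{(t+1)\omega/3 - \varepsilon})$ algorithm would contradict the $(t+1)$-\textsc{Clique} conjecture directly.

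The main obstacle is the careful pattern design: enough auxiliary structure must be inserted to certify the full (hyper)clique condition through the subgraph isomorphism relation, while the treewidth must be pinned at exactly~$t$ so that the upper bound being contradicted remains $n^{t+1}$. A secondary technical point is transferring the hardness from the colored variant (which emerges naturally from the construction) to uncolored \textsc{Subgraph Isomorphism}, which can be done through the colored/uncolored equivalence mentioned in the introduction.
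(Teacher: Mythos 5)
Your proposal is correct and follows essentially the same route as the paper: the pattern with $t+1$ (resp.\ $t$) group vertices plus one auxiliary vertex per $h$-subset of positions, the host graph built from $h$-tuples and hyperedge labels with the same compatibility rule, the size bound $n=O(N^h)$, and the transfer from the colored to the uncolored problem via the equivalence lemma are exactly the paper's construction (its ``Twin Water Lily''). The only cosmetic differences are that you certify the treewidth via an explicit tree decomposition plus a $K_{t+1}$-subdivision whereas the paper uses the cops-and-robbers characterization, and you are in fact somewhat more careful than the paper about the regime $h<t$ in part~2 and about keeping the color classes of size $O(N)$ in part~3.
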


Indeed, with the very same reduction, we also get an obstacle from SETH. However, the lower bound it provides is not as tight as the above, and in the case of the second part does not work for each target treewidth \(t\).
\begin{theorem}\label{corollary-lower-bound-detection-seth}
  Assuming SETH, the following two statements are true.
  \begin{enumerate}
  \item For any \(t\geq 3\) and any \(\varepsilon > 0\) there exists a pattern graph \(\mathcal{H_{t,\varepsilon}}\) of treewidth \(t\) such that there cannot be an algorithm solving all instances of \textsc{Subgraph Isomorphism} with pattern graph \(\mathcal{H}_{t,\varepsilon}\) in time \(O(n^{t-\varepsilon})\).
    \item For any \(\varepsilon > 0\) there exists a \(t\geq 3\) and a pattern graph \(\mathcal{H}_{\varepsilon}\) of treewidth \(t\) such that there cannot be an algorithm solving all instances of \textsc{Subgraph Isomorphism} with pattern graph \(\mathcal{H}_{\varepsilon}\) in time \(O(n^{t+1-\varepsilon})\).
\end{enumerate}
\end{theorem}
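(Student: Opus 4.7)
The proof proceeds by chaining the reductions already established for Theorem~\ref{corollary-lower-bound-detection} with the classical SETH-to-Hyperclique reduction. Recall the split-and-list argument: a $k$-SAT instance on $n$ variables is partitioned into $K$ groups of size $n/K$; each group gives rise to $2^{n/K}$ vertices encoding partial assignments, and a $k$-uniform hyperedge is placed on any $k$ vertices from distinct groups whose combined assignment falsifies no clause spanning those groups. An $O(N^{K-\eta})$-time algorithm for the resulting $k$-uniform $K$-\textsc{Hyperclique} instance then solves $k$-SAT in time $O^{*}(2^{n(1-\eta/K)})$. Consequently, for any $K \geq 3$ and $\eta > 0$, SETH yields some uniformity $h \geq 3$ (namely, the SETH-provided $k$ for parameter $\eta/K$) such that $h$-uniform $K$-\textsc{Hyperclique} has no $O(N^{K-\eta})$ algorithm.

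For Part~1, the plan is to invoke Theorem~\ref{corollary-lower-bound-detection}(2), which permits arbitrarily large $h$. Given $t \geq 3$ and $\varepsilon > 0$, I would set $K := ht$ and $\eta := h\varepsilon/2$, producing a SETH-provided $h$. Since the hyperclique-to-\textsc{Subgraph Isomorphism} reduction turns an $m$-vertex hyperclique instance into an $O(m^h)$-vertex host graph, any $O(n^{t-\varepsilon})$ algorithm for the resulting pattern would yield an $O(m^{ht-h\varepsilon})$ algorithm for $h$-uniform $ht$-\textsc{Hyperclique}, contradicting the SETH-implied lower bound $m^{ht-h\varepsilon/2}$. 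No side-condition constrains $h$ here, which is why this case works uniformly for every $t \geq 3$.

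For Part~2, I would apply Theorem~\ref{corollary-lower-bound-detection}(1) instead, which delivers the tighter exponent $t+1-\varepsilon$ but now imposes $h \leq t$. Setting $K := h(t+1)$ and $\eta := h\varepsilon/2$ (so that $\eta/K = \varepsilon/(2(t+1))$), the SETH-to-hyperclique step supplies a uniformity $h$; one then takes the treewidth $t := h$ (or any larger integer) so that the side-condition $h \leq t$ is met. The same contradiction calculation as in Part~1 then yields the $n^{t+1-\varepsilon}$ lower bound. This side-condition on $h$ is exactly the reason that Part~2 merely asserts existence of some $t \geq 3$ rather than holding for every $t$: since SETH only returns $h$ once the target parameter $\eta/K$ has been fixed, the treewidth $t$ cannot be prescribed in advance.

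The main obstacle is therefore the mutual dependence between $t$ and $h$ in Part~2: the SETH parameter $\eta/K = \varepsilon/(2(t+1))$ itself depends on $t$, so the $h$ SETH returns implicitly depends on $t$ as well. I would handle this by first fixing a small enough constant $\gamma_0 > 0$ relative to $\varepsilon$ so that SETH produces a concrete uniformity $h$, then taking $t := h$ and verifying that the induced parameter $\varepsilon/(2(t+1))$ is at most $\gamma_0$. The remaining technical bookkeeping — instantiating the reductions from Theorem~\ref{corollary-lower-bound-detection} with the correct values of $h$, $t$, and $\eta$, and tracking the $N = m^h$ vertex blowup in the host graph — is routine.
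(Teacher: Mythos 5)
Your overall route is the paper's route: chain the SETH-to-\textsc{Hyperclique} reduction with the hyperclique-to-\textsc{Subgraph Isomorphism} reductions underlying Theorem~\ref{corollary-lower-bound-detection}, and transfer to the uncolored problem. Part~1 of your argument is essentially correct: since $t$ is fixed in advance, the SETH parameter $\eta/K = \varepsilon/(2t)$ is a genuine constant, the resulting uniformity $h$ is well-defined, and the exponent arithmetic closes. (One small point you gloss over: to guarantee the pattern $\TWL(h,0,t)$ has treewidth exactly $t$ rather than $t-1$ you need $h \ge t$, which the paper arranges by a w.l.o.g.\ step, justified because hardness is monotone in the uniformity.)

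Part~2, however, has a genuine gap in how you resolve the circularity you correctly identify. Your plain split-and-list parametrization requires the SETH parameter to satisfy $\gamma_0 \le \varepsilon/(2(h+1))$ where $h = h(\gamma_0)$ is itself a function of $\gamma_0$; your proposed verification (``check that $\varepsilon/(2(t+1))$ is at most $\gamma_0$'') is stated in the wrong direction, and even with the inequality reversed there is no a priori guarantee that a fixed point exists, since $h(\gamma_0)$ grows as $\gamma_0$ shrinks. The paper sidesteps this entirely by invoking the stronger form of the SETH-to-\textsc{Hyperclique} statement (\cite[Lemma 9.1]{lincoln2018tight}): for every $\varepsilon>0$ there is a \emph{single} $h\ge 3$ such that $h$-uniform $k$-\textsc{Hyperclique} admits no $O(n^{k-\varepsilon})$ algorithm for \emph{all} $k>h$ --- note the exponent loss is the fixed additive constant $\varepsilon$, independent of $k$. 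With that lemma one simply sets $t := h$; the reduction turns an $O(n^{t+1-\varepsilon})$ \textsc{Subgraph Isomorphism} algorithm into an $O(m^{h(t+1)-h\varepsilon}) \subseteq O(m^{h(t+1)-\varepsilon})$ algorithm for $h$-uniform $h(t+1)$-\textsc{Hyperclique}, an immediate contradiction with no circular dependence. Deriving that uniform-in-$k$ lemma is not just ``routine bookkeeping'' on top of split-and-list (it needs an additional self-reduction argument), so your Part~2 as written does not go through without importing it as a black box.
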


On the algorithmic side, we present an algorithm that achieves matching running times (as listed in theorem~\ref{corollary-upper-bound} below). As was said, the results in the following theorem are not new. Part 1 was shown via Color-Coding in~\cite{alon1995color} and part 2 follows from techniques in~\cite{curticapean2017homomorphisms}. We unify these two results by providing a single, relatively simple algorithmic technique achieving both, based on $k$-wise matrix products. These techniques are later expanded to also work for the weighted version, where they then achieve new results. In the following, $\omega < 2.373$~\cite{le2014powers} is the exponent of matrix multiplication.

\begin{theorem}\label{corollary-upper-bound}
  There are algorithms which, given an arbitrary instance $\phi = (H,G)$ of \textsc{Subgraph Isomorphism} where $H$ has treewidth $\tw(H)$, solve $\phi$ in
  \begin{enumerate}
  \item time $\widetilde{O}(n^{\tw(H)+1}g(k))$ when $\tw(H)\geq 3$,
  \item time $\widetilde{O}(n^\omega g(k))$ when $\tw(H) = 2$, and
  \item time $\widetilde{O}(n^2g(k))$ when $\tw(H) = 1$,
  \end{enumerate}
  where $k := |V(H)|, n := |V(G)|$ and $g$ is a computable function.
\end{theorem}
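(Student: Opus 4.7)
The plan is to combine the color-coding paradigm with a dynamic program along a nice tree decomposition of $H$, and then to recast the DP's \emph{join} step as a $k$-wise matrix product, which can be evaluated by different strategies depending on whether $\tw(H) \in \{1,2\}$ or $\tw(H) \geq 3$.

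First I apply the Alon--Yuster--Zwick color-coding reduction~\cite{alon1995color}: using a $(n,k)$-perfect hash family of size $g(k)$, I reduce to the \emph{colorful} version, where $V(G)$ is partitioned into color classes $V_v$ (one per $v \in V(H)$) and we search for an $H$-copy that sends each $v$ to $V_v$. I then compute a nice tree decomposition $(T, \{B_x\})$ of $H$ of width $\tw(H)$ in FPT time in $k$. The DP maintains, for every node $x$ of $T$, a Boolean table $D_x$ indexed by $V_{v_1} \times \cdots \times V_{v_{|B_x|}}$ that records whether the corresponding colorful assignment of $B_x$ extends to a valid partial homomorphism on all pattern vertices introduced below $x$. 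Introduce, forget and leaf bags can be processed in time $\widetilde{O}(n^{\tw(H)+1})$ per bag by consulting the edges of $H$ restricted to $B_x$, and the total number of bags is $O(k)$.

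The bottleneck is at join bags, where two tables $D_y, D_z$ over the same bag $B_x$ must be combined into $D_x$ via entrywise AND. For $\tw(H) \geq 3$, the trivial evaluation in time $\widetilde{O}(n^{\tw(H)+1})$ already matches the claimed bound, establishing part~1. For $\tw(H) = 1$ (i.e.\ $H$ a forest) bags have at most two vertices and all operations are trivially $\widetilde{O}(n^2)$, giving part~3. For the more delicate part~2 with $\tw(H) = 2$, I would set things up so that at each join one simultaneously incorporates the edge-compatibility matrices among the three vertices of $B_x$, turning the combined update into a $3$-wise Boolean matrix product of the schematic form $D_x[a,b,c] = \bigvee \bigl(D_y[\cdot]\wedge D_z[\cdot]\wedge M_{\text{edge}}[\cdot]\bigr)$; exactly this operation is the $k$-wise matrix product of~\cite{gnang2011spectral,lincoln2018tight}, and for $k=3$ it reduces to a single ordinary Boolean $n \times n$ matrix product followed by an entrywise check, running in time $\widetilde{O}(n^\omega)$. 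Multiplying by the color-coding overhead $g(k)$ and summing over the $O(k)$ bags yields the stated bounds.

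The main obstacle I expect is part~2: writing the join cleanly as a $3$-wise matrix product requires being careful about which indices of $B_x$ get summed over and ensuring that no dimension still needed later in the DP is prematurely projected away. The right fix is to adopt a variant of the nice tree decomposition of width $2$ in which every bag has exactly three vertices and in which the ``join plus introduce-of-the-last-vertex'' can be merged into a single $3$-wise-matrix-product call; the remaining bookkeeping — verifying that the needed edge-compatibility matrices are Boolean and that the intermediate tensors have size at most $n^3$ — then follows routinely, and the same framework will cleanly generalise to the weighted setting later in the paper by swapping the $(\vee,\wedge)$ semiring for the appropriate $(\min,+)$ or convolution-based analogue.
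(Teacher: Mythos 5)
Your overall strategy --- color-coding to reduce to the colorful problem, then a DP over a (restructured) tree decomposition whose expensive step is cast as a $k$-wise matrix product --- is exactly the paper's approach, and parts~1 and~3 of your argument go through essentially as in the paper. However, part~2 contains a genuine error.

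You claim that the treewidth-$2$ join can be written as a $3$-wise Boolean matrix product $D_x[a,b,c]=\bigvee(\cdots)$ and that ``for $k=3$ it reduces to a single ordinary Boolean $n\times n$ matrix product,'' running in $\widetilde{O}(n^\omega)$. This is false: the $3$-wise matrix product takes three order-$3$ tensors and produces an order-$3$ tensor by eliminating one shared index, an operation whose naive cost is $n^4$ and which provably cannot be accelerated by Strassen-like methods (the corresponding tensor has border rank $n^4$); indeed the paper's entire lower-bound narrative rests on the $3$-wise Boolean product being equivalent to $3$-uniform $4$-hyperclique and hence conjecturally requiring $n^{4-o(1)}$ time. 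The correct statement is that for treewidth $t$ the relevant operation is a \emph{$t$-wise} product of order-$t$ tensors: a bag $\{v,u_1,\dots,u_t\}$ of size $t+1$ is processed by multiplying $t$ tables, each indexed by configurations of a $t$-element sub-bag $X\setminus\{u_i\}$, summing over the single vertex $v$, and producing an output indexed by $(u_1,\dots,u_t)$ only. For $t=2$ this is an ordinary product of two $n\times n$ matrices followed by an entrywise AND, which is where $n^\omega$ comes from; your output shape $D_x[a,b,c]$ with three free indices is wrong. Relatedly, your own framework charges $\widetilde{O}(n^{\tw+1})=\widetilde{O}(n^3)$ per forget node (an OR over $n$ extensions for each of $n^2$ entries), which already exceeds $n^\omega$; the paper avoids this by restructuring the decomposition (its ``$\tw$-wise tree decomposition'') so that the forget of $v$ is absorbed into the matrix product itself. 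Your closing remark about merging ``join plus introduce-of-the-last-vertex'' points in the right direction, but as written the $n^\omega$ bound of part~2 is not established.
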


\subparagraph*{Semi-Equivalence of Hyperclique and Subgraph Isomorphism}
We also discuss how our results not only show a reduction from \textsc{Hyperclique} to \textsc{Subgraph Isomorphism} with bounded treewidth, but also in the other direction. For this, we show that calculating the boolean \(k\)-wise matrix products, which is the bottleneck in our algorithm for bounded-treewidth \textsc{Subgraph Isomorphism}, is actually equivalent to the \(k\)-uniform \((k+1)\)-\textsc{Hypergraph} problem. Hence we have a reduction in the second direction. This gives an interesting intuition for why the Hyperclique hypothesis is the ``correct'' conjecture to prove conditional hardness of \textsc{Subgraph Isomorphism} for bounded treewidth.

We remark that this does not lead to a full equivalence of these problems because the uniformity (i.e.\ the size of hyperedges) of the \textsc{Hyperclique} problem we reduce from in the first reduction is much smaller than the size of the hypercliques we search for. Hence we only have a reduction from a \textsc{Hyperclique} instance with small edge uniformity to \textsc{Subgraph Isomorphism}, and a reduction from \textsc{Subgraph Isomorphism} to \textsc{Hyperclique} instances with large edge uniformity.

\subparagraph*{Weighted Subgraph Isomorphism with Bounded Treewidth}
Now consider the weighted version of \textsc{Subgraph Isomorphism} for bounded-treewidth graphs \(H\). Recall that the weighted version can be either node- or edge-weighted and is defined such that the weights in the solution subgraph must have total weight zero. A trivial dynamic programming algorithm on the tree decomposition achieves a running time of \(\widetilde{O}(n^{\tw(H)+1} \cdot W\log W)\) for \(\tw(H) \geq 3\).

Note that these results show conditional lower bounds even when the maximum weight is restricted to \(W=\Theta(n^{\gamma})\), for any constant \(\gamma > 0\).

\begin{theorem}\label{corollary-lower-bound-weighted}
  For both the node- and edge weighted variant of the problems, the following statements are true.
  \begin{enumerate}
  \item For each \(t \geq 3\), each \(\gamma \in \mathbb{R}^+\) and each \(3\leq h \leq t\), there exists a connected, bipartite graph \(\mathcal{H}_{t,h,\gamma}\) of treewidth \(t\) such that there cannot be an algorithm solving the \textsc{Exact Weight Subgraph Isomorphism} problem on pattern graph \(\mathcal{H}_{t,h,\gamma}\) for instances with maximum weight \(W = \Theta(n^{\gamma})\) in time \(O(n^{t+1-\varepsilon}W)\), unless the \(h\)-uniform Hyperclique hypothesis fails.
  \item For each \(t\geq 1\), each \(\gamma \in \mathbb{R}^+\) and each \(h \geq 3\), there exists a connected, bipartite graph \(\mathcal{H}_{t,h,\gamma}\) of treewidth \(t\) such that there cannot be an algorithm solving the \textsc{Exact Weight Subgraph Isomorphism} problem on pattern graph \(\mathcal{H}_{t,h,\gamma}\) for instances with maximum weight \(W = \Theta(n^{\gamma})\) in time \(O(n^{t-\varepsilon}W)\), unless the \(h\)-uniform Hyperclique hypothesis fails.
    \item For each \(t\geq 1\) and each \(\gamma \in \mathbb{R}^+\), there exists a connected, bipartite graph \(\mathcal{H}_{t,\gamma}\) of treewidth \(t\) such that there cannot be an algorithm solving the \textsc{Exact Weight Subgraph Isomorphism} problem on pattern graph \(\mathcal{H}_{t,\gamma}\) for instances with maximum weight \(W=\Theta(n^{\gamma})\) in time \(O(n^{(t+1)\omega/3-\varepsilon}W^{\omega/3})\), unless the \textsc{Clique} hypothesis fails.
    \end{enumerate}
  \end{theorem}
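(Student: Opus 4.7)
The plan is to derive Theorem~\ref{corollary-lower-bound-weighted} from the unweighted lower bounds of Theorem~\ref{corollary-lower-bound-detection} by composing their reductions with a standard \emph{vertex-compression-by-weights} step that is familiar from fine-grained analyses of exact-weight problems. Fix parameters $t,h,\gamma$ as in Part~1, and set $k:=h(t+1)$. The reduction underlying Theorem~\ref{corollary-lower-bound-detection} Part~1 maps an $h$-uniform $k$-Hyperclique instance on $N$ vertices to an unweighted Subgraph Isomorphism instance with pattern $\mathcal{H}_{t,h}$ whose host graph has color classes of size $\Theta(N^{h})$, and for which any $O(n^{t+1-\varepsilon})$-algorithm would refute the hypothesis. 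We set $\mathcal{H}_{t,h,\gamma}:=\mathcal{H}_{t,h}$ and modify only the host graph.

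Compression step: pick a parameter $s$, and within each color class of the unweighted host partition the vertices into groups of size $s$, collapsing each group to a single compressed vertex (augmented with small gadget vertices, one per original inter-group edge, in order to avoid multi-edges). This produces a new host graph with $n=\Theta(N^{h}/s)$ vertices per color class. To prevent spurious copies of the pattern that are not witnessed in the original host, install positional weights. For the edge-weighted variant, each gadget vertex corresponding to an original edge $(v_c,v_{c'})$ carries a weight encoding the two local indices of $v_c,v_{c'}$ inside their respective groups, using a base-$s$ positional encoding with disjoint digit ranges per (ordered) color class. For the node-weighted variant, the analogous positional weights are installed on the auxiliary vertices of the gadgets. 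In both cases the coefficients and balanced offsets are designed so that the total weight of any copy of $\mathcal{H}_{t,h,\gamma}$ is zero if and only if, for every color class, the pattern edges incident to that class agree on the local index of their shared compressed vertex, hence correspond to an honest copy of $\mathcal{H}_{t,h}$ in the original host. The maximum weight produced is $W=\Theta(s^{K})$ for a constant $K$ depending only on $t$.

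Set $s$ so that $W=\Theta(n^{\gamma})$. This is a direct calculation and yields $s=\Theta(N^{h\gamma/(K+\gamma)})$. A hypothetical $O(n^{t+1-\varepsilon}W)$-algorithm on the compressed weighted instance would then decide $h$-uniform $k$-Hyperclique in time $O(N^{k-\delta})$ for some $\delta=\delta(\varepsilon,\gamma,t,h,K)>0$, contradicting the $h$-uniform Hyperclique hypothesis. Parts~2 and~3 of Theorem~\ref{corollary-lower-bound-weighted} follow by repeating the same recipe on top of Parts~2 and~3 of Theorem~\ref{corollary-lower-bound-detection}; for Part~3 the target running time has the shape $n^{(t+1)\omega/3-\varepsilon}W^{\omega/3}$, and the $W^{\omega/3}$ factor drops out of the parameter balancing in exactly the same way as in the known matrix-multiplication-based lower bounds for exact-weight clique.

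The main obstacle is engineering the positional weight encoding so that zero-weight copies of $\mathcal{H}_{t,h,\gamma}$ in the compressed weighted graph correspond bijectively with copies of $\mathcal{H}_{t,h}$ in the original host: no inconsistent selection of local indices across the pattern edges may accidentally sum to zero. One reserves, for each color class, a distinct block of base-$s$ digits, and arranges that every pattern edge deposits its two endpoint indices into the blocks of its two color classes with opposite signs, so that any inconsistency leaves a nonzero digit and hence a nonzero total weight. Once this encoding is verified correct, the remaining parameter balancing and the extension across Parts~1--3 are essentially mechanical, and the fact that $K$ stays bounded in terms of $t$ ensures that the argument is uniform in the regime $W=\Theta(n^{\gamma})$ for arbitrary fixed $\gamma>0$.
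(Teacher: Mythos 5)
Your high-level plan---encode part of the hyperclique instance in the weights so that the host graph shrinks while $W$ grows, then balance the compression parameter to hit $W=\Theta(n^{\gamma})$---is indeed the strategy of the paper, but the consistency gadget you propose does not work, and that gadget is precisely where the technical content lies. In the unweighted pattern every ``choice''-side vertex $a$ is adjacent to \emph{many} hyperedge-vertices $b$, so the digit block you reserve for the color class of $a$ receives one deposit per incident pattern edge, i.e.\ three or more summands. A signed sum of three or more local indices vanishing (or hitting a fixed target) does not force the indices to be pairwise equal ($1+3-2-2=0$), so the claim that ``any inconsistency leaves a nonzero digit'' is false. The paper resolves this with $k$-average-free sets (Lemma~\ref{k-average-free-lemma}): deposits are $\varrho_S(\cdot)$ for a Behrend-type set $S$, whose defining property $s_1+\cdots+s_d=d\cdot s_{d+1}\Rightarrow s_1=\cdots=s_{d+1}$ forces agreement. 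But that property needs an \emph{anchor} term $d\cdot\varrho_S(\mathrm{idx}_a)$ contributed by a selected host vertex that commits to the local index of $a$; your single compressed vertex per group carries one fixed weight and cannot commit to anything. Consequently the paper does not collapse host vertices at all: it changes the \emph{pattern}, adding an extra independent set $S_1$ of $r_1$ isolated vertices (the Twin Water Lily), whose host preimages of size $n^{\beta k/r_1}$ carry the index choices and the anchors. Being isolated, these vertices do not raise the treewidth, and taking $r_1$ large keeps their preimages below the host size $n$. In particular $\mathcal{H}_{t,h,\gamma}$ is not the unweighted pattern $\mathcal{H}_{t,h}$.

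The parameter balancing is also not ``a direct calculation'' once a fixed exponent $K$ is assumed. With $W=s^{K}$ and $n=N^{h}/s$, an $O(n^{t+1-\varepsilon}W)$ algorithm runs in time $N^{h(t+1)-h\varepsilon}\,s^{\,K-(t+1)+\varepsilon}$; writing $s=N^{ch}$ this beats $N^{h(t+1)}$ only if $c\bigl(K-(t+1)\bigr)<\varepsilon(1-c)$. Since $W=\Theta(n^{\gamma})$ forces $c=\gamma/(K+\gamma)$, which tends to $1$ as $\gamma$ grows, any fixed $K>t+1$ makes the reduction fail for large $\gamma$ or small $\varepsilon$. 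Even with average-free sets the excess $K-(t+1)$ equals the Behrend overhead $\varepsilon'(t+1)$, and the paper must choose $\varepsilon'$ and the split parameter $\beta$ jointly, subject to three explicit constraints (see the proof of Theorem~\ref{lower-bound-weighted}), to cover every $\gamma\in\mathbb{R}^{+}$ and every $\varepsilon>0$. So the two missing ingredients are (i) average-free sets together with anchoring vertices, realized via the isolated part of the Twin Water Lily pattern rather than by compressing the host, and (ii) the trade-off between the average-free-set overhead $\varepsilon'$ and the assumed savings $\varepsilon$ in the final choice of parameters.
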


  Similar lower bounds also hold when trying to reduce the exponent of \(W\) instead of \(n\). Meaning there is also no algorithm of running time \(O(n^{t+1}W^{1-\varepsilon})\) in part 1, etc.

On the algorithmic side, we present an algorithm that achieves matching running times for \(\tw(H) \geq 3\), and almost matching running times for \(\tw(H) = 1,2\). Note that in terms of exponents, the first algorithm below is not better than the naive one with running time \(O(n^{\tw+1}W\log W)\). However, it avoids a factor of \(\log W\) in the largest term, and instead appends it to a smaller term, so in a way it presents an improvement of \(\log W\) in the running time. Specifically, we show

\begin{theorem}\label{corollary-upper-bound-weighted}
  There are algorithms which, given an arbitrary instance $\phi = (H,G,w)$ of the \textsc{Exact Weight Subgraph Isomorphism} problem where $H$ has treewidth $\tw(H)$, solve $\phi$ in
  \begin{enumerate}
  \item time $\widetilde{O}((n^{\tw(H)+1}W + n^{\tw(H)}W\log W)g(k))$ when $\tw(H) \geq 3$,
  \item time $\widetilde{O}((n^\omega W + n^2W\log W)g(k))$ when $\tw(H) = 2$, or
  \item time $\widetilde{O}((n^2W + nW \log W)g(k))$ when $\tw(H) = 1$,
  \end{enumerate}
  where $n := |V(G)|, k := |V(H)|$, $g$ is a computable function, and $W$ is the maximum absolute weight in the image of $w$.
\end{theorem}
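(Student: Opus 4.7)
The plan is to augment the unified unweighted algorithm of Theorem~\ref{corollary-upper-bound} with a weight dimension, then use Fast Fourier Transforms to batch-convolve in the weight coordinate, thereby shaving a $\log W$ factor off the dominant term of the running time. Recall that the unweighted algorithm proceeds via color-coding followed by dynamic programming over a tree decomposition of $H$, whose bottleneck is a (possibly $k$-wise or rectangular) matrix product; it is this bottleneck that I will lift to the exact-weight setting.

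First, I would replicate the color-coding setup and tree-decomposition DP verbatim, except that each DP entry now holds a length-$O(kW)$ indicator vector of the achievable total weights of partial matchings. The easy operations --- introducing a new bag vertex (which shifts weights by the weight of an added vertex/edge), forgetting a bag vertex (a pointwise $\mathrm{OR}$ across the forgotten vertex), and leaf initialization --- clearly run in $\widetilde{O}(n^{\tw(H)+1}W)$ per DP node. The nontrivial step is the weighted $k$-wise matrix product
\[
  C[i_1,\ldots,i_k][w] \;=\; \bigvee_j \;\bigvee_{w_1+\cdots+w_k=w} \;\bigwedge_{l=1}^{k} A_l[i_l,j][w_l],
\]
which at face value costs $O(n^{k+1}W\log W)$ via one FFT-based convolution per output entry.

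To remove the extra $\log W$ factor from the $n^{\tw(H)+1}$ term, I would use the standard ``FFT once, pointwise-multiply in Fourier space, IFFT once'' pattern. Concretely: $(i)$~replace the $\mathrm{OR}/\mathrm{AND}$ semiring by counting over the integers modulo a suitable prime, and FFT each of the $O(kn^2)$ length-$W$ input vectors in $O(kn^2 W\log W)$ total time; $(ii)$~perform the $k$-wise matrix product in Fourier space, where the inner convolution becomes a pointwise length-$W$ product and hence each of the $O(n^{k+1})$ ring operations inherited from Theorem~\ref{corollary-upper-bound} costs $O(W)$, for a total of $O(n^{k+1}W)$; $(iii)$~inverse-FFT each of the $n^k$ output entries in $O(n^k W \log W)$ total time and threshold against zero to recover the boolean answer. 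Summing the three phases yields $\widetilde{O}(n^{\tw(H)+1}W + n^{\tw(H)}W\log W)$, as claimed in part~1. Parts~2 and~3 are strictly analogous: the underlying unweighted primitive is (fast, possibly rectangular) matrix multiplication, and the same three-phase scheme contributes $n^\omega W$ (respectively $n^2 W$) from phase~$(ii)$ and $n^2 W\log W$ (respectively $n W\log W$) from phases~$(i)$ and~$(iii)$.

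The main obstacle will be verifying phase~$(ii)$ when $\tw(H) \geq 3$, since the $k$-wise matrix product has more structure than an ordinary matrix product and it is not a priori clear that working Fourier-coordinate-wise gives exactly $O(n^{k+1})$ ring operations per coordinate. Once a Fourier coordinate is fixed, however, the computation reduces to the very same ``one inner index summed over, $k$ outer indices produced'' primitive analyzed in the unweighted setting, so the bound transfers coordinate-by-coordinate. Small technicalities to handle include bounding intermediate integer magnitudes (since we only test for nonzero at the end, one sufficiently large prime suffices), treating node- and edge-weighted variants uniformly by attaching weights at the introduce nodes, and summing the per-node cost over the $O(k)$ nodes of a nice tree decomposition --- all of which contribute only to the $g(k)$ factor.
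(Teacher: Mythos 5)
Your proposal is correct and follows essentially the same route as the paper: lift the unified $k$-wise-matrix-product DP over the tree decomposition to weight-indexed tensors, and accelerate the weighted $k$-wise product by evaluating at $O(W)$ roots of unity, doing one unweighted $k$-wise product (resp.\ fast/rectangular matrix product) per evaluation point, and interpolating back --- this is exactly the paper's Lemma on tensor products of Laurent polynomials, and the term accounting ($n^{\tw+1}W$ from the pointwise products, $n^{\tw}W\log W$ from the FFT/IFFT passes and the weight-convolutions at merge nodes) matches. The only cosmetic difference is that you work modulo a large prime while the paper works over $\mathbb{C}$ directly; both are standard and sound.
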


Comparing these upper bounds with the lower bounds from Theorem~\ref{corollary-upper-bound-weighted}, we have a tight lower bound for the weighted case with $\tw(H) \geq 3$. For weighted $\tw(H) = 2$, we have a lower bound which is tight except for the exponent of $\omega/3$ to $W$; it is unclear whether this can be strengthened. The lower bound for weighted graphs with $\tw(H) = 1$ is obviously not tight: We have an upper bound of \(\widetilde{O}(n^2W + nW\log W)\), but our lower bounds only states that it requires time \(O(n^{1-o(1)}W^{1-o(1)})\) and \(O(n^{2\omega/3-o(1)}W^{\omega/3-o(1)})\). Tighter lower bounds for this case remain an important open problem.

\subparagraph*{Unweighted Subgraph Isomorphism with Bounded Pathwidth}
So far we have only looked at the case of bounded treewidth. However, similar results hold for the case of bounded pathwidth. Let us start with the unweighted \textsc{Subgraph Isomorphism} problem.

Note that we do not get any lower bounds for the current setting. This is because we prove all our lower bounds by showing an equivalence of the standard Subgraph Isomorphism problem to a colored variant (see also Section~\ref{sec:equivalence-statement}), and then proving a lower bound for the colored version. We do not know how to prove such an equivalence for the current setting, therefore we do not get lower bounds in this case; we leave this as an open problem.

Since a path decomposition is always also a tree decomposition, we trivially get upper bounds as in theorem~\ref{corollary-upper-bound} (when replacing treewidth by pathwidth). However, we can do better by using rectangular matrix multiplication to speed up the computation. For $z \in \mathbb{R}^+$, let $\omega(z)$ be the smallest real number such that multiplying a $n\times n$ matrix with a $n\times n^z$ matrix can be done in time $O(n^{\omega(z)})$, see Section~\ref{sec:algo-col-subiso-bounded-pathwidth} for discussion of this value. We prove the following upper bounds.
\begin{theorem}\label{corollary-upper-bound-detection-pathwidth}
  There are algorithms which, given an arbitrary instance $\phi = (H,G)$ of \textsc{Subgraph Isomorphism} where $H$ has pathwidth $p$, solve $\phi$ in
  \begin{enumerate}
  \item time $\widetilde{O}(n^{\omega(p-1)}g(k))$ when $p\geq 2$, and
  \item time $\widetilde{O}(n^2g(k))$ when $p=1$
  \end{enumerate}
  where $k := |V(H)|$ and $n := |V(G)|$.
\end{theorem}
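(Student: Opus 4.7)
The plan is to combine color-coding with a dynamic program on a nice path decomposition of $H$, merging every pair of adjacent introduce/forget operations into one rectangular matrix product whose dimensions exactly match the definition of $\omega(p-1)$. As in the proof of Theorem~\ref{corollary-upper-bound}, an Alon--Yuster--Zwick-style color-coding reduction (or its deterministic derandomization via splitters) incurs only an $\widetilde{O}(g(k))$ overhead and lets me assume that each vertex of $H$ is required to map to a prescribed color class in $V(G)$; the remaining task is a purely algebraic DP on the path decomposition.

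Fix a nice path decomposition of $H$ of width $p$ and take the ``narrow'' bags (those of size exactly $p$, reached right after every forget) as the DP checkpoints rather than the full bags of size $p+1$. A single round starts at the narrow bag $X\cup\{u\}$ with $|X|=p-1$, introduces a new vertex $v$ to form the wide bag $X\cup\{u,v\}$, and then forgets $u$ to arrive at the next narrow bag $X\cup\{v\}$. Writing $\vec x$ for the mapping of $X$ into $V(G)$, the round realises the transition
\[
 T_{X\cup\{v\}}[\vec x,g_v] \;=\; C[\vec x,g_v]\cdot \sum_{g_u} T_{X\cup\{u\}}[\vec x,g_u]\cdot E[g_u,g_v],
\]
where $C[\vec x,g_v]\in\{0,1\}$ encodes all edge- and color-compatibility constraints between $v$ and the already-assigned vertices (and can be built in $O(p\cdot n^p)$ time), and $E$ is the adjacency matrix $A_G$ if $uv\in E(H)$. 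Viewing $T_{X\cup\{u\}}$ as an $n\times n^{p-1}$ matrix with row index $g_u$ and column index $\vec x$, the inner sum is the product of the $n\times n$ matrix $E^{T}$ with that $n\times n^{p-1}$ matrix, computable in time $O(n^{\omega(p-1)})$ by definition of $\omega(p-1)$. The subsequent entrywise scaling by $C$ costs $O(n^p)\le O(n^{\omega(p-1)})$ since $\omega(z)\ge z+1$.

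If $uv\notin E(H)$ no genuine matrix multiplication is needed: the factor $\bigvee_{g_u}T_{X\cup\{u\}}[\vec x,g_u]$ depends only on $\vec x$, so it is computed in $O(n^p)$ time and then combined with $C$ in $O(n^p)$ time, still within the $n^{\omega(p-1)}$ budget. Summing over the $O(k)$ rounds and across all color-coding trials yields the claimed $\widetilde{O}(n^{\omega(p-1)}g(k))$ bound for $p\ge 2$. The case $p=1$ degenerates: narrow bags have size $1$, so the DP state is a length-$n$ Boolean vector, each round becomes a matrix--vector product against $A_G$ executed in $O(n^2)$ time, and the overall bound is $\widetilde{O}(n^2 g(k))$. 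I expect the main conceptual hurdle to be choosing the narrow-bag DP state so that the combined introduce/forget step has exactly the dimensions $n\times n$ and $n\times n^{p-1}$ required to invoke $\omega(p-1)$; once this framing is in place, the analysis is a routine unwinding of the standard path-decomposition DP together with standard color-coding bookkeeping.
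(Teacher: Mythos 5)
Your proposal is correct and follows essentially the same route as the paper: after reducing to the colored problem via color-coding, both arguments normalize the path decomposition so that consecutive size-$p$ separators differ in one vertex, keep the DP state on these size-$p$ separators, and realize each introduce/forget round as a product of the $n\times n$ adjacency matrix between the forgotten and introduced color classes with the $n\times n^{p-1}$ DP table, which is exactly $\MM(n,n,n^{p-1})=O(n^{\omega(p-1)})$; the degenerate non-edge case and the $p=1$ case are handled the same way. No gaps.
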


We certainly have \(p \leq \omega(p-1) < p+1\), so these results represent only a minor improvement, which is nonetheless important because it ``beats'' the lower bound for treewidth. Hence the lower bound for pathwidth cannot be the same as for treewidth.

\subparagraph{Weighted Subgraph Isomorphism with Bounded Pathwidth}
We also analyze the bounded-pathwidth pattern graph version of \textsc{Weighted Subgraph Isomorphism}. Specifically, we get the following lower bound.
\begin{theorem}[Theorem~\ref{corollary-lower-bound-weighted} for pathwidth]\label{corollary-lower-bound-weighted-pathwidth}
  Parts 2 and 3 of Theorem~\ref{corollary-lower-bound-weighted} also hold when replacing the treewidth \(t\) by the pathwidth \(p\). Part 1 does not hold.
\end{theorem}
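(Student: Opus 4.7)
The plan is to verify separately that parts 2 and 3 of Theorem~\ref{corollary-lower-bound-weighted} transfer from treewidth to pathwidth, and then to argue that the analogue of part 1 is ruled out by a faster algorithm.

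First, for parts 2 and 3, the idea is to revisit the pattern graphs $\mathcal{H}_{t,h,\gamma}$ and $\mathcal{H}_{t,\gamma}$ constructed in the proof of Theorem~\ref{corollary-lower-bound-weighted} and to show that they already admit a path decomposition of width $t$ (possibly after a minor structural tweak). The lower bound reduction itself---host-graph construction, weight assignment, and the correspondence with the hyperclique or clique instance---is unaffected: we only need to refine the decomposition argument. It thus suffices to either (a) exhibit an explicit width-$t$ path decomposition of the constructed pattern graph, or (b) replace the pattern by a variant that still encodes the hyperclique (respectively clique) instance but has pathwidth $t$. Since the relevant pattern graphs are assembled by attaching a bounded number of ``layers'' of bounded-size gadgets in a roughly linear fashion, option (a) should succeed: a bag of size $t+1$ can be swept through the layers, updating only a constant number of vertices per step.

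Second, for the claim that part 1 does not hold, I would lift Theorem~\ref{corollary-upper-bound-detection-pathwidth} to the weighted setting. Combining the rectangular-matrix-multiplication speedup used there with the sumset/fast-convolution technique already employed in the weighted treewidth algorithm (Theorem~\ref{corollary-upper-bound-weighted}), which is known to extend to rectangular matrices and tensors, should give an algorithm for \textsc{Exact Weight Subgraph Isomorphism} of running time roughly $\widetilde{O}\bigl((n^{\omega(p-1)} W + n^{p} W \log W)\, g(k)\bigr)$ when $H$ has pathwidth $p \geq 2$. Because $\omega(p-1) < p+1$, any statement of the form ``no algorithm runs in time $O(n^{p+1-\varepsilon} W)$'' would be immediately contradicted by this upper bound, ruling out a direct analogue of part 1 unconditionally.

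The step I expect to be the main obstacle is verifying---or, if necessary, lightly modifying---the hyperclique-based pattern graph so that it has pathwidth exactly $t$ rather than merely treewidth $t$. Concretely, that reduction organises the hyperclique vertices into $t$ ``buckets'' glued together through shared gadget vertices, and one must ensure that these gadget vertices can be introduced and forgotten along a linear sweep without forcing two non-adjacent buckets to coexist in the same bag. Once such a path decomposition is exhibited, parts 2 and 3 carry over verbatim from the treewidth case.
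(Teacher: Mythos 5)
Your overall route is the same as the paper's: parts 2 and 3 are obtained by rerunning the treewidth reduction and only re-examining the width of the pattern graph, and the failure of part 1 is witnessed by the faster pathwidth algorithm. For the refutation of part 1 you do not even need to ``lift'' anything: the running time $\widetilde{O}\bigl((n^{\omega(p-1)}W + n^{p}W\log W)g(k)\bigr)$ you predict is exactly Theorem~\ref{corollary-upper-bound-weighted-pathwidth}, which is already proved in the paper, and since $\omega(p-1) < p+1$ it indeed contradicts any $n^{p+1-\varepsilon}W$ lower bound. Part 2 also goes through as you say: the pattern is a Twin Water Lily of order $(r_1,t)$, and Proposition~\ref{prop:treewidth-of-lily} already bounds its \emph{pathwidth} by $s_2 = t$ (via the invisible-robber game: $t$ cops sit on $S_2$ and one extra cop sweeps through the remaining vertices), so no new decomposition argument is needed.

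The one genuine gap is in part 3, and it is exactly the step you flag as the main obstacle: your option (a) fails there. The part-3 pattern from the treewidth proof attaches an intermediate vertex to every pair of $S_2$-vertices, so it contains a $1$-subdivision of $K_{t+1}$; subdividing the edges of a clique preserves treewidth but in general \emph{increases} pathwidth to $t+1$ (the paper makes this point explicitly in the unweighted pathwidth theorem), so no width-$t$ sweep exists for that pattern. The fix the paper uses is your option (b), instantiated concretely: in the step of the Weighted Lemma that replaces hyperedges by intermediate vertices, the pairs internal to $S_2$ are \emph{not} replaced but kept as ordinary edges, so $S_2$ carries a genuine $K_{t+1}$ (the reduction is unaffected, since these constraints are plain edge constraints of the colored problem). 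The resulting pattern has pathwidth $t$ but is no longer bipartite. Without committing to this modification, the part-3 claim is not established.
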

And on the algorithmic side, we can again use rectangular matrix multiplication to improve on the algorithms from the case of bounded treewidth. Specifically, we get:
\begin{theorem}\label{corollary-upper-bound-weighted-pathwidth}
  There are algorithms which, given an arbitrary instance $\phi = (H,G,w)$ of the \textsc{Exact Weight Subgraph Isomorphism} problem, solve $\phi$ in 
  \begin{enumerate}
  \item time $\widetilde{O}((n^{\omega(\pw(H)-1)}W + n^{\pw(H)}W\log W)g(k))$ when $\pw(H) \geq 2$,
  \item time $\widetilde{O}((n^2 W + nW\log W)g(k))$ when $\pw(H) = 1$
  \end{enumerate}
  where $n := |V(G)|, k := |V(H)|$, and $W$ is the maximum absolute weight in the image of $w$.
\end{theorem}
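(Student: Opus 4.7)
The plan is to combine the two techniques already developed in the paper: the rectangular-matrix-multiplication speedup underlying the unweighted pathwidth algorithm (Theorem~\ref{corollary-upper-bound-detection-pathwidth}) and the weighted dynamic program underlying Theorem~\ref{corollary-upper-bound-weighted}. As a first step, I would fix a nice path decomposition of $H$ with bags of size $\pw(H)+1$, normalized so that consecutive bags $B_{i-1},B_i$ differ by a single forget/introduce operation on a pair of vertices $a_i \in B_{i-1}\setminus B_i$, $b_i \in B_i\setminus B_{i-1}$, sharing a common set $S_i$ of size $\pw(H)$. Each edge of $H$ is then ``charged'' to a unique transition, namely the step at which the later of its endpoints is forgotten; similarly, each vertex of $H$ is charged to the step at which it is forgotten. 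This charging assignment plays the role of the ``realize edge/node at forget'' rule used in the bounded-treewidth weighted algorithm, but specialized to the linear structure of the path decomposition.

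Next, I would augment the DP tables used in the unweighted pathwidth algorithm with an additional \emph{weight axis}: for each bag $B_i$, maintain a table $T_i[\phi,w] \in \{0,1\}$ indexed by mappings $\phi\colon B_i\to V(G)$ and weights $w\in[-kW,kW]$, where $T_i[\phi,w]=1$ iff $\phi$ extends to a valid embedding of the portion of $H$ processed so far of total accumulated weight exactly $w$. The transition from $T_{i-1}$ to $T_i$ decomposes into (i) a \emph{structural} step that sums out the mapping of $a_i$ and checks all edge/compatibility constraints for the charged edges and (ii) a \emph{weight-update} step that convolves into the existing weight distribution the weights of the charged node and charged edges (which depend only on the concrete host vertex assignments involved in the current transition, not on $\phi$ restricted to $S_i$).

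The key observation I would leverage is the remark in the introduction that fast convolution/sumset computation adapts cleanly to rectangular matrices and tensors. Concretely, apply the Discrete Fourier Transform along the weight axis; after this transformation the weight-update step becomes a pointwise product in each of the $O(W)$ frequency slices, and the structural step becomes, in each slice, exactly the rectangular matrix product analyzed in the proof of Theorem~\ref{corollary-upper-bound-detection-pathwidth}. That matrix product costs $O(n^{\omega(\pw(H)-1)})$ per slice, for a total of $\widetilde{O}(n^{\omega(\pw(H)-1)}W)$ per transition in the structural part, and an additional $\widetilde{O}(n^{\pw(H)}W\log W)$ for the forward/inverse FFTs over all $n^{\pw(H)}$ matrix entries. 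Summing over the $O(k)$ transitions and absorbing the auxiliary book-keeping into $g(k)$ yields the first bound. For $\pw(H)=1$ the bags have size $2$, the rectangular products degenerate to ordinary matrix--vector products of dimension $n$, and the same argument gives $\widetilde{O}(n^2W+nW\log W)$.

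The main obstacle I anticipate is correctness bookkeeping for weights rather than complexity: because the path decomposition has no explicit ``join'' nodes, one must verify that the charging rule really accounts for every vertex and every edge of $H$ exactly once across the sequence of transitions, and that the structural and weight-update steps commute with the Fourier transform in the required way (in particular, that combining an edge-indicator matrix with a weight-delta convolution kernel can be done simultaneously without inflating the $n^{\omega(\pw(H)-1)}$ term). Once the unweighted correctness of Theorem~\ref{corollary-upper-bound-detection-pathwidth} and the convolution-based treatment of weights from Theorem~\ref{corollary-upper-bound-weighted} are in place, this verification is essentially mechanical, since the two techniques act on orthogonal axes of the DP tensor.
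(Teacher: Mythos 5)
Your overall strategy is the paper's: a left-to-right DP over a normalized path decomposition in which each transition is a rectangular matrix product, with the weight axis handled by a DFT so that the convolution becomes a pointwise product over $O(W)$ frequency slices. This is exactly what the paper does via Lemma~\ref{lemma-fft-rectangular-mm} (rectangular multiplication of matrices with Laurent-polynomial entries), and your per-transition cost accounting $\widetilde{O}(n^{\omega(\pw(H)-1)}W + n^{\pw(H)}W\log W)$ matches. (For part 2 the paper simply falls back to the treewidth-1 algorithm, but your degenerate matrix--vector argument gives the same bound.)

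There are, however, two concrete gaps. First, your DP is stated directly for the uncolored problem: $T_i[\phi,w]$ records whether $\phi\colon B_i\to V(G)$ extends to a valid embedding of the processed part of $H$. With only the current bag's mapping as state you cannot enforce injectivity --- a host vertex assigned to a pattern vertex that was forgotten earlier can be reused later --- so the recurrence as written decides homomorphism existence, not subgraph isomorphism. The paper proves the bound for the \emph{colored} variant and then applies Lemma~\ref{lemma-equiv} (color coding); that wrapper is not optional, and it is precisely where the factor $g(k)$ and the $\widetilde{O}$ in the statement come from, not ``auxiliary book-keeping''. Second, you index $T_i$ by mappings of the full bag $B_i$ of size $\pw(H)+1$, giving $n^{\pw(H)+1}\cdot O(kW)$ table entries; since $\omega(\pw(H)-1)<\pw(H)+1$, merely materializing these tables already exceeds the claimed running time. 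The tables must live on the size-$\pw(H)$ separators $S_i$, with the size-$(\pw(H)+1)$ bag appearing only implicitly as the index set inside the rectangular product (your own count of ``$n^{\pw(H)}$ matrix entries'' for the FFT step suggests this is what you intend, but the definition of $T_i$ should say so). A smaller point: for edge-weighted instances the weight charged at a transition does depend on the separator configuration (the edges from the transition vertex into $S_i$), so this configuration-dependent shift must be folded entrywise into the coefficient matrices before multiplying, as the paper does when defining $A_{t'}^{W'}$.
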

For $\pw(H) \geq 3$, the lower bounds are therefore obviously not tight (at least for current algorithms), unless significant advances in matrix multiplication techniques are made. For $\pw(H) = 1,2$, the situation is the same as with treewidth, see the discussion of Theorem~\ref{corollary-upper-bound-weighted}.

\subparagraph{Improvements to Special Cases of Weighted Subgraph Isomorphism}
It is natural to think that the exponents $\frac{\omega}{3}$ to $W$ in the lower bounds of Theorems~\ref{corollary-lower-bound-weighted} and~\ref{corollary-lower-bound-weighted-pathwidth} are only artefacts of the reduction, and that with more advanced methods, this exponent can be improved to 1. However, the following two theorems show that this notion is false for \(\tw(H) = 1\) and \(\pw(H) = 1,2\), at least when considering the node-weighted case. Indeed, for \(\tw(H) = 1\) (or \(\pw(H) = 1\)) and \(W = n\), these bounds are tight, so further general improvements on the exponent are impossible.

Specifically, Theorems~\ref{corollary-node-weighted-algo-trees} and~\ref{corollary-node-weighted-algo-pathwidth} show the following improvements of the algorithms from Theorems~\ref{corollary-upper-bound-weighted} and~\ref{corollary-upper-bound-weighted-pathwidth} for small tree- or pathwidth. Let $\MM(n,n,x)$ be the time in which a $n\times n$ matrix can be multiplied with with a $n\times x$ matrix.
\begin{theorem}\label{corollary-node-weighted-algo-trees}
  There is an algorithm which, given an arbitrary instance $\phi = (H,G,w)$ of the node-weighted \textsc{Exact Weight Subgraph Isomorphism} problem where $H$ is a tree, solves $\phi$ in time $\widetilde{O}((\MM(n,n,W) + nW\log W)g(k))$.
\end{theorem}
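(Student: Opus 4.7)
The plan is to refine the tree DP underlying part~3 of Theorem~\ref{corollary-upper-bound-weighted} so that the bottleneck ``adjacency lift'' step is performed as a single rectangular matrix multiplication rather than $W$ separate boolean matrix--vector products. This replaces the $n^2 W$ term by $\MM(n,n,W)$, while leaving the $nW\log W$ convolution term untouched.

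As is standard, I first apply color-coding, reducing (at a cost absorbed into $g(k)$) to the colorful version in which a fixed coloring $c : V(G) \to [k]$ is given and we only seek embeddings whose image uses every color exactly once. Root $H$ at an arbitrary vertex $r$, let $H_h$ denote the subtree rooted at $h$, and define the DP table
\[
  T[h,v,S,w] \in \{0,1\}, \quad h \in V(H),\ v \in V(G),\ S \subseteq [k] \text{ with } |S|=|V(H_h)|,\ |w| \le kW,
\]
equal to $1$ iff there is a colorful embedding of $H_h$ into $G$ with $h \mapsto v$ using exactly the colors $S$ and with total node weight $w$. The answer is $\bigvee_{v} T[r,v,[k],0]$. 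Leaves are initialized in the obvious way. For an internal node $h$ with children $h_1,\dots,h_d$, the transition splits into two operations: \textbf{(a)} an \emph{adjacency lift} $U_i[v,S,w] := \bigvee_{u \in N_G(v)} T[h_i,u,S,w]$ for each child $h_i$, and \textbf{(b)} a \emph{sibling combination}
\[
  T[h,v,S,w] := \bigvee_{\substack{S_1 \sqcup \cdots \sqcup S_d = S \setminus \{c(v)\} \\ w_1 + \cdots + w_d = w - w(v)}} \bigwedge_{i=1}^{d} U_i[v,S_i,w_i],
\]
restricted to the case where $c(v)$ is the color assigned to $h$.

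The crucial observation is that step~\textbf{(a)} for each fixed child $h_i$ and color set $S$ is exactly one boolean matrix product $A_G \cdot M$, where $A_G \in \{0,1\}^{n \times n}$ is the adjacency matrix of $G$ and $M \in \{0,1\}^{n \times O(W)}$ stores the slice $T[h_i,\cdot,S,\cdot]$; this costs $\MM(n,n,W)$. Summed over all $O(k)$ edges of $H$ and over all color sets (absorbed into $g(k)$), step~\textbf{(a)} contributes $\widetilde{O}(\MM(n,n,W)\,g(k))$. Step~\textbf{(b)} is a $d$-fold additive convolution in the weight axis at each vertex $v$; using pairwise FFT-based convolution this takes $\widetilde{O}(W\log W)$ per pair, and summing over $v$, over internal nodes (whose degrees sum to $O(k)$) and over color-set partitions contributes $\widetilde{O}(nW\log W \cdot g(k))$. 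Adding the two terms gives the claimed running time.

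The main obstacle is to ensure that the color-set bookkeeping, in particular iterating over the $O(2^k)$ choices of $S$ and, in step~\textbf{(b)}, over all partitions $S_1 \sqcup \cdots \sqcup S_d$, inflates the running time only by a function of $k$ alone, without hidden factors of $n$ or $W$. This is handled by the standard device of performing exactly one rectangular matrix multiplication, respectively one $d$-fold convolution, \emph{per} color set (resp.\ partition), and folding the resulting $2^{O(k)}$ invocations into $g(k)$. The remainder of the argument is a routine induction on the depth of $H$, mirroring the proof of part~3 of Theorem~\ref{corollary-upper-bound-weighted}, with the sole change that the weight dimension is processed \emph{in bulk} by rectangular matrix multiplication instead of being looped over one value at a time.
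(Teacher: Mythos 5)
Your proposal is correct and follows essentially the same route as the paper: a bottom-up DP on the tree $H$ in which the lift across each tree edge is performed as one boolean $n\times n$ by $n\times O(W)$ rectangular matrix product (the paper's $Adj_{v,u}\cdot\widetilde{d}_u$, your $A_G\cdot M$), and siblings are combined by FFT-based convolution along the weight axis, giving $\MM(n,n,W)+nW\log W$ up to $g(k)$ factors. The only cosmetic difference is the color bookkeeping — you carry Alon--Yuster--Zwick color subsets $S$ through the DP, whereas the paper works with the colored problem (fixed preimages per pattern vertex) and invokes its equivalence lemma — but both devices cost only a function of $k$.
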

\begin{theorem}\label{corollary-node-weighted-algo-pathwidth}
  There is an algorithms which, given an arbitrary instance $\phi = (H,G,w)$ of the node-weighted \textsc{Exact Weight Subgraph Isomorphism} problem, solves $\phi$ in time $\widetilde{O}(\MM(n,n,n^{\pw(H)-1}W)g(k))$.
\end{theorem}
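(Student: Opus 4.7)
The plan is to lift the unweighted bounded-pathwidth algorithm of Theorem~\ref{corollary-upper-bound-detection-pathwidth} to the node-weighted setting by enlarging its state with a single weight coordinate, and then bundling that coordinate into the column dimension of the rectangular matrix product that already drives the unweighted algorithm. Concretely, fix a nice path decomposition $(B_1, \ldots, B_m)$ of $H$ with bags of size $p+1$ where $p = \pw(H)$, and maintain tables $T_i$ whose entries are indexed by pairs $(f, \ell)$ with $f : B_i \to V(G)$ and $\ell \in \{-kW, \ldots, kW\}$; set $T_i[f, \ell] = 1$ iff some partial embedding compatible with $f$ has total node-weight exactly $\ell$.

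Next I would carry each transition from the unweighted algorithm over to this enlarged state. An ``introduce $v$'' transition shifts the weight coordinate by $w(f(v))$ and multiplies in the new $G$-adjacency indicators; a ``forget $v$'' transition takes a disjunction over the value of $f(v)$ and leaves the weight coordinate untouched. The essential feature of the node-weighted setting is that no transition ever adds two weighted quantities together: each vertex contributes its weight exactly once, at its introduce step, and this contribution depends only on the single value $f(v)$. Hence no convolution over the weight axis is ever required, which is what lets us avoid the additive $n^{\pw(H)}W\log W$ term appearing in Theorem~\ref{corollary-upper-bound-weighted-pathwidth}.

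The heart of the proof is then to bundle the weight index into the rectangular matrix product that drives the unweighted algorithm. That algorithm's dominant cost is, at each appropriately grouped batch of transitions, an $n \times n$ by $n \times n^{p-1}$ matrix product, where the $n \times n$ factor encodes the $G$-adjacency involving the currently processed vertex and the $n^{p-1}$ column dimension enumerates partial assignments of the remaining bag vertices, costing $n^{\omega(p-1)}$. Since the $G$-adjacency factor does not depend on the cumulative weight, I would concatenate the weight index onto the $n^{p-1}$ column dimension, turning each such product into a single product of shape $\MM(n, n, n^{p-1}W)$; the per-entry weight shift contributed by an introduce step becomes a mere pre-permutation of the rows of the large factor, absorbed into the $\widetilde{O}$. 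Summing over the $O(k)$ such products performed during the decomposition gives the claimed running time $\widetilde{O}(\MM(n, n, n^{p-1}W)g(k))$. The main obstacle is justifying this bundling step, i.e.\ verifying that the weight axis can be treated as just one more passive coordinate of the large factor without perturbing the matrix-product semantics; this is precisely where the node-weighted hypothesis is indispensable, since under edge weights the weight increment at a product step would depend on the summation variable and would force a genuine convolution along the weight axis.
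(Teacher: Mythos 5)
Your proposal is correct and matches the paper's proof in all essentials: the paper likewise stores a DP table indexed by a bag configuration together with a weight coordinate, folds the weight axis into the column dimension of the rectangular factor (a matrix $B_{t'}$ of shape $n \times n^{\pw(H)-1}|\mathcal{W}|$ with the introduce-step weight shift realized as the reindexing $B_{t'}[R',W'] := d_{t',W'-w(R'|_{\hat{v}})}(R')$), and multiplies by the $n\times n$ adjacency matrix of the introduced vertex, exactly as you describe. Your observation that node weights obviate any convolution along the weight axis is precisely the point the paper exploits.
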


For $W = O(n^\gamma)$, the running time of Theorem~\ref{corollary-node-weighted-algo-trees} is $\widetilde{O}(n^{\omega(\gamma)}\poly(k))$. Using results from~\cite{gall2018improved}(discussed further in section~\ref{sec:algo-col-subiso-bounded-pathwidth}), this implies several interesting facts. First, for $\gamma < 0.31$, the node-weighted problem on trees can be solved in time $\widetilde{O}(n^2\poly(k))$, meaning it can be solved in the same running time as the unweighted case. In particular, this applies to $W = \polylog(n)$ or $W = O(\sqrt[4]{n})$.

Second, for arbitrary $\gamma$ we now have a running time of $\widetilde{O}(n^{\omega(\gamma) - \gamma}W\poly(k))$. Trivially, for any $\gamma > 0$, $\omega(\gamma) - \gamma < 2$. Indeed, for $\gamma \geq 5$ we have $\omega(\gamma) - \gamma < 1.16$ by~\cite{gall2018improved}. This shows that for node weighted trees, there cannot be a lower bound of $n^{1.16}W$, let alone $n^2W$. Indeed, it is known that $\lim_{\gamma \to \infty}\omega(\gamma) - \gamma = 1$~\cite{coppersmith1982rapid}, which implies that when restricting to instances where \(W=\Theta(n^\gamma)\), there cannot be a lower bound of $n^{1+\varepsilon}W$ for any $\varepsilon > 0$ that holds for any constant \(\gamma>0\). Similar results hold for bounded-pathwidth graphs.

  \paragraph*{Structure of the Paper}
We begin by giving a simplified view of our two main proofs in Section~\ref{section-sketches}. We then give formal definitions of key concepts in Section~\ref{section-preliminaries} before diving into the full proofs. In Section~\ref{section-hardness}, we prove all lower bounds. In Section~\ref{section-algos}, we prove all upper bounds, including improved algorithms for node-weighted graphs. In Section~\ref{sec:interconnections}, we prove the semi-equivalence between \textsc{Hyperclique} and \textsc{Subgraph Isomorphism} on bounded treewidth graphs. Finally, in Section~\ref{section-equivalence}, we prove the equivalences between the colored and uncolored problems which we use throughout the paper. We conclude by stating several important open problems in Section~\ref{sec:open-problems}.

\section{Technical Overview of Our Main Results}\label{section-sketches}

  We now give proof sketches of our two main lower bound results to present the main ideas of the proofs without giving too much detail. The full details are available in Section~\ref{section-hardness}.

  \subsection{Lower Bound for Subgraph Isomorphism}

  Our main result is the existence of the hard pattern graphs for bounded-treewidth \textsc{Subgraph Isomorphism}. We now prove their existence for treewidth at least 3 under the Hyperclique hypothesis, i.e.~part 1 of Theorem~\ref{corollary-lower-bound-detection}.

  The exact statement we prove is that for each $t\geq 3$ and each $3\leq h \leq t$, there exists a pattern graph of treewidth $t$ such that \textsc{Subgraph Isomorphism} cannot be solved in time $O(n^{t+1-\varepsilon})$ on that pattern graph unless the $h$-uniform $h(t+1)$-hyperclique hypothesis fails. Note that the proof actually shows this for the colored variant of \textsc{Subgraph Isomorphism}, after which we can use Lemma~\ref{lemma-equiv} to transfer the lower bound to the uncolored problem.

  \begin{figure}
    \centering
    \scalebox{1}{
      \tikzstyle{every picture}=[tikzfig]
      \begin{tikzpicture}
	\begin{pgfonlayer}{nodelayer}
		\node [style=Rnodes] (1) at (-19.5, -8) {};
		\node [style=Rnodes] (2) at (-23.25, -8.25) {};
		\node [style=Rnodes] (3) at (-22.75, -9.5) {};
		\node [style=Rnodes] (4) at (-20, -10.5) {};
		\node [style=Rnodes] (5) at (-21.5, -6.75) {};
		\node [style=Rnodes 2] (6) at (-16.5, -4.5) {};
		\node [style=Rnodes 2] (7) at (-25.75, -3.75) {};
		\node [style=Rnodes 2] (8) at (-22, -14.75) {};
		\node [style=Rnodes 2] (9) at (-25.5, -12.5) {};
		\node [style=Rnodes 2] (10) at (-26.75, -8.5) {};
		\node [style=none] (12) at (-22, -8.75) {\textcolor{red}{$A$}};
		\node [style=none] (13) at (-26.5, -11) {\textcolor{rgb,255: red,255; green,128; blue,0}{$B$}};
		\node [style=none] (14) at (-26.25, -6) {\textcolor{rgb,255: red,255; green,128; blue,0}{\reflectbox{$\vdots$}}};
		\node [style=none] (15) at (-24, -13.75) {\textcolor{rgb,255: red,255; green,128; blue,0}{$\ddots$}};
		\node [style=Hollow Hnode] (18) at (-9.25, -7.25) {};
		\node [style=Gnode] (19) at (-8.75, -6.5) {};
		\node [style=Gnode] (20) at (-10, -6.75) {};
		\node [style=Gnode] (21) at (-9.75, -8) {};
		\node [style=Gnode] (22) at (-8.5, -7.5) {};
		\node [style=light hollow Hnode] (23) at (-12.75, -14) {};
		\node [style=Gnode R] (24) at (-13.25, -14.5) {};
		\node [style=Gnode R] (25) at (-12.25, -13.25) {};
		\node [style=Gnode R] (26) at (-13.25, -13.25) {};
		\node [style=Gnode R] (27) at (-11.75, -14.25) {};
		\node [style=text node] (28) at (-2.5, -3.25) {\Large{Corresponds to a choice of one vertex from each of the $h$ partitions}};
		\node [style=text node] (29) at (-6, -15) {\Large{Corresponds to a hyperedge}};
		\node [style=text node] (30) at (-12.75, -1.5) {\Large{Corresponds to $h$ vertex partitions of the hyperclique instance}};
		\node [style=text node] (31) at (-6.25, -17.75) {\Large{Corresponds to the set of hyperedges between its $h$ defining vertex partitions}};
		\node [style=text node] (32) at (-2.25, -10) {\Large{The edge exists if and only if the endpoints are compatible, i.e.\ if for each hyperedge partition that involves both endpoints, the vertex choice on one end agrees with the vertex of the hyperedge on the other}};
		\node [style=none] (33) at (-14.25, -2.75) {};
		\node [style=none] (34) at (-4.25, -4.75) {};
		\node [style=none] (35) at (-9.5, -17) {};
		\node [style=none] (36) at (-6.75, -14.25) {};
		\node [style=none] (37) at (-10.1, -10) {};
		\node [style=none] (38) at (-5.5, -10) {};
		\node [style=none] (39) at (-20.25, -9.75) {};
		\node [style=none] (40) at (-19.25, -10.25) {};
		\node [style=none] (41) at (-21.75, -15.5) {};
		\node [style=none] (42) at (-22.75, -15) {};
		\node [style=none] (43) at (-6, -6) {};
		\node [style=none] (44) at (-10, -4) {};
		\node [style=none] (45) at (-12, -17) {};
		\node [style=none] (47) at (-16, -15) {};
		\node [style=none] (48) at (-16, -15) {};
		\node [style=center node] (49) at (-21, -8.5) {};
		\node [style=none] (50) at (-20.5, -3.25) {\textcolor{orange}{pattern graph $H$}};
		\node [style=none] (51) at (-8.25, -3.25) {\textcolor{blue}{host graph $G$}};
		\node [style=none] (52) at (-12.25, -8) {};
	\end{pgfonlayer}
	\begin{pgfonlayer}{edgelayer}
		\draw [style=R-edge, bend left=15] (1) to (6);
		\draw [style=R-edge, bend right=15] (1) to (6);
		\draw [style=R-edge, bend left=15] (5) to (6);
		\draw [style=R-edge] (5) to (7);
		\draw [style=R-edge, bend right] (5) to (7);
		\draw [style=R-edge, bend left] (5) to (7);
		\draw [style=R-edge] (4) to (8);
		\draw [style=R-edge, bend left=15] (5) to (8);
		\draw [style=R-edge, bend right=15] (3) to (8);
		\draw [style=R-edge, bend right] (9) to (4);
		\draw [style=R-edge, bend left=15] (9) to (3);
		\draw [style=R-edge, bend right=15] (9) to (1);
		\draw [style=R-edge, bend right=15] (10) to (4);
		\draw [style=R-edge, bend left] (10) to (1);
		\draw [style=R-edge] (10) to (2);
		\draw [style=Gedge] (22) to (25);
		\draw [style=description, bend right] (33.center) to (18);
		\draw [style=description, bend left=15] (34.center) to (22);
		\draw [style=description, bend left=15, looseness=0.75] (35.center) to (23);
		\draw [style=description, bend right=15] (36.center) to (25);
		\draw [style=description, bend left=15] (38.center) to (37.center);
		\draw (39.center) to (42.center);
		\draw (41.center) to (40.center);
		\draw (42.center) to (41.center);
		\draw (39.center) to (40.center);
		\draw (43.center) to (45.center);
		\draw (45.center) to (47.center);
		\draw (47.center) to (44.center);
		\draw (44.center) to (43.center);
		\draw [style=dashedline2] (41.center) to (45.center);
		\draw [style=dashedline2] (39.center) to (44.center);
		\draw [style=dashedline2] (42.center) to (48.center);
		\draw [style=dashedline2] (40.center) to (52.center);
	\end{pgfonlayer}
\end{tikzpicture}
    }
    \caption{A sketch of the reduction we use to prove part 1 of Theorem~\protect{\ref{corollary-lower-bound-detection}} where $h = 3$ and $t = 4$. Note that this is only a partial sketch of the pattern graph. We use multiedges to signify that for the endpoints $a\in A$ and $b = ((a_1, j_1), \ldots, (a_h,j_h)) \in B$ there exists more than one $\ell$ such that $a = a_\ell$.}
    \label{fig:main-proof}
  \end{figure}
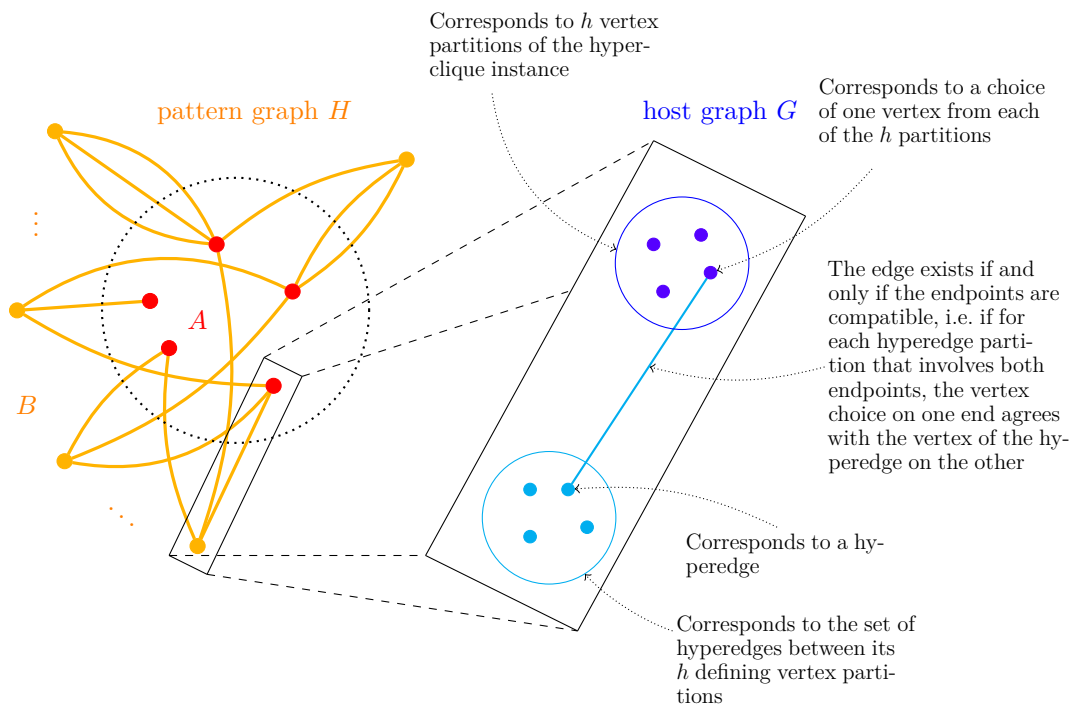

  \begin{proof}[Proof sketch] See Figure~\ref{fig:main-proof} for a sketch of the reduction. Let $t\geq 3$ and $3\leq h \leq t$ be given and assume that \textsc{Subgraph Isomorphism} can be solved in time $O(n^{t+1-\varepsilon})$ on pattern graphs of treewidth $t$. We show that the $h$-uniform $h(t+1)$-hyperclique hypothesis fails.
    \begin{description}
    \item [Construction of H] We construct a pattern graph $H$ as a bipartite graph with vertex set $A \cup B$ as follows. Writing $[c] := \{1,\ldots, c\}$, we set $A := [t+1]$ and $B := {A \times [h] \choose h}$. We connect a vertex $b = ((a_1,j_1),..,(a_h,j_h))$ in $B$ to a vertex $a$ in $A$ if $a = a_\ell$ for some $\ell$. Set $k := |A|+|B|$.

      We show that this pattern has a treewidth of $t$, via a well-known characterization of treewidth as a graph-theoretic game: A graph $F$ has treewidth $\leq t$ if and only if $t+1$ cops can catch\footnote{The game works as follows: The $k+1$ cops select their starting vertices in the graph. Then the robber may choose their starting vertex. The cops can always see the robber and adapt their strategy accordingly. Similarly, the robber can see the cops. The game now proceeds in steps, where in each step, one of the cops chooses an arbitrary destination vertex and takes off via helicopter in the direction of that vertex. While the cop is travelling, the robber sees where they will land and may now move arbitrarily along edges of the graph, as long as they do not pass through stationary cops. When the robber has finished moving, the cop lands. The cops win if and only if they are guaranteed to catch the robber after a finite number of moves, and lose otherwise.} a robber on $F$~\cite{seymour1993graph}. To show the bound on the treewidth of $H$, initially place a cop on each vertex of $A$. No matter on which vertex of $B$ the robber starts, they are surrounded by cops. Since every vertex in $B$ has $h \leq t < t+1$ neighbors in $A$, there must exist some cop which is not adjacent to the robber, so this cop can catch the robber in a single step. This concludes the proof that the pattern graph $H$ has treewidth $t$.

  \item[Construction of G] Now given a $h(t+1)$-partite hypergraph $H'$, i.e.~an instance of the $h$-uniform $k'$-\textsc{Hyperclique} problem for $k':=h(t+1)$, we write the vertex set of $H'$ as $U_{1,1} \cup \ldots \cup U_{1,h} \cup \ldots \cup U_{t+1,1} \cup \ldots \cup U_{t+1,h}$. Let $N_H$ be the number of vertices vertices in each partition and $n_H = O(N_H)$ the number of vertices overall.

  We construct a $k$-partite graph $G$ as follows. For $a$ in $A$ we set $V_a := U_{a,1} \times \ldots \times U_{a,h}$. For $b = ((a_1,j_1),..,(a_h,j_h))$ in $B$, we set $V_b := E(H') \cap (U_{a_1,j_1} \times \ldots \times U_{a_h,j_h})$. This describes the $k$ parts of the $k$-partite vertex set $V(G)$. Note that each part has size at most $N_G:= N_H^h$. Now we construct the edges. For any $a$ in $A$ and $b = ((a_1,j_1),..,(a_h,j_h))$ in $B$ with $(a,b)$ in $E(H)$, consider an arbitrary $u = (u_1,..,u_h)$ in $V_a$ and $u' = (u'_1,...,u'_h)$ in $V_b$. We say that $u$ and $u'$ are "compatible" if for every $\ell$ with $a_\ell = a$ we have $u'_\ell$ = $u_{j_\ell}$; in this case we connect $u$ and $u'$ by an edge. This finishes the construction of $G$.

\item[Correctness] Note that any colored subgraph isomorphism of $H$ in $G$ chooses vertices $v_a$ in $V_a$ for all $a$ in $A$. This corresponds to choosing vertices $u_{i,j}$ in $U_{i,j}$ for all $i$ in $[t+1]$, Moreover, the edges of a $h(t+1)$-hyperclique are in one-to-one correspondence with the set $B$. Since for each $b$ in $B$ the colored subgraph isomorphism of $H$ in $G$ needs to choose a vertex $v_b$ in $V_b$, which corresponds to an edge between certain vertices $u_{i,j}$, we indeed check that the chosen vertices $u_{i,j}$ form an $h$-uniform $h(t+1)$-hyperclique.

\item[Running Time] Trivially, the construction time and output size are $O(N_H^{2h})$ (actually, it is slightly better, but this is not important in this proof sketch), and $G$ has $n = O(N_G) = O(N_H^h)$ vertices. Now if we can solve \textsc{Subgraph Isomorphism} in time $O(n^{t+1-\varepsilon})$, we solve the $h$-uniform $h(t+1)$-\textsc{Hyperclique} instance in time $O(N_H^{2h} + (N_H^h)^{t+1-\varepsilon}) = O(N_H^{h(t+1)-h\varepsilon}) = O(N_H^{h(t+1) - \varepsilon'}) = O(n_H^{h(t+1) - \varepsilon'})$.
\end{description}
\end{proof}

This shows part 1 of Theorem~\ref{corollary-lower-bound-detection}. Part 2 can be shown by the almost the exact same proof, except that now we choose the size of $A$ to be $t$ instead of $t+1$, and we start with an $ht$-hyperclique instead of an $h(t+1)$-hyperclique. It can be seen that in this case, the pattern graph still has treewidth $t$. The third part of the theorem can be seen by simply taking an instance of $(t+1)$-\textsc{Clique} and subdividing the edges in the obvious way to make the graph bipartite.

Let us also quickly mention how the proof of the slightly weaker bounds under SETH, i.e.~Theorem~\ref{corollary-lower-bound-detection-seth}, works. The split-and-list technique from~\cite{williams2005new} allows one to reduce the Satisfiability problem to \textsc{Hyperclique}. Using this technique, the following result was shown in~\cite[Lemma 9.1]{lincoln2018tight}.
  \begin{lemma}[\cite{lincoln2018tight}]\label{seth-to-hyperclique}
    Assuming SETH, for any $\varepsilon>0$ there exists $h\geq3$ such that for all $k>h$, the $h$-uniform $k$-\textsc{Hyperclique} problem is not in time $O(n^{k-\varepsilon})$.
  \end{lemma}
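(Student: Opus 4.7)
The plan is to prove the contrapositive using the split-and-list technique of Williams~\cite{williams2005new}, which reduces $c$-\textsc{Sat} to Hyperclique and lets us transfer SETH-hardness.

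\textbf{Reduction.} Given a $c$-\textsc{Sat} instance on $N$ variables, I would partition the variables into $k$ equal-sized groups. For each group, enumerate all $n := 2^{\lceil N/k \rceil}$ partial assignments, creating one vertex per assignment so that the vertex set decomposes into $k$ color classes. For every $h$-subset $S$ of color classes and every choice of one vertex per class in $S$, add the corresponding $h$-hyperedge iff the induced partial assignment falsifies no clause whose variables lie entirely in the $h$ groups indexed by $S$. Because each clause involves at most $c$ variables (hence at most $c$ groups), setting $h \ge c$ ensures that every clause is tested by some hyperedge, and a straightforward check shows that a $k$-hyperclique in the resulting $h$-uniform hypergraph corresponds exactly to a satisfying assignment of the original formula. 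The reduction runs in $\widetilde O(n^h)$ time, which is negligible next to the Hyperclique solving time for reasonable parameter ranges.

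\textbf{Calibration.} A hypothetical $O(n^{k-\varepsilon})$ algorithm for $h$-uniform $k$-\textsc{Hyperclique} composed with the reduction yields a $O^*\!\bigl(2^{(1-\varepsilon/k)N}\bigr)$-time algorithm for $c$-\textsc{Sat} for every $c \le h$. To extract the SETH contradiction, given $\varepsilon > 0$, I would pick $h$ (depending on $\varepsilon$) large enough that, for every $k > h$, the clause size at which SETH guarantees $c$-\textsc{Sat} cannot be solved in $O^*(2^{(1-\varepsilon/k)N})$ is at most $h$; then the composed algorithm solves a SETH-hard $c$-\textsc{Sat} below its hardness threshold, a contradiction.

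\textbf{Main obstacle.} The trickiest step is this parameter calibration: the lemma asserts a single $h$ that works for \emph{every} $k > h$, while the reduction's saving $\varepsilon/k$ deteriorates with $k$ and so appeals to SETH at progressively smaller parameters. Handling this gracefully, so that the choice of $h$ can still dominate the clause sizes SETH provides across the whole range of $k > h$, is the core quantitative argument carried out in Lemma 9.1 of~\cite{lincoln2018tight}; it is exactly here that one must invoke the flexibility of $h$ being allowed to grow with $\varepsilon$.
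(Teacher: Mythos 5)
Your reduction itself is the standard split-and-list construction and is fine: with $h \ge c$ and $k \ge h$, every width-$c$ clause is checked by some $h$-hyperedge, and a colorful $k$-hyperclique corresponds exactly to a satisfying assignment. Note, however, that the paper does not prove this lemma at all --- it imports it verbatim from the cited work --- so the question is whether your calibration sketch actually closes the argument, and it does not. The step ``pick $h$ large enough that for every $k > h$ the clause width at which SETH guarantees hardness at savings $\varepsilon/k$ is at most $h$'' is not merely tricky; it is provably unachievable. By Sch\"oning/PPSZ, $c$-\textsc{Sat} is solvable in time $O(2^{(1-\mu/c)N})$ for a universal constant $\mu>0$, so $c$-\textsc{Sat} can only be hard at savings $s$ if $c > \mu/s$. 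Hence hardness at savings $\varepsilon/k$ forces clause width greater than $\mu k/\varepsilon$, which exceeds any fixed $h$ once $k$ is large (and already for $k=h+1$ when $\varepsilon<\mu$). So no choice of $h=h(\varepsilon)$ makes your per-$k$ invocation of SETH go through uniformly over all $k>h$; the ``flexibility of $h$ growing with $\varepsilon$'' cannot help, because the obstruction is the growth in $k$, not in $\varepsilon$.

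What the split-and-list argument does deliver, with a single $k$-independent application of SETH, is the relative-savings form: given $\varepsilon>0$, SETH supplies one width $c$ with $c$-\textsc{Sat} not in $O(2^{(1-\varepsilon)N})$; taking $h:=\max(c,3)$, an $O(n^{(1-\varepsilon)k})$ algorithm for $h$-uniform $k$-\textsc{Hyperclique} would solve $c$-\textsc{Sat} in time $O(2^{(1-\varepsilon)N})$ for every $k>h$ simultaneously. Upgrading this to the absolute bound $O(n^{k-\varepsilon})$ uniformly in $k$ is precisely the content your proposal is missing (the grouping self-reduction for hyperclique only transfers savings upward, preserving the relative rate $\varepsilon/k$, and there is no evident downward self-reduction), and deferring ``the core quantitative argument'' back to Lemma 9.1 of the very paper being cited makes the proof circular. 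As written, the proposal has a genuine gap at its central step; you should either import the lemma as a black box, as this paper does, or supply the missing argument that bridges relative and absolute savings.
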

The SETH result now follows by using essentially the same reduction as above, but we prefix it by the reduction from \textsc{SAT} to \textsc{Hyperclique}.

\subsection{Lower Bound for Exact Weight  Subgraph Isomorphism}

We also give lower bounds for the exact weight variant of the \textsc{Subgraph Isomorphism} problem. In particular, we prove the existence of hard pattern graphs for the bounded-treewidth \textsc{Exact Weight Subgraph Isomorphism} problem for any polynomial weight bound. We give this result for any treewidth which is at least 3, and under the Hyperclique hypothesis. This is part 1 of Theorem~\ref{corollary-lower-bound-weighted}.

The exact statement we prove is that for each $t \geq 3$, $\gamma \in \mathbb{R}^+$ and $3\leq h \leq t$, there exists a pattern graph of treewidth $t$ such that \textsc{Exact Weight Subgraph Isomorphism} with maximum weight $W = \Theta(n^\gamma)$ cannot be solved in time $O(n^{t+1-\varepsilon}W)$ unless the $h$-uniform Hyperclique hypothesis fails. Again, we show this statement for the colored problem and transfer the lower bound via Lemma~\ref{lemma-equiv}.

To do this, we will encode part of a hyperclique instance in the edges of the \textsc{Exact Weight Subgraph Isomorphism} problem, and the rest of the instance in the weights. To do the latter, we need to encode certain equality constraints only via weights. This can be done using so-called $k$-average free sets\footnote{These $k$-average-free sets are a tool which are very useful for weighted problems, especially when they have additive elements. Such problems include \textsc{$k$-sum}, \textsc{Subset Sum}, \textsc{Bin Packing}, various scheduling problems, \textsc{Tree Partitioning}, \textsc{Max-Cut}, \textsc{Maximum/Minimum Bisection}, a \textsc{Dominating Set} variant with capacities, and similar~\cite{abboud2019seth, abboud2020scheduling, abboud2014losing, an2017complexity, fomin2014almost, jansen2013bin, dudek2020all}. Other uses of $k$-average-free sets in computer science include constructions in extremal graph theory, see e.g.~\cite{abboud20174, abboud2018hierarchy, alon2002testing}.}, which we define below.

\begin{definition}[$k$-average free sets]
A set $S\subseteq \mathbb{Z}$ is called $k$-average-free if, for any $s_1, \ldots, s_{k'+1} \in S$ with $k' \leq k$, we have $s_1+ \ldots + s_{k'} = k'\cdot s_{k'+1}$ if and only if $s_1 = \ldots = s_{k'+1}$. In other words, the average of $s_1, \ldots, s_{k'}\in S$ is in $S$ if and only if all $s_i$ are equal.  \end{definition}

We use the following construction for $k$-average free sets, originally proven in~\cite{behrend1946sets}, modified into a more useful version in~\cite{abboud2014losing} and formulated in this form in~\cite{abboud2019seth}.

\begin{lemma}\label{k-average-free-lemma}
    There exists a universal constant $c > 0$ such that, for all constants $\varepsilon \in (0,1)$ and $k \geq 2$, a $k$-average-free set $S$ of size $n$ with $S \subseteq [0, k^{c/\varepsilon}n^{1+\varepsilon}]$ can be constructed in time $\poly(n)$.
  \end{lemma}

  Let us now prove the statement about \textsc{Exact Weight Subgraph Isomorphism}. We will construct an instance that is node-weighted, however this can easily be converted into an edge-weighted version by moving the weight of each vertex to all of its incident edges.

\begin{proof}[Proof sketch]
  Let $t \geq 3$, $3 \leq h \leq t$ and $\gamma \in \mathbb{R}$ be given and assume that \textsc{Exact Weight Subgraph Isomorphism} can be solved in time $O(n^{t+1-\varepsilon}W)$ on instances where the pattern graph has treewidth $t$ and all weights are bounded by $W = \Theta(n^\gamma)$. We show that the $h$-uniform $k$-hyperclique hypothesis fails for some large enough $k$.
  \begin{description}
\item[Construction of H] We construct a pattern graph $H$ as a  graph with vertex set $(A_1 \cup A_2) \cup B$ as follows. We set $A_1 := [t+1]$, $A_2 := [r]$ (for some $r$ large enough) and $B := {{(A_1 \cup A_2) \times [h]} \choose h}$. We connect a vertex $b = ((a_1, j_1), \ldots, (a_h, j_h))$ in $B$ to a vertex $a$ in $A_1$ (not in $A_2$) if $a = a_\ell$ for some $\ell$. Set $k := |A_1| + |A_2| + |B|$. By almost the same proof as in the unweighted version, it can be shown that this pattern $H$ has treewidth $t$.

\item[Grouping partitions] Now let an instance of the $h$-uniform $k'$-\textsc{Hyperclique} problem be given, and write the vertex set of $H'$ as $U_1 \cup \ldots \cup U_k$.  Let $N_H$ be the number of vertices vertices in each partition and $n_H = O(N_H)$ the number of vertices overall.

  We construct the $k$-partite graph $G$ with at most some number $N_G$ of vertices in each partition as follows. We will encode a $\beta$-fraction of the \textsc{Hyperclique} instance in the weights of the final \textsc{Exact Weight Subgraph Isomorphism} instance, and a $(1-\beta)$-fraction in the edges, for some $\beta$ chosen appropriately. To do this, we will choose $\beta$ such that $\frac{\beta k'}{hr}, \frac{(1-\beta) k'}{h(t+1)} \in \mathbb{N}$ and then group the sets $U_1, \ldots, U_{\beta k'}$ into $hr$ groups and the sets $U_{\beta k' +1}, \ldots, U_{k'}$ into $h(t+1)$ groups. Specifically, for each $(x,y) \in [r]\times [h]$, we create the set $U^1_{x,y} = U_{(xr + y - 1)\frac{\beta k'}{hr}+1} \times \ldots \times U_{(xr + y)\frac{\beta k'}{hr}}$, and for each $(x,y) \in [t+1]\times [h]$, we create the set $U^2_{x,y} = U_{\beta k' + (x(t+1) + y - 1)\frac{(1-\beta)k'}{h(t+1)}+1} \times \ldots \times U_{\beta k' + (x(t+1) + y)\frac{(1-\beta)k'}{h(t+1)}}$.

\item[Vertices of G] Now for each $a$ in $A_1$ we set $V_{a} := U^1_{a,1} \times \ldots \times U^1_{a, h}$ and for each $a$ in $A_2$ we set $V_{a} := U^2_{a,1} \times \ldots \times U^2_{a,h}$. Finally, for each $b = ((a_1, j_1), \ldots, (a_h, j_h))$  in $B$, where for each $\ell$ we have $a_\ell \in A_{i_\ell}$, we set $V_b := E(H') \cap (U^{i_1}_{a_1, j_1} \times \ldots \times U^{i_h}_{a_h,j_h})$. This describes the $k$ parts of the $k$-partite vertex set $V(G)$. We choose $r$ large enough so that the maximum size of each part is $N_G := N_H^{(1-\beta)k'/(t+1)}$.

\item[Edges of G] Now we construct the edges and weights of the graph. Let us start with the edges. The construction here is basically the same as the construction of the edges in the unweighted proof in the last section. For each $a$ in $A_2$ and $b = ((a_1,j_1), \ldots, (a_h, j_h))$ in $B$ with $(a,b)$ in $E(H)$, consider an arbitrary $u = (u_1, \ldots, u_h)$ in $V_a$ and $u' = (u'_1, \ldots, u'_h)$ in $V_b$. We say that $u$ and $u'$ are ``compatible'' if for every $\ell$ with $a_\ell = a$ we have $u'_\ell = u_{j_\ell}$; in this case we connect $u$ and $u'$ by an edge. This finishes the construction of the edges of $G$.

\item[Weights of G] Now we construct the weights. We want to encode the same edge constraints as we just encoded for $A_2$, but now for $A_1$, and we have to use weights instead of edges. To do this, we use $|B|$-average free sets via the construction of Lemma~\ref{k-average-free-lemma}. We simplify the usage in this shortened proof to avoid dealing with too many variables. We use the lemma to obtain in polynomial time (which we will treat as negligible here) a $|B|$-average free set $S$ of size $N_H^{\beta k' / (hr)}$ such that $S \subseteq [0, C]$, where $C \approx O(N_H^{\beta k' /(hr)})$ (up to a factor of $(1 + \varepsilon)$ in the exponent, but we will ignore this here for simplicity). From this, we can construct an arbitrary bijection $\varrho_S: [N_H^{\beta k'/(hr)}] \to S$.

  To simplify our construction, we specify a target weight $T$ (instead of the default target zero). We can easily get rid of this again later by subtracting $T$ from the weights of all vertices of some set of the partition. The binary representation of $T$ consists of $hr$ blocks of $\lceil 2|B|C \rceil$ bits, indexed by pairs $(i,j) \in [r]\times [h]$, each containing the binary representation of $|B|C$. The block $(i,j)$ represents the group $U^1_{i,j}$. The size of the blocks is large enough to prevent overflow between the blocks. Note that the maximum weight $W$ now satisfies $\log_2(W) = \Theta(hr\log_2(2|B|C))$ and hence $W \approx O(N_H^{\beta k'})$.

  Let us now actually specify the weights of the vertices, beginning with the vertices in $V_a$ for $a\in A_1$. For each $(i,j) \in [r]\times [h]$, we relabel the elements of each $U^1_{i,j}$ as $\{1, \ldots, N_H^{\beta k'/(hr)}\}$. Now we define the weight of the vertex $V_a \ni u = (u_1, \ldots, u_h)$ to have, for each $i \in [h]$, the value $|B|C - |N(a)|\cdot \varrho_S(u_i)$ in the block $(a,i)$ of its binary representation. Now we move on to the vertices in $V_b$ for $b = ((a_1, j_1), \ldots, (a_h, j_h))$. We define the weight of the vertex $V_b \ni u' = (u'_1, \ldots, u'_h)$ to have, for each $i \in [h]$ such that $a_i \in A_1$, the value $\varrho_S(u'_i)$ in the block $(a_i, j_i)$ of its binary representation.

  All blocks and vertices which have not been assigned a weight yet are assigned a value of zero. This concludes the construction of $G$.

\item[Correctness] Note that any colored subgraph isomorphism of $H$ in $G$ chooses vertices $v_a$ in $V_a$ for all $a$ in $A_1\cup A_2$. This corresponds to choosing vertices $u_i$ in $U_i$ for each $i \in [k']$. Moreover, the edges of a $k'$-hyperclique are in one-to-one correspondence with the set $B$. We simply need to show that the choice of hyperclique vertices induced by the choice of vertices in $A_1 \cup A_2$ agrees with the choice of hyperclique edges induced by the choice of vertices in $B$. For the vertices in $A_2$, this is easily seen to be ensured by the edges. For the vertices in $A_1$, we need to prove that the weights encode the same constraint. This, however, is simply the definition of a $|B|$-average free set: Consider the block $(i,j)$ (where $(i,j) \in [r]\times [h]$) in the binary representation of the total weight of the subgraph. Suppose that for $i \in A_1$ the vertex $V_i \ni u = (u_1, \ldots, u_h)$ was selected, and that for each $B \ni b = ((a_1,j_1), \ldots, (a_h, j_h))$ with $\exists \ell: (a_\ell, j_\ell) = (i,j)$ the vertex $V_b \ni u' = (u'_1, \ldots, u'_h)$ was selected. Then by construction, the total value in the block $(i,j)$ is the value $|B|C - |N(i)|\varrho_S(u_j)$ (where $N(i)$ is the neighbourhood of $i \in A_1$), plus the value $\varrho_S(u'_\ell)$ for all $b$ as above. Note that the latter term has exactly $|N(i)|$ summands, hence in order for the value in the block to be equal to $|B|C$ as specified by the target weight, we must have that the value of $\varrho_S(u_j)$ is equal to the value of each of the $\varrho_S(u'_\ell)$ by the definition of $|B|$-average free sets. Since $\varrho_S$ is a bijection, this ensures that the choice of hyperclique vertices in the sets appearing in the Cartesian product defining $U^1_{i,j}$ -- i.e. $U_{(ir + j -1)\frac{beta k'}{hr} + 1}, \ldots, U_{(ir+j)\frac{\beta k'}{hr}}$ -- agree with the choice of hyperclique edges. This is true for all $i,j$ and hence for each $U_\ell$ for $\ell \in [k']$.

  The other direction is easy to see via a similar, simpler argument. This concludes the correctness proof.

\item[Running Time] It can be seen that the running time of this reduction is $O(N_H^{2h})$, up to the running time of the algorithm for the construction of the $|B|$-average free set, which we will ignore here for sake of simplicity. Now suppose we can solve \textsc{Exact Weight Subgraph Isomorphism} in time $O(n^{t+1-\varepsilon}W)$. We use the reduction above to convert a \textsc{Hyperclique} instance with $n_H = O(N_H)$ nodes to an \textsc{Exact Weight Subgraph Isomorphism} instance where $W \approx \Theta(N_H^{\beta k'})$ and $n = O(N_H^{(1-\beta)k'/(t+1)})$. Choosing $\beta$ carefully, we get $W = \Theta(N_H^\gamma)$; note that we are ignoring some intricacies in the choice of $\beta$ that arise when you consider the running time of the algorithm that constructs the $|B|$-average free set -- the details are available in Section~\ref{section-hardness}. Now via the algorithm for \textsc{Exact Weight Subgraph Isomorphism}, we can solve this instance and hence the original \textsc{Hyperclique} problem in time $O(N_H^{2h} + N_H^{((1-\beta)k'/(t+1))(t+1-\varepsilon)} \cdot N_H^{\beta k'}) = O(N_H^{k'-\varepsilon'}) = O(n_H^{k'-\varepsilon'})$. 
  \end{description}
\end{proof}

It is easy to see that the same proof also rules out algorithms running in time $O(n^{t+1}W^{1-\varepsilon})$. Similar as with the proof for the unweighted problem, basically the same techniques can be used to prove the other parts of Theorem~\ref{corollary-lower-bound-weighted}.

\section{Preliminaries}\label{section-preliminaries}

\subsection{General Notation and Nomenclature}\label{notation-and-nomenclature-general}

We denote by $\mathbb{N}$ the set of positive integers. For $p \in \mathbb{N}$, we use $[p]$ to denote the set $\{1, \ldots, p\}$. For a statement or predicate \(P\), we define the Iverson bracket \([P]\) as 1 if \([P]\) is true, and zero otherwise. To declutter notation that relies heavily on the Iverson bracket, we will often use truth values and 0/1 interchangeably, where true will be indicated by 1 and false by 0.

For a function $f: \mathcal{A} \to \mathcal{B}$ and a set $S \subseteq \mathcal{A}$, we denote with $f|_S: S \to \mathcal{B}$ the function $f$ restricted to $S$. That is, $\forall s \in S: f|_S(s) = f(s)$. Furthermore, for $u \notin \mathcal{A}$ and $v \notin \mathcal{B}$ we define the function extension $(f \cup \{u \mapsto v\}): \mathcal{A} \cup \{u\} \to \mathcal{B} \cup \{v\}$ as
\begin{align*}
(f \cup \{u \mapsto v\})(c) := \begin{cases} v & \text{ if $c = u$} \\ f(c) & \text{ otherwise}\end{cases}
\end{align*}

We use standard notation for graphs. In particular, for a graph $G$, we let $V(G)$ be its set of vertices and $E(G)$ its set of edges. For a set $X \subseteq V(G)$, we denote the induced subgraph by $G[X]$. For a vertex $v \in V(G)$, we denote its neighbourhood as \(N_G(v)\), or as $N(v)$ when \(G\) is clear from context. We denote the treewidth and pathwidth (see Section~\ref{prelim-treewidth}) of $G$ as $\tw(G)$ and $\pw(G)$, respectively. All graphs are, unless otherwise stated, simple, undirected and without self-loops.

We use $\poly(n)$ to denote functions which are upper-bounded by $O(n^c)$ for some \(c \in \mathbb{N}\), and $\polylog(n)$ to denote functions upper-bounded by $O(\log^c(n))$ for some $c \in \mathbb{N}$. In running times, we use $O^*(\cdot)$ to suppress factors that are polynomial in the input size, and \(\widetilde{O}(\cdot)\) to suppress factors that are polylogarithmic in the input size.

In all weighted problems, we assume without further mention that the target weight or maximum absolute weight is at least 1. This is to avoid special cases with the running time.

\subsection{Notation and Nomenclature for Colored Subgraph Isomorphism}\label{notation-and-nomenclature-colsubiso}

We now define some nomenclature for the \textsc{(Exact Weight) Colored Subgraph Isomorphism} problem.
Not that instead of talking about colors, we will talk about a ``color homomorphism'' \(f\). Specifically, the \textsc{Colored Subgraph Isomorphism} is defined as follows: Given a pattern graph $H$ and a host graph $G$ along with a graph homomorphism $f: V(G) \to V(H)$ , is it possible to pick a set $S$ with exactly one vertex from the preimage of each $v \in V(H)$ such that the subgraph induced by $S$ is isomorphic to $H$? The homomorphism \(f\) simulates the colors, with all the vertices in a preimage of \(f\) being of equal color (which is unique over all preimages). The \textsc{Exact Weight Colored Subgraph Isomorphism} is defined analogously\footnote{Depending on whether the instance is node-or edge weighted, we require of the solution subgraph that the sum of either its node or its edge weights is zero.}. The weight function is always be denoted by \(w\).
  
Fix an instance $(G,H,f)$ or $(G,H,f,w)$. $H$ is the pattern graph which is to be found in the large graph $G$ when given color homomorphism $f: V(G) \to V(H)$ and weight function $w$. For a subset $I \subseteq V(H)$, we call a function $R: I \to f^{-1}(I)$ a \textbf{configuration of $I$} if $\forall v \in I: R(v) \in f^{-1}(v)$. We define $\mathcal{Conf}(I) \subseteq (I \to f^{-1}(I))$ to be the set of configurations of $I$. A configuration of $I$ is called \textbf{valid configuration} if $\forall uv \in E(H[I]): R(u)R(v) \in E(G)$. Finally, we call a configuration $R$ of $I$ a \textbf{partial solution of $I$ in $J$}, for some $I \subseteq J \subseteq V(H)$, if there is a valid configuration $S$ of $J$ such that $S|_I = R$. We may shorten this to $R$ being a \textbf{partial solution for $J$} if $I$ is clear from context.

For a valid configuration $R$ of $I \subseteq V(H)$, we call $w(R)$ its \textbf{weight}. The exact definition of the weight $w(R)$ depends on whether $G$ is node-weighted or edge-weighted. If $G$ is node-weighted with weight function $w: V(G) \to \mathbb{Z}$, we define $w(R) := \sum_{u \in I} w(f^{-1}(u))$. If it is edge-weighted with weight function $w: E(G) \to \mathbb{Z}$, we define $w(R) := \sum_{uv \in E(H[I])}w(R(u)R(v))$, where it is guaranteed that $R(u)R(v) \in E(G)$ because $R$ is a valid configuration. Furthermore, we say that a partial solution $R$ of $I$ in $J$ \textbf{has an extension of weight $W'$} if there is a valid configuration $S$ of $J$ such that $S|_I = R$ and the nodes (respectively: the edges) of S which are not in $I$ have combined weight $W'$, i.e.\ $w_{ext}(S,R) := w(S) - w(R) = W'$.

For brevity, we further define the following predicates:
\begin{itemize}
\item $\ParSol(R;I;J) =$ $R$ is a partial solution of $I$ in $J$
\item $\ParSolE(R;I;J;W) = $ $R$ is a partial solution of $I$ in $J$ with an extension of weight $W$
\item $\ValConf(R;I) =$ $R$ is a valid configuration of $I$
\end{itemize}

\subsection{Equivalence of the Colored and Uncolored Problems}

\label{sec:equivalence-statement}

All mentioned algorithms and conditional lower bounds are shown for the restricted problem of \textsc{Colored Subgraph Isomorphism}, where the nodes of $G$ and $H$ are colored with $|V(H)|$ colors and the isomorphism must preserve colors, as also studied in~\cite{marx2007can}. In section~\ref{section-equivalence}, we prove that the standard and the colored variant of \textsc{Subgraph Isomorphism} can be solved in essentially the same running time in almost all cases. Specifically, we show the following lemma.

\begin{lemma}\label{lemma-equiv} Let $\rho$ be any graph parameter.
  \begin{enumerate}
  \item If there is a $T(n,k,\rho(H))$ time algorithm for \textsc{Colored Subgraph Isomorphism}, then there is a \(\widetilde{O}(T(k n,k,\rho(H))g(k))\) time algorithm for \textsc{Subgraph Isomorphism}, where $g$ is some computable function.
    \item If there is a \(T(n,k,\rho(H),W)\) time algorithm for \textsc{Exact Weight Colored Subgraph Isomorphism}, then there is a \(\widetilde{O}(T(k n, k, \rho(H),W)g(k))\) time algorithm for \textsc{Exact Weight Subgraph Isomorphism}, where $g$ is some computable function.
  \item Let $\tw(H) \geq 2$. If there is a $T(n,k,\tw(H))$ time algorithm for \textsc{Subgraph Isomorphism}, then there is a $O(T(\poly(k)n,\poly(k),\tw(H)) + \poly(k)n^2)$ time algorithm for \textsc{Colored Subgraph Isomorphism}.
  \item If there is a $T(n,k,\rho(H),W)$ time algorithm for \textsc{Exact Weight Subgraph Isomorphism}, then there is a $O(T(2n,2k,\rho(H),2^kW) + \poly(k)n^2)$ time algorithm for \textsc{Exact Weight Colored Subgraph Isomorphism}.
  \end{enumerate}
\end{lemma}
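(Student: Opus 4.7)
The plan is to prove the four parts by two dual techniques: color-coding to reduce uncolored to colored (parts 1 and 2), and structural or weight-based gadget reductions to go in the other direction (parts 3 and 4).

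For parts 1 and 2 I would apply color-coding via a $k$-perfect hash family $\mathcal{F}$ of functions $h:V(G)\to V(H)$ of size $|\mathcal{F}|=2^{O(k)}\log n$ (Naor--Schulman--Srinivasan). For each $h\in\mathcal{F}$ I interpret $h$ directly as a color homomorphism $f$ and, because a given $h$ need not identify the $k$ solution vertices with their correct names in $V(H)$, I would additionally try all $k!$ relabelings of $V(H)$ before invoking the colored algorithm. Every subgraph isomorphism of $H$ into $G$ is separated by some $h$ and labeled correctly by some relabeling, so the search is exhaustive; this costs $g(k)=2^{O(k\log k)}$ invocations of the colored algorithm (plus a $\polylog(n)$ derandomization factor), each on an instance whose size is bounded by $kn$ (one natural formulation works in the tensor-product host $V(G)\times V(H)$ of size $kn$, which is why the lemma allows $T(kn,\ldots)$ rather than $T(n,\ldots)$). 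The weighted variant (part 2) is identical: weights propagate through the reduction untouched.

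For parts 3 and 4 I would design ``color-identifying'' attachments on both $H$ and $G$ so that any uncolored subgraph isomorphism of the augmented instance is forced to respect the original coloring. In part 3 (unweighted) the attachments are structural: for each $v\in V(H)$ pick a distinctive rooted tree $\Gamma_v$ of size $\poly(k)$, attach $\Gamma_v$ at $v$ in $H$, and attach a matching copy of $\Gamma_{f(u)}$ at each $u\in V(G)$. Because every $\Gamma_v$ is a tree attached at a single vertex, the treewidth of the augmented pattern is $\max(\tw(H),2)=\tw(H)$, which is precisely where the hypothesis $\tw(H)\geq 2$ enters. Choosing the $\Gamma_v$ to be pairwise non-isomorphic rooted trees with distinct topological fingerprints (for instance, caterpillars encoding $v$ in binary through leaf attachments) and built from fresh vertices ensures that any subgraph isomorphism must map each $\Gamma_v$ in the pattern to its attached copy in the host, and therefore satisfies $f(R(v))=v$. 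In part 4 (weighted) I would instead add a single marker $m_i$ adjacent to $v_i$ in $H$ and a marker $\hat u$ adjacent to $u$ in $G$ (doubling both vertex counts), set the weight of $\hat u$ to $M\cdot 2^{f(u)-1}$ for a sufficiently large $M>W$, and shift the target weight by $M(2^k-1)$: the weight equation forces the selected markers in the host to exhibit all $k$ distinct colors and the max weight becomes $O(2^kW)$. To upgrade ``multiset equals $V(H)$'' to the per-index alignment $f(R(v_i))=v_i$ required by Section~\ref{notation-and-nomenclature-colsubiso}, I would equip the $m_i$'s with a small rigid substructure (say a path $m_1\text{--}\cdots\text{--}m_k$ with individual fingerprint edges) and mirror it on the marker side of $G$ through color-dependent adjacencies, so that each $m_i$ plays a unique role.

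The main obstacle is the color-alignment argument in parts 3 and 4: it is not enough for the selected subgraph to hit every color class, because Section~\ref{notation-and-nomenclature-colsubiso} requires $f(R(v))=v$ for every specific $v$. In part 3 this necessitates a careful rigidity argument ruling out spurious embeddings of the tree gadgets via mixed fresh vertices; in part 4 the additive weight encoding controls only the multiset of selected colors, so the per-index assignment must be pinned down by extra structural constraints. Fitting those constraints into the prescribed size budgets --- $\poly(k)\,n$ vertices for part 3 and $2n$ vertices with max weight $2^kW$ for part 4 --- is what makes the gadget design delicate.
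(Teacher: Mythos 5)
Parts 1 and 2 of your proposal coincide with the paper's proof: color coding with a $k$-perfect hash family, all $k!$ relabelings of $V(H)$, and the standard derandomization; the factor $kn$ has the same explanation. The problems are in parts 3 and 4.

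The genuine gap is in part 3. Trees are not rigid under \emph{non-induced} subgraph embedding: a rooted tree of size $\poly(k)$ embeds greedily into any host neighbourhood of sufficiently large degree, and pairwise non-isomorphic trees can perfectly well be subgraphs of one another. So nothing forces $\Gamma_v$ in the augmented pattern to land on the gadget copy attached to a vertex of $f^{-1}(v)$: it can be absorbed into the main body of $G$ or into a larger gadget, and then the gadget certifies nothing about colors. (A symptom of this is your treewidth claim: attaching trees yields treewidth $\max(\tw(H),1)$, not $\max(\tw(H),2)$, so your construction never actually uses the hypothesis $\tw(H)\geq 2$.) The paper's proof closes exactly this hole by a step you are missing: it first \emph{subdivides every edge} of $H$ and of $G$, making both the pattern $\widetilde H$ and the host $\widetilde G$ triangle-free, and only then attaches to each vertex $i$ a signature gadget containing a triangle whose distinguished vertex $\widetilde v_i$ has degree $i+3$; in $\widetilde G$ only one ``active'' copy of each signature vertex carries any edges. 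Any subgraph isomorphism must now map the $k$ pattern triangles to the $k$ active host triangles (there are no other triangles anywhere), and the degree tags together with injectivity force the identity matching, which pins each $\widetilde i$ to its own preimage. Triangles have treewidth $2$, which is precisely where $\tw(H)\geq2$ enters.

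In part 4 your weight signature is essentially the paper's, but your proposed structural fix is both unnecessary and harmful. The per-index alignment you worry about comes for free from the requirement that $f$ be a graph \emph{homomorphism}: $G$ has edges only between preimages of adjacent pattern vertices, so the induced subgraph on any transversal of the $k$ color classes is, under the natural identification, a subgraph of $H$; if it contains a copy of $H$ it therefore equals $H$, and the transversal is a valid configuration. Adding markers $m_1,\dots,m_k$ joined by a path with fingerprint edges modifies $H$ and can change $\rho(H)$ for the parameters of interest (a tree plus a path through all of its vertices can have treewidth $\Theta(\sqrt{k})$), whereas the statement requires the bound $T(2n,2k,\rho(H),2^kW)$ with the \emph{same} $\rho(H)$; the paper's reduction alters only the weight function.
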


This lemma enables us to prove results for \textsc{(Exact Weight) Subgraph Isomorphism} while only talking about the more structured colored variants of the problem.

  Regarding treewidth, the only case the above lemma does not cover is how to transform an algorithm for unweighted \textsc{Subgraph Isomorphism} to an algorithm for \textsc{Colored Subgraph Isomorphism} for $\tw(H) = 1$. For our purposes, this is not a problem, since \textsc{Subgraph Isomorphism} for trees already has a trivial unconditional lower bound of $\Omega(n^2)$, which is tight. Note that for this unconditional lower bound, we must assume that the graph is dense, i.e.\ has \(\Theta(n^2)\) edges.

  The lemma also cannot transform algorithms for unweighted \textsc{Subgraph Isomorphism} to algorithms for \textsc{Colored Subgraph Isomorphism} for bounded pathwidth. This means that we cannot show the same lower bounds for the unweighted \textsc{Subgraph Isomorphism} problem for bounded pathwidth as we can for the \textsc{Colored Subgraph Isomorphism} problem. This is a shortcoming of the lemma that we could not fix, and hence we leave it as an open problem whether the lower bounds for the \textsc{Subgraph Isomorphism} problem for bounded pathwidth can even be improved.

\subsection{Treewidth and Pathwidth}\label{prelim-treewidth}

We give a very short introduction to treewidth and pathwidth, and state some auxiliary definitions and notation used throughout the paper. For an thorough introduction to treewidth, pathwidth and their many applications, we refer the reader to~\cite[Chapter 7]{cygan2015parameterized}.
\begin{definition}[Tree Decomposition]
  Let $H$ be a graph. A \textbf{tree decomposition} of $H$ is a pair $\mathcal{T} = (T, \{X_t\}_{t\in V(T)})$ consisting of a tree $T$ and along with a set of ``bags'' $X_t\subseteq V(H)$, one for each vertex of $T$. It must satisfy the following properties:
  \begin{description}
  \item[(T1)] $\bigcup_{t \in V(T)} X_t = V(H)$
  \item[(T2)] $\forall uv \in E(H): \exists t \in V(T): \{u,v\} \subseteq X_t$
  \item[(T3)] $\forall u \in V(H):$ The subgraph induced by $\{t\in V(T) | u \in X_t\}$ is a connected subtree
  \end{description}
\end{definition}
\begin{definition}[Treewidth]
  Let $\mathcal{T} = (T,\{X_t\}_{t\in V(T)})$ be a tree decomposition of $H$. We define its \textbf{width} to be $max_{t\in V(T)}|X_t|-1$. We define the \textbf{treewidth} of $H$ to be the minimum width of all tree decompositions of $H$ and denote it as $\tw(H)$.
\end{definition}
\begin{definition}[Path Decomposition]
Let $H$ be a graph. A \textbf{path decomposition} of $H$ is a tree decomposition where $T$ is a path.
\end{definition}
\begin{definition}[Pathwidth]
  The \textbf{width} of path decompositions is defined as for tree decompositions. The \textbf{pathwidth} of $H$ is defined to be the minimum width of all path decompositions of $H$, and is denoted as $\pw(H)$.
\end{definition}

 A classic algorithm by Bodlaender~\cite{bodlaender1996linear} computes an optimal tree decomposition or path decomposition for an input graph $H$ in time $O(f(|\tw(H)|) |V(H)|)$ for a computable function~$f$. For our purposes, this is almost excessive: For our results, we only need an algorithm which computes an optimal tree decomposition in time $g(|V(H)|)$, for some computable function~$g$.

Clearly, the treewidth of a graph is always smaller than or equal to its pathwidth. It should also be noted that while graphs of treewidth one are exactly the class of tree graphs, graphs of pathwidth one encompass more than just paths. Rather, they are the class of graphs where each connected component is a caterpillar graph, i.e.\ consists of a single path with arbitrarily many degree-one nodes attached at any node of the path~\cite{proskurowski1999classes}. The latter is vital for our conditional lower bounds for pathwidth one.

 For a tree or path decomposition with underlying tree $T$ and for $u \in V(T)$, we define $T_u$ to be the subtree rooted at $u$. The \textbf{cone} $V_u$ is then defined to be $V_u := \bigcup_{v \in V(T_u)}X_v$.

\section{Hardness Results}
\label{section-hardness}

\subsection{Twin Water Lilies}

We will obtain our lower bounds by reducing \textsc{hyperclique} instances to (un)weighted \textsc{Colored Subgraph Isomorphism} instances, where the pattern graph \(H\) is of a special form defined below. In other words, the pattern graphs below are the ``maximally hard'' pattern graphs for the \textsc{Subgraph Isomorphism} problem. See Figure~\ref{fig:twin-water-lily} for an illustration.

\begin{definition}[Twin Water Lily]
  For any \(h,s_1,s_2\), we define the graph \(\TWL(h,s_1,s_2)\) as follows and call it a \textbf{\(h\)-wide Twin Water Lily of order \((s_1, s_2)\)}. The vertex set of \(\TWL(h,s_1,s_2)\) consists of two independent sets \(S_1\) and \(S_2\) with \(r_1\) and \(r_2\) vertices, respectively. Additionally, for every size \(h\) subset \(\{(v_1,s_1), \ldots, (v_h,s_h)) \in {(S_1 \cup S_2)\times [h] \choose h}\), it has a vertex \(v\) which is connected to all vertices in \(\{v_1, \ldots, v_h\}\cap S_2\). We define \(P\) to be the set of all such \(v\) created in this way.

  Note that \(S_1\) consists only of isolated vertices, and that \(H\) is bipartite.
\end{definition}

  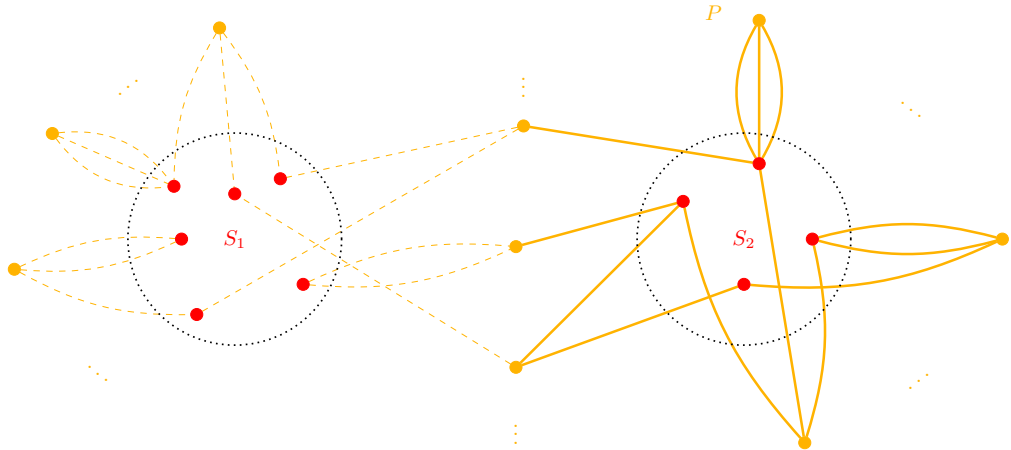
\begin{figure}[h]
    \centering
    \scalebox{0.8}{
      \tikzstyle{every picture}=[tikzfig]
      \begin{tikzpicture}
	\begin{pgfonlayer}{nodelayer}
		\node [style=center node] (0) at (-11.5, 0) {};
		\node [style=center node] (1) at (5.25, 0) {};
		\node [style=Rnodes] (2) at (3.25, 1.25) {};
		\node [style=Rnodes] (3) at (5.75, 2.5) {};
		\node [style=Rnodes] (4) at (5.25, -1.5) {};
		\node [style=Rnodes] (5) at (7.5, 0) {};
		\node [style=Rnodes] (6) at (-13.5, 1.75) {};
		\node [style=Rnodes] (7) at (-10, 2) {};
		\node [style=Rnodes] (8) at (-13.25, 0) {};
		\node [style=Rnodes] (9) at (-11.5, 1.5) {};
		\node [style=Rnodes] (10) at (-12.75, -2.5) {};
		\node [style=Rnodes] (11) at (-9.25, -1.5) {};
		\node [style=Rnodes 2] (12) at (-12, 7) {};
		\node [style=Rnodes 2] (13) at (5.75, 7.25) {};
		\node [style=Rnodes 2] (14) at (13.75, 0) {};
		\node [style=Rnodes 2] (15) at (-2, 3.75) {};
		\node [style=Rnodes 2] (16) at (-2.25, -0.25) {};
		\node [style=Rnodes 2] (17) at (-17.5, 3.5) {};
		\node [style=Rnodes 2] (18) at (-2.25, -4.25) {};
		\node [style=Rnodes 2] (19) at (7.25, -6.75) {};
		\node [style=Rnodes 2] (20) at (-18.75, -1) {};
		\node [style=none] (21) at (5.25, 0) {\textcolor{red}{$S_2$}};
		\node [style=none] (22) at (-11.5, 0) {\textcolor{red}{$S_1$}};
		\node [style=none] (23) at (4.25, 7.5) {\textcolor{rgb,255: red,255; green,179; blue,0}{$P$}};
		\node [style=none] (24) at (-2, 5.25) {\textcolor{rgb,255: red,255; green,179; blue,0}{$\vdots$}};
		\node [style=none] (25) at (-2.25, -6.25) {\textcolor{rgb,255: red,255; green,179; blue,0}{$\vdots$}};
		\node [style=none] (26) at (11, -4.5) {\textcolor{rgb,255: red,255; green,179; blue,0}{\reflectbox{$\ddots$}}};
		\node [style=none] (27) at (10.75, 4.5) {\textcolor{rgb,255: red,255; green,179; blue,0}{$\ddots$}};
		\node [style=none] (28) at (-16, -4.25) {\textcolor{rgb,255: red,255; green,179; blue,0}{$\ddots$}};
		\node [style=none] (29) at (-15, 5.25) {\textcolor{rgb,255: red,255; green,179; blue,0}{\reflectbox{$\ddots$}}};
	\end{pgfonlayer}
	\begin{pgfonlayer}{edgelayer}
		\draw [style=dashed Redge, bend right=15] (12) to (6);
		\draw [style=dashed Redge, bend left=15] (12) to (7);
		\draw [style=dashed Redge] (12) to (9);
		\draw [style=dashed Redge] (17) to (6);
		\draw [style=dashed Redge, bend right] (17) to (6);
		\draw [style=dashed Redge, bend left] (17) to (6);
		\draw [style=dashed Redge, bend left=15] (20) to (8);
		\draw [style=dashed Redge, bend right=15] (20) to (8);
		\draw [style=dashed Redge, bend right=15] (20) to (10);
		\draw [style=dashed Redge, bend right=15] (11) to (16);
		\draw [style=dashed Redge, bend left=15] (11) to (16);
		\draw [style=R-edge] (16) to (2);
		\draw [style=R-edge] (18) to (4);
		\draw [style=R-edge] (18) to (2);
		\draw [style=R-edge] (15) to (3);
		\draw [style=dashed Redge] (15) to (7);
		\draw [style=dashed Redge] (15) to (10);
		\draw [style=dashed Redge] (18) to (9);
		\draw [style=R-edge, bend left] (13) to (3);
		\draw [style=R-edge, bend right] (13) to (3);
		\draw [style=R-edge, bend right=15] (14) to (5);
		\draw [style=R-edge, bend left=15] (14) to (4);
		\draw [style=R-edge, bend right=15] (19) to (5);
		\draw [style=R-edge, bend left=15] (19) to (2);
		\draw [style=R-edge] (19) to (3);
		\draw [style=R-edge, bend right=15] (5) to (14);
		\draw [style=R-edge] (3) to (13);
	\end{pgfonlayer}
\end{tikzpicture}
    }
    \caption{Partial sketch of the \(3\)-wide Twin Water Lily of order \((6, 4)\). The dashed edges are not actual edges, they just represent which other vertices are in the set \(\{v_1, \ldots, v_h\}\) of the vertex \(u \in P\) that they are connected to.}
    \label{fig:twin-water-lily}
  \end{figure}

\begin{proposition}\label{prop:treewidth-of-lily}
  If a graph \(H\) is a \(h\)-wide Twin Water Lily of order \((s_1,s_2)\), then its treewidth is bounded by
  \begin{align*}
    \tw(H) \leq    \begin{cases} s_2-1 & \text{ if } s_2 > h\\ s_2 & \text{ otherwise}\end{cases}
  \end{align*}
  and its pathwidth is bounded by $\pw(H) \leq s_2$.
\end{proposition}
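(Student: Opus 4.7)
The key observation is that every vertex $v \in P$ has neighborhood $N(v) \subseteq S_2$ of size at most $h$, and every vertex in $S_1$ is isolated. Both bounds will follow from constructing explicit decompositions centered around $S_2$.

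\textbf{Treewidth bound.} I plan to construct a star-shaped tree decomposition. The root bag is $X_0 := S_2$, which has size $s_2$. For each $v \in S_1 \cup P$, attach a leaf bag $X_v := \{v\} \cup N(v)$. I will verify the three tree-decomposition axioms as follows: (T1) every vertex appears (each $v \in S_1 \cup P$ in its leaf bag, and $S_2 \subseteq X_0$); (T2) every edge is covered (edges only go between $P$ and $S_2$, and any such edge $uv$ with $v \in P$ lies inside $X_v$); (T3) for each vertex $w \in S_2$ the bags containing $w$ are the root together with a subset of leaves, which is connected, while each $v \in S_1 \cup P$ appears in a single bag. It then remains to bound the maximum bag size. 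For a leaf bag, $|X_v| = |N(v)| + 1 \leq h + 1$ since $|\{v_1,\ldots,v_h\} \cap S_2| \leq h$. Also $|X_v| \leq s_2 + 1$ since $N(v) \subseteq S_2$. Hence
\[
\max_t |X_t| \leq \max(s_2,\ \min(h+1,\ s_2+1)).
\]
When $s_2 > h$ this evaluates to $s_2$, giving $\tw(H) \leq s_2 - 1$; in the remaining case it is at most $s_2 + 1$, giving $\tw(H) \leq s_2$.

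\textbf{Pathwidth bound.} For the pathwidth, I plan to order the vertices of $S_1 \cup P$ arbitrarily as $v_1, v_2, \ldots, v_m$ and take the path decomposition with bags $X_i := S_2 \cup \{v_i\}$ for $i = 1, \ldots, m$. Each bag has size at most $s_2 + 1$, so the width is at most $s_2$. Verifying the axioms: (T1) each $v_i$ is in its own bag and every vertex of $S_2$ is in every bag; (T2) every edge $uv$ of $H$ has one endpoint in $S_2$ and the other $v_i \in P$, and is covered by $X_i$; (T3) each $v_i$ appears in exactly one bag, while each vertex of $S_2$ appears in every bag (a contiguous subpath).

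\textbf{Main obstacle.} The construction is routine; the only subtle point is the case split in the treewidth bound, where the leaf-bag size $|N(v)|+1$ can exceed $s_2$ precisely when $s_2 \leq h$, forcing the slightly weaker bound $\tw(H) \leq s_2$ in that regime. Checking that both bounds in the proposition are obtained from a single construction (namely the star decomposition above) is what drives the case analysis.
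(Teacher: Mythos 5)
Your proof is correct, but it takes a genuinely different route from the paper. The paper does not build decompositions at all: it invokes the cops-and-robber characterizations of treewidth (visible robber, Seymour--Thomas) and pathwidth (invisible robber), places cops on all of $S_2$, and argues that one extra cop suffices except when $s_2 > h$, in which case some cop on $S_2$ is already non-adjacent to the robber and can capture it. Your argument instead exhibits explicit witnesses: a star-shaped tree decomposition with root bag $S_2$ and leaf bags $\{v\}\cup N(v)$, and a path decomposition with bags $S_2\cup\{v_i\}$. Both are sound; the case split in your treewidth bound (leaf bags of size $|N(v)|+1\le h+1$ exceed $s_2$ exactly when $s_2\le h$) mirrors precisely the paper's distinction between ``a non-adjacent cop already exists'' and ``we need the spare cop.'' Your approach is more elementary and self-contained, since it avoids quoting the game-theoretic characterizations as black boxes and directly certifies the upper bounds; the paper's approach is shorter to state once those characterizations are granted. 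The only cosmetic point worth noting is that the constructions degenerate harmlessly when $S_1\cup P$ is empty, and that the paper's definition uses $r_1,r_2$ where the proposition uses $s_1,s_2$ --- an inconsistency in the paper, not in your argument.
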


\begin{proof}
  There is a very useful characterization of treewidth using a graph-theoretic game: A graph \(G\) has treewidth \(\leq k\) if and only if \(k+1\) cops can catch\footnote{See the corresponding footnote in Section~\ref{section-sketches} for a description of the game.} a visible robber on \(G\)~\cite{seymour1993graph}.

  To show the bound on the treewidth of \(H\), simply place the cops on all vertices of \(S_2\). No matter where the robber starts, it is surrounded by cops or is on an isolated vertex. If \(s_2 > h\), then there must exist some cop which is not adjacent to the robber, whom we can use to catch him in a single step. If \(s_2 \leq h\), it is not guaranteed that there is a non-adjacent cop. However, we have an additional cop which can start at any vertex. As soon as the robber is positioned, we use the additional cop to capture them.

    A similar characterization exists for pathwidth: A graph \(G\) has pathwidth \(\leq k\) if and only if \(k+1\) cops can catch\footnote{The game can be formulated such that it is the same as the one for treewidth, but the cops simply cannot see the robber and must therefore have a universal strategy for catching him on \(G\). This is also sometimes referred to as the contamination cleansing or infection cleansing game.} an invisible robber on \(G\)~\cite{ellis1994vertex}.

  For the pathwidth of \(H\), we again place all cops on the vertices of \(S_2\) and have one left over. We use this cop to go through all vertices not in \(S_2\), one in each step. The robber is always surrounded by the cops in \(S_2\) an hence cannot move, so after going through all vertices with the additional cop, we must have caught him.
\end{proof}

\subsection{Unweighted Colored Subgraph Isomorphism}\label{sec:unweighted-uncol-subiso}

\begin{lemma}[The Unweighted Lemma]\label{lem:unweighted-lemma}
  For any \(h\geq 3\) and any \(r\geq 2\), if we can solve \textsc{Colored Subgraph Isomorphism} with pattern graph \(\TWL(h,0,r)\) in time \(O(N^{r-\varepsilon})\), then we can solve \(h\)-uniform \(hr\)-\textsc{hyperclique} in time \(O(n^{hr-\varepsilon})\).
\end{lemma}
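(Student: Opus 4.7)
The plan is to adapt the reduction from the proof sketch of Theorem~\ref{corollary-lower-bound-detection} in Section~\ref{section-sketches}, observing that the pattern graph constructed there is exactly a twin water lily with $s_1 = 0$. So given an $h$-uniform $hr$-\textsc{Hyperclique} instance $H'$ on $n$ vertices, I will first assume (by a standard color-coding style blowup of polynomial overhead) that $H'$ is $hr$-partite with parts $U_{i,j}$ indexed by $(i,j) \in [r] \times [h]$, each of size $\Theta(n)$. I will then build a colored host graph $G$ whose color classes are indexed by $V(\TWL(h,0,r)) = S_2 \cup P$ with $S_2 = [r]$.

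For each $a \in S_2$, I will set the color class $V_a := U_{a,1} \times \cdots \times U_{a,h}$; a vertex of $V_a$ represents a choice of one hyperclique vertex from each of the $h$ groups indexed by $\{a\} \times [h]$. For each $b = ((a_1,j_1), \ldots, (a_h,j_h)) \in P$, I will set $V_b := E(H') \cap (U_{a_1,j_1} \times \cdots \times U_{a_h,j_h})$, that is, the hyperedges of $H'$ spanning those particular $h$ groups. The color homomorphism $f$ maps each vertex of $V_a$ to $a$ and each vertex of $V_b$ to $b$. For every edge $ab$ of $\TWL(h,0,r)$, which necessarily has $a \in S_2$ and some $b = ((a_1,j_1),\ldots,(a_h,j_h)) \in P$ with $a = a_\ell$ for at least one $\ell$, I add an edge between $u = (u_1,\ldots,u_h) \in V_a$ and $u' = (u'_1,\ldots,u'_h) \in V_b$ precisely when they are \emph{compatible}, i.e.\ $u'_\ell = u_{j_\ell}$ for every $\ell$ with $a_\ell = a$.

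For correctness, a colored copy of $\TWL(h,0,r)$ in $G$ simultaneously picks for each $a \in S_2$ an $h$-tuple $v_a \in V_a$ (which corresponds to a choice of one hyperclique vertex from each group $U_{a,1},\ldots,U_{a,h}$) and for each $b \in P$ a hyperedge $v_b \in V_b$ of $H'$. The compatibility condition on edges of $G$ forces the endpoints of $v_b$ to agree with the vertex choices made at the endpoints of $b$ in $S_2$, so the selected $hr$ vertices in $H'$ form an $hr$-hyperclique; conversely, a hyperclique directly induces a valid colored copy by choosing $v_a$ and $v_b$ to be the tuples of its vertices. The number of vertices of $G$ is $N = O(|P| \cdot n^h) = O(n^h)$ (using that $h$ and $r$ are constants), and the construction runs in time polynomial in $N$.

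Finally, plugging an assumed $O(N^{r-\varepsilon})$ algorithm into this reduction gives total running time $O(n^{h(r-\varepsilon)}) = O(n^{hr-h\varepsilon})$ for $hr$-\textsc{Hyperclique}, which is of the desired form $O(n^{hr-\varepsilon'})$. The only mildly delicate step will be the partite-to-non-partite preprocessing for the hyperclique instance (to justify the $\Theta(n)$ part sizes without affecting the exponent) and a careful check that the bound on the treewidth of $\TWL(h,0,r)$ from Proposition~\ref{prop:treewidth-of-lily} is not needed for this lemma: the lemma only compares the \textsc{Subgraph Isomorphism} exponent $r$ to the hyperclique exponent $hr$ and makes no structural demand beyond the shape of $\TWL(h,0,r)$ itself, so this proof should go through without further obstacles.
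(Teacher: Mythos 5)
Your construction is essentially identical to the paper's: the paper reaches the same host graph via a three-step pipeline (colored hyperclique $\to$ intermediate vertices for hyperedges $\to$ merging the $hr$ color classes into $r$ groups of $h$), and the end product is exactly your $V_a = U_{a,1}\times\cdots\times U_{a,h}$ for $a\in S_2$, $V_b = E(H')\cap(U_{a_1,j_1}\times\cdots\times U_{a_h,j_h})$ for $b\in P$, with the same compatibility edges. The correctness argument and the exponent bookkeeping $O(N^{r-\varepsilon}) = O(n^{hr-h\varepsilon})$ also match. You are also right that Proposition~\ref{prop:treewidth-of-lily} is not needed for the lemma itself; it only enters when the lemma is instantiated in Theorem~\ref{lower-bound-detection}.

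There is one quantitative gap: you dismiss the reduction cost as ``polynomial in $N$,'' but for the lemma's boundary case $r=2$ this is not enough. A naive construction of the compatibility edges iterates over all pairs $(u,u')$ and costs $O(N^2)=O(n^{2h})=O(n^{hr})$, which already matches the hyperclique running time you are trying to beat, so the total time would not be $O(n^{hr-\varepsilon'})$. The paper closes this by observing that each $u'\in V_b$ (a hyperedge) has at most $n^{h-1}$ compatible partners in each neighbouring class $V_a$, because compatibility pins down at least one of the $h$ coordinates of $u$; hence $G$ has $O(n^{2h-1})$ edges and can be built in time $O(n^{2h-1})$, which is $o(n^{hr})$ for all $r\ge 2$. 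You should add this (or an equivalent) bound on the construction time to make the final time analysis go through.
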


\begin{proof}
  Let an instance \(\mathcal{I}_0 = G_0\) of \(h\)-uniform \(hr\)-\textsc{Hyperclique} be given. We convert this to a \textsc{Colored Subgraph Isomorphism} instance with a Twin Water Lily as pattern graph in three steps, each of which we explain in detail below: First, we convert it to a \textsc{Colored \(hr\)-Hyperclique} instance in a standard way. Second, we go from hypercliques to \textsc{Colored Subgraph Isomorphism} by replacing each hyperedge by an intermediate vertex. Finally, we merge preimages to homogenize preimage sizes. 


    \begin{description}
  \item[\underline{1. Converting to Colored Hyperclique:}]\label{step1-unweighted-lemma} We convert \(\mathcal{I}_0\) to a \(h\)-uniform \textsc{Colored Hyperclique} instance with \(hr\) colors. The converted instance should have a hyperclique where all vertices has different colors if and only if the old instance has a hyperclique. The new instance have the form \(\mathcal{I}_1 = (G_1, f_1)\), where the color homomorphism \(f_1: V(G_1) \to V(C_{hr})\) assigns each vertex of \(G_1\) a vertex in the \(h\)-uniform \(hr\)-hyperclique \(C_{hr}\).

    Let \(V(C_{hr}) = \{1,\ldots, hr\}\). For each \(i\), the preimage \(f_1^{-1}(i)\) is a copy of \(V(G_0)\). Let \(g: V(G_1) \to V(G_0)\) be a function between sets that indicates which vertex in \(G_0\) the vertex in \(G_1\) is a copy of. Now for each set \(\{v_1, \ldots, v_h\} \in {V(C_{hr})\choose h}\), we go through all tuples \((w_1, \ldots, w_h) \in f_1^{-1}(v_1)\times \ldots \times f_1^{-1}(v_h)\) and create the edge \(\{w_1, \ldots, w_h\} \in E(G_1)\) if and only if \(\{g(w_1), \ldots, g(w_h)\} \in E(G_0)\).
    
    The correctness of this construction is easy to see. 

  \item[\underline{2. Representing Hyperedges by Intermediate Vertices:}]\label{step2-unweighted-lemma} We now go from the \textsc{Colored Hyperclique} instance \(\mathcal{I_1} = (G_1, f_1)\) to a (structured) \textsc{Colored Subgraph Isomorphism} instance \(\mathcal{I_2} = (H_2, G_2, f_2)\). The reduction is done in a standard way: We replace each hyperedge with a vertex connected to all its endpoints.

    Formally, we need to construct \(H_2\) and \(G_2\). \(H_2\) has two sets of vertices \(S'_2\) and \(P\). \(S'_2\) is a copy of \(V(C_{hr})\) from the last step, including its preimages. Accordingly, we write \(S'_2 = \{1, \ldots, hr\}\). In \(P\), we have one vertex \(u\) for every subset \(\{w_1, \ldots, w_h\} \in {C_{hr} \choose h}\), and we have \(\forall \ell \in [h]: uw_\ell \in E(H_2)\). Now for every hyperedge \(\{w'_1, \ldots, w'_h\} \in E(G_1)\) with \(\forall \ell \in [h]: w'_\ell \in f_1^{-1}(G_1)\), we add a vertex \(u' \in f_2^{-1}(u)\) which is connected to all vertices \(w'_1, \ldots, w'_h\). This concludes the construction of \(\mathcal{I}_2\).

    Correctness of this construction is again easy to see. As for the size, note that preimages of vertices in \(S'_2\) still have size \(n\). The preimages of vertices in \(P\), however, have at most \(n^h\) vertices.

  \item[\underline{3. Merging Preimages in \(S'_2\):}]\label{step3-unweighted-lemma} Lastly, we go from \(\mathcal{I}_2 = (H_2, G_2, f_2)\) to the final instance \(\mathcal{I}_3 = (H_3, G_3, f_3)\) where \(H_3\) is a Twin Water Lily of order \((0,r)\). Note that \(\mathcal{I}_2\) is ``almost'' the instance we want, save for the fact that the preimages \(P\) are much larger (size up to \(n^h\)) than the preimages of \(S'_2\) (size \(n\)). We rectify this by merging groups of vertices within \(S'_2\). These groups have size \(h\).

    We split \(S'_2\) into \(r\) groups \(X_{1}, \ldots, X_{r}\) of size \(h\). In \(H_3\), we have for every \(i \in [r]\) a vertex \(x_i\) representing \(X_i\). Each vertex \(x'_i \in f_3^{-1}(x_i)\) corresponds to a configuration \(\mathrm{conf}(x'_i) \in \mathcal{Conf}(X_i)\). In accordance with the definition of a Twin Water Lily, we define \(S_1 = \emptyset\) and \(S_2 = \{x_{1}, \ldots, x_{r}\}\).

    The set \(P \subseteq V(H_3)\) remains the same as in the preceding step, including its preimages. For each \(u \in P\), we connect \(u\) to all \(x_i\) such that \(X_i \cap N_{H_2}(u) \neq \emptyset\) (recall that \(N_{H_2}(u)\) is defined to be the neighbourhood of \(u\) in \(H_2\)). Note that each \(u \in P\) is still connected to at most \(h\) other vertices, but that it can be less if multiple vertices of its neighborhood came from the same group. Finally, for an edge \(ux_i \in E(H_3)\), we connect a vertex \(u' \in f_3^{-1}(u)\) to a vertex \(x'_i \in f_3^{-1}(x_i)\) if and only if \(\forall v \in N_{H_2}(u)\cap X_i: u'(\mathrm{conf}(x'_i)(v)) \in E(G_2)\).

    Correctness is easy to see. Note that \(H_3\) is an \(h\)-wide Twin Water Lily of order \((0,r)\) now, and all preimages are of size at most \(N := n^h\).
  \end{description}

  This completes the construction of the reduction algorithm. Each of the steps runs in \(O(n^{2h-1})\) time: In the intermediate-vertex step, each of the vertices \(u'\) in a preimage \(f^{-1}(u)\) represents a hyperedge and is hence only connected to \(h\) vertices. After the next step, it is connected to all vertices which represent compatible configurations. Each vertex of \(S_2\) that \(u\) is now connected to must represent a non-empty intersection of the neighbourhood of \(u\). Hence it has cardinality at least \(1\). Hence there are at most \(n^{h-1}\) compatible vertices in its preimage. Hence overall, there are at most \(O(n^{2h-1})\) edges, and the graph can also be constructed in this time.

  Now suppose there is an algorithm solving the \textsc{Colored Subgraph Isomorphism} problem with pattern \(\TWL(h,0,r)\) in time \(O(N^{r-\varepsilon})\). Then for any \(h\)-uniform \(hr\)-\textsc{Hyperclique} instance, we run the reduction in time \(O(n^{2h-1})\) and then solve the new instance in time \(O(N^{r-\varepsilon}) = O(n^{hr-h\varepsilon})\). Since \(r \geq 2\), the algorithm hence takes total time \(O(n^{hr-h\varepsilon}) = O(n^{hr-\varepsilon'})\).
\end{proof}

We now use the~\hyperref[lem:unweighted-lemma]{Unweighted Lemma} to prove the lower bounds for \textsc{Colored Subgraph Isomorphism}. In particular, we prove the following theorem, which implies Theorem~\ref{corollary-lower-bound-detection} from the Results section via the Equivalence Lemma (Lemma~\ref{lemma-equiv}).
  
\begin{theorem}\label{lower-bound-detection}
  The following statements are true.
  \begin{enumerate} 
  \item For each \(t\geq 3\) and any \(3\leq h \leq t\), there exists a connected, bipartite pattern graph \(\mathcal{H}_{t,h}\) of treewidth \(t\) such that there cannot be an algorithm solving the \textsc{Colored Subgraph Isomorphism} problem on pattern graph \(\mathcal{H}_{t,h}\) in time \(O(n^{t+1-\varepsilon})\) unless the \(h\)-uniform \(h(t+1)\)-\textsc{hyperclique} hypothesis fails.
  \item For each \(t\geq 2\) and any \(h \geq 3\), there exists a connected, bipartite pattern graph \(\mathcal{H}_{t,h}\) of treewidth \(t\) such that there cannot be an algorithm solving the \textsc{Colored Subgraph Isomorphism} problem on pattern graph \(\mathcal{H}_{t,h}\) in time \(O(n^{t-\varepsilon})\) unless the \(h\)-uniform \(ht\)-\textsc{hyperclique} hypothesis fails.
  \item For each \(t \geq 2\), there exists a connected, bipartite pattern graph \(\mathcal{H}_{t}\) of treewidth \(t\) such that there cannot be an algorithm solving the \textsc{Colored Subgraph Isomorphism} problem on pattern graph \(\mathcal{H}_{t}\) in time \(O(n^{(t+1)\omega/3})\) unless the \((t+1)\)-\textsc{Clique} hypothesis fails.
  \end{enumerate}
  
\end{theorem}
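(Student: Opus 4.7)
The plan is to prove all three parts of Theorem \ref{lower-bound-detection} by exhibiting specific pattern graphs and invoking reductions that are essentially routine given the preceding machinery. Parts 1 and 2 use the Twin Water Lily construction and plug directly into Lemma \ref{lem:unweighted-lemma} (the Unweighted Lemma); Part 3 requires a short self-contained reduction from $(t+1)$-Clique using a subdivided clique. In each case I will then verify bipartiteness, connectedness, and the exact value of the treewidth.

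For Part 1, I take $\mathcal{H}_{t,h} := \TWL(h, 0, t+1)$. Bipartiteness follows since $S_2$ and $P$ partition the vertex set with all edges crossing them, and connectedness holds because each $P$-vertex has at least one neighbor in $S_2$ (the condition $S_1 = \emptyset$ forces all $h$ labeled pairs to lie in $S_2$), while the $S_2$-vertices are pairwise joined through common $P$-neighbors. Proposition \ref{prop:treewidth-of-lily} gives $\tw \le t$ because $t+1 > h$; that the treewidth is exactly $t$ can be argued \emph{a posteriori} from the Color-Coding upper bound $\widetilde{O}(n^{\tw+1})$ combined with the forthcoming Hyperclique-based lower bound, or directly by exhibiting $K_{t+1}$ as a topological minor inside the lily. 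Applying Lemma \ref{lem:unweighted-lemma} with $r := t+1$ then completes the reduction: an $O(n^{t+1-\varepsilon})$ algorithm yields an $O(n^{h(t+1)-\varepsilon'})$ algorithm for $h$-uniform $h(t+1)$-Hyperclique.

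Part 2 proceeds in the same manner with $\mathcal{H}_{t,h} := \TWL(h, 0, t)$ and Lemma \ref{lem:unweighted-lemma} applied with $r := t \ge 2$. Bipartiteness and connectedness are verified exactly as before, and the hardness reduction is identical. The main obstacle is the treewidth: Proposition \ref{prop:treewidth-of-lily} gives $\tw \le t$ when $t \le h$ (where we are done) but only $\tw \le t-1$ when $t > h$, leaving the graph one short of the desired treewidth. In this second regime I would attach a small bipartite treewidth-$t$ gadget (for instance a copy of $K_{t,t}$ glued to a single vertex of $S_2$) whose size depends only on $t$; this pushes the treewidth up to exactly $t$ while preserving bipartiteness and connectedness, and since the gadget is of constant size it does not affect the asymptotic running-time analysis inherited from the Unweighted Lemma.

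For Part 3, I let $\mathcal{H}_t$ be the $1$-subdivision of $K_{t+1}$. It is bipartite (original vertices versus subdivision vertices), connected, and has treewidth exactly $t$: the upper bound follows from placing $t+1$ cops on the original vertices (after which every robber configuration is immediately catchable in one step), and the lower bound from the fact that $K_{t+1}$ is a minor obtained by contracting all subdivision vertices. The reduction from $(t+1)$-Clique is standard: given an input graph $G'$, populate the preimage of each original pattern vertex with a copy of $V(G')$ and the preimage of each subdivision vertex with a copy of $E(G')$, wiring each edge-vertex to its two original endpoints in the host graph. Colored subgraph isomorphisms then correspond bijectively to $(t+1)$-cliques of $G'$, so any $O(n^{(t+1)\omega/3 - \varepsilon})$ algorithm on pattern $\mathcal{H}_t$ would contradict the $(t+1)$-Clique hypothesis.
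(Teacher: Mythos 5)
Parts 1 and 2 of your proposal follow the paper's route: instantiate the Unweighted Lemma (Lemma~\ref{lem:unweighted-lemma}) with $r=t+1$ (resp.\ $r=t$) on the pattern $\TWL(h,0,t+1)$ (resp.\ $\TWL(h,0,t)$). Your bipartiteness/connectedness checks and the topological-$K_{t+1}$-minor argument for the exact treewidth in Part~1 are sound, and in Part~2 your constant-size $K_{t,t}$ gadget for the regime $t>h$ (where Proposition~\ref{prop:treewidth-of-lily} only yields $\tw\le t-1$) is a legitimate and in fact more careful treatment than the paper's, which glosses over this case. You should just state explicitly that the gadget vertices receive singleton preimages wired to the whole preimage of the attachment vertex, so that the instance produced by the Unweighted Lemma extends trivially and the gadget never constrains the solution.

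The genuine gap is in Part~3. You populate the preimage of each subdivision vertex $x_{ij}$ with a copy of $E(G')$. For a dense input $G'$ the maximum preimage size is then $\Theta(n^2)$, and this (equivalently, the number of host-graph vertices up to $\poly(k)$ factors) is the parameter in which the running time of a \textsc{Colored Subgraph Isomorphism} algorithm is measured. An algorithm running in time $O(N^{(t+1)\omega/3-\varepsilon})$ on your instance therefore only yields an $O(n^{2(t+1)\omega/3-2\varepsilon})$ algorithm for $(t+1)$-\textsc{Clique}, which contradicts the Clique hypothesis only if $\varepsilon>(t+1)\omega/6$ — so the claimed lower bound does not follow. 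The fix is the construction the paper uses in the proof of part~3 of Lemma~\ref{lemma-equiv}: give $x_{ij}$ a preimage of size $n$ indexed by the choice at endpoint $i$, i.e.\ $f^{-1}(x_{ij})=\{b_{ij}^{v}:v\in V(G')\}$, where $b_{ij}^{v}$ is adjacent to the copy of $v$ in $f^{-1}(i)$ and to the copies of all neighbors of $v$ in $f^{-1}(j)$. Every preimage then has size $n$, valid configurations are exactly the $(t+1)$-cliques of $G'$, and the bound goes through. (Indexing by unordered edges would additionally admit spurious configurations, since the vertex for $\{u,w\}$ would be adjacent to $u$ in both $f^{-1}(i)$ and $f^{-1}(j)$; the orientation-by-endpoint indexing avoids this as well.)
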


\begin{theorem}[Theorem~\ref{lower-bound-detection} for pathwidth]\label{lower-bound-detection-pathwidth}
  Part 2 of Theorem~\ref{lower-bound-detection} also holds when replacing the treewidth \(t\) by the pathwidth \(p\). Part 3 only holds when replacing \(t+1\) by \(p+1\), and the pattern graph is not bipartite anymore. Part 1 does not hold.
\end{theorem}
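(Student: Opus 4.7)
The plan is to handle the three parts separately, each via a short argument that either reapplies our earlier construction or invokes a known fact.

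For Part 2, the idea is to reuse the Twin Water Lily pattern from the treewidth version of Part 2, namely $\mathcal{H}_{p,h} := \TWL(h, 0, p)$, which is connected and bipartite. By Proposition~\ref{prop:treewidth-of-lily}, its pathwidth is at most $p$, and applying the Unweighted Lemma (Lemma~\ref{lem:unweighted-lemma}) with $r := p$ yields the $O(n^{p-\varepsilon})$ lower bound conditional on the $h$-uniform $hp$-hyperclique hypothesis, using literally the same reduction as in the treewidth case.

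For Part 3 (with $t+1$ replaced by $p+1$), the plan is to take $\mathcal{H}_p := K_{p+1}$ directly. Since $\pw(K_n) = n-1$, we have $\pw(\mathcal{H}_p) = p$, and for $p \geq 2$ the complete graph contains triangles and is therefore not bipartite. Colored Subgraph Isomorphism on $K_{p+1}$ is exactly the colored $(p+1)$-Clique problem, so the $(p+1)$-Clique hypothesis directly gives the lower bound of $n^{(p+1)\omega/3 - \varepsilon}$. In the treewidth version we could subdivide the edges of $K_{t+1}$ into a bipartite graph of treewidth $t$, but this trick fails here because subdivision does not preserve the pathwidth of a clique in the same way, so we are forced to give up bipartiteness.

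To show that the analogue of Part 1 does not hold, the plan is to invoke the upper bound in Theorem~\ref{corollary-upper-bound-detection-pathwidth}: there is an algorithm of running time $\widetilde{O}(n^{\omega(p-1)} g(k))$. Using the standard inequality $\omega(z) \leq z + \omega - 1$, we get $\omega(p-1) \leq p + \omega - 2 < p + 1$ since $\omega < 3$. Hence this algorithm unconditionally beats any hypothetical lower bound of the form $n^{p+1-\varepsilon}$, ruling out any analogue of Part 1.

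The main obstacle, such as it is, is noticing that Proposition~\ref{prop:treewidth-of-lily} loses one unit of bound when going from treewidth to pathwidth: this is exactly what breaks Part 1 (the existing pathwidth algorithm already beats $n^{p+1-\varepsilon}$) and what forces a non-bipartite pattern in Part 3, while Part 2 carries over without any change.
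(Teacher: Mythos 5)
Your proposal is correct and follows essentially the same route as the paper: Part 2 is the Unweighted Lemma applied to $\TWL(h,0,p)$ using the pathwidth bound of Proposition~\ref{prop:treewidth-of-lily}, Part 3 keeps $K_{p+1}$ unsubdivided (hence non-bipartite) since subdivision would alter the pathwidth, and the failure of Part 1 follows from the $\widetilde{O}(n^{\omega(p-1)})$ upper bound with $\omega(p-1)<p+1$. Your write-up is in fact more explicit than the paper's one-line "completely analogous" argument.
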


Unfortunately, Lemma~\ref{lemma-equiv} cannot be applied to Theorem~\ref{lower-bound-detection-pathwidth}, and hence we have no lower bounds for the uncolored, unweighted case of bounded pathwidth. We believe it is unlikely that the techniques used to prove~\ref{lemma-equiv} generalize to pathwidth. 

We now prove the theorems above.
\begin{proof}[Proof (of Theorem~\ref{lower-bound-detection})]
  For the proof of this theorem, we use the~\hyperref[lem:unweighted-lemma]{Unweighted Lemma}.

  \textbf{Part 1:} Let \(t \geq 3\) and \(3\leq h \leq t\) be given. It suffices to apply the~\hyperref[lem:unweighted-lemma]{Unweighted Lemma} with \(h' := h\) and \(r := t+1\), and set \(\mathcal{H}_{t,h} = \TWL(h,0,t+1)\). As proven in Proposition~\ref{prop:treewidth-of-lily}, \(\TWL(h,0,t+1)\) has treewidth \(t\) since \(t+1 > h\).

  \textbf{Part 2:} Let \(t \geq 3\) and \(h \geq 3\) be given. It suffices to apply the ~\hyperref[lem:unweighted-lemma]{Unweighted Lemma} with \(h' := h\) and \(r := t\), and set \(\mathcal{H}_{t,h} = \TWL(h,0,t+1)\). The loss of the +1 in the exponent is due to the weaker bound in in Proposition~\ref{prop:treewidth-of-lily} for \(t \leq h\).

  \textbf{Part 3:} Let \(t \geq 2\) be given. We know that by the \(t+1\)-\textsc{clique} hypothesis, \textsc{Colored Subgraph Isomorphism} on pattern graph \(C_{t+1}\) cannot be solved in time \(O(n^{(t+1)\omega/3})\). We can make \(C_{t+1}\) bipartite in the obvious way by subdividing the edges. This subdivided graph is \(\mathcal{H}_t\).
  \end{proof}

  \begin{proof}[Proof (of Theorem~\ref{lower-bound-detection-pathwidth})]
    Completely analogous. The deviation in bounds with respect to Theorem~\ref{lower-bound-detection} is due to the difference in bounds for treewidth and pathwidth in Proposition~\ref{prop:treewidth-of-lily}.

    In part 3, we cannot subdivide the edges of \(\mathcal{H}_t\) without changing the pathwidth, hence the pattern graph stays a clique and is thus not bipartite.
  \end{proof}
  
\noindent
Indeed, the Unweighted Lemma can also be used to prove the results under SETH. SETH is beyond doubt the most widely used for conditional lower bounds for problems in P, which is why the following results are still interesting, even though they only give smaller lower bounds than the results under the \textsc{Hyperclique} hypothesis. For context on SETH and the many conditional lower bounds it enables, see e.g.~\cite{abboud2018more, abboud2014consequences, abboud2014popular, abboud2019seth, bringmann2014walking, bringmann2015quadratic, roditty2013fast}.

The following theorem implies Theorem~\ref{corollary-lower-bound-detection-seth} via the Equivalence Lemma (Lemma~\ref{lemma-equiv}).
  \begin{theorem}
  Assuming SETH, the following two statements are true.
  \begin{enumerate}
  \item For any \(t\geq 3\) and any \(\varepsilon > 0\) there exists a pattern graph \(\mathcal{H_{t,\varepsilon}}\) of treewidth \(t\) such that there cannot be an algorithm solving all instances of \textsc{Colored Subgraph Isomorphism} with pattern graph \(\mathcal{H}_{t,\varepsilon}\) in time \(O(n^{t-\varepsilon})\).
    \item For any \(\varepsilon > 0\) there exists a \(t\geq 3\) and a pattern graph \(\mathcal{H}_{\varepsilon}\) of treewidth \(t\) such that there cannot be an algorithm solving all instances of \textsc{Colored Subgraph Isomorphism} with pattern graph \(\mathcal{H}_{\varepsilon}\) in time \(O(n^{t+1-\varepsilon})\).
\end{enumerate}
\end{theorem}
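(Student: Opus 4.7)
The plan is to derive both parts of the theorem by feeding Lemma~\ref{seth-to-hyperclique} into the Unweighted Lemma (Lemma~\ref{lem:unweighted-lemma}) and reading off Twin Water Lily patterns of the required treewidth via Proposition~\ref{prop:treewidth-of-lily}. This is essentially the same strategy as in the proof of Theorem~\ref{lower-bound-detection}, except that the starting point is SETH-based instead of a direct Hyperclique assumption; no new reduction is needed.

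For Part 2, given $\varepsilon > 0$, I will apply Lemma~\ref{seth-to-hyperclique} with $\varepsilon$ to obtain some $h \geq 3$, then set $t := h$ and take $\mathcal{H}_\varepsilon := \TWL(h, 0, t+1)$. Since $t+1 > h$, Proposition~\ref{prop:treewidth-of-lily} certifies $\tw(\mathcal{H}_\varepsilon) = t$. An algorithm solving \textsc{Colored Subgraph Isomorphism} on this pattern in $O(n^{t+1-\varepsilon})$ fits Lemma~\ref{lem:unweighted-lemma} with $r = t+1$ and the same $\varepsilon$, yielding an $h$-uniform $h(t+1)$-\textsc{Hyperclique} algorithm in time $O(n^{h(t+1)-\varepsilon})$. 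Since $h(t+1) > h$, this contradicts the SETH-based Hyperclique lower bound.

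For Part 1, given $t \geq 3$ and $\varepsilon > 0$, I will again apply Lemma~\ref{seth-to-hyperclique} with $\varepsilon$ to obtain some $h \geq 3$. Because we cannot control whether $h \leq t$ or $h > t$, I will split into two cases. If $h \leq t$, I set $\mathcal{H}_{t,\varepsilon} := \TWL(h, 0, t+1)$, whose treewidth is $t$ by Proposition~\ref{prop:treewidth-of-lily} (using $t+1 > h$). Writing $N^{t-\varepsilon} = N^{(t+1)-(1+\varepsilon)}$, an algorithm in $O(N^{t-\varepsilon})$ matches Lemma~\ref{lem:unweighted-lemma} with $r = t+1$ and parameter $1+\varepsilon$, producing a Hyperclique algorithm in $O(n^{h(t+1)-1-\varepsilon}) \subseteq O(n^{h(t+1)-\varepsilon})$, which contradicts Lemma~\ref{seth-to-hyperclique}. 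If instead $h > t$, I set $\mathcal{H}_{t,\varepsilon} := \TWL(h, 0, t)$, whose treewidth is $t$ since $t \leq h$; now Lemma~\ref{lem:unweighted-lemma} with $r = t \geq 3$ and parameter $\varepsilon$ gives a Hyperclique algorithm in $O(n^{ht-\varepsilon})$, contradicting Lemma~\ref{seth-to-hyperclique} since $ht > h$.

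The only real subtlety is the case split in Part 1: since Lemma~\ref{seth-to-hyperclique} returns an $h$ that depends on $\varepsilon$ and gives us no further control, the order $s_2$ of the Twin Water Lily we can afford depends on how $h$ compares to $t$, and we must pick $s_2 \in \{t, t+1\}$ accordingly so that Proposition~\ref{prop:treewidth-of-lily} certifies treewidth exactly $t$. The mild loss of $1$ in the exponent arising in the $h \leq t$ sub-case of Part 1 is precisely why the resulting lower bound is only $n^{t-\varepsilon}$ rather than $n^{t+1-\varepsilon}$, which is consistent with the statement. In Part 2 this issue dissolves because we are free to choose $t := h$, and beyond this bookkeeping no further ingredients are needed.
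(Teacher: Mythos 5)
Your proposal is correct and follows essentially the same route as the paper: feed the split-and-list SETH-to-Hyperclique reduction (Lemma~\ref{seth-to-hyperclique}) into the Unweighted Lemma and read off the Twin Water Lily pattern via Proposition~\ref{prop:treewidth-of-lily}, choosing $t:=h$ in Part~2. The only deviation is in Part~1, where the paper simply assumes w.l.o.g.\ that $h \geq t$ and uses $\TWL(h,0,t)$, whereas you handle the comparison of $h$ and $t$ by an explicit case split between $\TWL(h,0,t+1)$ and $\TWL(h,0,t)$ — a harmless (arguably more self-contained) variant of the same argument.
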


\begin{proof}
  This result also follows via Unweighted Lemma and hence via the \textsc{hyperclique} problem. Specifically, the split-and-list technique from~\cite{williams2005new} allows one to reduce the Satisfiability problem to \textsc{Hyperclique}. Along the same lines, the following result was shown in~\cite[Lemma 9.1]{lincoln2018tight}.
  \begin{lemma}[\cite{lincoln2018tight}]\label{seth-to-hyperclique}
    Assuming SETH, for any \(\varepsilon>0\) there exists \(h\geq3\) such that for all \(k>h\), the \(h\)-uniform \(k\)-\textsc{Hyperclique} problem is not in time \(O(n^{k-\varepsilon})\).
  \end{lemma}
  Using this, we now prove parts 1 and 2 of the theorem.
  
  \textbf{Part 1:} Let \(t\geq 3\) and \(\varepsilon > 0\) be given. We use Lemma~\ref{seth-to-hyperclique} to obtain \(h \geq 3\) such that \(h\)-uniform \(ht\)-\textsc{Hyperclique} is not in time \(O(n^{ht-\varepsilon})\). W.l.o.g. assume \(h \geq t\). Now it suffices to apply the Unweighted Lemma with \(h' := h\) and \(r := t\), and set \(\mathcal{H}_{t,\varepsilon} = \TWL(h,0,t)\). Via Proposition~\ref{prop:treewidth-of-lily} and the fact that \(h \geq t\), we know that \(\TWL(h,0,t)\) has treewidth exactly \(t\).
  
  \textbf{Part 2:} Let \(\varepsilon > 0\) be given. Lemma~\ref{seth-to-hyperclique} gives a \(h\geq 3\) such for all \(k>h\), \(h\)-uniform \(k\)-\textsc{Hyperclique} is not in time \(O(n^{k-\varepsilon})\). We choose \(t := h\) and get that \(h\)-uniform \(ht\)-\textsc{Hyperclique} is not in time \(O(n^{ht-\varepsilon})\). Now if suffices to apply the Unweighted Lemma with \(h' := h\) and \(r := t+1 = h+1\) and set \(H_{t,\varepsilon} = \TWL(h,0,h+1)\). Via Proposition~\ref{prop:treewidth-of-lily}, we know that \(\TWL(h,0,t+1)\) has treewidth exactly \(h = t\).
\end{proof}

\subsection{Exact Weight Colored Subgraph Isomorphism}

First, we state the Weighted Lemma. This result enables us to prove lower bounds for \textsc{Exact Weight Colored Subgraph Isomorphism}.

\begin{lemma}[The Weighted Lemma]\label{lem:main-lemma}
  For any \(\varepsilon \in (0,1)\) and any constant parameters \(h \in \mathbb{N}\setminus \{1\}, r_1 \in \mathbb{N}, r_2 \in \mathbb{N}, \beta \in (0,1)\cap \mathbb{Q}\), there exists a \(k \in \mathbb{N}\) and an algorithm \(\mathcal{A}\) which
  \begin{enumerate}[(a)]
  \item accepts as input an instance \(\mathcal{I} = G\) of \(h\)-uniform \(k\)-\textsc{Hyperclique}.
  \item produces an equivalent instance \(\mathcal{I}' = (H',G',f',w')\) of \textsc{Exact Weight Colored Subgraph Isomorphism}, where \(H'\) is a \(h\)-wide Twin Water Lily of order \((r_1,r_2)\). The preimages of \(\mathcal{I}'\) have size at most \(\max\{n^{\beta k/r_1}, n^{(1-\beta)k/r_2}\}\), and the maximum weight is \(W = \Theta(n^{(1+\varepsilon)\beta k})\).
  \item runs in time \(O(n^{2h-1} + (n^{\beta k/(hr_1)})^{\hat{c}})\) for some universal constant \(\hat{c}\in \mathbb{N}\).
  \end{enumerate}
\end{lemma}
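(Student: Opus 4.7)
The plan is to formalize the construction sketched in Section~\ref{section-sketches} for the weighted lower bound, making the parameter choices explicit. Given fixed $h,r_1,r_2,\beta=p/q$ and the target $\varepsilon$, I would first choose the hyperclique size $k$ to be the smallest positive integer that is a multiple of $h\,r_1\,r_2\,q$ (so that both $\beta k/(hr_1)$ and $(1-\beta)k/(hr_2)$ are positive integers) and satisfies $k>h$. All hidden constants, including the exponent $\hat c$ in part~(c), will be absorbed into $k$ or into a universal constant.

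Given an $h$-uniform $k$-Hyperclique instance $G$ that is w.l.o.g.\ $k$-partite with parts $U_1,\dots,U_k$ of size at most $n$, the reduction proceeds in three stages. First, I group $U_1,\dots,U_{\beta k}$ into $hr_1$ groups of size $\beta k/(hr_1)$ indexed by $(a,j)\in[r_1]\times[h]$, and $U_{\beta k+1},\dots,U_k$ into $hr_2$ groups of size $(1-\beta)k/(hr_2)$ indexed by $(a,j)\in[r_2]\times[h]$, writing $U^i_{a,j}$ for the corresponding Cartesian products. Second, I set $H':=\TWL(h,r_1,r_2)$ with its sets $S_1,S_2,P$, and define $V_a:=\prod_j U^i_{a,j}$ for $a\in S_i$; for $p=((a_1,j_1),\dots,(a_h,j_h))\in P$ with $a_\ell\in S_{i_\ell}$, I let $V_p$ consist of those tuples in $\prod_\ell U^{i_\ell}_{a_\ell,j_\ell}$ whose flattening into the original partitions forms a hyperedge of $G$. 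Third, for every $H'$-edge $ap$ with $a\in S_2$ I insert an edge between $u\in V_a$ and $u'\in V_p$ iff they are coordinate-compatible as in the sketch. This mirrors Section~\ref{sec:unweighted-uncol-subiso} on the $S_2$ side and immediately yields the preimage-size bound in (b).

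The heart of the proof is the weight construction on the $S_1$ side, where by definition of the Twin Water Lily there are no edges between $S_1$ and $P$, so consistency of the hyperclique choice in the $S_1$-groups must be enforced solely by weights. I would invoke Lemma~\ref{k-average-free-lemma} with the given $\varepsilon$ and with $k\leftarrow |P|$ to obtain, in time polynomial in $N:=n^{\beta k/(hr_1)}$, a $|P|$-average-free set $S\subseteq[0,C]$ of size $N$ with $C=O(N^{1+\varepsilon})$; fix a bijection $\varrho_S\colon[N]\to S$ and relabel each $U^1_{a,j}$ as $[N]$. I choose a target $T$ whose binary representation is partitioned into $hr_1$ blocks of width $\lceil\log_2(2|P|C)\rceil$ indexed by $(a,j)\in[r_1]\times[h]$, with $T$ holding $|P|C$ in block $(a,j)$. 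A vertex $u\in V_a$ with $a\in S_1$ gets $|P|C-|N_{H'}(a)|\cdot\varrho_S(u_j)$ in block $(a,j)$, and a vertex $u'\in V_p$ for $p\in P$ gets $\varrho_S(u'_\ell)$ in block $(a_\ell,j_\ell)$ whenever $a_\ell\in S_1$; all other blocks are zero. The block width prevents carries between blocks, and the maximum weight works out to $W=\Theta(n^{(1+\varepsilon)\beta k})$ as claimed. Absorbing $T$ into the vertex weights of one distinguished preimage converts the target to $0$ as required by \textsc{Exact Weight Colored Subgraph Isomorphism}.

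Correctness reduces to the $|P|$-average-free property: the block-$(a,j)$ contribution to the total weight is $|P|C-|N_{H'}(a)|\,\varrho_S(u_j)+\sum_{p\in N_{H'}(a)}\varrho_S(u'_{\ell(p)})$, and equals the target $|P|C$ iff $\varrho_S(u_j)=\varrho_S(u'_{\ell(p)})$ for every relevant $p$; by injectivity of $\varrho_S$ this is exactly consistency of the chosen hyperclique vertex in $U^1_{a,j}$ with every hyperedge containing it. Combined with the edge constraints on the $S_2$ side, this yields the claimed equivalence. The running time decomposes into the $O(n^{2h-1})$ enumeration of hyperedges (as in the unweighted case), plus the cost of Lemma~\ref{k-average-free-lemma} and arithmetic on $O(\log W)$-bit weights, which is bounded by $(n^{\beta k/(hr_1)})^{\hat c}$ for a universal $\hat c$. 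The main obstacle I anticipate is the bookkeeping: ensuring the divisibility of $k$ and the block-width choices are tight enough that the $(1+\varepsilon)$ blow-up from Lemma~\ref{k-average-free-lemma} propagates \emph{only} into the $W$-bound in (b) and not into the preimage-size bound, and that no hidden polynomial factor in $k$ leaks out of the claimed running-time bound in (c).
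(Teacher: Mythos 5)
Your proposal follows essentially the same route as the paper's proof: the paper's five steps (colored hyperclique, preimage merging, intermediate vertices, average-free-set weights, final merging) are simply collapsed into your three stages, and the key choices ($k=hr_1r_2q$ for divisibility, the $|P|$-average-free set with blocks of width $\lceil\log_2(2|P|C)\rceil$, the resulting bound $W=\Theta(n^{(1+\varepsilon)\beta k})$, and absorbing $T$ into one preimage) all coincide. Two wording slips are worth fixing when you write this up: the membership condition for $V_p$ should be that the flattened tuple \emph{induces a partial hyperclique} in $G$ (it contains more than $h$ vertices, so it cannot literally ``form a hyperedge''), and the multiplier $|N_{H'}(a)|$ must denote the number of $p\in P$ whose defining subset contains $(a,j)$ rather than the neighborhood in the final pattern $H'$, in which every $S_1$-vertex is isolated.
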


Intuitively, the parameter \(\beta\) indicates what percentage of the instance \(\mathcal{I}\) should be encoded in which part of the Twin Water Lily. A percentage of \(\beta\) is encoded in the weights, while the remaining percentage of \((1-\beta)\) is encoded in the edges.

We now prove the~\hyperref[lem:main-lemma]{Weighted Lemma}.

\begin{proof}
  Let an instance \(\mathcal{I_0}=G_0\) of \(h\)-uniform \(k\)-\textsc{Hyperclique} be given, where \(k\) is chosen later. We convert this to an \textsc{Exact Weight Colored Subgraph Isomorphism} instance with a Twin Water Lily as pattern graph in five steps, each of which we explain in detail below: First, we convert it to a \textsc{Colored \(k\)-Hyperclique} instance in a standard way. Second, we split the instance into the part that we want to encode in the weights and the part that we want to encode in the edges. In both of these parts, we merge large groups of preimages such that we are left with only \(hr_1\) in the weight part, and \(hr_2\) in the edges part. Third, we go from hypercliques to \textsc{Colored Subgraph Isomorphism} in a standard way while preserving the preimages. Fourth, we convert the weight part of the instance into actually using weights by replacing edge constraints by weight constraints, using a construction known as \(k\)-average free sets. Finally, we merge preimages in both parts again to obtain the final Twin Water Lily instance.
  \begin{description}
  \item[\underline{1. Converting to Colored Hyperclique:}] This step works exactly like~\hyperref[step1-unweighted-lemma]{step 1} in the proof of the~\hyperref[lem:unweighted-lemma]{Unweighted Lemma}. As described there, We convert \(\mathcal{I}_0\) to a \(h\)-uniform \textsc{Colored Hyperclique} instance. The latter has the form \(\mathcal{I}_1 = (G_1, f_1)\), where the color homomorphism \(f_1: V(G_1) \to V(C_k)\) assigns each vertex of \(G_1\) a vertex in the \(k\)-hyperclique \(C_k\).

  \item[\underline{2. Merging Preimages:}] We now convert \(\mathcal{I}_1\) to a \(h\)-uniform \textsc{Colored \((hr_1+hr_2)\)-Hyperclique} instance, by condensing groups of (small) preimages into single (large) preimages. In the converted instance \(\mathcal{I}_2 = (G_2,f_2)\) with color homomorphism \(f_2: V(G_2)\to V(C_{r_1+r_2})\), we ensure that \(V(C_{hr_1+hr_2})\) can be divided into two sets \(H_1,H_2\) such that \(|H_1| = hr_1, |H_2| = hr_2\) and \(\forall v \in H_1: |f_2^{-1}(v)| = n^{\beta k/(hr_1)}, \forall v \in H_2: |f_2^{-1}(v)| = n^{(1-\beta)k/(hr_2)}\). These two sets correspond to the two water lilies constructed in later steps.

    At this point, we must make our choice of \(k\). We will need that \(k_1 := \beta k\) is an integer and divisible by \(hr_1\). Furthermore, \(k_2 := (1-\beta)k\) must also be an integer and divisible by \(hr_2\). Letting \(\beta = \frac{p}{q} \in (0,1) \cap \mathbb{Q}\) where \(p,q \in \mathbb{N}\), it hence suffices to choose \(k = hr_1r_2q\).

    Now, we split \(V(C_k) = \{1, \ldots, k\}\) into two groups \(V_1 = \{1, \ldots, k_1\}\) and \(V_2 = \{k_1+1, \ldots, k_2\}\). Going further, we split each of these sets again: \(V_1\) is split into \(hr_1\) disjoint groups \(V_{1}, \ldots, V_{hr_1}\) of size \(\frac{k_1}{hr_1}\) each. Analogously, we split \(V_2\) into \(hr_2\) disjoint groups \(V_{hr_1+1}, \ldots, V_{hr_1+hr_2}\) of size \(\frac{k_2}{hr_2}\) each.

    Now define \(V(C_{hr_1+hr_2}) = \{v_{1},\ldots, v_{hr_1+hr_2}\}\) and furthermore let \(S'_1 = \{v_1, \ldots, v_{hr_1}\}, S'_2 = \{v_{hr_1+1}, \ldots, v_{hr_1+hr_2}\}\). The vertex \(v_{i}\) corresponds to \(V_{i}\), for every \(i\). The vertices in their preimages represent all the configurations of the sets. In particular, we let every \(v'_i \in f_2^{-1}(v_{i})\) represent a configuration \(\mathrm{conf}(v'_i) \in \mathcal{Conf}(V_{i})\). Hence, for \(v_i \in S'_1\) the preimage \(f_2^{-1}(v_{i})\) has size \(n^{k_1/(hr_1)}\), and for \(v_i \in S'_2\) the preimage \(f_2^{-1}(v_i)\) has size \(n^{k_2/(hr_2)}\).

    Now for the edges. For every subset \(\{v_{i_1}, \ldots, v_{i_h}\} \in {S'_1 \cup S'_2 \choose h}\), we iterate over all \((v'_{i_1}, \ldots, v'_{i_h}) \in f_2^{-1}(v_{i_1})\times \ldots, \times f_2^{-1}(v_{i_h})\). We combine the configurations that they represent by defining \(R \in \mathcal{Conf}(V_{i_1} \cup \ldots V_{i_h})\) as \(\forall \ell \in [h]: \forall v'_{i_\ell} \in f_2^{-1}(v_{i_\ell}): R(v'_{i_\ell}) := \mathrm{conf}(v'_{i_\ell})\). Now we add \(\{w_1, \ldots, w_h\}\) as a hyperedge to \(E(G_2)\) if and only if \(R\) is a valid configuration. That is, if the image of \(R\) induces a hyperclique in \(G_1\).

    Again, correctness is easy to see. 

  \item[\underline{3. Representing Hyperedges by Intermediate Vertices:}] We now go from the \textsc{Colored Hyperclique} instance \(\mathcal{I_2} = (G_2, f_2)\) to a (structured) \textsc{Colored Subgraph Isomorphism} instance \(\mathcal{I_3} = (H_3, G_3, f_3)\). The reduction is essentially the same as the one in~\hyperref[step2-unweighted-lemma]{step 2} in the proof of the~\hyperref[lem:unweighted-lemma]{Unweighted Lemma}.

    We construct \(H_3\) and \(G_3\). \(H_3\) has three sets of vertices \(S'_1, S'_2\) and \(P\). \(S'_1\) and \(S'_2\) copy \(S'_1\) and \(S'_2\) from the last step, including their preimages. Accordingly, we write \(S'_1 = \{v_{1}, \ldots, v_{hr_1}\}\) and \(S'_2 = \{v_{hr_1+1}, \ldots, v_{hr_1+hr_2}\}\). In \(P\), we have one vertex \(u\) for every subset \(\{w_1, \ldots, w_h\} \in {C_{hr_1+hr_2} \choose h}\), and we have \(\forall \ell \in [h]: uw_\ell \in E(H_3)\). For every hyperedge \(\{w'_1, \ldots, w'_h\} \in E(G_2)\) with \(\forall \ell \in [h]: w'_\ell \in f_2^{-1}(G_2)\), we add a vertex \(u' \in f_3^{-1}(u)\) which is connected to all vertices \(w'_1, \ldots, w'_h\). Note here that \(u'\) is only connected to one vertex from each preimage, which is a property that will be needed in the fourth step. 

    Correctness of this construction is easy to see. Note that preimages of vertices in \(S'_1\) and \(S'_2\) still have size \(n^{k_1/(hr_1)}\) and \(n^{k_2/(hr_2)}\), respectively. The preimages of vertices in \(P\), have at most \((\max\{n^{k_1/(hr_1)}, n^{k_2/(hr_2)}\})^h = \max\{n^{k_1/r_1}, n^{k_2/r_2}\}\) vertices.

  \item[\underline{4. Replacing Some of the Edges with Weights}] We now come to the crucial step of converting some of the edge constraints to weight constraints. We convert the \textsc{Colored Subgraph Isomorphism} instance \(\mathcal{I}_3 = (H_3, G_3, f_3)\) of the preceding step into an \textsc{Exact Weight Colored Subgraph Isomorphism} instance \(\mathcal{I}_4 = (H_4, G_4, f_4, w_4)\). To do this, we will need so-called \(k\)-average free sets.

    \begin{definition}[$k$-average free sets]
    A set \(S\subseteq \mathbb{Z}\) is called $k$-average-free if, for any $s_1, \ldots, s_{k'+1} \in S$ with $k' \leq k$, we have $s_1+ \ldots + s_{k'} = k'\cdot s_{k'+1}$ if and only if $s_1 = \ldots = s_{k'+1}$. In other words, the average of $s_1, \ldots, s_{k'}\in S$ is in $S$ if and only if all $s_i$ are equal.  \end{definition}

We use the following construction for $k$-average free sets, originally proven in~\cite{behrend1946sets}, modified into a more useful version in~\cite{abboud2014losing} and formulated in this form in~\cite{abboud2019seth}.

\begin{lemma}\label{k-average-free-lemma}
    There exists a universal constant $c > 0$ such that, for all constants $\varepsilon \in (0,1)$ and $k \geq 2$, a $k$-average-free set $S$ of size $n$ with $S \subseteq [0, k^{c/\varepsilon}n^{1+\varepsilon}]$ can be constructed in time $\poly(n)$.
  \end{lemma}

  Specifically, we use Lemma~\ref{k-average-free-lemma} with \(\varepsilon' = \varepsilon, k' = \lambda := |P|\) and \(n' = n^{k_1/(hr_1)}\). Letting \(B := \lambda^{c/\varepsilon}n^{(1+\varepsilon)k_1/(hr_1)}\), this yields a \(\lambda\)-average free set \(S \subseteq [0,B]\) of size \(n^{k_1/(hr_1)}\). From this, we can construct  an arbitrary bijection \(\varrho_S: [n^{k_1/(hr_1)}] \to S\).
  
  Now, to construct \(\mathcal{I}_4\), we first copy \(\mathcal{I_3}\), giving each node a default weight of ``infinity'' (i.e.\ something otherwise unobtainable, e.g. \(k^2W + 1\)). Now we delete all edges in \(H_3\) which are incident to a vertex in \(S'_1\), along with the corresponding edges in \(G_3\). These are the edges that we replace by weight constraints.

  Hence we now describe the weights. To make our construction easier, we specify a target value \(T\) (instead of the default target zero). We can easily get rid of this again by picking some vertex \(\hat{w} \in V(H_3)\) and subtracting \(T\) from the weights of all of its preimages. The binary representation of \(T\) consists of \(h\cdot a\) blocks of \(\lceil\log(2\lambda B)\rceil\) bits, each containing the binary representation of \(\lambda B\). The \(i\)-th block represents the vertex \(v_i \in S'_1\).

  We move to the weights of the vertices, starting with vertices in the preimages of \(S'_1\). Somewhat abusing notation, we define \(\forall i \in [hr_1]: f_4^{-1}(v_i) = \{1, \ldots, n^{k_1/(hr_1)}\}\). The weight of vertex \(v'_i \in f_4^{-1}(v_i)\) has a value of \(\lambda B - |N(v_i)|\cdot \varrho_S(v'_i)\) in the \(i\)-th block, and a value of zero in all other blocks. Now for vertices in the preimages of\(P\). Let \(u \in P\) correspond to the set \(\{w_1, \ldots, w_h\} \in {C_{hr_1+hr_2} \choose h}\). As observed in the preceding step, each \(u' \in f_3^{-1}(u)\) is connected to exactly one vertex \(w'_i\) from each preimage \(f_3^{-1}(w_i)\). In the current step, for each \(i\in [h]\) with \(w_i \in S'_1\), we have deleted the edges \(u'w'_i\). To replace them, for each such \(i\), we give \(u'\) a value of \(\varrho_S(w'_i)\) in the \(i\)-th block. We have just changed the weight of \(u'\) at \(|N(u) \cap S'_1|\) blocks of its binary representation. All other blocks have a value of zero.

  All vertices with so far unspecified weight have weight zero. This concludes the construction of \(\mathcal{I}_4\).

  We show correctness of this construction. It suffices to show that any configuration \(R\) that is a solution for \(\mathcal{I}_3\) is also a solution for \(\mathcal{I}_4\) and vice versa. Hence, suppose \(R\) is a solution for \(\mathcal{I}_3\). Then all edge constraints of \(H_4\) are trivially fulfilled and we need only show that the total weight is \(T\). For ease of discussion, we denote by \(\alpha[i]\) the value of the \(i\)-th block of a weight \(\alpha\). Consider the blocks of the binary representation of the sum of weights \(w(R) = \sum_{v \in \mathrm{Im}(R)}w_4(v)\), and let \(i\) be fixed. The large block size prevents overflow, so \(w(R)[i] = \sum_{v \in \mathrm{Im}(R)}w_4(v)[i]\). By construction, we have \(w(R)[i] = w(R(v_i))[i] + \sum_{u \in N(v_i)} w(R(u))[i]\). However since \(R\) is a valid configuration in \(\mathcal{I}_3\), we have that for each \(u\), \(\forall v_j \in N(u): R(u)R(v_j) \in E(G_3)\). In particular, \(R(u)R(v_i) \in E(G_3)\) and hence by construction \(w(R(u))[i] = \varrho_S(v_i)\). We conclude \(w(R)[i] = \lambda B - |N(v_i)|\cdot \varrho_S(R(v_i)) + \sum_{i \in N(v_i)} \varrho_S(R(v_i)) = \lambda B = T[i]\). Hence \(w(R)\) is equal to \(T\) in each of its blocks, which was to be proven.
  \todo{add stuff where the infinity was necessary (i.e.\ that you can't choose non-active vertices)}

  Conversely, suppose \(R\) is a solution for \(\mathcal{I}_4\). Then all edge constraints in \(G_3[V(G_3)\setminus S'_1]\) are trivially satisfied and we need only show that \(\forall v_i \in S'_1: \forall u \in N(v_i): R(v_i)R(u) \in E(G_3)\). Fix \(v_i \in S'_1\). We have that \(\lambda B = T[i] = w(R)[i] = w(R(v_i))[i] + \sum_{u \in N(v_i)} w(R(u))[i]\). Let \(N(v_i) = \{u_1, \ldots, u_{|N(v_i)|}\}\). For each \(\ell \in |N(v_i)|\), we have that \(R(u_\ell)\) is connected to some vertex \(v_i^{(\ell)} \in f_3^{-1}(v_i)\), and hence that \(w(R(u_\ell))[i] = \varrho_S(v_i^{(\ell)})\). Hence we have that \(\lambda B = \lambda B - |N(v_i)|\cdot \varrho_S(R(v_i)) + \sum_{\ell \in |N(v_i)|}\varrho_S(v_i^{(\ell)})\). Hence \(|N(v_i)|\cdot \varrho_S(R(v_i)) = \sum_{\ell \in |N(v_i)|}\varrho_S(v_i^{(\ell)})\). But because the values in the image of the bijection \(\varrho_S\) are a \(\lambda\)-average free set and \(N(v_i)\subseteq P\) certainly has size less than \(|P|=\lambda\), we have that \(\forall \ell: v_i^{(\ell)} = R(v_i)\). Thus for all \(\ell\), \(R(u_\ell)\) is connected to \(R(v_i)\), which was to be proven.

\item[\underline{5. Merging Preimages in \(S'_1\) and \(S'_2\):}] Lastly, we go from \(\mathcal{I}_4 = (H_4, G_4, f_4, w_4)\) to the final instance \(\mathcal{I}_5 = (H_5, G_5, f_5, w_5)\) where \(H_5\) is a Twin Water Lily of order \((r_1,r_2)\). This step is similar to step~\hyperref[step3-unweighted-lemma]{step 3} of the proof of the~\hyperref[lem:unweighted-lemma]{Unweighted Lemma}.

  We merge groups of vertices within \(S'_1\) and \(S'_2\). In both sets, these groups have size \(h\).

    We split \(S'_1\) into \(r_1\) groups \(X_1, \ldots, X_{r_1}\) of size \(h\), and we split \(S'_2\) into \(r_2\) groups \(X_{r_1+1}, \ldots, X_{r_1+r_2}\) of size \(h\). In \(H_5\), we have for every \(i \in [r_1+r_2]\) a vertex \(x_i\) representing \(X_i\). Each vertex \(x'_i \in f_5^{-1}(x_i)\) corresponds to a configuration \(\mathrm{conf}(x'_i) \in \mathcal{Conf}(X_i)\). The weight of \(x'_i\) is \(w(\mathrm{conf}(x'_i))\), i.e.\ the sum of the weights of the vertices in the image of the configuration. In accordance with the definition of a Twin Water Lily, we define \(S_1 = \{x_1, \ldots, x_{r_1}\}\) and \(S_2 = \{x_{r_1+1}, \ldots, x_{r_1+r_2}\}\).

    The set \(P \subseteq V(H_5)\) remains the same as in the preceding step, including its preimages and the weights of the vertices in the preimages. For each \(u \in P\), we go through the vertices \(v\) in the neighbourhood of \(u\) in \(H_4\), and connect \(u\) to \(x_i\) such that \(v \in X_i\). Note that each \(u \in P\) is still connected to at most \(h\) other vertices, but that it can be less if multiple vertices of its neighborhood came from the same group. Finally, we connect a vertex \(u' \in f^{-1}(u)\) to a vertex \(x'_i \in f_5^{-1}(x_i)\) if and only if \(\forall v \in N_{H_2}(u)\cap X_i: u'(\mathrm{conf}(x'_i)(v)) \in E(G_4)\).

    Correctness is easy to see. Note that \(H_5\) is a Twin Water Lily of order \((r_1,r_2)\) now. The set \(S_1\) has size \(r_1\) and \(S_2\) has size \(r_2\), with respective preimages of size \(n^{k_1/r_1}\) and \(n^{k_2/r_2}\). The preimages of \(P\) still have size \(\max\{n^{k1/r_1}, n^{k_2/r_2}\}\). Furthermore, the weights constructed in step 4 have \(hr_1\) blocks of \(\lceil\log(2\lambda B)\rceil\) bits, hence the maximum weight is \(\Theta(2^{\zeta})\) where \(\zeta = hr_1\cdot (\log(\lambda B) + O(1)) = hr_1 \cdot \log(|P|^{1+c/\varepsilon}n^{(1+\varepsilon)k_1/(hr_1)}) + O(1) = hr_1 \cdot \log(n^{(1+\varepsilon)k_1/(hr_1)}) + O(1) = (1+\varepsilon)k_1\log(n) + O(1)\), hence as promised in the statement of the lemma the maximum weight is \(\Theta(n^{(1+\varepsilon)k_1})\).

    \end{description}

    We have shown a reduction that has properties \((b)\) and \((c)\) from the lemma. We still have to analyze the running time. It is easy to see that steps 1, 2 and 3 run in time \(O(n^{2h-1})\), just as in the proof of the unweighted lower bound. In step 4, we need to construct the \(k\)-average free set \(S\), which is done in time \(poly(n^{k_1/(hr_1)}) = O((n^{k_1/(hr_1)})^{\hat{c}})\) for some universal constant \({\hat{c}} \in \mathbb{Z}\). The rest of step 4 as well as step 5 can again be done in time \(O(n^{2h-1})\). This concludes the proof.
\end{proof}

We now use this lemma to prove the lower bound for \textsc{Exact Weight Colored Subgraph Isomorphism}. Note that it only gives node-weighted instances. But as the following lemma shows, this is enough.. It shows that we can always convert node-weighted instances to edge-weighted instances, hence showing lower bounds for node-weighted instances immediately shows lower bounds for edge-weighted instances. It can also be reused later in the algorithms, as we can then always assume that the instances are edge-weighted.

\begin{proposition}\label{prop:node-to-edge-weight}
  Given an instance of \textsc{Exact Node Weight Colored Subgraph Isomorphism} where each vertex $v \in V(G)$ with $w(v) \neq 0$ has degree at least one, we can transform it into an equivalent instance of \textsc{Exact Edge Weight Colored Subgraph Isomorphism} in such a way that only the weight function changes. Furthermore, this reduction runs in time linear in the input size.
\end{proposition}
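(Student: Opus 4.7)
The plan is to redistribute each vertex's node-weight across its incident edges so that, for any valid configuration of $H$, the sum of the new edge-weights equals a fixed nonzero multiple of the sum of the original node-weights. Set $M := \operatorname{lcm}\{|N_H(u)| : u \in V(H),\; |N_H(u)| \ge 1\}$, and for every edge $pq \in E(G)$, writing $u := f(p)$ and $u' := f(q)$, define
$$w'(pq) \;:=\; \frac{M\, w(p)}{|N_H(u)|} \;+\; \frac{M\, w(q)}{|N_H(u')|}.$$
This is well-defined because $pq \in E(G)$ combined with $f$ being a homomorphism forces $uu' \in E(H)$, so $|N_H(u)|, |N_H(u')| \ge 1$ and both fractions are integers. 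The graph $G$, pattern $H$, color homomorphism $f$, and target weight $0$ all stay the same; only the weight function changes.

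Correctness reduces to a one-line double-counting identity: for any valid configuration $R$ of $V(H)$,
$$\sum_{uu' \in E(H)} w'(R(u)R(u')) \;=\; \sum_{u \in V(H)} |N_H(u)| \cdot \frac{M\, w(R(u))}{|N_H(u)|} \;=\; M \!\!\sum_{\substack{u \in V(H) \\ |N_H(u)| \ge 1}}\!\! w(R(u)),$$
so the edge-weighted and node-weighted exactness conditions differ only by the nonzero factor $M$ \emph{and} by the restriction to non-isolated pattern vertices on the right.

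Closing that last gap is the one point that requires care, and it is where the degree hypothesis enters. If $u \in V(H)$ has $|N_H(u)| = 0$ and $v \in f^{-1}(u)$, then any neighbour of $v$ in $G$ would, by the homomorphism property, produce a neighbour of $u$ in $H$, contradicting $|N_H(u)| = 0$. Hence every such $v$ is isolated in $G$, and the hypothesis yields $w(v) = 0$. The omitted terms therefore vanish, $\sum_{u : |N_H(u)| \ge 1} w(R(u)) = \sum_{u \in V(H)} w(R(u))$, and the two instances have exactly the same set of solutions.

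The runtime is then immediate: computing all $|N_H(u)|$ and $M$ takes a single pass over $H$, after which each $w'(pq)$ is produced in $O(1)$, for total time linear in the input size.
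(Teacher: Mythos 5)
Your proof is correct, and it takes a genuinely different route from the paper's. The paper's reduction picks, for each $v \in V(G)$ with $w(v)\neq 0$, a single arbitrary neighbour $u$ in $G$ and adds $w(v)$ to $w'(uv)$, then asserts that solutions are preserved. As written, that argument glosses over a real subtlety: the chosen edge $uv$ need not belong to the solution subgraph (a solution may select $v$ together with a vertex of $f^{-1}(f(u))$ other than $u$), in which case $w(v)$ simply disappears from the edge-weighted sum. Your construction sidesteps this by spreading $w(p)$ uniformly over the \emph{pattern} edges incident to $f(p)$ rather than onto one arbitrary graph edge: since a solution is a valid configuration of all of $V(H)$, every pattern edge is realised exactly once, and your double-counting identity gives edge-sum $= M\cdot{}$node-sum on the nose. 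Your isolated-vertex discussion is also precisely where the degree hypothesis must enter, and you handle it correctly. The one thing the single-edge assignment buys (when done per pattern edge rather than per graph edge, which is presumably what the paper intends) is weight growth: it keeps the new maximum weight at $O(W)$, whereas your scaling by $M=\mathrm{lcm}$ of the pattern degrees can inflate $W$ by a factor exponential in $k$. This is harmless for the proposition as stated and for the lower-bound applications (where $k$ is constant), but since $W$ appears multiplicatively in the running times of the algorithms that invoke this proposition, it is worth noting; fixing one incident pattern edge per non-isolated $u\in V(H)$ and pushing $w(p)$ entirely onto the $G$-edges lying over that pattern edge would give the same correctness with only a factor-$2$ blowup.
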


\begin{proof}  
  We construct a new weight function $w': E(G) \to \mathbb{Z}$. Initially, $w'(e) = 0$ for all $e \in E(G)$. The idea is to push the weight of each vertex of non-zero weight in $G$ onto one of its edges. Hence let $v \in V(G)$ with $w(v) \neq 0$. Then there must be $u \in V(G)$ with $uv \in E(G)$. We add $w(v)$ to $w'(uv)$. This completes the reduction.

  It can easily be seen that if there is a solution in the node-weighted instance with $w$, then that same solution must work with $w'$ and vice versa.
\end{proof}

This enables us to prove the following theorem. Note that it implies Theorem~\ref{corollary-lower-bound-weighted} via Lemma~\ref{lemma-equiv}.

\begin{theorem}\label{lower-bound-weighted}
  For both the node- and edge weighted variant of the problems, the following statements are true.
  \begin{enumerate}
  \item For each \(t \geq 3\), each \(\gamma \in \mathbb{R}^+\) and any \(3\leq h \leq t\), there exists a connected, bipartite graph \(\mathcal{H}_{t,h,\gamma}\) of treewidth \(t\) such that there cannot be an algorithm solving the \textsc{Exact Weight Colored Subgraph Isomorphism} problem on pattern graph \(\mathcal{H}_{t,h,\gamma}\) for instances with maximum weight \(W = \Theta(n^{\gamma})\) in time \(O(n^{t+1-\varepsilon}W)\), unless the \(h\)-uniform \textsc{Hyperclique} hypothesis fails.
  \item For each \(t\geq 1\), each \(\gamma \in \mathbb{R}^+\) and any \(h \geq 3\), there exists a connected, bipartite graph \(\mathcal{H}_{t,h,\gamma}\) of treewidth \(t\) such that there cannot be an algorithm solving the \textsc{Exact Weight Colored Subgraph Isomorphism} problem on pattern graph \(\mathcal{H}_{t,h,\gamma}\) for instances with maximum weight \(W = \Theta(n^{\gamma})\) in time \(O(n^{t-\varepsilon}W)\), unless the \(h\)-uniform \textsc{Hyperclique} hypothesis fails.
    \item For each \(t\geq 1\) and each \(\gamma \in \mathbb{R}^+\), there exists a connected, bipartite graph \(\mathcal{H}_{t,\gamma}\) of treewidth \(t\) such that there cannot be an algorithm solving the \textsc{Exact Weight Colored Subgraph Isomorphism} problem on pattern graph \(\mathcal{H}_{t,\gamma}\) for instances with maximum weight \(W=\Theta(n^{\gamma})\) in time \(O(n^{(t+1)\omega/3-\varepsilon}W^{\omega/3})\), unless the \textsc{Clique} hypothesis fails.
    \end{enumerate}
  \end{theorem}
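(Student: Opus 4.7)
The plan is to derive all three parts uniformly from the Weighted Lemma (Lemma~\ref{lem:main-lemma}), controlling the pattern's treewidth via Proposition~\ref{prop:treewidth-of-lily} and transferring the lower bound from the node-weighted to the edge-weighted setting via Proposition~\ref{prop:node-to-edge-weight}. The last step always applies because every pattern $\TWL(h, r_1, r_2)$ produced by the lemma is connected, so each weighted vertex has positive degree as required by that proposition.

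For Part 1 with given $t \geq 3$, $3 \leq h \leq t$, and $\gamma > 0$, I would invoke the Weighted Lemma with widths $h' = h$, $r_2 = t+1$, a positive integer $r_1$ to be chosen, rational balance $\beta = r_1/(r_1+t+1)$, and a small parameter $\delta \in (0,1)$ playing the role of $\varepsilon$ in the lemma. The choice of $\beta$ equalises the two preimage-size bounds, so the output instance has host graph of size $N = \Theta(n^{k/(r_1+t+1)})$ and maximum weight $W = \Theta(n^{(1+\delta)r_1 k/(r_1+t+1)}) = \Theta(N^{(1+\delta)r_1})$, while the pattern $\mathcal{H}_{t,h,\gamma} := \TWL(h, r_1, t+1)$ has treewidth at most $t$ by Proposition~\ref{prop:treewidth-of-lily} since $r_2 = t+1 > h$. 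Choosing $r_1 \in \mathbb{N}$ and $\delta \in (0,1)$ so that $(1+\delta)r_1 = \gamma$ (always possible when $\gamma/r_1 \in (1,2]$ for some $r_1$, covering all $\gamma > 1$; remaining boundary values and the regime $\gamma \leq 1$ are handled by a routine padding step that either inflates $N$ by attaching inactive vertices to a fixed preimage, or inflates $W$ by shifting all weights on a fixed color by a constant and adjusting the target) yields the desired ratio $W = \Theta(N^\gamma)$.

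The running-time contradiction is now immediate: assuming an $O(N^{t+1-\varepsilon}W)$ algorithm, composition with the reduction solves the $h$-uniform $k$-\textsc{Hyperclique} instance in time
\[
O\!\left(n^{2h-1} + n^{(t+1-\varepsilon)k/(r_1+t+1)}\cdot n^{(1+\delta)r_1 k/(r_1+t+1)}\right) \;=\; O\!\left(n^{k - (\varepsilon - \delta r_1)k/(r_1+t+1)}\right),
\]
and setting $\delta := \varepsilon/(2r_1)$ forces the exponent to be $k - \Omega(1)$, contradicting the $h$-uniform Hyperclique hypothesis. The identical accounting (redistributing the $\varepsilon$ between the two factors) also rules out $O(N^{t+1}W^{1-\varepsilon})$ algorithms. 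Part 2 then follows by exactly the same template with $r_2 = t$ instead of $t+1$, now invoking the second case of Proposition~\ref{prop:treewidth-of-lily} (where $s_2 \leq h$) to obtain treewidth at most $t$.

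For Part 3 the same machinery is applied with $h' = 2$, so the reduction starts from $k$-\textsc{Clique} (the $2$-uniform Hyperclique case that the lemma explicitly permits). Keeping $r_2 = t+1$ and $\beta = r_1/(r_1+t+1)$, a hypothetical $O(N^{(t+1)\omega/3 - \varepsilon}W^{\omega/3})$ algorithm yields a $k$-\textsc{Clique} algorithm running in time $O(n^{k\omega/3 - (\varepsilon - \omega\delta r_1/3)k/(r_1+t+1)})$, which contradicts the $k$-\textsc{Clique} hypothesis as soon as $\delta < 3\varepsilon/(\omega r_1)$. The main technical obstacle, and the only point that requires real care when filling in the details, is juggling the divisibility condition $k = h r_1 r_2 q$ of the Weighted Lemma together with the constraints $r_1 \in \mathbb{N}$ and $\delta \in (0,1)$ so as to realise $W = \Theta(N^\gamma)$ \emph{exactly} for arbitrary $\gamma > 0$; the padding tricks mentioned above solve this, but their effect on $N$ and $W$ must be tracked carefully to ensure that the dominant exponent in the resulting Hyperclique or Clique runtime is unchanged.
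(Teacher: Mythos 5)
There is a genuine gap, and it lies at the heart of your Part 1 (propagating to Parts 2 and 3): the two roles you assign to the parameter $\delta$ are incompatible. You fix $\beta = r_1/(r_1+t+1)$ so that both preimage bounds coincide at $N = n^{k/(r_1+t+1)}$, and you need $(1+\delta)r_1 = \gamma$ to realize $W = \Theta(N^{\gamma})$; but your own runtime bound $O\big(n^{k-(\varepsilon-\delta r_1)k/(r_1+t+1)}\big)$ only contradicts the Hyperclique hypothesis when $\delta r_1 < \varepsilon$, which you enforce by ``setting $\delta := \varepsilon/(2r_1)$''. You cannot do both. If $\delta$ is pinned by the weight condition, then $\delta r_1 = \gamma - r_1$ and the contradiction requires $\gamma < r_1 + \varepsilon$ for some integer $r_1$, i.e.\ $\gamma$ must lie within the adversary's $\varepsilon$ of an integer from above; for $\gamma = 3/2$ and $\varepsilon = 1/10$ no $r_1$ works. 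If instead $\delta = \varepsilon/(2r_1)$, then $W = \Theta(N^{r_1+\varepsilon/2}) \neq \Theta(N^{\gamma})$ and the theorem's premise on $W$ is not met. The structural reason is that equalizing the preimage sizes makes the weight encoding too costly: the $S_1$ part carries $N^{r_1}$ worth of information but costs $W = N^{(1+\delta)r_1}$ in the algorithm's budget, and the overhead $N^{\delta r_1}$ from the average-free-set construction must be absorbed entirely by the algorithm's savings $N^{\varepsilon}$.

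The paper's proof avoids this by keeping $\beta$ a genuinely free parameter and forcing the $S_1$-preimages to be strictly smaller than $N$ (via the constraint $r_1 > \beta(t+1)/(1-\beta)$), so that $N$ is governed by $S_2$ alone; it then solves $\varepsilon' = \gamma(1-\beta)/((t+1)\beta) - 1$ from the condition $W = \Theta(N^{\gamma})$ and verifies that the three resulting constraints on $\beta$ can be met simultaneously. That extra degree of freedom is precisely what your rigid $\beta$ discards, and it is also what handles $\gamma \le 1$: in your construction $W \ge N^{r_1} \ge N$ always, and the proposed padding cannot rescue this, since inflating $N$ polynomially (or inflating $W$ beyond the information it encodes) re-enters the runtime with the same fatal accounting. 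A smaller omission: for Part 3 with $t=1$, the pattern $\TWL(2,r_1,2)$ has treewidth $2$, and the paper needs a modification of the Weighted Lemma (leaving the $S_2$-edges unsubdivided) to obtain treewidth $1$; your proposal does not address this case.
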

    

\noindent
 The following implies Theorem~\ref{corollary-lower-bound-weighted-pathwidth} from the results section.

\begin{theorem}[Theorem~\ref{lower-bound-weighted} for pathwidth]\label{lower-bound-weighted-pathwidth}
  Parts 2 and 3 of Theorem~\ref{lower-bound-weighted} also hold when replacing the treewidth \(t\) by the pathwidth \(p\). Part 1 does not hold.
\end{theorem}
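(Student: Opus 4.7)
The plan is to replay the proofs of parts 2 and 3 of Theorem~\ref{lower-bound-weighted} with the treewidth bound of Proposition~\ref{prop:treewidth-of-lily} replaced by its pathwidth counterpart $\pw(\TWL(h,s_1,s_2)) \le s_2$. The failure of part 1 will then trace precisely to the absence of the ``$-1$'' savings that the treewidth bound enjoys when $s_2 > h$; no such saving is available for pathwidth, and this is the conceptual obstacle of the theorem.

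For part 2, I would invoke the Weighted Lemma (Lemma~\ref{lem:main-lemma}) with $r_2 := p$, keeping $h \ge 3$, $r_1$, $\beta$, and $\varepsilon$ exactly as chosen in the corresponding treewidth proof. The resulting pattern is a Twin Water Lily of order $(r_1,p)$, which has pathwidth at most $p$ by Proposition~\ref{prop:treewidth-of-lily}; if it is strictly less, a disjoint padding path of length $p$ realizes pathwidth exactly $p$ without interfering with the reduction. The remaining computation---balancing $\beta = r_1/(r_1 + r_2)$ so that both groups of preimages have equal size $n^{(1-\beta)k/r_2}$, plugging in a hypothetical $O(n^{p-\varepsilon}W)$ solver, and collapsing everything to $O(n^{k-\varepsilon'})$---is verbatim the treewidth argument and contradicts the $h$-uniform Hyperclique hypothesis.

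For part 3, I would replay the clique-based reduction of part 3 of Theorem~\ref{lower-bound-weighted} but omit the final edge-subdivision step, which in the treewidth proof is used only to render the pattern bipartite but which would increase pathwidth. The pattern then becomes a (non-bipartite) $(p+1)$-clique with the same weight gadgets attached, and its pathwidth is exactly $p$. The rest of the reduction from $(p+1)$-Clique, combined with a hypothetical $O(n^{(p+1)\omega/3 - \varepsilon} W^{\omega/3})$ algorithm, goes through unchanged and refutes the Clique hypothesis.

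The hard part---and the reason part 1 is not recovered---is the following. In the treewidth proof of part 1 one chooses $r_2 = t+1$ with $t+1 > h$, exploiting the special case $\tw(\TWL(h,s_1,t+1)) = t$ of Proposition~\ref{prop:treewidth-of-lily} to save one in the exponent. The pathwidth bound admits no such saving: taking $r_2 = p+1$ produces a pattern of pathwidth $p+1$ (with the desired lower bound $n^{p+1-\varepsilon}W$), whereas taking $r_2 = p$ gives pathwidth $p$ but only the weaker bound $n^{p-\varepsilon}W$, already covered by part 2. A qualitatively different gadget---one whose pathwidth drops strictly below the number of ``stalks'' of a Twin-Water-Lily-like core---would be required, and I do not see how to construct one within the present framework, which is why I would conclude the statement by merely noting that part 1 does not transfer.
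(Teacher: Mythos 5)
Your proofs of parts 2 and 3 follow essentially the same route as the paper: part 2 is the treewidth argument verbatim with the pathwidth bound $\pw(\TWL(h,s_1,s_2))\le s_2$ from Proposition~\ref{prop:treewidth-of-lily}, and part 3 skips the intermediate-vertex/subdivision step on the $S_2$ side (the paper phrases this as ``always leave $S_2$ as-is in step 3 of the Weighted Lemma''), yielding a non-bipartite pattern whose core is a $(p+1)$-clique of pathwidth $p$. Two small caveats. First, your fix for making the pathwidth \emph{exactly} $p$ is wrong as stated: a disjoint path of length $p$ has pathwidth $1$, not $p$, and disjointness would also break the claimed connectivity; if padding were needed you would attach something like a $K_{p+1}$ or simply argue the Twin Water Lily already attains pathwidth $p$ (the paper itself only ever uses the upper bound). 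Second, for ``Part 1 does not hold'' the paper's justification is stronger than ``the reduction does not transfer'': Theorem~\ref{corollary-upper-bound-weighted-pathwidth} gives an $\widetilde{O}(n^{\omega(p-1)}W+\cdots)$ algorithm with $\omega(p-1)<p+1$, so an $n^{p+1-\varepsilon}W$ lower bound is actually \emph{beaten} by a known algorithm, not merely out of reach of the Twin-Water-Lily framework. You should cite that upper bound rather than leaving the failure of part 1 as an open-ended inability to build a gadget.
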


  We remark that by the algorithm presented in Theorem~\ref{upper-bound-detection-pathwidth}, we cannot hope to obtain a lower bound as in part 1 of Theorem~\ref{lower-bound-weighted} for the case of pathwidth.

\begin{proof}[Proof (of theorem~\ref{lower-bound-weighted})]
  Note that by Proposition~\ref{prop:node-to-edge-weight}, it suffices to prove lower bounds for the node-weighted case.

  We begin with \textbf{part 1} of the theorem. Let \(t \geq 3\) and \(\gamma \in \mathbb{R}^+\), as well as \(3\leq h \leq t\) be given. We apply the~\hyperref[lem:main-lemma]{Weighted Lemma} with
  \begin{itemize}
  \item some \(\varepsilon' > 0\) chosen later,
  \item some \(\beta' \in (0,1)\cap \mathbb{Q}\) chosen later,
  \item \(h' := h\),
  \item \(r_2' := t+1\) and
  \item some arbitrary \(r_1'\in \mathbb{N}\) with
    \begin{itemize}
    \item \(r_1' > \frac{\hat{c}\beta'}{h}\) (this ensures that the
      running time \(O(n^{2h-1} + n^{\hat{c}\beta' k/(hr_1')})\) of the
      reduction is equal to \(O(n^{k-\varepsilon})\) for
      some \(\varepsilon>0\), and can hence be ignored in the
      analysis) and
    \item \(r_1' > \frac{\beta'(t+1)}{1-\beta'}\) (this
      ensures that
      \(\max\{n^{\beta' k/r_1'}, n^{(1-\beta')k/(t+1)}\} =
      n^{(1-\beta')k/(t+1)}\)).
  \end{itemize}

  \end{itemize}
  This produces a \(k \in \mathbb{N}\) and a reduction algorithm \(\mathcal{A}\) with the properties from the lemma. In particular, the reduction algorithm produces instances where the pattern graph \(H\) is a Twin Water Lily of order \((r_1, r_2)\), which we define to be our graph \(\mathcal{H}_{t,\gamma}\).

  Now suppose there is an algorithm for the \textsc{Exact Weight Colored Subgraph Isomorphism} problem on pattern graph \(\mathcal{H}_{t,\gamma}\) running in time \(O(N^{t+1-\varepsilon}W)\) (the case \(O(N^{t+1}W^{1-\varepsilon})\) is analogous). We show that the \(h\)-uniform \textsc{Hyperclique} hypothesis fails by showing that there is an algorithm for \(h\)-uniform \(k\)-\textsc{Hyperclique} running in time \(O(n^{k-\varepsilon})\) for some \(\varepsilon > 0\).

  Given an \(h\)-uniform \(k\)-\textsc{Hyperclique} instance, we use algorithm \(\mathcal{A}\) to obtain an equivalent instance of \textsc{Exact Weight Colored Subgraph Isomorphism} where the pattern graph is a Twin Water Lily of order \((r_1',t+1)\), the preimages have size \(N = \max\{n^{\beta' k/r_1'}, n^{(1-\beta')k/(t+1)}\} = n^{(1-\beta')k/(t+1)}\), and the maximum weight is \(W = \Theta(n^{(1+\varepsilon')\beta' k})\).

  First, we make sure that \(W = \Theta(N^{\gamma})\) by choosing \(\beta'\) and \(\varepsilon'\) accordingly. Substituting, we get \(n^{(1+\varepsilon')\beta' k} = \Theta(n^{\gamma (1-\beta')k/(t+1)})\), which is true if and only if
  \begin{align*}
  (1+\varepsilon')\beta'k = \frac{\gamma(1-\beta')k}{t+1} \iff \frac{\beta'}{1-\beta'} = \frac{\gamma}{(t+1)(1+\varepsilon')} \iff \varepsilon' = \frac{\gamma(1-\beta')}{(t+1)\beta'}-1
  \end{align*}
  Hence we choose \(\varepsilon'\) as such. However, to apply the~\hyperref[lem:main-lemma]{Weighted Lemma}, we must have \(\varepsilon' \in (0,1)\). Hence we get the following two constraints for \(\beta'\):    

  \begin{align*}
    \frac{\gamma(1-\beta')}{(t+1)\beta'} - 1 > 0 \ \ &\iff\ \ \beta' < \frac{\gamma}{(t+1)+\gamma}\\
    \frac{\gamma(1-\beta')}{(t+1)\beta'} - 1 < 1 \ \ &\iff\ \ \beta' > \frac{\gamma}{2(t+1)+ \gamma}
  \end{align*}
  We incorporate these constraints later.

  Now we need to ensure that the new running time we get is also small. We solve the \textsc{Exact Weight Colored Subgraph Isomorphism} instance in time \(O(N^{t+1-\varepsilon}W) = O(N^{ t + 1 + \gamma - \varepsilon}) = O(n^{(t + 1 + \gamma -\varepsilon)(1-\beta')k/(t+1)})\). Hence we get the following additional constraints on \(\beta'\):
  \begin{align*}
    \frac{(t + 1 + \gamma -\varepsilon)(1-\beta')}{t+1} < 1 \iff 1-\beta' < \frac{t+1}{t+1+\gamma-\varepsilon} \iff \beta' > \frac{\gamma - \varepsilon}{(t+1)+\gamma-\varepsilon}
  \end{align*}
  Combining these three constraints on \(\beta'\), we get
  \begin{align*}
    \max\left\{\frac{\gamma - \varepsilon}{(t+1)+\gamma-\varepsilon}\ , \ \frac{\gamma}{2(t+1)+ \gamma}\right\} < \beta' < \frac{\gamma}{(t+1)+\gamma}
  \end{align*}
  Clearly, it is always possible to choose a \(\beta' \in (0,1)\cap \mathbb{Q}\) such that this is true.

  \textbf{Part 2} of the theorem is very much analogous. Note that the loss of the \(1\) in the exponent is due to the weaker bound in Proposition~\ref{prop:treewidth-of-lily}.

  \textbf{Part 3} is completely analogous for the case \(t\geq 2\); we simply always choose \(h = 2\).  The \(\omega/3\) in the bound comes from the \textsc{Clique} hypothesis.

  However, a small trick has to be used for the case \(t=1\), since a \(2\)-wide Twin Water Lily of order \((r_1, 2)\) has treewidth \(2\), not \(1\). To get the better lower bound, we have to slightly modify the proof of the~\hyperref[lem:main-lemma]{Weighted Lemma} for \(h=2\) in step 3. Instead of replacing the edges between vertices of \(S_2\) by intermediate vertices, we simply leave them as-is. Now the resulting graph is not a Twin Water Lily anymore, but does always have treewidth \(r_2\). The rest of the proof is analogous.
  
\end{proof}
Finally, we prove the same theorem for pathwidth.
\begin{proof}[Proof (of Theorem~\ref{lower-bound-weighted-pathwidth})]
  Proving part 2 is exactly analogous to part 2 of the theorem for treewidth.

  Now remember that we needed a slight modification of the proof of the~\hyperref[lem:main-lemma]{Weighted Lemma} for part 3 of the theorem for treewidth for \(t=1\). For part 3 of the theorem for pathwidth, we actually need that modification for all \(t\), i.e. we always leave $S_2$ as-is in step 3. The rest of the proof is analogous.  
\end{proof}

\subsubsection{Subset Sum}\label{sec:subset-sum}

We remark that with basically the same technique as is used to prove the~\hyperref[lem:main-lemma]{Weighted Lemma}, we can also prove a lower bound on the \textsc{Subset Sum} problem\footnote{Defined as: Given a set $A\subseteq \mathbb{N}$ of $n$ numbers and a target $T \in \mathbb{N}$, determine whether $\exists B \subseteq A: \sum_{b \in B} b = T$.}.


\begin{theorem}\label{thm-hyperclique-to-subsetsum}
  For no $\varepsilon > 0$ can there be an algorithm which solves \textsc{Subset Sum} in time $O(T^{1-\varepsilon}\poly(n))$ unless the \(h\)-uniform \textsc{Hyperclique} hypothesis fails for all \(h\geq 3\).
  \end{theorem}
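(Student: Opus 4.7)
The plan is to adapt the construction from the Weighted Lemma (Lemma~\ref{lem:main-lemma}) to produce a \textsc{Subset Sum} instance rather than an \textsc{Exact Weight Colored Subgraph Isomorphism} instance. The key observation is that the Weighted Lemma already uses $|P|$-average-free sets to encode colored-hyperedge constraints as weight constraints; if we take the ``all-weight'' variant corresponding to $\beta = 1$ (so that $S'_2$ is empty and all constraints are enforced by weights), the resulting structure consists purely of ``pick one item per class so that the total weight hits a target'' constraints. This translates almost immediately into \textsc{Subset Sum}, once we add selector blocks to enforce the per-class selection that was previously built into colored subgraph isomorphism.

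Fix $h \geq 3$, let $n_H$ denote the number of vertices in a given $h$-uniform $k$-\textsc{Hyperclique} instance, and let $r_1 \in \mathbb{N}$ and $\varepsilon' \in (0,1)$ be constants to be chosen later. For any sufficiently large $k$ divisible by $hr_1$, I would run analogues of steps 1--4 of the Weighted Lemma with $\beta = 1$ and $r_2 = 0$. This yields $hr_1$ ``group classes'', each containing $n_H^{k/(hr_1)}$ items indexed by vertex configurations of a merged group of $k/(hr_1)$ color classes, together with $\binom{hr_1}{h}$ ``hyperedge classes'', each containing at most $n_H^{k/r_1}$ items indexed by the actual hyperedges fitting that slot. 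Weights are assigned exactly as in step 4 of the Weighted Lemma using a $\binom{hr_1}{h}$-average-free set $S \subseteq [0, B]$ of size $n_H^{k/(hr_1)}$ obtained from Lemma~\ref{k-average-free-lemma}, where $\log B = O\bigl((1+\varepsilon')k/(hr_1) \cdot \log n_H\bigr)$.

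To convert this structured weight problem into unstructured \textsc{Subset Sum}, I would append to the binary representation of the target a selector block of $\lceil\log(n+1)\rceil$ bits for each of the $hr_1 + \binom{hr_1}{h}$ classes, where $n$ denotes the total number of items in the \textsc{Subset Sum} instance; each item carries a $1$ in its own selector block and zeros in all other selector blocks, and the target has a $1$ in every selector block. Correctness then follows from exactly the same average-free argument as in the Weighted Lemma: the consistency blocks force the vertex choices encoded by the chosen group items to agree with those encoded by the chosen hyperedge items, which is precisely the existence of an $h$-uniform $k$-hyperclique in the original instance.

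The main obstacle is parameter tuning. The resulting instance has $n = O(n_H^{k/r_1})$ items and target $T = n_H^{(1+\varepsilon')k + O(1)}$, since each of the $hr_1$ consistency blocks contributes $(1+\varepsilon')k/(hr_1)\log n_H + O(1)$ bits while the selector bits add only $O(\log n_H)$ bits in total (for constant $r_1$ and $h$). A hypothetical \textsc{Subset Sum} algorithm running in $O(T^{1-\varepsilon}\poly(n))$ time would therefore solve the hyperclique instance in time $O\bigl(n_H^{(1+\varepsilon')k(1-\varepsilon) + ck/r_1}\bigr)$, where $c$ is some universal constant hidden in $\poly(n)$. To beat $n_H^k$ one needs $(1+\varepsilon')(1-\varepsilon) + c/r_1 < 1$, which can be satisfied by first choosing $\varepsilon' := \varepsilon/2$ and then $r_1 > 2c/\varepsilon$, giving a running time $O(n_H^{k-\delta})$ for some $\delta > 0$ and contradicting the $h$-uniform $k$-\textsc{Hyperclique} hypothesis for this value of $k$. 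Since $h \geq 3$ was arbitrary, the $h$-uniform \textsc{Hyperclique} hypothesis must fail for every $h \geq 3$.
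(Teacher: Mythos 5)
Your proposal is correct and follows essentially the same route as the paper: specialize the Weighted Lemma to $\beta=1$, $r_2=0$ so that the hyperclique constraints are encoded entirely in weights, then enforce the one-item-per-class selection and tune $\varepsilon'$ and $r_1$ exactly as you do. The only cosmetic difference is that the paper factors the last step through a separate \textsc{$k$-Sum}-to-\textsc{Subset Sum} reduction (a checklist plus a counter block) rather than your inlined per-class selector blocks of width $\lceil\log(n+1)\rceil$; both gadgets are standard and the extra $\poly(n)$ factor your version adds to $T$ is absorbed by your choice of $r_1$.
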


  In~\cite{abboud2019seth}, a slightly better lower bound for \textsc{Subset Sum} is proven under SETH: They prove that unless SETH fails, \textsc{Subset Sum} cannot have an algorithm running in time $O(T^{1-\varepsilon}2^{o(n)})$.

  We briefly discuss other current algorithms and lower bounds for \textsc{Subset Sum}. The \textsc{Subset Sum} problem has a well-known $O(Tn)$ time algorithm using dynamic programming~\cite{richard1957dynamic}. Very recently, suprising new algorithms with running time $\widetilde{O}(\sqrt{n}T)$ \cite{koiliaris2017faster,koiliaris2018subset} and $\widetilde{O}(T+n)$~\cite{bringmann2017near} have been shown, the latter matching several conditional lower bounds from SETH~\cite{abboud2019seth}, \textsc{Set Cover}~\cite{cygan2016problems}, \textsc{$k$-clique} (observed in~\cite{bringmann2017near} via techniques from~\cite{abboud2014losing}), and now from \textsc{Hyperclique}. The algorithm in~\cite{bringmann2017near} is slightly simplified in~\cite{jin2018simple}, with improvements in the log factors of the running time. The \textsc{Subset Sum} problem is closely related to the \textsc{\(k\)-Sum} problem (see also Appendix~\ref{apx:k-sum-to-subset-sum}).

There are two ways to see why Theorem~\ref{thm-hyperclique-to-subsetsum} is true, and they both more or less lead to the same reduction. Both involve first reducing a \textsc{Hyperclique} instance to a \textsc{\(k\)-Sum} instance, after wich a well-known reduction from \textsc{\(k\)-Sum} to \textsc{Subset Sum} can be used. We describe the latter reduction formally in Appendix~\ref{apx:k-sum-to-subset-sum}.

The first way to see the result is a generalization of the reduction from \textsc{\(k\)-Clique} to \textsc{\(k\)-Sum} described by~\cite{abboud2014losing}. Instead of encoding edges with only two endpoints in the weights, we encode hyperedges. The second way to see the result (as stated, they lead to the same reduction) is via a slight modification of the~\hyperref[lem:main-lemma]{Weighted Lemma} to encode everything in the weighted part. We give details for the second way in Appendix~\ref{apx:subset-sum-via-weighted-lemma}.

\section{Algorithmic Results}\label{section-algos}

\subsection{\(k\)-Wise Matrix Products}

In our algorithms, the following generalization of matrix multiplication to tensors is both a crucial building block and a bottleneck. It was defined in its general form in~\cite{gnang2011spectral} and explored further algorithmically in~\cite{lincoln2018tight}.

Given $k$ tensors $A^1, \ldots, A^k$ of order k with dimensions $\overbrace{n \times \ldots \times n}^{\text{k times}}$, we define the \textbf{\(k\)-wise matrix product} $\MP_k(A^1, \ldots, A^k)$ to be the tensor given by
\begin{equation*}
  \MP_k(A^1, \ldots, A^k)[i_1, \ldots, i_k] := \sum_{\ell\in[n]}A^1[\ell,i_2,\ldots,i_k]\cdot A^2[i_1, \ell, i_3, \ldots, i_k]\cdots A^k[i_1,\ldots,i_{k-1},\ell]
\end{equation*}

Clearly, for $k=2$ this product is exactly matrix multiplication. There is also a boolean version of this generalized matrix product, just as there is a boolean version of the standard matrix product. In this boolean version, the tensors contain truth values (or equivalently 0/1 values) and the sum is replaced by an OR, while the products are replace by ANDs.

 We briefly discuss the computational complexity of \(k\)-wise matrix products. They can trivially be computed in time $O(n^{k+1})$ for all $k$, and in time \(O(n^\omega)\) for \(k=2\) via techniques originating from Strassen~\cite{strassen1969gaussian} (for a history and introduction, see Bläser~\cite{blaser2013fast}). Unfortunately, as Lincoln, Williams and Williams observe in~\cite{lincoln2018tight}, it is impossible that faster Strassen-like algorithms with running time $O(n^{k+1-\varepsilon})$ with $\varepsilon > 0$ exist for $k\geq 3$: Just as the operation of $n\times n$ by $n\times n$ matrix multiplication has a corresponding order 3 tensor of dimensions $n\times \ldots \times n$, there is an order $k+1$ tensor of dimension $n\times \ldots \times n$ corresponding to the $k$-wise matrix product. For $k\geq 3$, this tensor has border rank $n^{k+1}$, i.e.\ it cannot be expressed as the limit of a sequence of tensors of rank smaller than $n^{k+1}$.

\subsection{\(k\)-Wise Tree Decompositions}

In our algorithms, we use a structured form of tree decompositions that allow us to apply \(k\)-wise matrix products very easily to solve the \textsc{Colored Subgraph Isomorphism} problems (either weighted or unweighted) on them. Any tree decomposition can be converted to this structured form without changing its width. Its definition is very loosely based on the structured tree decompositions that~\cite{curticapean2017homomorphisms} describes for treewidth 2 graphs. See figure~\ref{fig:k-wise-tree-decomp} for a partial illustration of the structured form in the context of the algorithm for the unweighted problem.

\begin{definition}
  We call a tree decomposition \(\mathcal{T} = (T, \{X_t\}_{t\in V(T)})\) a \textbf{\(k\)-wise tree decomposition} if it satisfies the following requirements. \(T\) must be a rooted tree, and each of its nodes has one of three types: It is either an \textbf{intermediate-result node}, a \textbf{\(k\)-wise node}, or a \textbf{merge node}. Intermediate-result and merge nodes have bags of size \(k\), while \(k\)-wise nodes have bag size \(k+1\). We require the root to be an intermediate-result node, and all leaves to be merge nodes. Finally, the three types of nodes are defined as follows:
  \begin{enumerate}[(i)]
  \item If \(t \in V(T)\) is an intermediate-result node, it has two children: a \(k\)-wise node and a merge node \(t'\) with \(X_t = X_{t'}\).
  \item If \(t \in V(T)\) is a \(k\)-wise node, its parent \(\parent(t)\) is an intermediate-result node and its set of children \(\children(t)\) consists of exactly \(k\) merge nodes. Furthermore, we can rename the nodes in its bag to \(X_t = \{v, u_1, \ldots, u_k\}\) such that \(X_t = X_{\parent(t)} \cup \{v\}\) and \(\forall i \in [k]: \exists c(i) \in children(t): X_t = X_{c(i)} \cup \{u_i\}\).
  \item If \(t \in V(T)\) is a merge node, it has arbitrarily many children which must all be intermediate-result nodes. Furthermore, for each child \(t'\) we have \(X_t = X_{t'}\).
  \end{enumerate}
  We call intermediate-result and merge nodes \textbf{helper nodes}, and define \(\help(T)\subseteq V(T)\) to be the set of all helper nodes.
\end{definition}

\noindent
Note that in particular, a \(k\)-wise tree decomposition has width \(k\).

\begin{lemma}\label{lem:k-wise-tree-decomps}
  Let \(H\) be a graph and let \(\mathcal{T} = (T, \{X_t\}_{t \in V(T)})\) be a tree decomposition of \(H\) that has width \(\width(\mathcal{T})\). Then we can convert \(\mathcal{T}\) into a \(\width(\mathcal{T})\)-wise tree decomposition \(\mathcal{T}' = (T', \{X_t\}_{t \in V(T')})\). Furthermore, \(|V(T')| = \poly(|V(T)|)\cdot \width(\mathcal{T})\) and the conversion can be done in time \(O(\poly(|V(T)|)\cdot \poly(\width(\mathcal{T})))\).
\end{lemma}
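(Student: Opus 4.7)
My plan is to transform the given $\mathcal{T}$ into a $w$-wise tree decomposition $\mathcal{T}'$, where $w := \width(\mathcal{T})$, by first normalizing it into a nice tree decomposition and then locally replacing each node by its $w$-wise counterpart. The first step is standard: I would convert $\mathcal{T}$ into a nice tree decomposition (see, e.g., \cite[Chapter 7]{cygan2015parameterized}) that has only leaf, introduce, forget, and join nodes, with linear blowup and no change to the width. I would then further normalize by padding bags with vertices drawn from adjacent bags (preserving (T1)--(T3)) so that, along each root-to-leaf path, bag sizes alternate between $w$ and $w+1$: the helper nodes of $\mathcal{T}'$ will come from bags of size $w$, and the $w$-wise nodes from bags of size $w+1$. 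In particular, each forget operation becomes a shrink from a size-$(w+1)$ bag $B$ to a size-$w$ bag $X = B \setminus \{v\}$, which exactly matches the $w$-wise-to-intermediate-result transition required by the definition.

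Given the normalized decomposition, I would replace each node type by its $w$-wise counterpart. Leaves become leaf merge nodes with padded bags. Join nodes (two children with the same bag) become merge nodes with two intermediate-result children, which is permitted since merge nodes allow arbitrarily many children. Introduce nodes are handled dually to (or by pairing with) forget nodes. The central replacement is for each forget node with bag $X$ of size $w$ and child with bag $B = X \cup \{v\}$: it becomes an intermediate-result node with bag $X$, a $w$-wise child with bag $B$, and, crucially, $w$ merge-node grandchildren, one for each $u_i \in X$, with bag $B \setminus \{u_i\}$. The original subtree of the forget's child is recursively converted and plugged in as one of these grandchildren (say the one for $u_1$, with $u_1$'s occurrences pushed downward so they do not touch this grandchild's bag, exploiting (T3) of $\mathcal{T}$). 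The remaining $w-1$ grandchildren are filled by trivial subtrees consisting of a single leaf merge node, which semantically represent the universal relation on their bag and therefore contribute no additional constraints to the $w$-wise matrix product at the $w$-wise node.

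The main obstacle will be this last replacement: we must manufacture $w-1$ merge children that were not present in the original tree decomposition, and we must verify (T3) in $\mathcal{T}'$, i.e., that each vertex $u$ still induces a connected appearance subtree. Padding and duplicating bags extends the appearance subtrees of padded vertices, and introducing trivial leaf grandchildren adds new occurrences of the vertices in their bags, so these must be placed carefully (e.g., adjacent to their parent $w$-wise node, whose bag already contains those vertices) to avoid disconnecting any vertex's subtree. Once (T1)--(T3) and the structural constraints of the $w$-wise definition are verified, width $w$ follows immediately because no bag exceeds $w+1$, and the final size bound $|V(T')| = \poly(|V(T)|) \cdot w$ is obtained since each node of the nice decomposition is replaced by $O(w)$ new nodes with only constant additional depth, and each trivial grandchild is a single additional leaf; the construction time is dominated by this same count of basic graph-manipulation operations.
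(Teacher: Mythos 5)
Your overall strategy --- normalize the decomposition, then locally replace each node by its $w$-wise counterpart and pad missing children with trivial leaf merge nodes --- is the same strategy the paper uses (the paper normalizes directly so that all bags have size $w+1$ and adjacent bags intersect in exactly $w$ elements, subdivides every tree edge to create the intermediate-result nodes, and then groups the children of each large node according to which single vertex of its bag they are missing). However, your plan has a concrete gap at exactly the step you identify as central. You propose to plug the recursively converted subtree of the forget node's child in as the grandchild for $u_1$, ``with $u_1$'s occurrences pushed downward so they do not touch this grandchild's bag.'' This is not a legitimate operation: $u_1$ may be needed in bags strictly below that grandchild to cover edges of $H$ incident to $u_1$, and since $u_1$ also lies in the bag $B$ of the $w$-wise node above, deleting $u_1$ from the grandchild's bag disconnects the appearance subtree of $u_1$ and violates (T3); you cannot relocate those occurrences without breaking (T2). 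The correct move is the opposite: you do not get to \emph{choose} which slot the subtree occupies. After normalization the child's bag already equals $B\setminus\{u\}$ for one specific $u\in B$, and the subtree must be routed to the merge node associated with that particular $u$ --- no bag is ever modified. This is precisely how the paper's grouping into the sets $C_i$ works.

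A second, related omission: the definition requires every intermediate-result node to have \emph{two} children, the $w$-wise node \emph{and} a merge node with the identical bag $X$, and the case $u=v$ (the child of the $w$-wise node whose bag is $B\setminus\{v\}=X$, i.e., missing the newly introduced vertex rather than one of the $u_i$) must be routed to that merge sibling, since $v\notin X_{\parent(t)}$ means such a child is not permitted as one of the $w$ merge children of the $w$-wise node. Your replacement rule never creates this sibling and never distinguishes this case. Both issues are repairable with the same machinery you already use (trivial leaf merge nodes, plus routing by the missing element), but as written the construction does not produce a valid $w$-wise tree decomposition.
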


\begin{proof}
  First, we go from the input tree decomposition \(\mathcal{T}\) to one where all bags have size \(\width(\mathcal{T})+1\), and where for any adjacent nodes \(t,t' \in V(T)\), we have \(X_t \cap X_{t'} = \width(\mathcal{T})\). To do this, we roughly follow the outline of an algorithm that~\cite{curticapean2017homomorphisms} describes for treewidth 2 graphs. First, we merge adjacent nodes \(t, t' \in V(T)\) with \(X_t = X_{t'}\). If there exists a node \(t\) with \(|X_t| \leq \width(\mathcal{T})+1\) and a neighbor \(t'\) such that \(X_{t'} \not\subseteq X_t\), then we simply add an element of \(X_{t'} \setminus X_t\) to \(X_t\). Applying this rule exhaustively, we obtain a tree decomposition where all bags have exactly \(\width(\mathcal{T})+1\). Applying this rule exhaustively, the resulting tree decomposition has the desired properties. Now we take any adjacent nodes \(t,t' \in V(T)\) with \(|X_t \cap X_{t'}| < \width(\mathcal{T})\) and, letting \(u \in X_t \setminus X_{t'}\) and \(v\in X_{t'} \setminus X_t\), insert a vertex with bag \((X_t \cup \{v\}) \setminus \{u\}\) between them.

  Now we root \(T\) in an arbitrary node. For any \(t \in V(T)\) and any \(t' \in \children(t)\), we subdivide the edge \(tt'\), and give the new node the bag \(X_t \cap X_{t'}\). We call the newly inserted vertices small, and all other vertices large. The root node \(r'\) must be a large node. We give it a small parent by selecting an arbitrary subset \(X \subseteq X_{r'}\) of size \(\width(\mathcal{T})\), adding a new root \(r\) to \(T\) with bag \(X\) and making \(r'\) a child of \(r\).

  We now iterate over the large nodes in \(T\) (as it is now) in depth-first search order. For each large vertex \(t\), we (temporarily) rename the vertices of its bag to \(X_t = \{v, u_1, \ldots, u_{\width(\mathcal{T})}\}\) such that its (small) parent has bag \(X_{\parent(t)} = \{u_1, \ldots, u_{\width(\mathcal{T})}\}\). Furthermore, for each \(i\), let \(C_i\subseteq \children(t)\) be the (possibly empty) set of all children \(t'\) such that \(X_{t'} = X_t \setminus \{u_i\}\). We add a node \(c(i)\) with bag \(X_t \setminus \{u_i\}\) to \(V(T)\) and connect all children \(t' \in C_i\) to \(c(i)\) instead of \(t\). Finally, we make \(c(i)\) a child of \(t\). We do something analogous for \(v\), namely letting \(C_v \subseteq \children(t)\) be the (possibly empty) set of all children \(t'\) such that \(X_{t'} = X_t \setminus \{v\}\), we add a node \(c(v)\) with bag \(X_t \setminus \{v\}\) to \(V(T)\) and connect all children \(t' \in C_v\) to \(c(v)\) instead of \(t\). However, we make \(c(v)\) a child of \(\parent(t)\), not \(t\). Now we are done: The newly added vertices \(c(i)\) (for all \(i\)) and \(c(v)\) are merge nodes, their children are intermediate-result nodes, and the large nodes \(t\) are \(k\)-wise nodes. Note also that because we allowed the sets \(C_i\) (for all \(i\)) and \(C_v\) to be empty, we now have that all leaves are merge nodes. 

  This concludes the construction. It is easily seen that the number of nodes of the new tree decomposition is \(\poly(|V(T)|)\cdot \width(\mathcal{T})\), and that this conversion algorithm works in time \(O(\poly(|V(T)|) \cdot \poly(\width(\mathcal{T})))\).
\end{proof}

\subsection{Colored Subgraph Isomorphism for Bounded Treewidth}\label{subsection-algos-detection}

We begin by looking at the unweighted version of \textsc{Colored Subgraph Isomorphism}. As was previously mentioned for Theorem~\ref{corollary-upper-bound}, these results essentially follow from~\cite{alon1995color} and~\cite{curticapean2017homomorphisms}, but are now unified via a single technique.

\begin{theorem}\label{upper-bound-detection}
  There is an algorithm which, given an arbitrary instance $\phi = (H,G,f)$ of \textsc{Colored Subgraph Isomorphism}, solves $\phi$ in time
  \begin{enumerate}
  \item $O(n^{\tw(H)+1}\poly(k) + g(k))$ when $\tw(H)\geq 3$,
  \item $O(n^\omega \poly(k) + g(k))$ when $\tw(H) = 2$, where $\omega$ is the exponent of matrix multiplication, and
  \item $O(n^2\poly(k) + g(k))$ when $\tw(H) = 1$.
  \end{enumerate}
  where $k := |V(H)|$, $n$ is the size of the preimages of $f$, and $g$ is a computable function.
\end{theorem}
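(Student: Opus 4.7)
The plan is to reduce the problem, via standard tree-decomposition dynamic programming organized around the $k$-wise tree decomposition of Lemma~\ref{lem:k-wise-tree-decomps}, to one $\tw(H)$-wise matrix product per decomposition node. First I would compute an optimal tree decomposition of $H$ in time $g(k)$ (absorbing an additive $g(k)$ in the running time), and then convert it to a $\tw(H)$-wise tree decomposition $\mathcal{T} = (T, \{X_t\}_{t \in V(T)})$ with $|V(T)| = \poly(k)$ nodes. I would also preassign each edge of $H$ to some $k$-wise node whose bag contains both endpoints, which is always possible by property (T2).

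Next I would run a DP that, at each node $t$, computes a boolean table $T_t$ indexed by $\mathcal{Conf}(X_t)$ satisfying $T_t[R] = \ParSol(R; X_t; V_t)$. Leaf merge nodes simply store the $\ValConf$ predicate over their bags, and internal merge nodes are the pointwise conjunction of their children's tables. The main event happens at an intermediate-result node $t$ with $k$-wise child $t''$ (with $X_{t''} = X_t \cup \{v\}$ and merge children $c(1), \ldots, c(k)$ where $X_{c(i)} = X_{t''} \setminus \{u_i\}$) and merge sibling $t'$ (with $X_{t'} = X_t$), at which I compute
\[
T_t[u_1, \ldots, u_k] \;=\; T_{t'}[u_1, \ldots, u_k] \,\wedge\, \bigvee_{v} \Bigl( E_{t''}(v, u_1, \ldots, u_k) \,\wedge\, \bigwedge_{i=1}^{k} T_{c(i)}\bigl[v,\, u_1, \ldots, u_{i-1},\, u_{i+1}, \ldots, u_k\bigr] \Bigr),
\]
where $E_{t''}(\cdot)$ enforces the $H$-edges charged to $t''$ (a conjunction over at most $\binom{\tw(H)+1}{2}$ indicator relations on $E(G)$).

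The key observation is that, after permuting the coordinates of each $T_{c(i)}$ so that the $v$-index lies in slot $i$, the inner $\bigvee\bigwedge$ is literally the boolean $\tw(H)$-wise matrix product of $k = \tw(H)$ tensors of order $\tw(H)$; the edge factor $E_{t''}$ and the conjunction with $T_{t'}$ are cheap side operations that can be folded in by pre- or post-processing without asymptotic cost. I would therefore invoke the naive algorithm for $\tw(H) \ge 3$, at cost $O(n^{\tw(H)+1})$ per intermediate-result node; for $\tw(H) = 2$ this product coincides with ordinary boolean matrix multiplication, costing $O(n^\omega)$; and for $\tw(H) = 1$ it collapses to a boolean matrix-vector product costing $O(n^2)$. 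Summing over the $\poly(k)$ nodes of $T$ and adding $g(k)$ for the tree-decomposition step yields the three claimed running times.

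I expect the principal technical point to be verifying the identification of the combined ``intermediate-result plus $k$-wise'' operation with a single $\tw(H)$-wise matrix product: specifically, that the index permutations of the $T_{c(i)}$ align correctly and that charging each $H$-edge to a unique $k$-wise node containing both endpoints (guaranteed by properties (T2) and (T3) of the tree decomposition) ensures that projecting out $v$ at an intermediate-result node never discards a yet-unverified constraint. Once that bookkeeping is in place, correctness of the DP follows by a standard induction on $T$.
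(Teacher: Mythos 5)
Your proposal is correct and, for parts 1 and 2, follows essentially the same route as the paper: convert to a $\tw(H)$-wise tree decomposition, maintain tables $T_t[R]=\ParSol(R;X_t;V_t)$, and identify the work at each intermediate-result node with one boolean $\tw(H)$-wise matrix product (hence $O(n^{\tw+1})$ naively and $O(n^\omega)$ for $\tw=2$). The one genuine difference is how edge constraints of $H$ are enforced. The paper does this implicitly: validity is only ever tested via $\ValConf$ at leaf merge nodes, and the induction on $\ParSol$ guarantees every edge eventually lands in a leaf bag along a merge-child chain. You instead charge each edge explicitly to a $k$-wise node and enforce it via the factor $E_{t''}$ at the corresponding intermediate-result step. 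This buys you something real for part 3: the paper's algorithm provably fails for $\tw(H)=1$ (an edge $\{v,u_1\}$ appearing only in a size-2 bag is never covered by any child bag $X_{c(i)}$ and is silently dropped), and the paper falls back to a separate trivial tree DP for that case; your edge factor $E_{t''}(v,u_1)$ repairs exactly this, turning the update $T_t[u_1]=T_{t'}[u_1]\wedge\bigvee_{v}\bigl(E_{t''}(v,u_1)\wedge T_{c(1)}[v]\bigr)$ into an honest boolean matrix--vector product in $O(n^2)$. One small caveat on your phrasing: for $\tw(H)=1$ the edge factor is \emph{not} a "cheap side operation foldable into pre- or post-processing" of the $1$-wise product --- the $1$-wise product alone is just an OR over $v$ and cannot see $u_1$, so $E_{t''}$ \emph{is} the matrix of the product; similarly, for $\tw(H)=2$ an edge $\{u_1,u_2\}$ cannot be folded into either input tensor (each omits one endpoint) and must be post-ANDed after the product. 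Neither issue affects correctness or the running times, but you should state the folding case analysis explicitly rather than asserting it uniformly.
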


Obviously, a proof of this theorem suffices to prove Theorem~\ref{corollary-upper-bound}, since we can simply plug the algorithm into Lemma~\ref{lemma-equiv}.

\begin{proof}[Proof (of part 1 of Theorem~\ref{upper-bound-detection})]
  We describe an algorithm which, given $G, H$ and $f: V(G) \to V(H)$, first calculates an optimal tree decomposition \(\mathcal{T}_{\text{initial}}\) of width \(\tw(H)\) for $H$ in time $g(k)$ (via the algorithm by Bodlaender~\cite{bodlaender1996linear}, see also Section~\ref{prelim-treewidth}), then finds a solution via dynamic programming over the tree decomposition. The algorithm that computes the optimal tree decomposition also ensures that its tree graph \(T_{\text{initial}}\) has size \(|T_{\text{initial}}| = \poly(k)\). By assumption, $\tw(H) \geq 3$. We shorten \(\tw(H)\) to \(\tw\) in the following. We use a slightly more complicated framework than necessary, because it generalizes nicely to a proof of part 2 and to a proof of parts 1 and 2 of Theorem~\ref{upper-bound-weighted}.

  To make our algorithm as easy as possible, we begin by applying Lemma~\ref{lem:k-wise-tree-decomps} to convert our tree decomposition into a \(\tw\)-wise tree decomposition $\mathcal{T} = (T, \{X_t\}_{t\in V(T)})$. Since \(|T_{\text{initial}}| = \poly(k)\), the time this conversion takes is certainly negligible. Furthermore, \(|T| = \poly(|T_{\text{initial}}|)\cdot \poly(\width(\mathcal{T}_{\text{initial}})) = \poly(\poly(k))\cdot \tw = \poly(k)\).

  We now do dynamic programming over the \(\tw\)-wise tree decomposition. Using notation and nomenclature from Section~\ref{notation-and-nomenclature-colsubiso}, we only store values for each configuration of the bags of helper nodes, not for \(k\)-wise nodes. In particular, for each helper node \(t \in \help(T)\) we store the following function of finite domain from configurations of $X_t$ to truth values. Remember that $\mathrm{ParSol}(S; I; J)$ is true if and only if $S$ is a partial solution of $I$ in $J$, for $I \subseteq J \subseteq V(H)$. 
  \begin{align*}\label{eq-dp-for-normal-subgraph-iso}
    d_{t} &: \mathcal{Conf}(X_t) \to \{\mathrm{true},\mathrm{false}\}\\
    d_{t}(R) &:= \ParSol(R; X_t; V_{t})\tag{1}
  \end{align*}
  i.e.\ we store for each configuration whether it is a partial solution of \(X_t\) in the cone \(V_{t}\). We call these functions DP functions (where DP stands for dynamic programming). Since there are $n^{\tw}$ many configurations for $X_t$, each DP function $d_{t}$ can be specified using $n^{\tw}$ many bits.

  We calculate the DP functions $d_{t}$ for all $t\in\help(T)$ in a bottom-up manner. The overall picture of the algorithm is very simple: At a merge node \(t\), we take the DP functions of all children and do a pointwise AND. At an intermediate-vertex node \(t\), we first calculate a \(\tw\)-wise matrix product for its \(\tw\)-wise node child, then AND the result with the DP function of its merge node child. See figure~\ref{fig:k-wise-tree-decomp} for a conceptual illustration.
  \begin{figure}[h]
    \centering
    \scalebox{1.3}{
      \tikzstyle{every picture}=[tikzfig]
      \begin{tikzpicture}
	\begin{pgfonlayer}{nodelayer}
		\node [style=small-node] (1) at (-2.5, 8) {};
		\node [style=small-node] (2) at (-8, 2) {};
		\node [style=small-node] (3) at (-5, 2) {};
		\node [style=small-node] (5) at (-2, 2) {};
		\node [style=small-node] (6) at (2.5, 2) {};
		\node [style=small-node] (7) at (0.25, -2.5) {};
		\node [style=none] (8) at (-9.5, -0.5) {};
		\node [style=none] (9) at (-8, -0.5) {};
		\node [style=none] (10) at (-6.5, -0.5) {};
		\node [style=none] (11) at (-3.25, -0.5) {};
		\node [style=none] (12) at (1.5, -0.5) {};
		\node [style=none] (13) at (3.5, -0.5) {};
		\node [style=k-wise-border] (14) at (-5, 5) {};
		\node [style=k-wise-node] (15) at (-5, 5) {};
		\node [style=none] (16) at (-6, 5.5) {};
		\node [style=none] (17) at (-4, 5.5) {};
		\node [style=none] (18) at (-1.25, 2.75) {};
		\node [style=none] (19) at (-2.5, 1) {};
		\node [style=none] (20) at (-7.5, 1) {};
		\node [style=none] (21) at (-8.75, 2.75) {};
		\node [style=none] (22) at (-2.75, 7.25) {};
		\node [style=none] (23) at (-2.25, 7.25) {};
		\node [style=none] (24) at (1.475, 2.75) {};
		\node [style=none] (25) at (-9.25, -0.5) {};
		\node [style=none] (26) at (-7.775, -0.5) {};
		\node [style=none] (27) at (-6.25, -0.5) {};
		\node [style=none] (28) at (-8.1, 1.5) {};
		\node [style=none] (29) at (-7.775, 1.475) {};
		\node [style=none] (30) at (-7.475, 1.475) {};
		\node [style=none] (31) at (-3, -0.5) {};
		\node [style=none] (32) at (-2, 1.5) {};
		\node [style=none] (33) at (-1.5, 1.5) {};
		\node [style=none] (34) at (0.25, -2) {};
		\node [style=none] (35) at (1.775, -0.5) {};
		\node [style=none] (36) at (3.775, -0.5) {};
		\node [style=none] (37) at (2.525, 1.5) {};
		\node [style=none] (38) at (2.95, 1.5) {};
		\node [style=none] (39) at (-4, 5.5) {};
		\node [style=none] (40) at (-4, 5.5) {};
		\node [style=none] (41) at (-4.1, 5.575) {};
		\node [style=none] (42) at (-6, 4.25) {};
		\node [style=none] (43) at (-4, 4.25) {};
		\node [style=text node] (44) at (-4.75, 8) {intermediate-result};
		\node [style=text node] (45) at (-6.75, 5) {tw-wise};
		\node [style=text node] (46) at (-8.5, 2) {merge};
		\node [style=text node] (47) at (-2, -2.5) {intermediate-result};
		\node [style=text node] (48) at (-1.3, 4.575) {\textcolor{gray}{tw}};
		\node [style=none] (49) at (-2.5, 9.5) {};
		\node [style=none] (50) at (-0.75, -4) {};
		\node [style=text node] (51) at (-5.25, -1.75) {$\vdots$};
		\node [style=text node] (52) at (5.5, -1.75) {$\vdots$};
		\node [style=none] (53) at (1.25, -4) {};
		\node [style=text node] (54) at (-3.25, 3.25) {$\cdots$};
		\node [style=text node] (55) at (-1.75, 3.25) {$\cdots$};
		\node [style=none] (56) at (-14, -1.75) {};
	\end{pgfonlayer}
	\begin{pgfonlayer}{edgelayer}
		\draw [style=tree-edge] (1) to (15);
		\draw [style=tree-edge] (1) to (6);
		\draw [style=tree-edge] (6) to (12.center);
		\draw [style=tree-edge] (13.center) to (6);
		\draw [style=tree-edge] (7) to (5);
		\draw [style=tree-edge] (11.center) to (5);
		\draw [style=tree-edge] (15) to (5);
		\draw [style=tree-edge] (3) to (15);
		\draw [style=tree-edge] (2) to (15);
		\draw [style=tree-edge] (10.center) to (2);
		\draw [style=tree-edge] (9.center) to (2);
		\draw [style=tree-edge] (8.center) to (2);
		\draw [style=area-border] (17.center) to (18.center);
		\draw [style=area-border, in=0, out=-45, looseness=1.25] (18.center) to (19.center);
		\draw [style=area-border] (20.center) to (19.center);
		\draw [style=area-border] (21.center) to (16.center);
		\draw [style=area-border, bend left=45, looseness=1.25] (16.center) to (17.center);
		\draw [style=area-border, in=180, out=-135, looseness=1.25] (21.center) to (20.center);
		\draw [style=new edge style 0] (25.center) to (28.center);
		\draw [style=new edge style 0] (26.center) to (29.center);
		\draw [style=new edge style 0] (27.center) to (30.center);
		\draw [style=new edge style 0] (35.center) to (37.center);
		\draw [style=new edge style 0] (36.center) to (38.center);
		\draw [style=new edge style 0] (34.center) to (33.center);
		\draw [style=new edge style 0] (31.center) to (32.center);
		\draw [style=new edge style 0] (24.center) to (23.center);
		\draw [style=area-pointer] (41.center) to (22.center);
		\draw [style=new edge style 1, bend left=60] (43.center) to (42.center);
		\draw [style=tree-edge] (1) to (49.center);
		\draw [style=tree-edge] (50.center) to (7);
		\draw [style=tree-edge] (53.center) to (7);
	\end{pgfonlayer}
\end{tikzpicture}
    }
    \caption{Partial sketch of the tree \(T\) of the \(\tw\)-wise tree decompositions. Colored arrows represent operations of the algorithms, with red indicating that the result of the \(\tw\)-wise node is calculated via a \(\tw\)-wise matrix product, and green indicating that the result of that subtree is ANDed pointwise with the result of all other subtrees.}
    \label{fig:k-wise-tree-decomp}
  \end{figure}
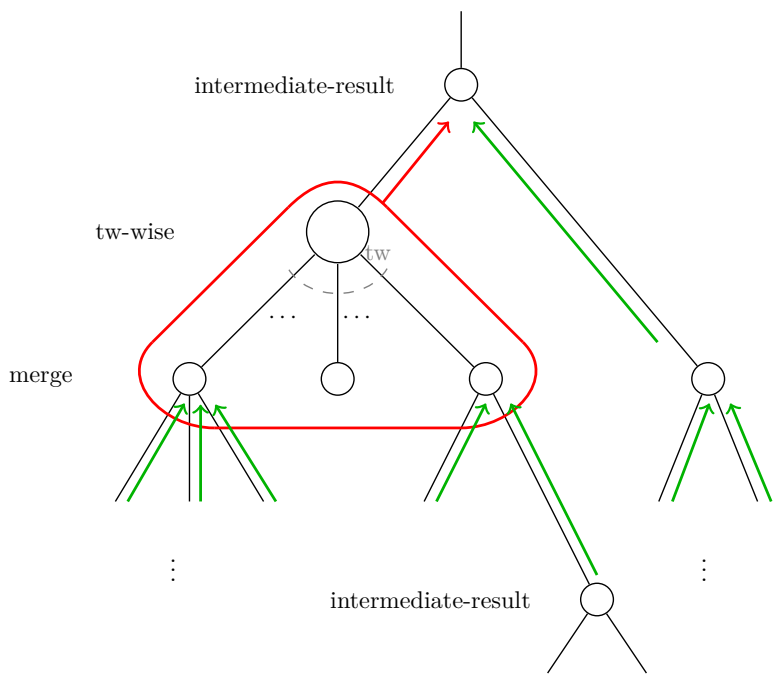
  
  
We now describe the algorithm in detail. We begin with the leaves of \(T\), which must be merge nodes. Hence let \(t\) be a merge node with no children. Since $X_t = V_{t}$, any configuration of $X_t$ is a configuration of $V_{t}$. Thus, to calculate $d_{t}(R)$ as in equation~\ref{eq-dp-for-normal-subgraph-iso}, we simply have to check whether $R$ is a valid configuration. The latter can be done in time $\poly(k)$, which leads to a total time of $O(n^{\tw}\poly(k))$ per leaf.

  Now let $t\in\help(T)$ be an inner node of $T$. There are two cases: either \(t\) is an intermediate-result node, or \(t\) is a merge node.
  \begin{description}
  \item[\underline{\(t\) is a merge node:}] Let \(t\) be a merge node. We continue to denote the set of children of \(t\) by \(\children(t)\). For each configuration \(R\) we have
    \begin{align*}
      d_t(R) &= \ParSol(R;X_t;V_t) = \ParSol\left(R; X_t; \bigcup_{c \in \children(t)} V_c\right)\\
             &= \bigwedge_{c\in\children(t)}\ParSol(R;X_t;V_c) = \bigwedge_{c\in\children(t)}d_c(R)
    \end{align*}
    Since \(t\) has at most \(\poly(k)\) children, and since there are at most \(n^{\tw}\) many possible configurations \(R\), this can be done in time \(O(n^{\tw}poly(k))\) per merge node.
    
  \item[\underline{\(t\) is an intermediate-result node:}] Let \(t\) be an intermediate-result node with \(\tw\)-wise child node \(t'\) and merge node \(t''\). By definition we have \(X_t = X_{t''}\). Furthermore \(|\children(t')| = \tw\), and we can rename the nodes in \(X_{t'}\) to \(X_{t'} = \{v, u_1, \ldots, u_{\tw}\}\) such that \(X_{t'} = X_t \cup \{v\}\) and \(\forall i \in [k]: \exists c(i) \in \children(t'): X_{t'} = X_{c(i)} \cup \{u_i\}\). We have that
    \begin{align*}
      d_t(R) &= \ParSol(R;X_t;V_t)\\
             &= \ParSol(R; X_{t''}; V_{t''})\land {}\\
      &\phantom{{}={}}\exists v' \in f^{-1}(v): \forall i \in [\tw]: \ParSol((R\cup \{v \mapsto v'\})|_{X_{c(i)}};X_{c(i)};V_{c(i)})\\
             &= d_{t''}(R) \land \exists v' \in f^{-1}(v): \forall i \in [\tw]: d_{c(i)}((R\cup \{v \mapsto v'\})|_{X_{c(i)}}) \tag{\(\ast\)}
    \end{align*}
    We show how to calculate the values of \(d_t\) via a \(\tw\)-wise boolean matrix product with 0-1-tensors of dimension \(n\times \ldots \times n\). Note that we will use truth values and 0/1 interchangeably to declutter notation. True is interchangeable with 1, false with 0.

    For convenience, all tensors from this point onward are indexed via configurations, where each dimension is indexed by a single vertex. Formally, we call a tensor \(A\) \textbf{indexed by configurations of} \(X = \{v_1, \ldots, v_{h}\} \subseteq V(H)\) \textbf{with ordering} \((v_{i_1}, \ldots, v_{i_h})\) (where \(\{v_{i_1}, \ldots, v_{i_h}\} = \{v_1, \ldots, v_{h}\}\)) when it is an order \(h\) tensor of dimension \(n\times \ldots \times n\). Abusing notation, we rename the vertices of \(f^{-1}(v_i)\) for each \(v_i \in X\) to \(\{1, \ldots, n\}\) and use them as if they were numbers. Now for a configuration \(R\) of \(X\) we define \(A[R] = A[R(v_{i_1}), \ldots, R(v_{i_h})]\).

     For our \(\tw\)-wise matrix product, we have \(\tw\) input tensors, specifically one tensor \(p_i\) for each \(i \in [\tw]\). \(p_i\) is indexed by configurations of \(X_{c(i)} = X_{t'} \setminus \{u_i\}\) with ordering \((u_1, \ldots, u_{i-1}, v, u_{i+1}, \ldots, u_{\tw})\). Specifically, for a configuration \(R\) of \(X_{c(i)}\) we define \(p_i[R] := d_{c(i)}(R)\) (it is still a 0-1-tensor, remember that truth values and 0/1 are interchangeable).

     Now we calculate the \(\tw\)-wise matrix product \(p_{\text{res}} := \MP_{\tw}(p_1, \ldots, p_{\tw})\), which we use as a tensor indexed by configurations of \(X_t\) with ordering \((u_1, \ldots, u_{\tw})\). Hence, for a configuration \(R\) of \(X_t\), we have
    \begin{align*}
      p_{\text{res}}[R] &= \MP_{\tw}(p_1, \ldots, p_{\tw})[R(u_1), \ldots, R(u_{\tw})]\\
                        &= \bigvee_{\ell \in [n]} p_1[\ell, R(u_2), \ldots, R(u_{\tw})]\land \ldots \land p_{\tw}[R(u_1), \ldots, R(u_{\tw-1}), \ell]\\
                        &= \exists v' \in f^{-1}(v): \forall i \in [\tw]: p_i[R(u_1), \ldots, R(u_{i-1}), v', R(u_{i+1}), \ldots, R(u_{\tw})]\\
                        &= \exists v' \in f^{-1}(v): \forall i \in [\tw]: d_{c(i)}((R\cup \{v \mapsto v'\})|_{X_{c(i)}})
    \end{align*}
    From this, we can directly calculate \(d_t(R)\) via equation (\(\ast\)): We have \(d_t(R) = d_{t''}(R) \land p_{\text{res}}[R]\). Indeed, this gives a very simple algorithm for computing \(d_t\). The \(\tw\)-wise matrix product for \(p_{\text{res}}\) can be done in time \(O(n^{\tw+1})\), and after that calculating \(d_t\) is a simple pointwise AND with \(d_{t''}\), which takes time \(O(n^{\tw})\). Hence overall we have running time \(O(n^{\tw+1})\).
  \end{description}

    Since the tree decomposition has \(\poly(k)\) nodes, the overall time the algorithm takes is \(O(n^{\tw+1}\poly(k))\). Now we simply have to extract the answer from DP function of the root node. Let \(r\) be the root node. We have that \(d_r(R)\) is true if and only if there is a partial solution of \(X_r\) in the cone \(V_r = V(H)\). Hence it is true only if \(R\) can be expanded to a solution. Thus, the input is a YES-instance for \textsc{Colored Subgraph Isomorphism} if and only if there is an \(R\) such that \(d_r(R)\) is true. 
\end{proof}

  So far, we have only looked at the case that $\tw(H) \geq 3$. However, this exact algorithm also achieves the second result.

\begin{proof}[Proof (of part 2 of Theorem~\ref{upper-bound-detection})]
 Note that in the algorithm for part 1, all steps run in time $O(n^{\tw(H)}\poly(k))$, except for the $\tw(H)$-wise matrix product, which can be done in time $O(n^{\tw(H)+1})$. However, for $\tw(H) = 2$, $\tw(H)$-wise matrix product is exactly matrix multiplication, which runs in time $O(n^\omega)$. Thus we obtain our second result.
\end{proof}

The third result with $\tw(H)=1$ cannot be achieved by this algorithm directly. We shortly outline why. Rooting $H$ in some arbitrary node, consider the tree decomposition that takes exactly the edges of $H$ as bags, and constructs $T$ such that two nodes are connected by an edge if and only if their bags have a non-empty intersection. Now consider a graph $H$ which contains nodes $u,v,w$ such that $u$ is the parent of $v$ and $v$ is the parent of $w$. Let $t,t'\in T$ be such that $X_t = \{u,v\}$, $X_{t'} = \{v,w\}$. But now the edge $vw$ is not covered by any of the subsets $X_{t'} \setminus \{v\}$ or $X_{t'} \setminus \{w\}$ and is thus not considered in the algorithm at all. This leads to the algorithm failing. For $\tw(H) \geq 2$, this does not happen, since any pair of nodes is contained in a bag $X_t$ of size $\geq 3$, hence there is some $u$ such that the edge is covered by the subset $X_{t} \setminus \{u\}$.

Thus, to obtain our third result, we must employ a different technique. However, this part of the theorem turns out to be easy.

\begin{proof}[Proof (of part 3 of Theorem~\ref{upper-bound-detection})]
  We only sketch the result, since it is easy to see. It suffices to employ the trivial dynamic programming solution on the tree $G$, which already has a running time of $O(n^2\poly(k))$.
\end{proof}

\subsection{Exact Weight Colored Subgraph Isomorphism for Bounded Treewidth}

We now move on to the weighted version of \textsc{Colored Subgraph Isomorphism}. Specifically, we show how to solve the \textsc{Exact Weight Colored Subgraph Isomorphism} problem for bounded-treewidth pattern graphs using dynamic programming. Remember that the instances we consider here may be either node- or edge-weighted.

We also remark that the restriction of the target weight \(T\) to zero in the \textsc{Exact Weight Subgraph Isomorphism} problems is simply for ease of discussion. We use this version to both simplify our algorithms and circumvent any problems that might arise from having $T$ be part of the input. Trivially, all of our results also hold for the problem where a target $T = O(W)$ is given.

\begin{theorem}\label{upper-bound-weighted}
  There is an algorithm which, given an arbitrary instance $\phi = (H,G,f,w)$ of the \textsc{Exact Weight Colored Subgraph Isomorphism} problem, solves $\phi$ in time
  \begin{enumerate}
  \item $O((n^{\tw(H)+1}W + n^{\tw(H)}W\log W)\poly(k) + g(k))$ when $\tw(H) \geq 3$,
  \item $O((n^\omega W + n^2W\log W)\poly(k) + g(k))$ when $\tw(H) = 2$, and
  \item $O((n^2W + nW \log W)\poly(k) + g(k))$ when $\tw(H) = 1$.
  \end{enumerate}
  where $k := |H|$, $n$ is the size of the preimages of $f$, and $W$ is the maximum absolute weight in the image of $w$.
\end{theorem}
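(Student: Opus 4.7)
The plan is to extend the dynamic programming algorithm of Theorem~\ref{upper-bound-detection} by augmenting each DP table with a weight parameter. Using a $\tw(H)$-wise tree decomposition obtained via Lemma~\ref{lem:k-wise-tree-decomps}, I define for each helper node $t$
\[
D_t(R, w') := \ParSolE(R; X_t; V_t; w'),
\]
i.e.\ whether the configuration $R$ of $X_t$ admits a valid extension to $V_t$ of weight exactly $w'$, where $w' \in [-\poly(k) W, \poly(k) W]$. By Proposition~\ref{prop:node-to-edge-weight} I may assume the instance is edge-weighted, accounting each edge weight once at the unique shallowest bag containing both endpoints.

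For part~1 ($\tw(H) \geq 3$), I compute the $D_t$ bottom-up. Leaves and merge nodes are routine: at a merge node $t$ with $m \leq \poly(k)$ children $c_1, \ldots, c_m$, the running-intersection property makes the cones $V_{c_j} \setminus X_t$ pairwise disjoint, so $D_t(R, \cdot)$ is the $m$-fold OR-AND convolution of the $D_{c_j}(R, \cdot)$; iterated integer FFT followed by thresholding runs in $O(n^{\tw(H)} W \log W \cdot \poly(k))$ total. The hard case, and the main obstacle, is the intermediate-result node $t$ with $k$-wise child $t'$ and merge child $t''$ (where $k = \tw(H)$), which must compute
\[
D_t(R, w) = \bigvee_{v' \in f^{-1}(v)} \; \bigvee_{w_0 + \sum_{i} w_i = w - w(v')} D_{t''}(R, w_0) \wedge \bigwedge_{i=1}^{k} D_{c(i)}\bigl((R \cup \{v \mapsto v'\})|_{X_{c(i)}}, w_i\bigr).
\]
Here a $k$-wise matrix product over configurations (summing over $v'$) must be fused with a $(k+1)$-fold weight convolution, and a naive implementation pays an extra $k \log W$ factor.

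The key technical idea is an \emph{FFT sandwich}: first take a length-$\Theta(kW)$ discrete Fourier transform along the weight axis of each input tensor $D_{c(i)}$ (cost $O(n^k W \log W \cdot \poly(k))$); then, for each of the $O(kW)$ frequencies $\xi$, perform a single arithmetic (non-Boolean) $k$-wise matrix product on the frequency slices, folding the $v'$-dependent phase factor into one of the slices so that the sum over $v'$ remains the inner summation of a $k$-wise matrix product, yielding $O(n^{k+1})$ per frequency and $O(n^{k+1} W \cdot \poly(k))$ overall; finally, inverse-transform along the weight axis, threshold against $0$, and convolve pointwise-in-$R$ with $D_{t''}$ for another $O(n^k W \log W \cdot \poly(k))$. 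Since integer magnitudes after the transform are bounded by $n \cdot (kW)^k$, $O(k \log(nW))$-bit precision (complex FFT or modular FFT with a suitable prime) suffices and is hidden in the $\poly(k)$ factor. Summed over the $\poly(k)$ nodes of the tree decomposition, this yields the part~1 bound.

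Parts~2 and~3 specialize this framework. For part~2 ($\tw(H) = 2$), each in-frequency $k$-wise matrix product degenerates to an ordinary $n \times n$ matrix multiplication, solvable in $O(n^\omega)$, which yields the $O(n^\omega W + n^2 W \log W)$ bound. For part~3 ($\tw(H) = 1$, so $H$ is a forest), I would use a direct rooted-tree DP: for each $v \in V(H)$ and $v' \in f^{-1}(v)$, first aggregate for each child $u_i$ of $v$ over edges $v'u'_i \in E(G)$ with correct colors (total $O(n^2 W)$ per child), then convolve across children in $O(nW \log W \cdot \poly(k))$ per vertex, summing to the claimed bound. Throughout, the main conceptual obstacle is the intermediate-result-node step in part~1: without the FFT sandwich the weight convolution and the matrix product would multiply their costs, and the resulting bound would not match the lower bound of Theorem~\ref{lower-bound-weighted}.
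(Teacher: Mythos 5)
Your proposal is correct and follows essentially the same route as the paper: your ``FFT sandwich'' (a length-$\Theta(kW)$ transform along the weight axis, one arithmetic $\tw(H)$-wise matrix product per evaluation point, then interpolation and thresholding) is exactly the paper's lemma on $\tw(H)$-wise matrix products of tensors with Laurent-polynomial entries, and the paper likewise absorbs the $v'$-dependent edge weights into the input tensors, handles $\tw(H)=2$ by replacing the product with an $O(n^\omega)$ matrix multiplication, and handles $\tw(H)=1$ by a direct tree DP. The only nit is that the product of phase factors $\prod_i \omega^{\xi\, w(v'R(u_i))}$ cannot all be folded into a \emph{single} slice (slice $j$ is not indexed by $u_j$), so it must be distributed over at least two of the input tensors, as the paper does with its tensors $p_{1,W'}$ and $p_{2,W'}$.
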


Again, note that this implies Theorem~\ref{corollary-upper-bound-weighted} via a simple application of Lemma~\ref{lemma-equiv} to the resulting algorithm.

Before we prove this, we establish an important lemma, showing that you can basically do the \(k\)-wise matrix product of tensors of polynomials faster than naively by utilizing the Fast Fourier Transform. Recall that a Laurent polynomial $p \in \mathbb{C}[X,X^{-1}]$ is simply a polynomial which may have negative powers of the $X$.

\begin{lemma}\label{tensor-fft}
  Let $q \in \mathbb{N}$ tensors $A^1, \ldots, A^q$ of order $q$ be given, each of dimensions $n \times \ldots \times n$ and such that each of their entries $A^i_{j_1, \ldots, j_q} \in \mathbb{C}[Z,Z^{-1}]$ ($i \in [q]$ and $\forall \ell \in [q]: j_\ell \in [n]$) is a Laurent polynomial of degree bounded by $W$ in both the positive and negative direction. Then their $q$-wise matrix product can be computed in time
  \begin{enumerate}
  \item $O(n^{q+1}W + qn^qW\log W)$ for $q \geq 3$ and
  \item $O(n^{\omega}W + n^2W\log W)$ for $q = 2$.
\end{enumerate}
\end{lemma}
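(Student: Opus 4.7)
The plan is to extend the classical FFT-based algorithm for polynomial matrix multiplication to the $q$-wise tensor setting by the standard scheme of entrywise evaluation, pointwise computation, and entrywise interpolation.

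First, I would normalize away the negative exponents: multiply every entry of each input tensor by $Z^W$ to obtain ordinary polynomials of degree at most $2W$, and undo this shift at the end by multiplying the output entries by $Z^{-qW}$. Next, observe that each output entry of $\MP_q(A^1,\ldots,A^q)$ is a sum of $n$ terms, each a product of $q$ polynomials of degree at most $2W$, so each output entry is itself a polynomial of degree at most $2qW$. It is therefore uniquely determined by its values at any $N = O(qW)$ distinct points; I would take $N$ to be the next power of two above $2qW+1$ and evaluate at $N$-th roots of unity so that the FFT applies.

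For each of the $qn^q$ polynomial entries across the input tensors, apply an FFT of length $N$ to obtain its evaluations at all $N$-th roots of unity, at cost $O(qW\log(qW))$ per entry, for a total FFT cost of $O(qn^q W\log W)$ (absorbing a $\log q$ factor into the logarithm). At each evaluation point, the $q$ input tensors have become $q$ scalar tensors of dimensions $n \times \cdots \times n$; since the evaluation map $p \mapsto p(\zeta)$ is a ring homomorphism and the defining expression for $\MP_q$ involves only sums and products of entries, the value at $\zeta$ of each entry of $\MP_q(A^1,\ldots,A^q)$ equals the corresponding entry of the $q$-wise scalar matrix product of the evaluated tensors. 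I would therefore compute one scalar $q$-wise matrix product at each of the $N$ points: for $q = 2$ this is ordinary matrix multiplication in $O(n^\omega)$ per point and $O(n^\omega W)$ in total, while for $q \ge 3$ it is naive in $O(n^{q+1})$ per point and $O(n^{q+1}W)$ in total (treating the $q$ factor from the number of evaluation points as absorbed into the constant).

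Finally, an inverse FFT applied entrywise to the $n^q$ output polynomials recovers their coefficients at total cost $O(qn^q W\log W)$ matching the forward FFTs; undoing the $Z^W$ shift returns the Laurent-polynomial output. Summing the three contributions yields the two claimed running times. The only nontrivial points are the degree bound $2qW$ on each output entry and the commutation of pointwise evaluation with $\MP_q$, both of which are immediate from the definitions; thus no substantial obstacle arises beyond careful bookkeeping of the FFT sizes and the cost accounting.
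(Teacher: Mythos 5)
Your proposal is correct and follows essentially the same route as the paper's proof: shift away the negative exponents, evaluate each entry at roots of unity via the FFT, compute one scalar $q$-wise matrix product per evaluation point (using fast matrix multiplication when $q=2$ and the naive $O(n^{q+1})$ bound when $q\geq 3$), and recover the coefficients entrywise with the inverse FFT. You are in fact slightly more careful than the paper in observing that the output entries have degree $O(qW)$ and hence require $O(qW)$ evaluation points; this only perturbs the cost accounting by factors of $q$, which are harmless in the applications.
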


\begin{proof}
  It suffices to prove the result for standard polynomials of degree bounded by \(2W\), since we can shift the exponents of the polynomials such that they only have positive exponents, do the \(q\)-wise matrix product, then shift back.
  
  We assume for now that $q \geq 3$. Our algorithm is a generalization of the Fast Fourier Transform algorithm for standard polynomials (e.g. ~\cite{cooley1965algorithm}). Specifically, we evaluate each $A^i$ at the set $S$ of the $2W$-th roots of unity, obtaining $q\cdot |S|$ tensors with complex entries. Since each entry of $A^i$ is a polynomial, this can be done separately for each entry. Then, for each $s \in S$, we compute $\MP_{q}(A^1(s), \ldots, A^q(s))$, obtaining $|S|$ tensors with complex entries. Obviously, these are exactly the evaluations of $\MP_{q}(A^1, \ldots, A^q)$ at $S$. At this point we can use interpolation via the inverse Fast Fourier Transform separately for each entry to recover the result.

  Evaluating all entries of the tensors $A^i$ at the roots of unity can be done using the classic Fast Fourier Transform algorithm. Since there are $q \cdot n^{q}$ such entries, each with polynomials of degree bounded by $W$, this runs in time $O(qn^{q}W \log W)$. Similarly, the interpolation of the result can be done by the inverse Fast Fourier Transform algorithm. Since there are $n^{q}$ entries to interpolate, each with a degree bound of $O(W)$, this runs in time $O(qn^{q}W\log W)$. Finally, computing the $q$-wise matrix products for each root of unity can be done in $|S|\cdot n^{q+1} = O(n^{q+1}W)$. Thus, our total running time is $O(n^{q+1}W + qn^{q}W\log W)$.

  For the case that $q = 2$, note that in the above algorithm, all steps except for the $q$-wise matrix product run in time $O(n^qW\log W)$. For $q=2$, the $q$-wise matrix product is exactly matrix multiplication, which can be done in $O(n^{\omega})$. Thus the $O(W)$ matrix multiplications can be done in $O(n^{\omega}W)$.
\end{proof}

We use this lemma as an important tool in our proof of Theorem~\ref{upper-bound-weighted}. The framework of the proof is somewhat analogous to that of Theorem~\ref{upper-bound-detection}, but instead of storing values for every configuration of bags of nodes in $\help(T)$, they we store the values for each configuration and each achievable weight. The computation of the dynamic programming table entries is slightly more complex, with some shifting of the entries being required.

\begin{proof}[Proof (of part 1 of Theorem~\ref{upper-bound-weighted})]
  Due to Proposition~\ref{prop:node-to-edge-weight}, we only need to consider the case of edge weights. We describe an algorithm which takes as inputs $G$, $H$, $f:V(G) \to V(H)$ and a weight function $w$ describing the edge weights. It calculates an optimal tree decomposition $\mathcal{T}_1 := (T_1, \{X_t\}_{t\in V(T_1)})$ for $H$ in time $g(k)$ and then computes a solution via dynamic programming over the tree decomposition. By assumption, $\tw := \tw(H) \geq 3$. We continue using the notation and terminology (e.g. ``configuration'') as described in Section~\ref{notation-and-nomenclature-colsubiso}.

  We apply Lemma~\ref{lem:k-wise-tree-decomps} to convert the tree decomposition into a \(\tw\)-wise tree decomposition $\mathcal{T}_2 = (T_2, \{X_t\}_{t\in V(T_2)})$. Since \(|T_1| = \poly(k)\), the time this conversion takes is certainly negligible. Furthermore, \(|T_2| = \poly(k)\). To simplify our algorithm further, we also modify \(\mathcal{T}_2\) further. Specifically, we introduce a new type of helper node, a \textbf{binary-merge node}, which replaces merge nodes. It is a node that has either zero or two children, both with the same bag as itself. Those children are either binary-merge nodes or intermediate-result nodes. We convert \(\mathcal{T}_2\) into a new tree decomposition $\mathcal{T} := (T, \{X_t\}_{t\in V(T)})$ that has binary-merge nodes instead of merge nodes by repeatedly taking any remaining merge node \(t\) with more than two children, splitting its set of children into two non-empty sets \(A\) and \(B\), creating two new children \(t'\) and \(t''\) of \(t\) and making the vertices from \(A\) children of \(t'\) and the vertices of \(B\) children of \(t''\). For any merge node \(t\) with only a single child, we create a new leaf with the same bag and let it be a child of \(t\). Correctness and negligibility of the conversion time is immediate, and the new tree decomposition still has size \(\poly(k)\). We now do dynamic programming over \(\mathcal{T}\). 

  In deviation from the proof of Theorem~\ref{upper-bound-detection}, our dynamic programming table is structured differently. Instead of having only a single function of finite domain for each helper node $t$, we have one function for each $t$ and each achievable weight. Since the instance is edge-weighted, the achievable weights must all lie in $\mathcal{W} := \{-k^2W,\ldots,k^2W\}$.

  Specifically, for each helper node \(t \in \help(T)\) and each weight \(W' \in \mathcal{W}\)  we store the following function of finite domain from configurations of $X_t$ to truth values. Remember that $\ParSolE(R; I; J; W)$ is true if and only if $R$ is a partial solution of $I$ in $J$ with an extension of weight $W$.
  \begin{align*}
    d_{t,W'} &: \mathcal{Conf}(X_t) \to \{\mathrm{true},\mathrm{false}\}\\
    d_{t,W'}(R) &:= \ParSolE(R; X_t; V_t; W')
  \end{align*}
  i.e.\ we store for each configuration whether it is a partial solution of \(X_t\) in the cone \(V_{t}\) that has an extension of weight \(W'\). We call these functions DP functions (where DP stands for dynamic programming). Since there are $n^{\tw}$ many configurations for $X_t$, each DP function $d_{t,W'}$ can be specified using $n^{\tw}$ many bits.

  We calculate the DP functions \(d_t\) for all \(t\in\help(T)\) in a bottom-up manner. The overall picture of the algorithm is as follows: At a non-leaf binary-merge node \(t\), we take the DP functions of both children and do a boolean convolution. At an intermediate-vertex node \(t\), we first convert the DP functions of the children of its \(\tw\)-wise child to tensors with Laurent polynomials as entries, calculate their \(\tw\)-wise matrix product, convert it back to a DP function and then do a boolean convolution with the DP function of its merge node child.

  We now describe the algorithm in detail, beginning with the leaves of \(T\). Hence \(t\) be a merge node with no children. Since \(X_t = V_t\), we have that \(d_{t,W'}(R)\) is true if and only if \(R\) is a valid configuration and \(W' = 0\). Clearly this takes total time at most \(O(n^{\tw}W\poly(k))\).

  Now let \(t \in \help(T)\) be an inner node of \(T\). There are two cases: either \(t\) is an intermediate-result node, or \(t\) is a merge node.

  \begin{description}
  \item[\underline{\(t\) is a binary-merge node:}] Let \(t\) be a binary-merge node with children \(t'\) and \(t''\). For each configuration \(R\) we have
    \begin{align*}
      d_{t,W'}(R) &= \ParSolE(R;X_t;V_t;W') = \ParSolE(R; X_t;  V_{t'}\cup V_{t''}; W')\\
                  &= \exists W_1: \ParSolE(R; X_{t'}; V_{t'}; W_1) \land \ParSolE(R; X_{t''}; W'-W_1)\\
                  &= \exists W_1: d_{t', W_1}(R) \land d_{t'', W'-W_1}(R)
    \end{align*}
    This can be calculated using a boolean convolution for each configuration. Specifically, for each configuration \(R\) we define the finitely supported functions \(f_R, g_R: \mathbb{Z} \to \{\mathrm{true},\mathrm{false}\}\) as \(f_R(x) := d_{t',x}(R)\) if \(x \in \mathcal{W}\) and \(\mathrm{false}\) otherwise, and analogously \(g_R(x) := d_{t'',x}(R)\) if \(x \in \mathcal{W}\) and \(\mathrm{false}\) otherwise. We now use the boolean convolution of these functions, defined as \((f_R \ast g_R)(x) := \bigvee_{z=-\infty}^\infty f_R(z)\land g_R(x-z)\). We have that \((f_R\ast g_R)(W') = (\exists W_1: f_R(W_1) \land g_R(W'-W_1)) = d_{t,W'}(R)\). Since \(f_R\) and \(g_R\) only have non-zero values in a range of size \(O(W\poly(k))\), the boolean convolution \(f_R \ast g_R\) can be calculated in time \(O(W \log W\poly(k))\)\footnote{This is a standard result which can be achieved e.g. by using the Fast Fourier Transform.}.
    
    Since there are at most \(n^{\tw}\) many possible configurations \(R\), \(d_{t,W'}\) can be calculated for all \(W'\) in time \(O(n^{\tw}W \log W\poly(k))\). Hence this is the maximum running time we need per binary-merge node.
    
  \item[\underline{\(t\) is an intermediate-result node:}] Let \(t\) be an intermediate-result node with \(\tw\)-wise child node \(t'\) and merge node \(t''\). By definition we have \(X_t = X_{t''}\). Furthermore \(|\children(t')| = \tw\), and we can (temporarily) rename the nodes in \(X_{t'}\) to \(X_{t'} = \{v, u_1, \ldots, u_{\tw}\}\) such that \(X_{t'} = X_t \cup \{v\}\) and \(\forall i \in [k]: \exists c(i) \in \children(t'): X_{t'} = X_{c(i)} \cup \{u_i\}\).

First, we deal with calculating the weight of an extension for a configuration of the bag of the \(\tw\)-wise node. Let $R$ be a partial solution of $X_{t'}$ in $V_{t'}$ and let $S$ be an extension of $R$ (i.e.\ a compatible configuration of \(V_t\)). Notice that we get the following formula for the weight of the extension $S$:
  \begin{equation*}
    w_{ext}(S,R) \overset{\text{def.}}{=} w(S) - w(S|_{X_{t'}}) = \underbrace{\left(\sum_{i \in [\tw]} w_{ext}(S|_{V_{c(i)}},S|_{X_{c(i)}})\right)}_{\text{Part A}} + \underbrace{w_{ext}(S|_{X_{t'}},S|_{X_{t'}\setminus \{v\}})}_{\text{Part B}}
  \end{equation*}
  Part A is the combined weight of the edges which are not in $X_{t'}$. Compared to the left hand side, it is missing the weight of all edges going from $v$ to $X_{t'}\setminus\{v\}$, which is exactly what is then added in Part B. Recall that $w_{ext}(S|_{X_{t'}}, S|_{X_{t'}\setminus\{v\}}) = \sum_{u \in X_{t'} \setminus \{v\}} w(S(v)S(u))$.
  
    Hence we can calculate the DP function for \(t\) as follows for each configuration \(R\):
    \begin{align*}
      d_{t,W'}(R) &= \ParSolE(R;X_t;V_t;W')\\
                  &= \exists W_v: \ParSolE(R; X_{t''}; V_{t''}; W_v)\land \exists W_1, \ldots, W_{\tw}:\exists v' \in f^{-1}(v):\\
                  &\phantom{{}={}}w_{ext}((R \cup \{v\mapsto v'\}),R) + \sum_{i=1}^{\tw}W_i = W' - W_v \land {}\\
                  &\phantom{{}={}}\forall i \in [\tw]: \ParSolE((R\cup \{v \mapsto v'\})|_{X_{c(i)}};X_{c(i)};V_{c(i)}; W_i)\\
                  &= \exists W_v: d_{t'',W_v}(R)\land \exists W_1, \ldots, W_{\tw}: \exists v' \in f^{-1}(v):\\
                  &\phantom{{}={}}w_{ext}((R \cup \{v\mapsto v'\}),R) + \sum_{i=1}^{\tw}W_i = W' - W_v \land {}\\
      &\phantom{{}={}}\forall i \in [\tw]: d_{c(i), W_i}((R\cup \{v \mapsto v'\})|_{X_{c(i)}}) \tag{\(\dag\)}
    \end{align*}
    We show how to calculate the values of \(d_{t,W'}\) for all \(W'\) via a \(\tw\)-wise matrix product of Laurent polynomials. We continue using the notation and nomenclature for indexing tensors via configurations, as described in the proof of part 1 of Theorem~\ref{upper-bound-detection}, and continue using truth values and 0/1 interchangeably.

    We now define one 0-1-tensors \(p_{i,W'}\) for each \(i \in [\tw]\) and \(W' \in \mathcal{W}\), which is indexed by configurations of \(X_{c(i)} = X_{t'} \setminus \{u_i\}\) with ordering \((u_1, \ldots, u_{i-1}, v, u_{i+1}, \ldots, u_{\tw})\). For a configuration \(R\) of \(X_{c(i)}\) we define
      \begin{align*}
        p_{i,W'}[R] &:= \begin{cases}
          d_{c(1),W'-\sum_{i=2}^{\tw}w(R(v)R(u_i))}(R) &\text{ if $i=1$} \\
          d_{c(2),W'-w(R(v)R(u_1))}(R) &\text{ if $i=2$}\\
          d_{c(i),W'}(R) &\text{ otherwise}
      \end{cases}
      \end{align*}
      Note that the weight of the part B above is now encoded in these tensors, specifically in the ones for \(i=1\) and \(i=2\). In particular, we can expand equation (\(\dag\)) as
      \begin{align*}
        &= \exists W_v: d_{t'',W_v}(R)\land \exists W_1, \ldots, W_{\tw}: \sum_{i=1}^{\tw}W_i = W' - W_v \land{}\\
        &\phantom{{}={}}\exists v' \in f^{-1}(v): \forall i \in [\tw]: p_{i, W_i}((R\cup \{v \mapsto v'\})|_{X_{c(i)}})
      \end{align*}
      Analogously to the computation of \(p_{\text{res}}[R]\) in the proof of part 1 of Theorem~\ref{upper-bound-detection}, we have
      \begin{align*}
        \hspace{-0.4cm}= \exists W_v: d_{t'',W_v}(R)\land \exists W_1, \ldots, W_{\tw}: \sum_{i=1}^{\tw}W_i = W' - W_v \land  MP_{\tw}(p_{1, W_1}, \ldots, p_{\tw, W_{\tw}})[R]
      \end{align*}
      We now show how to compute tensors \(D_{W'}\) for each \(W'\in \mathcal{W}\) such that \(D_{W'}(R) = \exists W_1, \ldots, W_{\tw}: \sum_{i=1}^{\tw}W_i = W' - W_v \land  MP_{\tw}(p_{1, W_1}, \ldots, p_{\tw, W_{\tw}})[R]\), which would simplify the above to
      \begin{align*}
       \hspace{-0.4cm}= \exists W_v: d_{t'',W_v}(R)\land D_{W' - W_v}(R)
      \end{align*}
      and hence make it computable via a boolean convolution in time \(O(n^{\tw}W \log W\poly(k))\) (given the tensors \(D_{W'}\)).

     We want to compute $D_{\widetilde{W}}$ for each $\widetilde{W} \in \mathcal{W}$. Note how, when defining the OR of tensors to be calculated entrywise, we have
  \begin{align*}
D_{W'} &= \bigvee_{\substack{W_1, \ldots, W_{\tw} \in \mathcal{W}\\W_1 + \ldots + W_{\tw} = W'}}\MP_{\tw}(p_{1,W_1}, \ldots, p_{\tw,W_{\tw(H)}})
  \end{align*}
  
  Indeed, the right-hand side can be calculated using a single $\tw$-wise matrix product of tensors with polynomials as entries. Let $T_{\tw}$ be the group of order-$\tw(H)$ tensors of dimensions $n \times \ldots \times n$ with entries from $\{\mathrm{true}, \mathrm{false}\}$. Define $T_{\tw}[X,X^{-1}]$ as the group of Laurent polynomials with elements of $T_{\tw(H)}$ as coefficients. Note how elements of $T_{\tw}[X,X^{-1}]$ may also be viewed as tensors with Laurent polynomials as entries, or as functions $f: \{0, 1\} \to T_{\tw(H)}$ when using the usual definition of scalar-tensor AND.

  We define $p_{i}(X) := \sum_{j\in \mathcal{W}}\,p_{i,j}\cdot X^{j} \in T_{\tw(H)}[X,X^{-1}]$ for all $i$. Viewing them as tensors of polynomials, we may calculate their $\tw(H)$-wise matrix product. Viewing $\MP_{\tw}(p_{1}, \ldots p_{\tw})$ as polynomial again, it can be easily seen that the coefficient tensor for $X^{W'}$ is exactly
  \begin{align*}
  \sum_{\substack{W_1, \ldots, W_{\tw} \in \mathcal{W}\\W_1 + \ldots + W_{\tw} = W'}}\MP_{\tw}(p_{1, W_1}, \ldots, p_{\tw, W_{\tw}})
  \end{align*}
Thus, to compute $D_{W'}$ for all \(W'\), it suffices to compute $\MP_{\tw}(p_{1}, \ldots, p_{\tw})$.

Hence, we have reduced the problem of calculating the tensors $D_{\widetilde{W}}$ to computing the $\tw(H)$-wise matrix product of tensors whose entries are Laurent polynomials of degree bounded (in both directions) by $O(W\poly(k))$. By Lemma~\ref{tensor-fft}, this can be done in time $O((n^{\tw(H)+1}W + n^{\tw(H)}W\log W)poly(k))$.

Hence we have an overall running time of $O((n^{\tw(H)+1}W + n^{\tw(H)}W\log W)poly(k))$ per intermediate-result node.
\end{description}

Again, since the tree decomposition has \(\poly(k)\) nodes, the overall time the algorithm takes is $O((n^{\tw(H)+1}W + n^{\tw(H)}W\log Wpoly(k))$.

After calculating $d_{t',W'}$ for each $t' \in T, W' \in \mathcal{W}$, outputting the result is simple. We simply extract the answer from the DP function of the root node. Let \(r\) be the root of \(T\). We output YES if and only if $\exists R: d_{r,-w(R)}(R) = \mathrm{true}$. By the definition of $d_{t',-w(R)}$, this is the case if and only if $R$ is a partial solution of $X_r$ in the cone $V_{r} = V(H)$ with an extension of weight $-w(R)$. This is the case if and only if $R$ can be expanded to a configuration for all of $V(H)$ such that its total weight is $-w(R)+w(R) = 0$. Hence, $\exists R: d_{t',-w(R)}(R) = \mathrm{true}$ if and only if the instance has a solution. Checking whether such an $R$ exists can obviously be done in time $O(n^{\tw(H)}\poly(k))$.

\end{proof}

  We now consider the second part of the theorem. Similarly to the weighted case, it suffices to employ the algorithm from part 1.

\begin{proof}[Proof (of part 2 of Theorem~\ref{upper-bound-weighted})]
  In the algorithm for part 1, all running times except for the $\tw(H)$-wise matrix product are bounded by $O(n^{\tw}W\log W \poly(k))$. For $\tw = 2$, the $\tw$-wise matrix product is simply matrix multiplication, which can be done in $O(n^\omega)$. Thus, for $\tw=2$ the running time is $O(n^\omega W \poly(k)+n^2W\log W \poly(k))$.
\end{proof}

Finally, we come to the third part of the theorem. For reasons outlined in the proof of the previous theorem, the algorithm from part 1 does not work for $\tw(H)=1$. Again, however, the result turns out to be quite simple for this case.

\begin{proof}[Proof (of part 3 of Theorem~\ref{upper-bound-weighted})]
   Again, we only sketch the result, since it is easy to see. A simple dynamic programming algorithm on trees can be applied, storing for each node which configurations together with which weights can be achieved.
  
\end{proof}

\subsection{Colored Subgraph Isomorphism for Bounded Pathwidth}\label{sec:algo-col-subiso-bounded-pathwidth}

Surprisingly, \textsc{Colored Subgraph Isomorphism} can be solved slightly faster on graphs of bounded pathwidth. This algorithm leverages rectangular matrix multiplication.

 We briefly discuss (rectangular) matrix multiplication and current algorithms solving it. Multiplication of $n \times n$ by $n\times n$ matrices is perhaps the most ubiquitous open problem in computer science, with the central question being whether it can be done in \(O(n^2)\) time. The matrix multiplication exponent $\omega$ has been slowly inching toward, but not quite reaching, a value of 2 over the last few decades. In our algorithms, however, we also multiply rectangular matrices. It turns out that the techniques used in these fast algorithms for square matrix multiplication can also be generalized to the rectangular case. In the following, let $\MM(s,r,t)$ denote the time needed to multiply a matrix of size $r\times s$ with a matrix of size $s\times t$. We are mostly interested in the case that $s=n, r=n$ and $t=n^k$ for some $k\in \mathbb{R}^+$.

A simple, well-known result is that $\MM$ is both convex and symmetrical in its arguments (see e.g.~\cite{lotti1983asymptotic,stothers2010complexity}). In particular, we have $\forall x \in \mathbb{R}^+: \MM(n,n,n^{k+x}) \leq n^x\cdot MM(n,n,n^k)$ and $\MM(n,n,n^k) = \MM(n,n^k,n)$. Letting $\omega(k) := \log_n MM(n,n,n^k)$, we immediately get $\forall k \geq 1: \omega(k) < k + 1.373$ via the current bounds of $\omega(1) = \omega < 2.373$~\cite{le2014powers}.

We can, however, do better: Le Gall~\cite{gall2018improved} has shown that there are faster algorithms based on the Coppersmith-Winograd method~\cite{coppersmith1987matrix, le2014powers} used for square matrix multiplication. Among other values, he shows $\omega(0.31) = 2$, $\omega(2) < 3.26$, $\omega(3) < 4.2$, $\omega(4) < 5.18$ and $\omega(5) < 6.16$ (see~\cite{gall2018improved} for an extensive table of such values).

We now show how rectangular matrix multiplication can be used for bounded-pathwidth pattern graphs.

\begin{theorem}\label{upper-bound-detection-pathwidth}
  There is an algorithm which, given an arbitrary instance $\phi = (H,G,f)$ of \textsc{Colored Subgraph Isomorphism}, solves $\phi$ in time
  \begin{enumerate}
  \item $O(n^{\omega(\pw(H)-1)}\poly(k) + g(k))$ when $\pw(H)\geq 2$, and
  \item $O(n^2\poly(k)) + g(k)$ when $\pw(H)=1$
  \end{enumerate}
  where $k := |H|$, $n$ is the size of the preimages of $f$, and $g$ is a computable function.
\end{theorem}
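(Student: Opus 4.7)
The plan is to adapt the dynamic programming of Theorem~\ref{upper-bound-detection} to a path decomposition, using rectangular matrix multiplication in place of $k$-wise products to exploit the fact that consecutive bags differ by exactly one vertex. First compute an optimal path decomposition of $H$ of width $p := \pw(H)$ in time $g(k)$ via Bodlaender's algorithm, and convert it into a nice path decomposition with introduce and forget nodes. After an initial buildup, arrange the decomposition so that bag sizes alternate between $p$ and $p+1$: helper bags $Y_0, Y_1, \ldots, Y_L$ of size $p$ interspersed with full bags $Z_i$ of size $p+1$, with $Z_i = Y_{i-1} \cup \{w_i\} = Y_i \cup \{v_i\}$ and $v_i \neq w_i$. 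The DP stores, for each $i$, a table $d_i \colon \mathcal{Conf}(Y_i) \to \{\mathrm{true}, \mathrm{false}\}$ of size $n^p$ telling which configurations of $Y_i$ extend to a partial solution over the processed prefix of the decomposition.

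The key step is the transition $d_{i-1} \to d_i$. Let $\vec{u} := Y_{i-1} \cap Y_i$, which has size $p-1$. Split on whether $v_iw_i \in E(H)$. If $v_iw_i \in E(H)$, define the $n \times n$ boolean matrix $M[\alpha, \beta] := [\alpha\beta \in E(G)]$ for $\alpha \in f^{-1}(v_i)$ and $\beta \in f^{-1}(w_i)$, and view $d_{i-1}$ as an $n^{p-1} \times n$ matrix with rows indexed by $R(\vec{u})$ and columns by $R(v_i)$. Then
\[
  e[R(\vec{u}), R(w_i)] \;=\; \bigvee_{R(v_i)} d_{i-1}[R(\vec{u}), R(v_i)] \wedge M[R(v_i), R(w_i)]
\]
is precisely the rectangular boolean matrix product $d_{i-1} \cdot M$, which by the symmetry of $\MM$ can be computed in $O(\MM(n, n, n^{p-1})) = O(n^{\omega(p-1)})$ time. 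One then conjoins $e$ entrywise with the edge-existence checks for the remaining neighbors of $w_i$ in $\vec{u}$ to obtain $d_i$ in $O(n^p \poly(k))$ additional time. If $v_iw_i \notin E(H)$, first project out $R(v_i)$ from $d_{i-1}$ via an OR in $O(n^p)$ time to get $e'[R(\vec{u})]$, then form $d_i$ by conjoining with the $w_i$-edge checks in $O(n^p \poly(k))$ time.

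The initial buildup grows from an empty bag to $Y_0$ by introducing the first $p$ vertices one at a time, each step extending the DP table by one coordinate and enforcing the new vertex's edges, for total cost $\sum_{j=1}^{p} O(n^j \poly(k)) = O(n^p \poly(k))$; this is dominated by $O(n^{\omega(p-1)} \poly(k))$ because $\omega(p-1) \geq p$ (the output of an $n \times n$ by $n \times n^{p-1}$ product has $n^p$ entries). Summing the $L = O(k)$ transitions and the final extraction (an OR over the last DP table in $O(n^p)$ time) yields the bound $O(n^{\omega(p-1)} \poly(k) + g(k))$. Part~2 ($p = 1$) is the same algorithm with bag size two, where the rectangular product degenerates into an $n \times n$ by $n \times 1$ matrix-vector product of cost $O(n^2)$ per transition, giving $O(n^2 \poly(k) + g(k))$ overall.

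The main obstacle is ensuring that, whenever the potentially new edge $v_iw_i$ must be enforced, it can be absorbed into the rectangular matrix product rather than requiring an explicit $n^{p+1}$-sized intermediate table; this is exactly what the case split achieves. A secondary issue is arranging the nice path decomposition so that every introduce is paired with a forget of a distinct vertex (so $v_i \neq w_i$): this can be done by reordering the introduce/forget sequence so that pendant-style introduce-then-forget-the-same-vertex cases are absorbed into neighboring swaps, avoiding a degenerate ``star'' computation whose cost could reach $n^{p+1}$ when the introduced vertex has many neighbors inside the bag.
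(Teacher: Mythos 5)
Your proposal is correct and follows essentially the same route as the paper: normalize the path decomposition so that size-$p$ separator bags alternate with size-$(p+1)$ full bags, keep a DP table of size $n^{p}$ over each separator, and realize each transition as a boolean $n^{p-1}\times n$ by $n\times n$ rectangular product with the adjacency matrix between the preimages of the forgotten and introduced vertices, followed by entrywise edge checks against the rest of the bag. The paper handles the degenerate introduce-equals-forget case by simply redesignating $\hat{w}$ as another bag vertex rather than by reordering the decomposition, and obtains part 2 by falling back on the treewidth-$1$ algorithm, but these are cosmetic differences.
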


This theorem implies Theorem~\ref{corollary-upper-bound-detection-pathwidth} from the results section.

\begin{proof}
  \textbf{Part 2} of the theorem is trivial, since for any graph, its treewidth is smaller than its pathwidth. Hence, by application of Theorem~\ref{upper-bound-detection}, we achieve the desired running time.
  
  \medskip
  
  For \textbf{part 1}, let $G$, $H$ and $f: V(G) \to V(H)$ be given. The algorithm first computes an optimal path decomposition $\mathcal{P} = (P, \{X_t\}_{t\in V(T)})$ of $H$ in time $g(k)$ (see preliminaries), then does dynamic programming over $\mathcal{P}$.

  To unify nomenclature and notation with the case of treewidth, we talk about a path as a tree rooted at one of its endpoints. As in the proof of Theorem~\ref{upper-bound-detection}, we modify the path decomposition $\mathcal{P}$ to satisfy certain properties. Specifically, we wish to obtain the following properties:
  \begin{enumerate}
  \item The bags of the root and leaf of the path have size $\pw(H)$, and every other bag has size $\pw(H)+1$
  \item For every $t \in V(P)$ with child $t'$, we have $|X_t \cap X_{t'}| = \pw(H)$
  \end{enumerate}
  These properties can be obtained with similar techniques as described in the proof of part 1 of Theorem~\ref{upper-bound-detection}.

  We now do dynamic programming on this modified path decomposition. We only store values for each configuration of the separators $X_t \cap X_{t'}$. In particular, let $t'$ be a node other than the root, and let $t$ be its parent. As in Theorem~\ref{upper-bound-detection}, we only store
  \begin{align*}
    d_{t'} &: \mathcal{Conf}(X_t \cap X_{t'}) \to \{\mathrm{true},\mathrm{false}\}\\
    d_{t'}(R) &:= \ParSol(R; X_t \cap X_{t'}; V_{t'})\tag{1}
  \end{align*}
  We calculate these functions bottom-up. The case that $t'$ is the leaf is analogous to the corresponding case in Theorem~\ref{upper-bound-detection}, taking time $O(n^{\pw(H)}\poly(k))$.

  Now let $t'$ with parent $t$ and child $t''$ be an inner node of $P$. We define \(\hat{v}\) to be the unique element with $\hat{v} \in X_{t'} \setminus X_t$ and similarly, $\hat{w} \in X_{t'} \setminus X_{t''}$ (or, if we would have \(\hat{v} = \hat{w}\), we take \(\hat{w}\) to be some vertex from \(X_{t'}\setminus \{\hat{v}\}\) instead), and finally $E := X_{t'} \setminus \{\hat{v},\hat{w}\}$. Now if \(\hat{v}\hat{w} \notin E(H)\), the calculation is easy.
  \todo{details}
  Hence assume \(\hat{v}\hat{w} \in E(H)\). For a configuration $R$ of $X_t \cap X_{t'}$, we get the following alternate characterization of $d_{t'}(R)$:
  \begin{align*}
    d_{t'}(R) = &\ValConf(R; \{\hat{w}\}\cup E) \land {}\\
    &\exists v' \in f^{-1}(\hat{v}): \ParSol((R \cup \{\hat{v} \mapsto v'\})|_{X_{t'}\setminus\{\hat{w}\}}; \{\hat{v}\} \cup E; V_{t''}) \land v'R(w) \in E(G)
  \end{align*}

  We describe how to calculate $d_{t'}$ via rectangular matrix multiplication. Much like in Theorem~\ref{upper-bound-detection}, the matrices are indexed by configurations. In contrast to the former, however, one of the two dimensions of the matrix might correspond to the configuration of multiple vertices. Formally, for a vertex subset $Y \subseteq V(H)$ and a vertex $x \in V(H), x \notin Y$, we call a matrix $A$ \textbf{indexed by configurations of $(x,Y)$} when it is of dimensions $n \times n^{|Y|}$. We use two arbitrary bijections $g_{\{x\}}: \mathcal{Conf}(\{x\}) \to [n]$ and $g_Y: \mathcal{Conf}(Y) \to [n^{|Y|}]$ which will help us map configurations of $x$ and $Y$ to indices of the matrix. Hence, for a configuration $R$ of $\{x\} \cup Y$, we define $A[R] := A[g_x(R|_{\{x\}}),g_Y(R|_Y)]$.

  For our rectangular matrix product, we define a $n\times n^{\pw(H)-1}$ matrix $B_{t'}$ indexed by configurations of $(\hat{v},E)$. For a configuration $R'$ of $\{\hat{v}\} \cup E$, we define
  \begin{align*}
    B_{t'}[R'] := d_{t''}(R') = \ParSol(R'; \{\hat{v}\} \cup E;V_{t''})
  \end{align*}
  Now consider the adjacency matrix $Adj_{\hat{v},\hat{w}}$ of $f^{-1}(\hat{v})$ and $f^{-1}(\hat{w})$, indexed by configurations of $(\hat{w},\{\hat{v}\})$. We make sure that the indexing bijection $g_{\{\hat{v}\}}$ as defined above is the same for both $Adj_{\hat{v},\hat{w}}$ and $B_{t'}$ and then calculate the matrix product $Adj_{\hat{v},\hat{w}}\cdot B_{t'}$. Naturally, the product is indexed by configurations of $(\hat{w},E)$ and can be expressed as
  \begin{align*}
    &(Adj_{\hat{v},\hat{w}}\cdot B_{t'})[R']\\
    &\phantom{stuff}= \exists v' \in f^{-1}(\hat{v}): \ParSol((R' \cup \{\hat{v} \mapsto v'\})|_{X_{t'}\setminus\{\hat{w}\}}; \{\hat{v}\} \cup E;V_{t''}) \land v'R(w) \in E(G)
  \end{align*}
  Hence, we may write $d_{t'}(R)$ as
  \begin{align}\label{dp-formula-pathwidth}
    d_{t'}(R) = \ValConf(R;\{\hat{w}\} \cup E) \land (Adj_{\hat{v},\hat{w}}\cdot B_{t'})[R]
  \end{align}
  
  The algorithm to calculate $d_{t'}$ is immediate. First, we calculate the rectangular matrix product of $Adj_{\hat{v},\hat{w}}$ and $B_{t'}$ in time $O(n^{\omega(\pw(H)-1)})$, then calculate $d_{t'}$ via formula~\ref{dp-formula-pathwidth}. Checking whether $R$ is a valid configuration of $\{\hat{w}\} \cup E$ can be done in time $\poly(\pw(H)) \leq \poly(k)$, giving us a total time of at most  $O(n^{\omega(\pw(H)-1)}\poly(k))$ per inner node.

  To output the answer, consider the child $r'$ of the root $r$. We have by definition that $d_{r'}(R)$ is 1 if and only if $R$ is a partial solution for $X_r$ in $V(H)$. Thus, the input is a YES-instance for \textsc{Colored Subgraph Isomorphism} if and only if there is an $R$ such that $d_{r'}(R)$ is 1.

  Since there is only a single leaf and $\poly(k)$ inner nodes, total running time of the dynamic programming algorithm is $O(n^{\omega(\pw(H)-1)}\poly(k))$.
\end{proof}

\subsection{Exact Weight Colored Subgraph Isomorphism for Bounded Pathwidth}

The techniques of using rectangular matrix multiplication for the case of pathwidth can also be applied to the weighted case, and they lead to improvements in the expected way. Again, the instances may be either node- or edge-weighted.

\begin{theorem}\label{upper-bound-weighted-pathwidth}
  There is an algorithm which, given an arbitrary instance $\phi = (H,G,f,w)$ of the \textsc{Exact Weight Colored Subgraph Isomorphism} problem, solves $\phi$ in time
  \begin{enumerate}
  \item $O((n^{\omega(\pw(H)-1)}W + n^{\pw(H)}W\log W)\poly(k) + g(k))$ when $\pw(H) \geq 2$, and
  \item $O((n^2 W + nW\log W)\poly(k) + g(k))$ when $\pw(H) = 1$
  \end{enumerate}
  where $k := |V(H)|$, $n$ is the size of the preimages of $f$, $W$ is the maximum absolute weight in the image of $w$, and $g$ is a computable function.
\end{theorem}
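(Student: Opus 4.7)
The plan is to combine the rectangular-matrix-multiplication framework from Theorem~\ref{upper-bound-detection-pathwidth} with the polynomial-encoding-of-weights technique from the proof of Theorem~\ref{upper-bound-weighted}. By Proposition~\ref{prop:node-to-edge-weight}, it suffices to handle the edge-weighted case. First I would compute an optimal path decomposition in time $g(k)$ and massage it exactly as in the proof of Theorem~\ref{upper-bound-detection-pathwidth} so that every internal bag has size $\pw(H)+1$, the root and leaf bags have size $\pw(H)$, and adjacent bags differ by exactly one vertex. For each non-root node $t'$ with parent $t$ and for each weight $W' \in \mathcal{W} := \{-k^2W,\dots,k^2W\}$, I would store a DP function
\[
d_{t',W'}\colon \mathrm{Conf}(X_t\cap X_{t'}) \to \{\mathrm{true},\mathrm{false}\}, \qquad d_{t',W'}(R) := \ParSolE(R;X_t\cap X_{t'};V_{t'};W').
\]

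Next I would run bottom-up DP. The leaf case is handled by brute force in $O(n^{\pw(H)}W\poly(k))$. For an inner node $t'$ with parent $t$ and child $t''$, let $\hat v \in X_{t'} \setminus X_t$, let $\hat w \in X_{t'} \setminus X_{t''}$ (breaking ties as in Theorem~\ref{upper-bound-detection-pathwidth}), and let $E := X_{t'} \setminus \{\hat v,\hat w\}$. Mirroring equation~(\ref{dp-formula-pathwidth}) but weight-aware, I would split the extension weight as $W' = W_1 + W_2 + w(\hat v\hat w)$ where $W_1$ comes from $V_{t''}$ and $W_2 = 0$ (the bag edges are already paid for by the validity check). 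The key construction: define a $n \times n^{\pw(H)-1}$ matrix $B_{t'}(X)$ whose entry at configuration $R'$ of $\{\hat v\}\cup E$ is the Laurent polynomial $\sum_{W_1 \in \mathcal{W}} d_{t'',W_1}(R')\cdot X^{W_1 + w(\hat v\hat w) \cdot [\hat v\hat w\in E(H)]}$ (the shift handles the edge $\hat v\hat w$ per choice of endpoint in $f^{-1}(\hat v)$; if $\hat v\hat w \notin E(H)$ the shift is zero, and the $R(\hat w)$-dependence of the shift is absorbed by grouping columns appropriately, or equivalently by folding the weight into a separate per-row shift applied after the product). The adjacency matrix $\mathrm{Adj}_{\hat v,\hat w}$ stays a $n\times n$ $\{0,1\}$-matrix. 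Multiplying $\mathrm{Adj}_{\hat v,\hat w} \cdot B_{t'}(X)$ yields a matrix of Laurent polynomials whose coefficient of $X^{W'}$, AND-ed with the validity indicator of $\{\hat w\}\cup E$, gives $d_{t',W'}$.

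For the running time I would invoke a rectangular analogue of Lemma~\ref{tensor-fft}: evaluate each entry of the two polynomial matrices at the $O(W)$-th roots of unity in total time $O(n^{\pw(H)}W\log W)$, perform $O(W)$ scalar rectangular matrix multiplications in time $O(W\cdot n^{\omega(\pw(H)-1)})$, then interpolate entry-wise in time $O(n^{\pw(H)}W\log W)$. Summing over the $\poly(k)$ nodes of the decomposition and adding the $g(k)$ for computing the decomposition itself yields the claimed $O((n^{\omega(\pw(H)-1)}W + n^{\pw(H)}W\log W)\poly(k) + g(k))$ bound. Finally, the output is extracted as in Theorem~\ref{upper-bound-weighted} by inspecting $d_{r',-w(R)}(R)$ at the unique child $r'$ of the root.

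Part 2 ($\pw(H)=1$) I would handle separately: a graph of pathwidth one is a disjoint union of caterpillars, and a standard DP along each caterpillar spine stores, for each spine vertex mapping and each achievable partial weight, the truth value of extendability; the leg attachments at each spine vertex are aggregated by a single boolean convolution of length $O(W)$, yielding total time $O((n^2 W + nW\log W)\poly(k) + g(k))$. The main technical obstacle is ensuring that the edge weight $w(\hat v\hat w)$, which depends on both $f^{-1}(\hat v)$ and $f^{-1}(\hat w)$ (and therefore on both the row index and the column block of $\mathrm{Adj}_{\hat v,\hat w}$), can be incorporated without breaking the polynomial-matrix product; the cleanest way is to pre-multiply the rows of $B_{t'}(X)$ by $X^{w(\hat v\hat w)}$-shifts indexed by the value of $\hat v$ and do the analogous per-column shift afterwards, which costs only $O(n^{\pw(H)}W)$ and preserves the matrix-product structure needed for Lemma~\ref{tensor-fft}.
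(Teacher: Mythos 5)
Your overall architecture matches the paper's: reduce to the edge-weighted case via Proposition~\ref{prop:node-to-edge-weight}, reuse the modified path decomposition of Theorem~\ref{upper-bound-detection-pathwidth}, store $d_{t',W'}(R)=\ParSolE(R;X_t\cap X_{t'};V_{t'};W')$, and compute each inner node by one rectangular product of matrices with Laurent-polynomial entries, evaluated and interpolated as in Lemma~\ref{tensor-fft}. However, your treatment of the edge $\hat v\hat w$ has a genuine gap. The weight $w(v'w')$ of a host-graph edge between $v'\in f^{-1}(\hat v)$ and $w'\in f^{-1}(\hat w)$ depends jointly on the pair $(v',w')$, where $v'$ is the \emph{summation index} of the product $\mathrm{Adj}_{\hat v,\hat w}\cdot B_{t'}$ and $w'$ is the row of the output. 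Your proposed fix --- a per-row shift of $B_{t'}(X)$ by $X^{\mathrm{shift}(v')}$ before the product and a per-row/column shift by $X^{\mathrm{shift}'(w')}$ after it --- only produces exponents of the separable form $\mathrm{shift}(v')+\mathrm{shift}'(w')$, so it can only encode edge-weight functions that decompose as $a(v')+b(w')$. Arbitrary edge weights do not decompose this way (such post-hoc shifts are exactly what works in the \emph{node}-weighted improvements of Theorems~\ref{node-weighted-algo-trees} and~\ref{node-weighted-algo-pathwidth}, which may be where the intuition came from). The paper instead makes the adjacency matrix itself weight-graded: it defines $\mathrm{Adj}^{W'}_{\hat v,\hat w}[v',w'] = [v'w'\in E(G)\wedge w(v'w')=W']$ and multiplies the two resulting polynomial matrices $\sum_{W'}\mathrm{Adj}^{W'}X^{W'}$ and $\sum_{W'}A^{W'}_{t'}X^{W'}$ via Lemma~\ref{lemma-fft-rectangular-mm}; since the adjacency entries are monomials of degree at most $W$, the running time is unchanged.

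A second, smaller error: you set $W_2=0$ on the grounds that ``the bag edges are already paid for by the validity check.'' Validity only certifies that the edges exist; it does not account for their weight. The edges from $\hat v$ to the vertices of $E$ lie inside $H[\{\hat v\}\cup E]$ and are therefore excluded from the child's extension weight $W_1$, yet they belong to the extension weight at $t'$, so they must be added explicitly. The paper does this by shifting the entry of $A^{W'}_{t'}$ at configuration $R'$ by $x:=\sum_{\hat u\in E}w(R'(\hat v)R'(\hat u))$; this is legitimate because both $\hat v$'s image and $E$'s configuration are determined by a single entry of that matrix, and your framework could absorb the same shift. Your part~2 is fine (the paper simply invokes the treewidth-$1$ case of Theorem~\ref{upper-bound-weighted}, which is equivalent to your caterpillar DP), but part~1 as written does not compute correct weights.
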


\noindent
From this, Theorem~\ref{corollary-upper-bound-weighted-pathwidth} from the results section follows directly via Lemma~\ref{lemma-equiv}.

\begin{lemma}\label{lemma-fft-rectangular-mm}
  Let two matrices $A,B$ of dimensions $r\times s$ and $s\times t$ be given, such that each of their entries $A_{i,j}, B_{j,k} \in \mathbb{C}[X, X^{-1}]$ (for all $i \in [r], j \in [s], k \in [t]$) is a Laurent polynomial of degree bounded by $W$ in both the positive and the negative direction. Then their product can be computed in time $O(MM(r,s,t)W + (rs + st)W \log W)$
\end{lemma}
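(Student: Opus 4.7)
The plan is to mirror the proof of Lemma~\ref{tensor-fft} verbatim, replacing the $q$-wise tensor product with rectangular matrix multiplication. As in that lemma, I would first reduce to the case of standard (non-Laurent) polynomials by shifting each entry by $X^{W}$; this only blows the degree up to at most $2W$ and can be undone at the end by a shift on the output. After this reduction each entry of $AB$ is a polynomial of degree at most $4W + s \leq O(W + s)$, but even the trivial bound $O(W)$ will do because the FFT evaluation length will dominate. Choose a set $S$ of $N = O(W)$ roots of unity large enough to support interpolation of polynomials of degree $O(W)$.

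Next, I would evaluate every entry of $A$ and $B$ at every point of $S$ via an entrywise FFT. Each of the $rs$ entries of $A$ and $st$ entries of $B$ is a single polynomial of degree $O(W)$, so each FFT costs $O(W \log W)$, giving a total evaluation cost of $O((rs + st)\, W \log W)$. Then, for each $s \in S$, compute the rectangular product $A(s)\,B(s)$ of two scalar matrices of sizes $r\times s$ and $s\times t$ in time $\MM(r,s,t)$. Doing this for all $|S| = O(W)$ evaluation points takes $O(\MM(r,s,t)\, W)$. Finally, for each of the $rt$ entries of the product matrix, interpolate via inverse FFT in time $O(W \log W)$.

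The correctness is immediate from the fact that polynomial multiplication and scalar matrix multiplication commute entrywise at every evaluation point, so evaluating at $N > \deg(AB)$ roots of unity and interpolating coordinatewise recovers the polynomial entries of $AB$ exactly. For the running time, the only nontrivial point is that the output interpolation cost $O(rt\, W\log W)$ must be absorbed into the stated bound; since $\MM(r,s,t) \geq rt$ (any correct algorithm must at least write down all $rt$ output scalars) and the stated bound already contains $\MM(r,s,t)\,W$ plus $(rs+st)\,W\log W$ terms accounting for the FFT work on the inputs, this interpolation term is of the same order as what is already present (up to the logarithmic factor that is analogous to the $n^q W \log W$ term in Lemma~\ref{tensor-fft}). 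Summing the three contributions yields the claimed bound $O(\MM(r,s,t)\,W + (rs+st)\,W\log W)$.

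I do not expect any real obstacle here: the construction is a routine rectangular analogue of Lemma~\ref{tensor-fft}, whose proof is already spelled out in the paper. The only place where one has to be a touch careful is the bookkeeping of the output interpolation step, and ensuring that the degree bound on entries of $AB$ is still $O(W)$ (it is, because the shifts only add a controlled offset independent of $s$, and the degree of a sum of $s$ products of degree-$O(W)$ polynomials is still $O(W)$).
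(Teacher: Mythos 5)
Your proof is correct and is exactly the argument the paper intends: its own proof of this lemma is literally the single line ``Analogous to Lemma~\ref{tensor-fft}'', i.e.\ shift to ordinary polynomials, evaluate entrywise at $O(W)$ roots of unity, multiply the $O(W)$ scalar matrix pairs, and interpolate back. Your remark on absorbing the output-interpolation cost $O(rt\,W\log W)$ is the only delicate bookkeeping point, and it is handled correctly (and is moot in the paper's applications, where $r=s$ so $rt\le st$).
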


\begin{proof}
  Analogous to Lemma~\ref{tensor-fft}.
\end{proof}

\begin{proof}[Proof (of theorem~\ref{upper-bound-weighted-pathwidth})]
Again, we only need to prove an algorithm for the edge-weighted case due to Proposition~\ref{prop:node-to-edge-weight}.
  
  \textbf{Part 2} is a corollary of Theorem~\ref{upper-bound-weighted}, since for any graph, its treewidth is smaller than its pathwidth.

  \medskip

  For \textbf{part 1}, we only sketch the proof, since it is a straightforward combination of the techniques used in the proofs of Theorems~\ref{upper-bound-weighted} and~\ref{upper-bound-detection-pathwidth}. We use notation and phrasing from both of those proofs without further mention.

  Given $G, H$ and $f$, we compute an optimal path decomposition $(P, \{X_t\}_{t\in V(P)})$ for $H$ and do dynamic programming on the modified path decomposition as described in the proof of Theorem~\ref{upper-bound-detection-pathwidth}. For each non-root node $t' \in P$ with parent $t$, for each weight $W' \in \mathcal{W}$ and for each configuration $R$ of $X_t \cap X_{t'}$, we store
  \begin{align*}
    d_{t',W'} := \ParSolE(R; X_t \cap X_{t'}; V_{t'}; W')
  \end{align*}

  The case that $t'$ is a leaf is clear. For the case that $t'$ is an inner node with parent $t$ and child $t''$, let \(\hat{v},\hat{w}\) be the unique elements with $\hat{v} \in X_{t'} \setminus X_t, \hat{w} \in X_{t'}\setminus X_{t''}$ (or, if we would have \(\hat{v} = \hat{w}\), we take \(\hat{w}\) to be some vertex from \(X_{t'}\setminus \{\hat{v}\}\) instead) and $E = X_{t'} \setminus \{\hat{v},\hat{w}\}$. If \(\hat{v}\hat{w} \notin E(H)\), the calculation is easy, hence assume \(\hat{v}\hat{w} \in E(H)\). For each weight $W'$, we build a rectangular matrix $A_{t'}^{W'}$ indexed by $(\hat{v},E)$. The entry corresponding to a configuration $R$ of $\{\hat{v}\} \cup E$ tells us whether $R$ has an extension of weight $W' + x$, where $x$ is the weight contributed by $\hat{v}$ in $\{\hat{v}\} \cup E$. In particular $x := \sum_{\hat{u} \in E} w(R|_{\{\hat{u},\hat{v}\}})$.

  We also use, for each weight $W'$, an adjacency matrix $Adj_{\hat{v},\hat{w}}^{W'}$ defined for node-weighted instances as
  \begin{align*}
    Adj_{\hat{v},\hat{w}}^{W'}[v',w'] = v'w' \in E(G) \land w(v'w') = W'
  \end{align*}

  We then create two Laurent polynomials of degree $|\mathcal{W}|$, one with the matrices $Adj_{\hat{v},\hat{w}}^{W'}$ as coefficients, one with the matrices $A_{t'}^{W'}$. These can also be seen as matrices with Laurent polynomials as entries. Using Lemma~\ref{lemma-fft-rectangular-mm}, we then calculate their matrix product, which tells us for each weight $W'$ and each configuration $R$ of $\{\hat{w}\} \cup E$ if $R$ is a potential solution of $\{\hat{w}\} \cup E$ in $V_{t'}$ with an extension of weight $W'$, potentially missing edges between the preimage of $w$ and the preimages of $E$. The latter can be checked for each entry. This leads to the desired running time.

\end{proof}

\subsection{Improvements for the Node-Weighted Case}

In the case of node weights instead of edge weights, some of the algorithms can be slightly improved using rectangular matrix multiplication. However, these improvements only work for the case that that the treewidth of $H$ is 1 and for the case of bounded pathwidth.

Specifically, we show the following two results, which imply Theorems~\ref{corollary-node-weighted-algo-trees} and~\ref{corollary-node-weighted-algo-pathwidth} from the results section.

\begin{theorem}\label{node-weighted-algo-trees}
  There is an algorithm which, given an arbitrary instance $\phi = (H,G,f,w)$ of the node-weighted \textsc{Exact Weight Colored Subgraph Isomorphism} problem where $H$ is a tree, solves $\phi$ in time $O(\MM(n,n,W)\poly(k) + nW\log W\poly(k))$.
\end{theorem}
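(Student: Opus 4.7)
The plan is to do bottom-up dynamic programming on $H$, rooted at an arbitrary vertex $r$, while splitting each DP update into two cleanly separated parts: a single rectangular matrix multiplication that enforces the adjacency constraint between a node and each of its children across all weights in parallel, and a sequence of FFT-based polynomial convolutions that combines the resulting per-child weight distributions. This separation is what allows the $\MM(n,n,W)$-type term and the $nW\log W$-type term to appear additively rather than multiplicatively.

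Concretely, for each $v \in V(H)$, each $v' \in f^{-1}(v)$, and each $W' \in \{-kW,\ldots,kW\}$, define
\[
  d_v[v', W'] := \bigl[\exists~\text{a valid configuration of the subtree rooted at $v$ mapping $v\mapsto v'$ of total weight $W'$}\bigr].
\]
View $d_v$ as a $0/1$ matrix $D_v$ of dimensions $n \times O(kW)$. Let $v$ have children $u_1,\ldots,u_d$ in the rooted $H$. For each child $u_i$ with $n\times n$ bipartite adjacency matrix $A_{v,u_i}$ between $f^{-1}(v)$ and $f^{-1}(u_i)$, compute the matrix $E_{v,u_i} := A_{v,u_i}\cdot D_{u_i}$ over the integers and threshold, yielding the table
\[
  e_{v,u_i}[v',W'] = \bigvee_{u'\in f^{-1}(u_i)}\bigl(v'u'\in E(G)\bigr) \land d_{u_i}[u',W'].
\]
Since $\MM(n,n,kW) = O(k\cdot\MM(n,n,W))$ by stacking, each such product takes time $O(\MM(n,n,W)\poly(k))$. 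Then for each $v' \in f^{-1}(v)$, the row polynomial $\sum_{W'} d_v[v',W']\,X^{W'}$ equals $X^{w(v')}\cdot \prod_{i=1}^{d}\sum_{W'}e_{v,u_i}[v',W']\,X^{W'}$, which is obtained by $d-1$ Boolean convolutions (implemented by integer FFT convolution followed by thresholding) of polynomials of degree $O(kW)$, in $O(d\cdot W\log W\cdot \poly(k))$ time per $v'$, and truncated to the window $[-kW,kW]$ at each step so that lengths do not blow up. The factor $X^{w(v')}$ is a cheap index shift.

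Summing the costs: the matrix-multiplication step is executed once per edge of $H$, contributing $O(|E(H)|\cdot \MM(n,n,W)\poly(k)) = O(\MM(n,n,W)\poly(k))$ as $|E(H)| = k-1$; the convolution step over all $v \in V(H)$ contributes $O(n\cdot W\log W\cdot \poly(k))$ because $\sum_v d_v = k-1$. The answer is YES iff $d_r[r',0]=1$ for some $r'\in f^{-1}(r)$, checkable in $O(nW)$ time, and the additive $g(k)$ term covers the trivial preprocessing (the tree $H$ is its own tree/path decomposition of width $1$). The main subtlety to get right is the two-scale bookkeeping — matching the row-index conventions of $D_u$ with those of $A_{v,u}$ and keeping the weight windows truncated to $O(kW)$ throughout — together with verifying that thresholding integer products of matrices and polynomials faithfully implements the Boolean $\text{OR}$/$\text{AND}$ semiring we actually want; both are routine once the scheme is laid out.
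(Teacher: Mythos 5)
Your proposal is correct and follows essentially the same route as the paper's proof: bottom-up DP on the rooted tree, one rectangular ($n\times n$ by $n\times O(kW)$) Boolean matrix product per edge of $H$ to enforce adjacency across all weights simultaneously, and FFT-based Boolean convolutions to combine the per-child weight distributions, yielding the two additive cost terms. The only difference is cosmetic bookkeeping — you store the total subtree weight and shift by $X^{w(v')}$ at the parent, whereas the paper stores extension weights and shifts inside the matrix $\widetilde{d}_u$ — which does not affect correctness or the running time.
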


\begin{theorem}\label{node-weighted-algo-pathwidth}
  There is an algorithm which, given an arbitrary instance $\phi = (H,G,f,w)$ of the node-weighted \textsc{Exact Weight Colored Subgraph Isomorphism} problem, solves $\phi$ in time $O(MM(n,n,n^{\pw(H)-1}W)\poly(k) + g(k))$.
\end{theorem}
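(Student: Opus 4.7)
The plan is to upgrade the rectangular-matrix-multiplication algorithm of Theorem~\ref{upper-bound-detection-pathwidth} to track weights by stacking all $|\mathcal{W}|$ weight values into a single dimension of the DP matrix, rather than running a separate multiplication per weight or invoking FFT on polynomial entries as in Theorem~\ref{upper-bound-weighted-pathwidth}. The key observation specific to the node-weighted setting is that at each inner node of the path decomposition, the only weight that enters when we ``cross'' the separator is the node weight of a single newly introduced vertex; there are no edge weights that need to be combined by convolution. Consequently, after a cheap per-row shift, a single rectangular matrix multiplication handles all weights simultaneously.

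Concretely, I would first compute a path decomposition $(P,\{X_t\}_{t\in V(P)})$ in time $g(k)$ satisfying the structural properties used in the proof of Theorem~\ref{upper-bound-detection-pathwidth}, and run bottom-up dynamic programming over it. For each non-root node $t'$ with parent $t$ and child $t''$, I would store $d_{t',W'}(R) = \ParSolE(R;X_t\cap X_{t'};V_{t'};W')$ for every configuration $R$ of the separator and every $W'\in\mathcal{W}:=\{-k^2W,\ldots,k^2W\}$. Letting $\hat v,\hat w$ and $E$ be defined at $t'$ exactly as in the proof of Theorem~\ref{upper-bound-detection-pathwidth}, a short calculation (using that $\hat w\notin V_{t''}$ while $\hat v\in V_{t''}$) shows that the extension weight of $R$ in $V_{t'}$ differs from the extension weight of $R|_E\cup\{\hat v\mapsto v'\}$ in $V_{t''}$ by exactly $w(v')$. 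Thus, for $\hat v\hat w\in E(H)$ the recurrence reads
\begin{align*}
d_{t',W'}(R) = \ValConf(R;\{\hat w\}\cup E)\wedge \exists v'\in f^{-1}(\hat v):\bigl(v'R(\hat w)\in E(G)\bigr) \wedge d_{t'',W'-w(v')}\bigl(R|_E\cup\{\hat v\mapsto v'\}\bigr),
\end{align*}
with the case $\hat v\hat w\notin E(H)$ handled identically after dropping the adjacency constraint.

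To evaluate this recurrence, I would form a matrix $\tilde B_{t'}$ of dimensions $n\times (n^{\pw(H)-1}\cdot|\mathcal{W}|)$ whose rows are indexed by $v'\in f^{-1}(\hat v)$ and whose columns are indexed by pairs $(R_E,W')$ with $R_E$ a configuration of $E$ and $W'\in\mathcal{W}$, setting $\tilde B_{t'}[v',(R_E,W')] := d_{t'',W'-w(v')}(R_E\cup\{\hat v\mapsto v'\})$. Building $\tilde B_{t'}$ from the $d_{t''}$ tables requires only a uniform per-row shift by $w(v')$, taking $O(n^{\pw(H)}W)$ time. Multiplying the $n\times n$ adjacency matrix $\mathrm{Adj}_{\hat v,\hat w}$ with $\tilde B_{t'}$ then yields an $n\times (n^{\pw(H)-1}W)$ matrix whose $(w',(R_E,W'))$ entry is precisely $\exists v':(v'w'\in E(G))\wedge d_{t'',W'-w(v')}(R_E\cup\{\hat v\mapsto v'\})$, from which $d_{t',W'}$ is read off after the cheap $\ValConf$ check.

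The main subtlety, which I expect to require the most care to state cleanly, is verifying that the row-wise shift by $w(v')$ indeed captures all and only the node-weight change at the separator; this in turn rests on $\hat w\notin V_{t''}$ and $\hat v\in V_{t''}$ noted above, plus the usual edge case $\hat v=\hat w$ already addressed in the proof of Theorem~\ref{upper-bound-detection-pathwidth}. Once this accounting is in place, no polynomial multiplication is needed, and the entire computation at each inner node reduces to a single boolean rectangular matrix multiplication of dimensions $n\times n$ by $n\times n^{\pw(H)-1}W$, costing $O(\MM(n,n,n^{\pw(H)-1}W))$. Since the shift and $\ValConf$ steps cost $O(n^{\pw(H)}W\poly(k))$, which is dominated by the matrix-multiplication term, and since the decomposition has $\poly(k)$ nodes, the total running time is $O(\MM(n,n,n^{\pw(H)-1}W)\poly(k)+g(k))$, as required.
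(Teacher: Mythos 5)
Your proposal is correct and follows essentially the same route as the paper's proof: the same DP table $d_{t',W'}(R)=\ParSolE(R;X_t\cap X_{t'};V_{t'};W')$ over the modified path decomposition, the same recurrence isolating the single node weight $w(v')$ crossing the separator, and the same $n\times(n^{\pw(H)-1}|\mathcal{W}|)$ matrix with the per-row shift by $w(v')$ multiplied against $\mathrm{Adj}_{\hat v,\hat w}$. The paper likewise combines the constructions of Theorems~\ref{node-weighted-algo-trees} and~\ref{upper-bound-detection-pathwidth}, so no further changes are needed.
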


\begin{proof}[Proof (of Theorem~\ref{node-weighted-algo-trees})]
  Let $G,f: V(G) \to V(H)$ and $w: V(G) \to \mathbb{Z}$ be given. All achievable total weights must lie in $\mathcal{W} := \{-kW, \ldots, kW\}$.

  We do dynamic programming on the tree $H$. Rooting $H$ in an arbitrary vertex, we define $T_u$ to be the subtree rooted at a node $u \in H$. For each weight $W \in \mathcal{W}$, each node $v \in H$, we store the following function of finite domain:
  \begin{align*}
    d_{v,W}&: \mathcal{Conf}(\{v\}) \to \{\mathrm{true},\mathrm{false}\}\\
    d_{v,W}(R) &:= \ParSolE(R; \{v\}; T_v; W)
  \end{align*}
  We calculate the entries of these functions bottom-up, starting at the leaves of $H$. If $v$ is a leaf, $d_{v,W}(R)$ is 1 if and only if $W = 0$. Now suppose $v$ is a non-leaf node of $H$ with set of children $C_v$. For each child $u \in C_v$, we construct a rectangular matrix $p_{v}^u$. This matrix is indexed by $w$ and $\mathcal{W}$. Formally, we call a matrix $A$ \textbf{indexed by $w$ and $\mathcal{W}$} if it has dimensions $n\times |\mathcal{W}|$. We use two arbitrary bijections $g_{\{v\}}: \mathcal{Conf}(\{v\}) \to [n]$ and $g_{\mathcal{W}}: \mathcal{W} \to [|\mathcal{W}|]$ to help us map weights from $\mathcal{W}$ and configurations of $v$ to indices of $A$. Correspondingly, we define $A[R,W] := A[g_v(R),g_W(W)]$.

  We define $p_{v}^u$ as
  \begin{align*}
    p_{v}^u[R,W] := \ParSolE(R;\{v\};\{v\} \cup T_u;W)
  \end{align*}
  We may calculate this as follows. Let $Adj_{\{v,u\}}$ be the adjacency matrix of $f^{-1}(v)$ and $f^{-1}(u)$ indexed by $(v,\{u\})$ (as defined in the proof of Theorem~\ref{upper-bound-detection-pathwidth}), and let $\widetilde{d}_u$ be the rectangular matrix indexed by $u$ and $\mathcal{W}$ and defined as $\widetilde{d}_u[R,W] := d_{u,W-w(R)}(R)$. We make sure that the indexing bijection $g_{\{u\}}$ as defined above is the same for both $Adj_{v,u}$ and $\widetilde{d}_u$ and then calculate the matrix product $Adj_{v,u}\cdot \widetilde{d}_u$. The product is indexed by configurations of $v$ and $\mathcal{W}$ and can be expressed as
  \begin{align*}
    (Adj_{v,u}\cdot \widetilde{d}_u)[R,W] &= \exists u' \in f^{-1}(u): R(v)u' \in E(G) \land \ParSolE(R;\{u\}; T_u; W-w(u'))\\
    &= \ParSolE(R;\{v\};\{v\} \cup T_u; W)\\
    &= p_v^u[R,W]\\
\intertext{Writing $C_v = \{u_1, \ldots, u_c\}$ with $c = |C_v|$, the function $d_{v,W}$ may then be expressed as}
    d_{v,W}(R) &= \exists W_1, \ldots, W_c: \sum_{i=1}^c W_i = W \land \forall i: p_v^{u_i}[R,W_i]
  \end{align*}
  Similarly to the proof of Theorem~\ref{upper-bound-weighted}, this may be calculated using a boolean convolution. Accordingly, we define for each configuration $R$ of $v$ the finitely supported functions $f_{v,R}^1, \ldots, f_{v,R}^c: \mathbb{Z} \to \{\mathrm{true}, \mathrm{false}\}$ as $f_{v,R}^i(x) := p_v^{u_i}(R)$ if $x \in \mathcal{W}$, and 0 otherwise. By a simple calculation, we get $d_{v,W}(R) = (f_{v,R}^1\ast \ldots \ast f_{v,R}^c)(W)$.

  Finally, after having calculated all values of $d_{t',W}(R)$ for all $t',W$ and $R$, we wish to output the result. Let $r$ be the root of $H$. By definition of $d_{r,W}$, there is some configuration $R$ of $r$ such that $d_{r,-w(R)}(R) = \mathrm{true}$ if and only if the instance has a solution.

  It remains to analyze the running time. For the leaves of $H$, the calculation takes time $O(nW)$. For inner nodes, the calculation of the matrix product $Adj_{\{v,u\}} \cdot \widetilde{d}_u$ takes time $\MM(n,n,W)$. Finally, calculating the discrete convolutions takes time $O(nW\log W)$, since any vertex of $v$ is involved as a child in at most one discrete convolution. Hence, we arrive at the running time from the theorem.  
\end{proof}

\begin{proof}[Proof (of Theorem~\ref{node-weighted-algo-pathwidth})]
  The proof uses a combination of the techniques from the proofs of Theorem~\ref{node-weighted-algo-trees} and Theorem~\ref{upper-bound-detection-pathwidth}.

  Let $G, H, f:V(G) \to V(H)$ and $w: V(G) \to \mathbb{Z}$ be given. We compute an optimal path decomposition in time $g(k)$, modify it as described in Theorem~\ref{upper-bound-detection-pathwidth}, obtaining a modified path decomposition $\mathcal{P} = (P, \{X_t\}_{t \in P})$, and then do dynamic programming on $\mathcal{P}$. Note that all achievable weights must lie in $\mathcal{W} := \{-kW, \ldots, kW\}$.

  We store, for each non-root node $t'$ with parent $t$ of $P$, the following function of finite domain:
  \begin{align*}
    d_{t',W}(R) := \ParSolE(R;X_t \cap X_{t'}; V_{t'};W)
  \end{align*}

  For $t'$ a leaf, the calculations is clear. Let $t'$ be an inner node with parent $t$ and child $t''$ and define \(\hat{v}\) to be the unique element with $\hat{v} \in X_{t'} \setminus X_t$ and similarly $\hat{w} \in X_{t'} \setminus X_{t''}$ (or, if we would have \(\hat{v} = \hat{w}\), we take \(\hat{w}\) to be some arbitrary vertex from \(X_{t'}\setminus \{v\}\) instead) and finally $E := X_{t'} \setminus \{\hat{v},\hat{w}\}$. If \(\hat{v}\hat{w} \notin E(H)\), the calculation is easy, hence assume \(\hat{v}\hat{w} \in E(H)\). For a configuration $R$ of $X_t \cap X_{t'}$, we get the following alternate characterization of $d_{t',W}(R)$:
  \begin{align*}
    d_{t',W}(R) =& \ValConf(R; \{\hat{w}\} \cup E) \land \exists v' \in f^{-1}(\hat{v}): v'R(\hat{w}) \in E(G)\ \land {} \\
                 &\ParSolE((R \cup \{\hat{v} \mapsto v'\})|_{X_{t'}\setminus \{\hat{w}\}}; \{\hat{v}\} \cup E; V_{t''}; W-w(v'))
  \end{align*}

  We now set up our rectangular matrix product. For a vertex $x \in V(H), x \notin E$, we call a matrix $A$ indexed by $x,E$ and $\mathcal{W}$ if it has dimensions $n\times n^{|E|} |\mathcal{W}|$. We use two arbitrary bijections $g_{\{x\}}: \mathcal{Conf}(\{x\}) \to [n]$ and $g_{E,\mathcal{W}}: \mathcal{Conf}(E) \times \mathcal{W}) \to [n^{|E|}|\mathcal{W}|]$ to help index the matrix and define, for a configuration $R$ of $\{x\} \cup E$, $A[R,W] := A[g_x(R|_{x}),g_{E,\mathcal{W}}(R|_E,W)]$.

  We define the $n\times n^{\pw(H)-1}|\mathcal{W}|$ matrix $B_{t'}$, which is to be indexed by $\hat{v},E$ and $\mathcal{W}$, as follows:
  \begin{align*}
    B_{t'}[R',W'] := d_{t',W'-w(R'|_{\hat{v}})}(R')
  \end{align*}
  As in the proof of Theorem~\ref{upper-bound-detection-pathwidth}, we also use the adjacency matrix $Adj_{\hat{v},\hat{w}}$ indexed by configurations of $(\hat{w},\{\hat{v}\})$. Ensuring that the indexing bijections $g_{\{\hat{v}\}}$ are the same for both $B_{t'}$ and $Adj_{\hat{v},\hat{w}}$, we calculate $Adj_{\hat{v},\hat{w}}\cdot B_{t'}$ and obtain a matrix indexed by $\hat{w},E$ and $\mathcal{W}$. Its entries can be expressed as
  \begin{align*}
    (Adj_{\hat{v},\hat{w}}\cdot B_{t'})[R',W'] = &\exists v' \in f^{-1}(\hat{v}): v'R'(w) \in E(G)\land {}\\
    &\ParSolE((R' \cup \{\hat{v} \mapsto v'\})|_{X_{t'}\setminus \{\hat{w}\}}; \{\hat{v}\} \cup E; V_{t''}; W'-w(v'))
  \end{align*}
  Thus, $d_{t'}(R)$ can be expressed as
  \begin{align*}
    d_{t',W}(R) = \ValConf(R;\{\hat{w}\} \cup E) \land (Adj_{\hat{v},\hat{w}}\cdot B_{t'})[R,W]
  \end{align*}

  This concludes the description of the computation of $d_{t',W}$. For the computation of the answer, consider the child $r'$ of the root $r$. By definition of $d_{r',W}(R)$, we have that there exists a configuration $R$ such that $d_{t',-w(R|_{X_{r}}}(R) = \mathrm{true}$ if and only if the instance has a solution.

  By a simple argument, this dynamic programming algorithm has a running time of $O(MM(n,n,n^{\pw(H)-1}W)\poly(k))$.
  
\end{proof}

\section{Interconnections Between Subgraph Isomorphism, Boolean \(k\)-Wise Matrix Products and Hyperclique}\label{sec:interconnections}

We have seen in the proof of the conditional lower bound for the unweighted \textsc{Colored Subgraph Isomorphism} problem that for any \(t\in\mathbb{N}\) there is a polynomial-time reduction from \(h\)-uniform \(h(t+1)\)-\textsc{Hyperclique} with \(n\) nodes to \textsc{Colored Subgraph Isomorphism} on pattern graphs of treewidth \(t\) with \(n^h\) nodes. Hence we have the following corollary, which follows directly from the~\hyperref[lem:unweighted-lemma]{Unweighted Lemma}.
\begin{corollary}\label{cor:col-subiso-to-hyperclique}
  If there is a \(t\geq 3\) such that the \textsc{Colored Subgraph Isomorphism} problem for pattern graphs of treewidth \(t\) can be solved in time \(O(n^{t+1-\varepsilon})\) (for some \(\varepsilon > 0\)), then for any \(3\leq h\leq t\) the \(h\)-uniform \(h(t+1)\)-\textsc{Hyperclique} problem can be solved in time \(O(n^{h(t+1)-\varepsilon'})\) (for some \(\varepsilon' > 0\)).
\end{corollary}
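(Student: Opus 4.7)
The plan is to apply the Unweighted Lemma (Lemma~\ref{lem:unweighted-lemma}) directly, with a judicious choice of its parameters matched to the hypotheses of the corollary. Given a fixed $t \geq 3$ satisfying the assumption and a choice of $3 \leq h \leq t$, I would instantiate the Unweighted Lemma with $h' := h$ and $r := t+1$. The relevant pattern graph is then $\TWL(h, 0, t+1)$.

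The first step is to verify that $\TWL(h, 0, t+1)$ is a valid target for the assumed algorithm, i.e.\ that its treewidth is at most $t$. This follows from Proposition~\ref{prop:treewidth-of-lily}: since $s_2 = t+1 > t \geq h$, the ``if $s_2 > h$'' branch applies and gives $\tw(\TWL(h,0,t+1)) \leq s_2 - 1 = t$. In particular, the assumed $O(N^{t+1-\varepsilon})$-time algorithm for \textsc{Colored Subgraph Isomorphism} on treewidth-$t$ pattern graphs can be used on this pattern, where $N$ denotes the preimage size. This is exactly the hypothesis ``we can solve \textsc{Colored Subgraph Isomorphism} with pattern graph $\TWL(h,0,r)$ in time $O(N^{r-\varepsilon})$'' of the Unweighted Lemma, with $r = t+1$.

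The second step is to invoke the conclusion of the Unweighted Lemma: under this hypothesis, we obtain an algorithm for $h$-uniform $hr$-\textsc{Hyperclique} with running time $O(n^{hr-\varepsilon})$, and here $hr = h(t+1)$. Setting $\varepsilon' := \varepsilon > 0$ yields precisely the promised $O(n^{h(t+1) - \varepsilon'})$ bound for $h$-uniform $h(t+1)$-\textsc{Hyperclique}.

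There is no real obstacle to navigate: the only mild point of care is checking that the Proposition~\ref{prop:treewidth-of-lily} bound lands at treewidth exactly $t$ (so the hypothesis actually applies), which uses the strict inequality $t+1 > h$ guaranteed by $h \leq t$. Because the argument is a one-line application of the Unweighted Lemma with $(h', r) = (h, t+1)$, the proof should be stated concisely rather than reproving any portion of the lemma.
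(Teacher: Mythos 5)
Your proposal is correct and is essentially the paper's own argument: the corollary is obtained by instantiating the Unweighted Lemma with $h' = h$, $r = t+1$ on the pattern $\TWL(h,0,t+1)$, whose treewidth is $t$ by Proposition~\ref{prop:treewidth-of-lily} because $t+1 > h$. Nothing further is needed.
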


But what about the other direction? Can we also provide a conditional lower bound for \textsc{Hyperclique} under the hypothesis that \textsc{Colored Subgraph Isomorphism} cannot be solved faster, hence proving an equivalence? Indeed we can! However, the ``equivalence'' we get is not as strong as one might hope.

Specifically, it turns out that the algorithm we described for \textsc{Colored Subgraph Isomorphism} already gives a Turing reduction from \textsc{Colored Subgraph Isomorphism} with treewidth \(t\) to the \textsc{Boolean \(t\)-wise Matrix Product} problem\footnote{Defined as: Given \(t\) tensors \(A^1, \ldots, A^t\) of order \(t\) with dimensions \(n\times \ldots \times n\), calculate their boolean \(t\)-wise matrix product \(\MP_t(A^1, \ldots, A^t)\).}. Indeed, it can also be seen that the \textsc{Boolean \(t\)-wise Matrix Product} problem is equivalent to the \(t\)-uniform \((t+1)\)-\textsc{Hyperclique} problem. More formally, we have the following two lemmas.
\begin{lemma}
  For any \(t\geq 3\), if the \textsc{Boolean \(t\)-wise Matrix Product} problem can be solved in time \(O(n^{t+1-\varepsilon})\), then \textsc{Colored Subgraph Isomorphism} problem on pattern graphs of treewidth \(t\) can be solved in time \(O(n^{t+1-\varepsilon})\).
\end{lemma}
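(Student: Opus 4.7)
The plan is to directly invoke the algorithm developed in Theorem~\ref{upper-bound-detection} (part~1) and substitute its bottleneck subroutine by a hypothetical fast algorithm for Boolean $t$-wise matrix product. Recall that given a pattern $H$ of treewidth $t$, that algorithm first computes a $t$-wise tree decomposition $\mathcal{T} = (T, \{X_s\}_{s \in V(T)})$ of $H$ of width $t$ with $|V(T)| = \poly(k)$ nodes (via Lemma~\ref{lem:k-wise-tree-decomps} applied to the output of Bodlaender's algorithm), and then performs bottom-up dynamic programming, storing at each helper node $s \in \help(T)$ a Boolean DP function $d_s : \mathcal{Conf}(X_s) \to \{\mathrm{true},\mathrm{false}\}$.

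The key observation for this reduction is that in that dynamic programming scheme, every step at a merge node or a leaf takes only $O(n^t \poly(k))$ time (pointwise ANDs of $\poly(k)$ Boolean tensors and local validity checks), while the only step that requires $\Theta(n^{t+1})$ time is the $t$-wise Boolean matrix product performed at each intermediate-result node. Specifically, at each intermediate-result node $s$ we form $t$ Boolean tensors $p_1, \ldots, p_t$ of order $t$ and dimension $n \times \cdots \times n$, one from each child of the associated $t$-wise node, and then compute $p_{\mathrm{res}} := \MP_t(p_1, \ldots, p_t)$, after which $d_s$ is derived by a single pointwise AND with the DP function of the merge-node child. The entries of the $p_i$ are already 0/1 values, so this is precisely a Boolean $t$-wise matrix product.

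Therefore the plan is: run the algorithm of Theorem~\ref{upper-bound-detection} exactly as stated, but replace every invocation of the naive $O(n^{t+1})$ computation of $\MP_t$ by the assumed $O(n^{t+1-\varepsilon})$ algorithm for Boolean $t$-wise matrix product. Since the number of intermediate-result nodes is at most $|V(T)| = \poly(k)$, the total cost of all matrix products becomes $\poly(k) \cdot O(n^{t+1-\varepsilon})$. The remaining work (leaves, merge nodes, pointwise ANDs at intermediate-result nodes, the initial tree decomposition) takes $O(n^t \poly(k)) + g(k)$ time, which is dominated by the matrix product cost once $k$ is treated as a constant. Summing up, the whole algorithm runs in time $O(n^{t+1-\varepsilon})$, as required.

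There is essentially no obstacle: the reduction is a black-box substitution enabled by the fact that our DP algorithm was already structured so that its only super-$n^t$ subroutine is the Boolean $t$-wise matrix product. The only small point to verify is that the intermediate-result step really can be written in the exact $\MP_t$ shape (rather than, say, a weighted or polynomial-valued variant); this was arranged in the proof of Theorem~\ref{upper-bound-detection} by indexing each of the $t$ input tensors along the ``missing'' vertex $u_i$ of a $t$-wise node's child and letting the contracted index range over $f^{-1}(v)$, so the substitution is immediate.
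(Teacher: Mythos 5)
Your proposal is correct and follows exactly the paper's own argument: the paper likewise observes that the Boolean $t$-wise matrix product at the intermediate-result nodes is the sole bottleneck of the algorithm from Theorem~\ref{upper-bound-detection}, with all other steps running in $O(n^{t}\poly(k))$ time, so the assumed fast $\MP_t$ algorithm can be substituted in as a black box. Your write-up just spells out the accounting over the $\poly(k)$ nodes of the decomposition in more detail than the paper does.
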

\begin{proof}
  As was already mentioned in the algorithms for \textsc{Colored Subgraph Isomorphism} (see the proof of Theorem~\ref{upper-bound-detection}), the boolean \(\tw(H)\)-wise matrix product is the bottleneck for the running time. All other operations run in time \(O(n^{\tw(H)}\poly(|V(H)|))\). Hence a \(O(n^{t+1-\varepsilon})\) algorithm for \textsc{Boolean \(t\)-wise Matrix Product} translates directly to a \(O(n^{t+1-\varepsilon})\) algorithm for \textsc{Colored Subgraph Isomorphism} with \(\tw(H) = t\).
\end{proof}
\begin{lemma}\label{lem:hyperclique-to-k-wise-mp}
  If the \(t\)-uniform \((t+1)\)-\textsc{Hyperclique} problem can be solved in time \(O(n^{t+1-\varepsilon})\) (for some \(\varepsilon > 0\)), then the \textsc{Boolean \(t\)-wise Matrix Product} problem can be solved in time \(O(n^{t+1-\varepsilon'})\) (for some \(\varepsilon' > 0\)).
\end{lemma}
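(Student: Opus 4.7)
The plan is to reduce Boolean $t$-wise Matrix Product to a sequence of $(t+1)$-\textsc{Hyperclique} decision calls via a natural encoding plus group-testing. Given tensors $A^1, \ldots, A^t \in \{0,1\}^{[n]^t}$, I would first construct a $(t+1)$-partite $t$-uniform hypergraph $H$ with parts $V_0, V_1, \ldots, V_t$, each a copy of $[n]$: for every $j \in [t]$ and every tuple $(i_1, \ldots, i_{j-1}, \ell, i_{j+1}, \ldots, i_t) \in [n]^t$ with $A^j[i_1, \ldots, i_{j-1}, \ell, i_{j+1}, \ldots, i_t] = 1$, add the $t$-hyperedge $\{\ell, i_1, \ldots, i_{j-1}, i_{j+1}, \ldots, i_t\}$ missing part $V_j$. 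The defining property of this encoding is that $\MP_t(A^1,\ldots,A^t)[i_1,\ldots,i_t] = 1$ if and only if $H$ extended with the single additional $t$-hyperedge $\{i_1, \ldots, i_t\}$ missing $V_0$ contains a $(t+1)$-hyperclique on $\{\ell, i_1, \ldots, i_t\}$ for some $\ell \in V_0$. This is essentially the same encoding used in the easy direction from \textsc{Hyperclique} to $t$-wise matrix product.

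Next, I would use the fast hyperclique detection algorithm $\mathcal{A}$ in a group-testing / divide-and-conquer fashion on the $V_1 \times \cdots \times V_t$-side of $H$. For each candidate subset $S \subseteq V_1 \times \cdots \times V_t$, let $H_S$ denote $H$ augmented with the $t$-hyperedges $\{i_1, \ldots, i_t\}$ for all $(i_1, \ldots, i_t) \in S$; then $\mathcal{A}(H_S)$ returns \textsc{yes} iff some entry of $\MP_t$ in $S$ is $1$. Starting from $S = V_1 \times \cdots \times V_t$, each \textsc{no}-answer certifies a whole block of entries as $0$, while each \textsc{yes}-answer triggers further subdivision of $S$ into smaller sub-blocks, until leaves of a suitably small size can be resolved by direct checking.

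The main obstacle will be the running-time analysis. A naive version of the recursion gives no speed-up: because each call $\mathcal{A}(H_S)$ retains the full part $V_0$ of size $n$, it costs $\Theta(n^{t+1-\varepsilon})$ regardless of $|S|$, and in the worst case (when $\MP_t$ has many 1-entries) one is forced to descend into $\Omega(n^t)$ leaves. To overcome this, I would also subdivide $V_0$ at deeper recursion levels, iterating over its sub-blocks to maintain witness coverage, so that the per-call cost shrinks together with the subproblem size. With the branching factor $B$, the per-level block sizes, and the switch-over threshold to direct verification at the leaves all chosen carefully as functions of $\varepsilon$ and $t$, the total running time should come out to $O(n^{t+1-\varepsilon'})$ for some $\varepsilon' > 0$ depending on $\varepsilon$ and $t$. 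This parameter balancing is the technical heart of the argument and the step most prone to delicate bookkeeping.
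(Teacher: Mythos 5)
Your encoding of the tensors as a $(t+1)$-partite $t$-uniform hypergraph is exactly the one the paper uses, and you correctly identify that the whole difficulty is turning a decision oracle into an algorithm that outputs all $n^t$ entries within budget. However, the search mechanism you propose --- recursive subdivision of the output index space, descending on every \textsc{yes} answer, with sub-blocks of $V_0$ iterated over ``to maintain witness coverage'' --- has a gap that no choice of branching factor, block sizes, or switch-over threshold can repair. Consider the worst case where all $A^j$ are all-ones, so every output entry is $1$ and every $\ell\in V_0$ is a witness for every entry. Then every decision call at every level of your recursion answers \textsc{yes}, and you are forced down to essentially all $n^t$ leaf boxes; since a \textsc{no} for a pair $(S,U)$ only certifies ``no witness in $U$'', each box must additionally be paired with enough sub-blocks of $V_0$ to cover all of it, so the leaf count is $\Theta(n^{t+1})$ (or, if $V_0$ is kept large, the per-call cost stays at $\Theta((n/g)^{t+1-\varepsilon})$ while the node count at depth $d$ grows like $2^{td}$, and the bottom level alone already costs $\Theta(n^{t+1})$). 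The root cause is that without modifying the instance, the same $1$-entry is rediscovered through many witnesses and many $V_0$-blocks, and there is no amortization.

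The missing idea is a \emph{find-and-delete} loop rather than pure group testing. The paper first shows (by the standard self-reduction, which is the legitimate use of your divide-and-conquer idea) that a $T(n)$-time decision algorithm yields an $O(T(n))$-time algorithm that outputs the vertices of \emph{one} hyperclique. It then partitions each of the $t+1$ parts into $g$ blocks of size $n/g$ and, for each of the $g^{t+1}$ block-tuples, repeats: find a hyperclique $j(1)\cdots j(t+1)$ in time $O((n/g)^{t+1-\varepsilon})$, set the output entry $[j(1),\ldots,j(t)]$ to $1$, and \emph{delete the hyperedge} $j(1)\cdots j(t)$ from the ``all-ones'' layer so that this entry can never be reported again. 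Deletion caps the number of successful oracle invocations at $n^t$ (one per edge deleted), each costing $O((n/g)^{t+1-\varepsilon})$, while unsuccessful invocations number at most $g^{t+1}$; balancing $g^{t+1}=n^t$ gives total time $O(n^{t+1-\varepsilon/(t+1)})$. So your plan needs this deletion step (and the single-witness extraction) as its core, not merely careful parameter bookkeeping on the subdivision tree.
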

We defer the proof of Lemma~\ref{lem:hyperclique-to-k-wise-mp} to Appendix~\ref{apx:boolean-to-hyperclique}. Indeed, in appendix~\ref{apx:hyperclique-equiv-boolean}, we also show the other direction, i.e.\ we show that the existence of fast algorithms for these two problems is actually equivalent. This is a natural generalization of methods from~\cite{williams2010subcubic}, where this result is proven for combinatorial algorithms for the case \(t=2\).

\noindent
Composing these two lemmas, we get the following theorem.
\begin{theorem}\label{thm:hyperclique-to-col-subiso}
  If the \(t\)-uniform \((t+1)\)-\textsc{Hyperclique} problem can be solved in time \(O(n^{t+1-\varepsilon})\) (for some \(\varepsilon > 0\)), then the \textsc{Colored Subgraph Isomorphism} problem on pattern graphs of treewidth \(t\) can be solved in time \(O(n^{t+1-\varepsilon'})\) (for some \(\varepsilon' > 0\)).
\end{theorem}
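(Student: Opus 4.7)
The plan is to prove Theorem~\ref{thm:hyperclique-to-col-subiso} by simply composing the two preceding lemmas, since the theorem is literally stated as the chaining of the two reductions. Concretely, starting from a hypothetical $O(n^{t+1-\varepsilon})$ algorithm $\mathcal{A}_{\text{HC}}$ for $t$-uniform $(t+1)$-\textsc{Hyperclique}, the first step would be to invoke Lemma~\ref{lem:hyperclique-to-k-wise-mp} on $\mathcal{A}_{\text{HC}}$ to obtain an algorithm $\mathcal{A}_{\text{MP}}$ for the \textsc{Boolean $t$-wise Matrix Product} problem running in time $O(n^{t+1-\varepsilon_1})$ for some $\varepsilon_1 > 0$ depending on $\varepsilon$.

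The second step would be to plug $\mathcal{A}_{\text{MP}}$ into the reduction underlying the preceding (un-numbered) lemma: the algorithm for \textsc{Colored Subgraph Isomorphism} on treewidth-$t$ pattern graphs developed in Section~\ref{subsection-algos-detection} performs only $\poly(|V(H)|)$ calls to a boolean $t$-wise matrix product on tensors of side length $n$, plus pointwise AND/OR operations costing $O(n^t \poly(|V(H)|))$ per tree-decomposition node. Replacing the naive $O(n^{t+1})$ subroutine by $\mathcal{A}_{\text{MP}}$ yields a \textsc{Colored Subgraph Isomorphism} algorithm of total running time
\[
O\!\left( n^{t+1-\varepsilon_1} \poly(|V(H)|) \;+\; n^t \poly(|V(H)|) \right) \;=\; O(n^{t+1-\varepsilon'})
\]
for any $\varepsilon' < \varepsilon_1$ and sufficiently large $n$, treating $|V(H)|$ as constant. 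Choosing $\varepsilon' := \varepsilon_1/2$ (for instance) gives the bound claimed in the theorem.

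Because both ingredient lemmas are already established above, there is essentially no genuine obstacle: the proof is a one-paragraph composition argument, and the only care required is in tracking how the $\varepsilon$'s degrade across the two reductions. The mildly delicate point is that Lemma~\ref{lem:hyperclique-to-k-wise-mp} is stated with a possibly smaller exponent saving $\varepsilon_1$ than the input $\varepsilon$, so one must be explicit that the final $\varepsilon'$ in the theorem statement is this $\varepsilon_1$ (or any positive constant below it), not $\varepsilon$ itself. No new construction, no new combinatorial insight, and no additional assumption is needed beyond what the preceding two lemmas supply.
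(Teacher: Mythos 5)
Your proposal is correct and is exactly the paper's own proof: the paper derives Theorem~\ref{thm:hyperclique-to-col-subiso} by the one-line composition of Lemma~\ref{lem:hyperclique-to-k-wise-mp} with the preceding lemma reducing \textsc{Colored Subgraph Isomorphism} to \textsc{Boolean $t$-wise Matrix Product}. Your extra care in tracking how $\varepsilon$ degrades to $\varepsilon_1$ across the first reduction is sound and, if anything, slightly more explicit than the paper's treatment.
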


Looking at Corollary~\ref{cor:col-subiso-to-hyperclique} and Theorem~\ref{thm:hyperclique-to-col-subiso}, we have reductions in both directions, but they do not give a full equivalence. This is because the reduction from Corollary~\ref{cor:col-subiso-to-hyperclique} only gives an algorithm for \(h\)-uniform \(h(t+1)\)-\textsc{Hyperclique} (for any \(h\geq 3\)), but Theorem~\ref{thm:hyperclique-to-col-subiso} needs an algorithm for the much ``denser'' \(t\)-uniform \((t+1)\)-\textsc{Hyperclique} problem.

\todo{maybe stuff about \(\exists r,s,t\)-equivalence and why this doesn't work due to input size problems}

\todo{maybe stuff about weighted equivalence}

\todo{if not weighted equivalence, remove from title of this section -- done}

\section{The Colored Problems are Equivalent to the Uncolored Problems}\label{section-equivalence}

We now show Lemma~\ref{lemma-equiv} from the preliminaries (restated in a modified form below), which shows that the \textsc{(Exact Weight) Subgraph Isomorphism} and \textsc{(Exact Weight) Colored Subgraph Isomorphism} problems are essentially equivalent with respect to running times. Hence, for most of our purposes, we can treat them as equal, which simplifies both the proofs of the algorithms and the lower bounds since the colored version is much more structured.

The reductions from \textsc{(Exact Weight) Subgraph Isomorphism} to \textsc{(Exact Weight) Colored Subgraph Isomorphism} and the reduction from \textsc{Exact Weight Colored Subgraph Isomorphism} to \textsc{Exact Weight Subgraph Isomorphism} leaves $H$ unmodified. The reduction from \textsc{Colored Subgraph Isomorphism} to \textsc{Subgraph Isomorphism}, however, modifies $H$ in such a way that preserves treewidth, but may modify pathwidth.

We say that \textsc{Subgraph Isomorphism} or \textsc{Colored Subgraph Isomorphism} have a $T(n,k,\rho(H))$ algorithm (for some graph parameter $\rho$) if there is an algorithm $\mathcal{A}$ which decides a given instance $\phi = (H,G,f)$ of either problem in time \(T(n,k,\rho(H))\). Analogously, we define the phrase that \textsc{Exact Weight Subgraph Isomorphism} or \textsc{Exact Weight Colored Subgraph Isomorphism} has a $T(n,k,\rho(H),W)$ algorithm, the only difference being that $\phi = (H,G,f,w)$. $W$ denotes the maximum absolute value of the weight function $w$.

Parts 1 and 2 of the lemma follows directly from the Color Coding technique~\cite{alon1995color}.

\begin{lemma}[reformulation of Lemma~\ref{lemma-equiv}]
  Let $\rho$ be any graph parameter.
  \begin{enumerate}
  \item If there is a $T(n,k,\rho(H))$ time deterministic algorithm for \textsc{Colored Subgraph Isomorphism}, then there is a $O(T(k n,k,\rho(H))g(k))$ expected time algorithm and furthermore a \(\widetilde{O}(T(k n,k,\rho(H))g(k))\) time deterministic algorithm for \textsc{Subgraph Isomorphism}, for some computable function $g$.
    \item If there is a \(T(n,k,\rho(H),W)\) time deterministic algorithm for \textsc{Exact Weight Colored Subgraph Isomorphism}, then there is a \(O(T(k n, k, \rho(H),W)g(k))\) expected time algorithm and furthermore a \(\widetilde{O}(T(k n, k, \rho(H),W)g(k))\) time deterministic algorithm for \textsc{Exact Weight Subgraph Isomorphism}, for some computable function \(g\).
  \item Let $\tw(H) \geq 2$. If there is a $T(n,k,\tw(H))$ time algorithm for \textsc{Subgraph Isomorphism}, then there is a $O(T(\poly(k)n,\poly(k),\tw(H)) + \poly(k)n^2)$ time algorithm for \textsc{Colored Subgraph Isomorphism}.
  \item If there is a $T(n,k,\rho(H),W)$ time algorithm for \textsc{Exact Weight Subgraph Isomorphism}, then there is a $O(T(2n,2k,\rho(H),2^kW) + \poly(k)n^2)$ time algorithm for \textsc{Exact Weight Colored Subgraph Isomorphism}.
  \end{enumerate}
\end{lemma}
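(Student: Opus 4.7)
For parts 1 and 2, the plan is to apply the classical Color-Coding technique of Alon, Yuster and Zwick~\cite{alon1995color}. Given an uncolored instance $(G,H)$ with $|V(G)|=n$ and $|V(H)|=k$, I would independently color each vertex of $G$ uniformly at random with a color from $V(H)$, yielding a coloring $\chi\colon V(G)\to V(H)$, and invoke the colored algorithm on $(G,H,\chi)$. Any fixed subgraph isomorphism $\sigma$ becomes color-preserving (up to a permutation of $V(H)$) with probability at least $k!/k^k\ge e^{-k}$; amplifying by $O(e^k)$ repetitions and iterating over all $k!$ such permutations yields the desired randomized bound. For derandomization I would replace the random colorings by a standard $k$-perfect hash family of size $2^{O(k)}\log n$, where the $\log n$ factor is absorbed into $\widetilde O(\cdot)$. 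The weighted case (part 2) goes through unchanged, since neither the hashing nor the reassembly of bijections interacts with the weights.

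For part 3, the plan is to encode colors via small structural gadgets attached to each vertex. For each color $c\in V(H)$, I would construct a gadget $M_c$ consisting of a cycle of a distinct $\poly(k)$ length $\ell_c$, together with a rigidifying decoration (a hub vertex of unique high degree together with a characteristic fingerprint of pendant paths of distinguishable lengths) designed to prevent $M_c$ from being matched by structure in $G$ that lies outside the gadgets. I would then form $H'$ by identifying the root of $M_v$ with $v$ for each $v\in V(H)$, and $G'$ by identifying the root of $M_{f(u)}$ with $u$ for each $u\in V(G)$. The assumption $\tw(H)\ge 2$ is essential: the cycle has treewidth $2$, and attaching it to any vertex of $H$ leaves $\tw(H')=\tw(H)$, whereas for $\tw(H)=1$ the gadget would raise the treewidth. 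Correctness follows from the fact that cycles of distinct lengths form an antichain under subgraph embedding (the only cycle that is a subgraph of $C_m$ is $C_m$ itself), so in any subgraph isomorphism $H'\to G'$ the cycle inside $M_v$ must map to the cycle inside $M_{f(\sigma(v))}$, forcing $f(\sigma(v))=v$. The blow-up is $|V(H')|,|V(G')|=\poly(k)\cdot n$, with construction time $\poly(k)\cdot n^2$.

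For part 4, the plan is to trade the $\poly(k)$ vertex blow-up of part 3 for an exponential blow-up in the weight bound, using arithmetic rather than structural encoding. I would attach a pendant $v^-$ to each $v\in V(H)$ of weight $W_v$ and a pendant $u^-$ to each $u\in V(G)$ of weight $-W_{f(u)}$, where $W_c:=(k+1)^{c-1}$, and I would shift the target weight by $-\sum_c W_c$. The key arithmetic fact is that when one writes $\sum_v W_{f(\sigma(v))}$ in base $k+1$, the digit at position $c-1$ is exactly the multiplicity of color $c$ in the multiset $\{f(\sigma(v))\}_{v\in V(H)}$; since the total is $k<k+1$ no carries occur, so the sum equals $\sum_c W_c$ iff that multiset equals $V(H)$, i.e., iff $\sigma$ is color-preserving. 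To rule out the "cheating" mappings in which $\sigma(v^-)$ is some $G$-neighbor of $\sigma(v)$ rather than $(\sigma(v))^-$, I would scale the $W_c$'s by a factor larger than $kW$ so that no collection of original weights can compensate for a missing color pendant; this keeps weights bounded by $2^{O(k)}\cdot W$, matching the stated $2^kW$ bound up to constants in the exponent. Attaching a single pendant per vertex preserves $\rho(H)$ for treewidth and pathwidth whenever $\rho(H)\ge 1$, so $\rho(H')=\rho(H)$ and the vertex counts are $|V(H')|=2k$, $|V(G')|=2n$.

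The main obstacle, in my view, is the gadget construction in part 3. While cycles of distinct lengths form an antichain in isolation, one has to defeat spurious subgraph embeddings that use cycles already present in the host graph $G$, or that combine gadget edges with $G$-edges to reach a wrong-colored attachment point. The rigidifying decoration must be designed so that any subgraph embedding of $M_v$ into $G'$ is forced to lie inside a single gadget copy, which requires a careful joint choice of degree, pendant-length, and cycle-length constraints that remain an antichain against an arbitrary host graph. The other parts reduce to a standard probabilistic argument (parts 1 and 2) or a short combinatorial check of the base-$(k+1)$ encoding together with a magnitude argument ruling out cheating pendant mappings (part 4).
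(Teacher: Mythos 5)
Parts 1 and 2 of your proposal coincide with the paper's proof (a direct application of Color-Coding with $k$-perfect hash families for derandomization) and are fine. The problem is part 3, where you yourself flag the decisive step as an unresolved ``main obstacle'': your gadgets are cycles of distinct lengths plus an unspecified ``rigidifying decoration,'' but you never explain how to prevent the gadget cycle $C_{\ell_c}$ from being realized by cycles already present in the host graph $G$, or by a mixture of gadget and host edges. Distinct-length cycles form an antichain only in isolation; the host graph can contain cycles of every length and vertices of every degree, so neither the cycle length nor a ``hub of unique high degree'' (unique within $H'$, not within $G'$) pins the embedding down. The paper's proof closes exactly this hole with an idea your proposal is missing: before attaching gadgets, it \emph{subdivides every edge} of $H$ and of $G$ (inserting a preimage class for each edge of $H$), which makes the entire main part of $\widetilde{G}$ bipartite and hence triangle-free. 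It then attaches, for each color $i$, a signature containing a triangle whose one vertex has degree $i+3$, and populates the gadget preimages so that only a single ``active'' copy of each signature has any edges. Consequently $\widetilde{G}$ contains exactly one triangle per color, distinguished by degree, so any subgraph isomorphism is forced to map signature to signature and hence to respect the coloring. Without the subdivision (or some equivalent global guarantee that the gadget's characteristic substructure cannot occur elsewhere in $G'$), your part 3 does not go through as written.

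Part 4 is a genuinely different route from the paper's: you attach weighted pendants and use a base-$(k{+}1)$ encoding, whereas the paper leaves $G$ and $H$ untouched and folds signatures $2^i$ directly into the existing vertex (or edge) weights after multiplying all weights by $2^k$, using the fact that a solution picks exactly $k$ vertices. Your pendant scheme is workable in outline (the digit argument and the magnitude argument against cheating mappings are sound, and pendants preserve treewidth and pathwidth for $\rho(H)\ge 1$), but note two defects: first, $(k+1)^{k-1}=2^{\Theta(k\log k)}$, so your weights are bounded by $k^{O(k)}W$, not $2^{O(k)}W$ as you claim, and in particular you do not obtain the stated bound $2^kW$; second, you should handle the edge-weighted variant and degree-$0$ vertices of $H$ explicitly, as the paper does. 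The paper's in-place weight modification avoids both issues and is the cleaner argument here.
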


\begin{proof}[Proof (of Lemma~\ref{lemma-equiv})]
  We start by showing \textbf{part 1} and \textbf{part 2}. As mentioned, this follows directly from a standard application of the Color Coding technique~\cite{alon1995color}. Briefly speaking, they use random colorings of the vertices of $G$ to make the potential solution subgraph multicolored with some probability depending only on $k$. In our case, we may then try all $k!$ mappings from colors to vertices of $H$ to obtain the randomized algorithm; we delete any monochromatic edges to make sure that $f$ is a homomorphism. The authors of~\cite{alon1995color} also explain how to derandomize the algorithm using $k$-perfect hash functions, which results in the deterministic algorithm with an additional factor of $2^{O(k)}\log(n)$.

  The factor of $k$ in front of $n$ in $T(k n,k,\tw(H))$ comes from the fact that in the \textsc{Colored Subgraph Isomorphism} problem, we consider $n$ to be the size of the preimages of $f$, while in the \textsc{Subgraph Isomorphism} problem, we consider it to be the size of $V(G)$.
  
\medskip
  
Now we show \textbf{part 3}. Given an instance $\phi$ of \textsc{Colored Subgraph Isomorphism}, where $H$ has $k$ vertices, with preimages in $G$ of $n$ vertices each, we construct an equivalent \textsc{Subgraph Isomorphism} instance $\phi'$. This is done in two steps. First, we modify $\phi$ into an equivalent, but more structured \textsc{Colored Subgraph Isomorphism} instance $\widetilde{\phi}$, which we then reduce to $\phi'$. See Figure~\ref{sketch-equiv} for an example of this reduction.

\begin{description}
  \item[\underline{Subdividing all edges:}] In the first step, we construct $\widetilde{H}$, which consists of a subdivided copy of $H$, where each vertex has a unique ``signature'' structure attached to it. These signatures have a triangle as a key component. Abusing notation, we write $V(H) = \{1,\ldots, k\}$ and use the vertices as numbers. First, for each $i \in V(H)$, we add a vertex $\widetilde{i}$ to $\widetilde{H}$. Then, for each edge $ij \in E(H)$ with $i<j$, we add a vertex $\widetilde{x}_{ij}$ and create two edges $\widetilde{i}\widetilde{x_{ij}}$ and $\widetilde{x}_{ij}\widetilde{j}$, hence subdividing the edge $ij$. This ensures that for now, the new graph has no triangles. In $\widetilde{G}$, we populate the preimages as follows: For each $i \in V(H)$, let $f^{-1}(i) = \{a_i^1, \ldots, a_i^n\}$ and add $n$ vertices $\{\widetilde{a}_i^1, \ldots, \widetilde{a}_i^n\}$ to $f^{-1}(\widetilde{i})$. For each $\ell$, the vertex $\widetilde{a}_i^\ell$ corresponds to $a_i^\ell$. The preimages of $\widetilde{x}_{ij}$ are populated with $n$ vertices $\{b_{ij}^1, \ldots, b_{ij}^n\}$ via $\widetilde{f}$. For each edge $ij \in E(H)$ with $i<j$, we add an edge $a_i^\ell b_{ij}^\ell$ for every $\ell \in [n]$. We also go through each edge $a_i^\ell a_j^m \in E(G)$ and add a corresponding edge $b_{ij}^\ell \widetilde{a}_j^m$ to $E(\widetilde{G})$.

  \item[\underline{Signatures:}] We now add the signatures. For each $i \in [k]$, we add a new vertex $\widetilde{t}_i$ and a new triangle $\widetilde{u}_i\widetilde{v}_i\widetilde{w}_i$ to $\widetilde{H}$, and connect $\widetilde{t}_i$ to both $\widetilde{u}_i$ and $\widetilde{i}$ from $V(\widetilde{H})$. Furthermore, we connect $\widetilde{v}_i$ to $i+1$ other newly created vertices $\widetilde{y}_i^1, \ldots \widetilde{y}_i^{i+1}$. Let the set of all newly created vertices $\widetilde{t}_i, \widetilde{u}_i, \widetilde{v}_i, \widetilde{w}_i, \widetilde{y}_i^\ell$ ($i \in [n], \ell \in [i+1]$) be named $X$. In $\widetilde{G}$, we populate the preimages of these new vertices by adding $n$ vertices $\{z_1, \ldots, z_n\}$ to $V(\widetilde{G})$ for each vertex $v \in X$, with $\forall i \in [n]: \widetilde{f}(z_i) = v$. Now, for each $u \in X$, we pick an arbitrary node from $f^{-1}(u)$ and call it active. Furthermore, for all $\widetilde{i} \in V(\widetilde{H})$, we call all vertices of $f^{-1}(i)$ active. Now for each $v \in X$, we connect its active vertex in $f^{-1}(v)$ to all active vertices from the neighbourhood $f^{-1}(N(v))$. Note that of the vertices in $f^{-1}(X)$, only the active ones have edges at all. Indeed, for each $i \in [k]$, $\widetilde{G}$ contains exactly one triangle such that one of its vertices has degree $i+3$.
  \end{description}
  
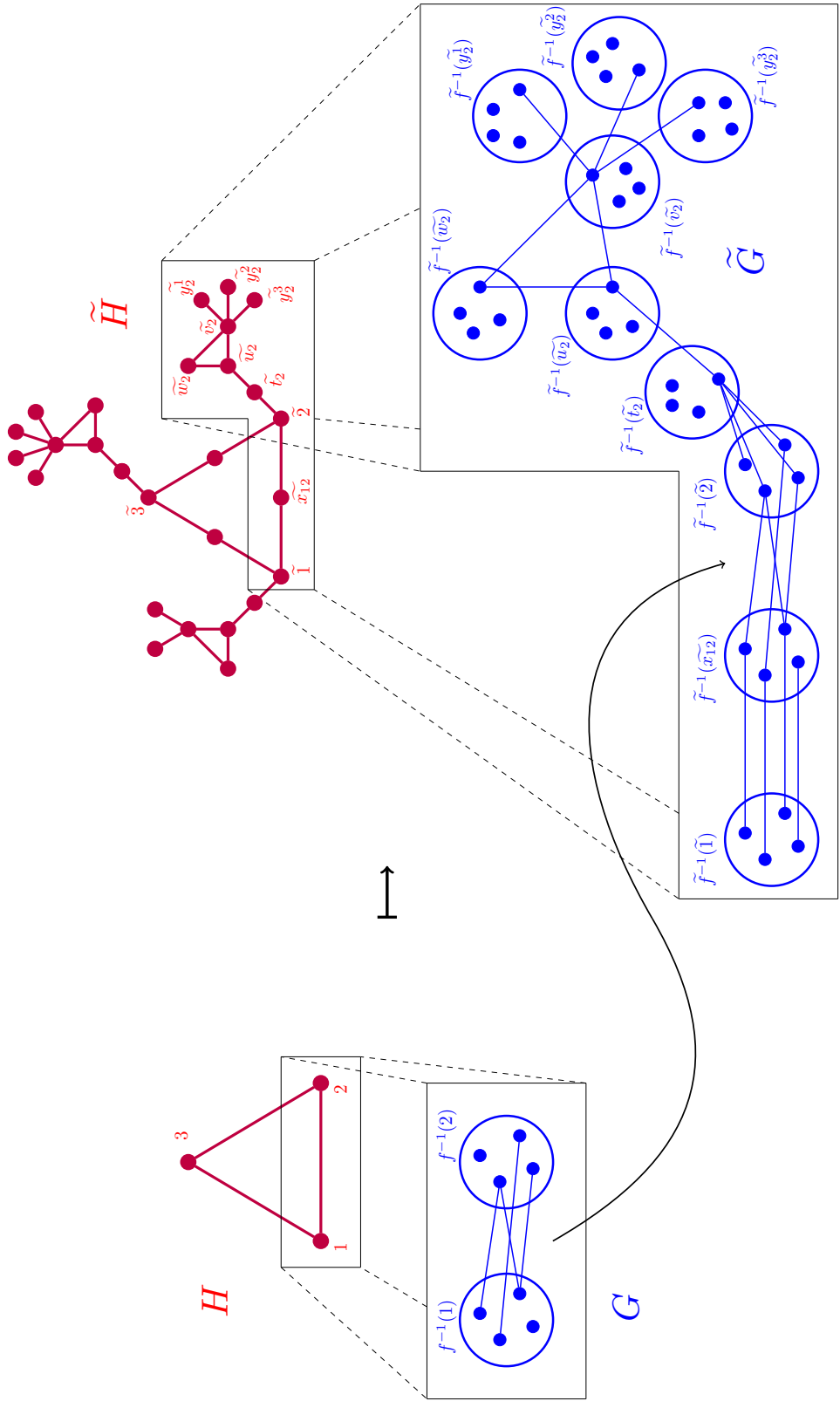
\begin{sidewaysfigure}
  \scalebox{0.8}{
    \tikzstyle{every picture}=[tikzfig]
    \begin{tikzpicture}
	\begin{pgfonlayer}{nodelayer}
		\node [style=H-node] (0) at (-13.25, 9.5) {};
		\node [style=H-node] (3) at (-7.25, 9.5) {};
		\node [style=H-node] (4) at (-10.25, 14.5) {};
		\node [style=none] (6) at (-14.25, 8) {};
		\node [style=none] (7) at (-14.25, 11) {};
		\node [style=none] (8) at (-6.25, 11) {};
		\node [style=none] (9) at (-6.25, 8) {};
		\node [style=none] (10) at (-19.25, 5.5) {};
		\node [style=none] (11) at (-19.25, -0.5) {};
		\node [style=none] (14) at (-7.25, 5.5) {};
		\node [style=none] (15) at (-7.25, -0.5) {};
		\node [style=Hollow H] (17) at (-10.25, 2.5) {};
		\node [style=Hollow H] (18) at (-16.25, 2.5) {};
		\node [style=G-node] (19) at (-16, 3.5) {};
		\node [style=G-node] (20) at (-17, 2.75) {};
		\node [style=G-node] (21) at (-15.25, 2) {};
		\node [style=G-node] (22) at (-16.5, 1.5) {};
		\node [style=G-node] (23) at (-10, 3.5) {};
		\node [style=G-node] (24) at (-11, 2.75) {};
		\node [style=G-node] (25) at (-9.25, 2) {};
		\node [style=G-node] (26) at (-10.5, 1.5) {};
		\node [style=H-node] (27) at (12, 11) {};
		\node [style=H-node] (28) at (15, 11) {};
		\node [style=H-node] (29) at (18, 11) {};
		\node [style=H-node] (30) at (15, 16) {};
		\node [style=H-node] (31) at (16.5, 13.5) {};
		\node [style=H-node] (32) at (13.5, 13.5) {};
		\node [style=H-node] (33) at (16, 17) {};
		\node [style=H-node] (34) at (17, 18) {};
		\node [style=H-node] (35) at (17, 19.5) {};
		\node [style=H-node] (36) at (18.5, 18) {};
		\node [style=H-node] (37) at (15.75, 20.25) {};
		\node [style=H-node] (38) at (16.5, 21) {};
		\node [style=H-node] (39) at (17.5, 21) {};
		\node [style=H-node] (40) at (16, 17) {};
		\node [style=H-node] (41) at (18.25, 20.25) {};
		\node [style=H-node] (42) at (19, 12) {};
		\node [style=H-node] (43) at (20, 13) {};
		\node [style=H-node] (44) at (20, 14.5) {};
		\node [style=H-node] (45) at (21.5, 13) {};
		\node [style=H-node] (46) at (22.5, 14) {};
		\node [style=H-node] (47) at (23, 13) {};
		\node [style=H-node] (48) at (22.5, 12) {};
		\node [style=H-node] (49) at (11, 12) {};
		\node [style=H-node] (50) at (10, 13) {};
		\node [style=H-node] (51) at (10, 14.5) {};
		\node [style=H-node] (52) at (8.5, 13) {};
		\node [style=H-node] (53) at (9.25, 15.75) {};
		\node [style=H-node] (54) at (10.75, 15.75) {};
		\node [style=none] (55) at (11.5, 12.25) {};
		\node [style=none] (56) at (11.5, 9.75) {};
		\node [style=none] (57) at (24, 9.75) {};
		\node [style=none] (58) at (24, 15.5) {};
		\node [style=none] (59) at (18, 15.5) {};
		\node [style=none] (60) at (18, 12.25) {};
		\node [style=Hollow H] (61) at (2, -7.5) {};
		\node [style=Hollow H] (62) at (9, -7.5) {};
		\node [style=Hollow H] (63) at (16, -7.5) {};
		\node [style=Hollow H] (64) at (19, -4.5) {};
		\node [style=Hollow H] (65) at (22, -1.5) {};
		\node [style=Hollow H] (66) at (22, 3.5) {};
		\node [style=Hollow H] (67) at (27, -1.5) {};
		\node [style=Hollow H] (68) at (29.5, 2) {};
		\node [style=Hollow H] (69) at (29.5, -5) {};
		\node [style=Hollow H] (70) at (31.5, -1.75) {};
		\node [style=G-node] (72) at (31.75, -0.75) {};
		\node [style=G-node] (73) at (31, -1.25) {};
		\node [style=G-node] (74) at (31.25, -2.5) {};
		\node [style=G-node] (75) at (32.25, -1.5) {};
		\node [style=G-node] (76) at (29.75, 3) {};
		\node [style=G-node] (77) at (28.5, 2) {};
		\node [style=G-node] (78) at (30.5, 2) {};
		\node [style=G-node] (79) at (28.75, 3) {};
		\node [style=G-node] (80) at (26.25, -1.75) {};
		\node [style=G-node] (81) at (27.5, -2) {};
		\node [style=G-node] (82) at (27.25, -0.75) {};
		\node [style=G-node] (83) at (26.75, -2.5) {};
		\node [style=G-node] (84) at (29, -6) {};
		\node [style=G-node] (85) at (28.75, -4.75) {};
		\node [style=G-node] (86) at (30, -4.75) {};
		\node [style=G-node] (87) at (30, -5.75) {};
		\node [style=G-node] (88) at (22, -0.75) {};
		\node [style=G-node] (89) at (21.25, -1.25) {};
		\node [style=G-node] (90) at (23, -1.5) {};
		\node [style=G-node] (91) at (21.5, -2.25) {};
		\node [style=G-node] (92) at (16.25, -6.5) {};
		\node [style=G-node] (93) at (15.25, -7.25) {};
		\node [style=G-node] (94) at (17, -8) {};
		\node [style=G-node] (95) at (15.75, -8.5) {};
		\node [style=G-node] (96) at (2.25, -6.5) {};
		\node [style=G-node] (97) at (1.25, -7.25) {};
		\node [style=G-node] (98) at (3, -8) {};
		\node [style=G-node] (99) at (1.75, -8.5) {};
		\node [style=G-node] (100) at (19.25, -3.75) {};
		\node [style=G-node] (101) at (18.5, -3.75) {};
		\node [style=G-node] (102) at (19.5, -5.5) {};
		\node [style=G-node] (103) at (18.25, -4.75) {};
		\node [style=G-node] (108) at (22, 4.25) {};
		\node [style=G-node] (109) at (21.25, 3.75) {};
		\node [style=G-node] (110) at (23, 3.5) {};
		\node [style=G-node] (111) at (21.75, 2.75) {};
		\node [style=G-node] (112) at (9.25, -6.5) {};
		\node [style=G-node] (113) at (8.25, -7.25) {};
		\node [style=G-node] (114) at (10, -8) {};
		\node [style=G-node] (115) at (8.75, -8.5) {};
		\node [style=none] (116) at (-0.25, -4) {};
		\node [style=none] (117) at (-0.25, -10) {};
		\node [style=none] (118) at (33.75, -10) {};
		\node [style=none] (119) at (33.75, 5.75) {};
		\node [style=none] (120) at (16, 5.75) {};
		\node [style=none] (121) at (16, -4) {};
		\node [style=none] (122) at (-15.75, 5.5) {};
		\node [style=none] (123) at (3, -4) {};
		\node [style=none] (124) at (18, 9.75) {};
		\node [style=none] (125) at (16, -10) {};
		\node [style=none] (126) at (17.6, 5.75) {};
		\node [style=none] (127) at (26, 5.75) {};
		\node [style=none] (128) at (-13.25, 0.75) {};
		\node [style=none] (129) at (12.5, -5.75) {};
		\node [style=none] (130) at (-1, -3) {};
		\node [style=big invisible] (131) at (-15.5, 13.5) {};
		\node [style=big invisible] (132) at (-15.5, 13.5) {\textcolor{red}{$H$}};
		\node [style=big invisible] (133) at (-15.75, -2) {\textcolor{blue}{$G$}};
		\node [style=big invisible] (134) at (22, 17.5) {\textcolor{red}{$\widetilde{H}$}};
		\node [style=big invisible] (135) at (24, -6.5) {\textcolor{blue}{$\widetilde{G}$}};
		\node [style=none] (136) at (-16.5, 4.75) {\textcolor{blue}{$f^{-1}(1)$}};
		\node [style=none] (137) at (-9.25, 4.75) {\textcolor{blue}{$f^{-1}(2)$}};
		\node [style=none] (138) at (-13.5, 8.75) {\textcolor{red}{$1$}};
		\node [style=none] (139) at (-7.5, 8.75) {\textcolor{red}{$2$}};
		\node [style=none] (140) at (-9.25, 14.75) {\textcolor{red}{$3$}};
		\node [style=none] (141) at (12.25, 10.25) {\textcolor{red}{$\widetilde{1}$}};
		\node [style=none] (142) at (15.25, 10.25) {\textcolor{red}{$\widetilde{x_{12}}$}};
		\node [style=none] (143) at (18.25, 10.25) {\textcolor{red}{$\widetilde{2}$}};
		\node [style=none] (144) at (14.5, 16.5) {\textcolor{red}{$\widetilde{3}$}};
		\node [style=none] (145) at (19.5, 11.25) {\textcolor{red}{$\widetilde{t_2}$}};
		\node [style=none] (146) at (20.5, 12.25) {\textcolor{red}{$\widetilde{u_2}$}};
		\node [style=none] (147) at (19.25, 14.75) {\textcolor{red}{$\widetilde{w_2}$}};
		\node [style=none] (148) at (21.5, 13.75) {\textcolor{red}{$\widetilde{v_2}$}};
		\node [style=none] (149) at (23, 14.75) {\textcolor{red}{$\widetilde{y_2^1}$}};
		\node [style=none] (150) at (23.5, 12.25) {\textcolor{red}{$\widetilde{y_2^2}$}};
		\node [style=none] (151) at (22.75, 11) {\textcolor{red}{$\widetilde{y_2^3}$}};
		\node [style=none] (152) at (1.25, -5) {\textcolor{blue}{$\widetilde{f}^{-1}(\widetilde{1})$}};
		\node [style=none] (153) at (8.5, -5) {\textcolor{blue}{$\widetilde{f}^{-1}(\widetilde{x_{12}})$}};
		\node [style=none] (154) at (14.75, -5) {\textcolor{blue}{$\widetilde{f}^{-1}(\widetilde{2})$}};
		\node [style=none] (155) at (17.75, -2.25) {\textcolor{blue}{$\widetilde{f}^{-1}(\widetilde{t_2})$}};
		\node [style=none] (156) at (20, 0.5) {\textcolor{blue}{$\widetilde{f}^{-1}(\widetilde{u_2})$}};
		\node [style=none] (157) at (24.75, 5) {\textcolor{blue}{$\widetilde{f}^{-1}(\widetilde{w_2})$}};
		\node [style=none] (158) at (31.25, 4.25) {\textcolor{blue}{$\widetilde{f}^{-1}(\widetilde{y_2^1})$}};
		\node [style=none] (160) at (31, -7.25) {\textcolor{blue}{$\widetilde{f}^{-1}(\widetilde{y_2^3})$}};
		\node [style=none] (161) at (32.5, 0.75) {\textcolor{blue}{$\widetilde{f}^{-1}(\widetilde{y_2^2})$}};
		\node [style=none] (162) at (25.25, -3.75) {\textcolor{blue}{$\widetilde{f}^{-1}(\widetilde{v_2})$}};
		\node [style=none] (163) at (-1, 7) {};
		\node [style=none] (164) at (1, 7) {};
	\end{pgfonlayer}
	\begin{pgfonlayer}{edgelayer}
		\draw [style=H-edge] (4) to (3);
		\draw [style=H-edge] (0) to (3);
		\draw [style=H-edge] (0) to (4);
		\draw (7.center) to (6.center);
		\draw (9.center) to (6.center);
		\draw (7.center) to (8.center);
		\draw (8.center) to (9.center);
		\draw (14.center) to (15.center);
		\draw (15.center) to (11.center);
		\draw (11.center) to (10.center);
		\draw (10.center) to (14.center);
		\draw [style=dashedline] (7.center) to (10.center);
		\draw [style=dashedline] (14.center) to (8.center);
		\draw [style=dashedline] (15.center) to (9.center);
		\draw [style=G-edge] (19) to (24);
		\draw [style=G-edge] (20) to (25);
		\draw [style=G-edge] (21) to (26);
		\draw [style=G-edge] (21) to (24);
		\draw [style=H-edge] (27) to (32);
		\draw [style=H-edge] (32) to (30);
		\draw [style=H-edge] (30) to (31);
		\draw [style=H-edge] (31) to (29);
		\draw [style=H-edge] (29) to (28);
		\draw [style=H-edge] (28) to (27);
		\draw [style=H-edge] (30) to (40);
		\draw [style=H-edge] (40) to (34);
		\draw [style=H-edge] (34) to (35);
		\draw [style=H-edge] (35) to (36);
		\draw [style=H-edge] (36) to (34);
		\draw [style=H-edge] (37) to (35);
		\draw [style=H-edge] (35) to (38);
		\draw [style=H-edge] (35) to (39);
		\draw [style=H-edge] (35) to (41);
		\draw [style=H-edge] (29) to (42);
		\draw [style=H-edge] (42) to (43);
		\draw [style=H-edge] (43) to (44);
		\draw [style=H-edge] (44) to (45);
		\draw [style=H-edge] (45) to (43);
		\draw [style=H-edge] (45) to (46);
		\draw [style=H-edge] (45) to (47);
		\draw [style=H-edge] (45) to (48);
		\draw [style=H-edge] (53) to (51);
		\draw [style=H-edge] (51) to (54);
		\draw [style=H-edge] (51) to (50);
		\draw [style=H-edge] (50) to (52);
		\draw [style=H-edge] (52) to (51);
		\draw [style=H-edge] (50) to (49);
		\draw [style=H-edge] (49) to (27);
		\draw (55.center) to (60.center);
		\draw (60.center) to (59.center);
		\draw (59.center) to (58.center);
		\draw (58.center) to (57.center);
		\draw (57.center) to (56.center);
		\draw (56.center) to (55.center);
		\draw [style=G-edge] (112) to (93);
		\draw [style=G-edge] (113) to (94);
		\draw [style=G-edge] (114) to (95);
		\draw [style=G-edge] (114) to (93);
		\draw [style=G-edge] (96) to (112);
		\draw [style=G-edge] (97) to (113);
		\draw [style=G-edge] (98) to (114);
		\draw [style=G-edge] (99) to (115);
		\draw [style=G-edge] (92) to (102);
		\draw [style=G-edge] (93) to (102);
		\draw [style=G-edge] (94) to (102);
		\draw [style=G-edge] (95) to (102);
		\draw [style=G-edge] (102) to (90);
		\draw [style=G-edge] (90) to (110);
		\draw [style=G-edge] (90) to (82);
		\draw [style=G-edge] (82) to (78);
		\draw [style=G-edge] (82) to (74);
		\draw [style=G-edge] (82) to (86);
		\draw [style=G-edge] (110) to (82);
		\draw (116.center) to (121.center);
		\draw (121.center) to (120.center);
		\draw (120.center) to (119.center);
		\draw (119.center) to (118.center);
		\draw (118.center) to (117.center);
		\draw (117.center) to (116.center);
		\draw [style=dashedline] (122.center) to (6.center);
		\draw [style=dashedline] (55.center) to (116.center);
		\draw [style=dashedline] (123.center) to (56.center);
		\draw [style=dashedline] (59.center) to (120.center);
		\draw [style=dashedline] (126.center) to (124.center);
		\draw [style=dashedline] (58.center) to (119.center);
		\draw [style=dashedline] (57.center) to (127.center);
		\draw [style=pointed, in=105, out=30, looseness=1.25] (130.center) to (129.center);
		\draw [style=thick edge, in=-150, out=-60, looseness=1.25] (128.center) to (130.center);
		\draw [style=implies] (163.center) to (164.center);
	\end{pgfonlayer}
\end{tikzpicture}
    }
  \caption{An example of the construction of the instance $\widetilde{\phi}$ from $\phi = (H,G,f)$ where $H$ is a triangle\label{sketch-equiv}}
\end{sidewaysfigure}

\noindent
We set $\widetilde{\phi}$ to be the instance $(\widetilde{H},\widetilde{G},\widetilde{f})$. This concludes the construction of $\widetilde{\phi}$.

  Clearly, the new instance $\widetilde{\phi}$ has a solution if and only if $\phi$ has one, and the size of the new instance is only a $\poly(k)$ factor larger than the size of $\phi$. We must also show that the reduction preserves treewidth. Note that $\widetilde{H}$ is obtained from $H$ via two operations: Subdividing edges and connecting a graph of smaller or equal treewidth via a single edge. It is easy to see that both operations do not change the treewidth.

Now, to construct $\phi'$, we simply get rid of the mapping $\widetilde{f}$. In other words, $\phi' = (\widetilde{H}, \widetilde{G})$. Obviously, if $\widetilde{\phi}$ has a solution, then $\phi'$ has one. For the other direction, suppose $\phi'$ has a solution, i.e.\ a subgraph $S'$ of $\widetilde{G}$ along with an isomorphism $h: V(\widetilde{H}) \to V(G[S'])$ of $\widetilde{G}$. Since both $\widetilde{H}$ and $\widetilde{G}$ have, for each $i$, exactly one triangle with a vertex of degree $i$, $h$ must map these triangles to their respective counterparts in $\widetilde{G}$. In particular, each of the vertices in $X$ is mapped to the active vertex in its preimage. Since the active node in $f^{-1}(\widetilde{t}_i)$ is connected only to nodes of $f^{-1}(\widetilde{i})$ (apart from the active node of $f^{-1}(\widetilde{u}_i)$, which is already in the image of $h(\widetilde{u}_i)$), we know that $h(\widetilde{i}) \in f^{-1}(\widetilde{i})$. Analogously, $h(\widetilde{x}_{ij}) \in f^{-1}(x_{ij})$. Hence, $S'$ takes exactly one vertex from each preimage of vertices from $\widetilde{H}$. Thus, $\widetilde{\phi}$ also has a solution.

We thus obtain a way to reduce $\phi$ to $\phi'$ with a size factor of only $\poly(k)$. The reduction obviously runs in $O(\poly(k)n^2)$ time. This shows part 3.

\medskip

  Finally, we show \textbf{part 4}. We begin with the node-weighted version. Given an instance $\phi$ of \textsc{Exact Weight Colored Subgraph Isomorphism} where the pattern graph $H$ has $k$ vertices, we create an instance $\phi'$ of \textsc{Exact Weight Subgraph Isomorphism} by simply dropping $f$ and modifying the weights. We have to ensure that a solution of $\phi'$ takes exactly one node from each preimage of $H$. To do this, we encode a checklist in the weights of the nodes. Again, let $V(H) = \{1,\ldots, k\}$. Let $u \in f^{-1}(i)$ for $i \in V(H)$, and consider its weight $w(u)$. We modify it by multiplying it with $2^k$ and adding $2^i$. We call the added weight its ``signature''. Now, since any solution must pick exactly $k$ vertices, the only way that the signatures of the solution vertices sum up to $2^k-1$ is to pick vertices which have a sum of weight 0 according to the original weight function and furthermore have exactly one vertex with added weight $2^i$ for each $i=1,\ldots, k$. To complete our reduction, we pick an arbitrary vertex $v \in V(G)$ and subtract $2^k-1$ from all vertices in $f^{-1}(v)$, making the new target zero. This reduction does not alter $G$ or $H$, and instead only modifies the weight function, resulting in the stated time bounds.

  For the edge-weighted version, we can use essentially the same construction as for the node-weighted version. Again, all weights are multiplied by $2^k$, and each vertex of $H$ has a unique ``signature''. However, this time, we have to add the signatures to the edge weights. Consequently, for each $i \in V(H) = \{1, \ldots, k\}$, we pick an arbitrary incident edge $e \in E(H)$. Let $e = \{i,j\}$. We add the signature $2^i$ to every edge in the preimage of $e$. That is, to every edge $\{e' \in E(G)\ |\ e' = \{u,v\} \text{ and } f(u) = i \text{ and } f(v) = j\}$. This way, we must still pick a node from each preimage to ensure that the signatures sum to $T := 2^k-1$. Again, we pick an arbitrary edge $e \in E(H)$ and subtract $T$ from all edges in its preimage. This almost completes the proof. However, we still have to handle nodes of degree 0 in $H$, since we cannot pick an incident edge for them. However, an isolated vertex $i$ in $H$ may be mapped to any vertex in $G$. Hence, we may simply skip the signature of $i$. We also have to modify the target $T$ to be $T-2^i$.
\end{proof}

\section{Open Problems}\label{sec:open-problems}

In this paper we discussed many different variants of the \textsc{Subgraph Isomorphism} problem. 
For some of these variants we leave gaps, which gives rise to several open problems:

\begin{enumerate}
\item Can the algorithms for weighted trees be improved? We have shown that some improvements can be made for node-weighted trees (see Theorem~\ref{corollary-node-weighted-algo-trees}), but are these optimal? What about edge-weighted trees?
\item Are there fast algorithms for unweighted \textsc{Subgraph Isomorphism} on graphs of bounded pathwidth that do not use rectangular matrix multiplication? Can the gap between exponent $\omega(p-1)$ and exponent $p$ be closed? Similar questions apply to the weighted case; see Theorems~\ref{corollary-upper-bound-detection-pathwidth} and~\ref{corollary-upper-bound-weighted-pathwidth}.
\item Relatedly, are there good lower bounds for unweighted \textsc{Subgraph Isomorphism} on graphs of bounded pathwidth? These could be attained via a modification of the proof of part 2 of Lemma~\ref{lemma-equiv} for pathwidth (though we see no way to do this), or with completely new techniques.
\end{enumerate}

We conclude with some more general open problems:

\begin{enumerate}
\item Do our algorithms and lower bounds also work for other types of graph homomorphisms, and for counting the number of solutions? It seems like techniques from~\cite{curticapean2017homomorphisms} should apply.
  \item In this work we demonstrated the existence of maximally hard patterns for which Subgraph Isomorphism requires time $n^{\tw(H)+1-o(1)}$. Can we classify which (classes of) patterns are maximally hard?
  \item Changing our focus from hard patterns to easy patterns, we can ask: do classes of patterns of unbounded treewidth exist for which Subgraph Isomorphism can be solved in time $n^{o(\tw(H))}$? Recall that a conditional lower bound rules out $n^{o(\tw(H)/\log \tw(H))}$~\cite{marx2007can}.
\end{enumerate}

\bibliography{references}

\newpage

\appendix

\section{Proof of the Conditional Lower Bound for Subset Sum}\label{apx:subset-sum}
In Section~\ref{sec:subset-sum}, we stated Theorem~\ref{thm-hyperclique-to-subsetsum}, which gives the following lower bound on \textsc{Subset Sum}:
\begin{center}
  \emph{
  For no $\varepsilon > 0$ can there be an algorithm which solves \textsc{Subset Sum} in time \\ $O(T^{1-\varepsilon}\poly(n))$ unless the \(h\)-uniform \textsc{Hyperclique} hypothesis fails for all \(h\geq 3\).}
\end{center}



\noindent
We now prove this.

\subsection{Reduction from \textsc{\(k\)-Sum} to \textsc{Subset Sum}}\label{apx:k-sum-to-subset-sum}

Before proving the result, we need a lemma that shows that we can reduce \(k\)-\textsc{sum} to \textsc{Subset Sum} with minimal overhead. This theorem is already known, but we could not find a formal proof of it in the literature. Therefore we provide one here.

\begin{lemma}[Reducing \(k\)-\textsc{Sum} to \textsc{Subset Sum}]\label{lemma-ksum-to-subsetsum}
  There is an algorithm $\mathcal{B}$ which, given as input a \textsc{$k$-sum} instance with $N$ values from $[0,D]$ per set, as well as a target $T$, constructs an equivalent \textsc{Subset Sum} instance with $k\cdot N$ values in $[0,D g(k)]$ and a target $T'$ bounded by $Tg(k)$ for a computable function $g$. Furthermore, for constant $k$, $\mathcal{B}$ runs in time linear in the input size.
\end{lemma}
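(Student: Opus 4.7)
The plan is to encode both each $k$-SUM value and its set-of-origin into a single Subset Sum weight, and to add a large additive offset that forces exactly $k$ elements to be selected. Concretely, for every $a \in A_i$ I would construct the Subset Sum weight $v(a,i) := \alpha\, a + K + 2^{i-1}$, producing $kN$ weights in total, and set the target to $T' := \alpha T + kK + (2^k - 1)$. Here $2^{i-1}$ is a ``set signature'' occupying the lowest $k$ bits, $K$ is a base offset forcing a cardinality constraint, and $\alpha$ is a multiplier that isolates the original $a$-contribution into high-order bits. After a trivial padding step (adding $D$ to each $a$ and $kD$ to $T$) one may assume $T \geq D$; choosing $\alpha := k\cdot 2^{k+1}$ and $K := k^2 \cdot 2^{k+2}\cdot D$ then yields weights in $[0, D \cdot O(k^2 2^k)]$ and target in $[0, T \cdot O(k^3 2^k)]$, so $g(k) = O(k^3 2^k)$ suffices.

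The correctness proof has three stages. First, I would show that any Subset Sum solution picks exactly $\ell = k$ weights: the lower bound $v(a,i) \geq K + 1$ combined with $(k+1)(K+1) > T'$ (which holds because $K$ dominates $\alpha T$) rules out $\ell > k$, and the upper bound $v(a,i) \leq K + \alpha D + 2^{k-1}$ combined with $(k-1)\cdot v_{\max} < T'$ (which holds because $K$ dominates $(k-1)\alpha D$) rules out $\ell < k$. Second, with $\ell = k$ fixed, subtracting $kK$ from both sides yields the identity $\alpha \sum_j a_j + \sum_i m_i\, 2^{i-1} = \alpha T + (2^k - 1)$, where $m_i$ counts picks from $A_i$. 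Since $\alpha > k\cdot 2^{k-1}$ exceeds every possible deviation of the signature part, this equation splits into $\sum_j a_j = T$ and $\sum_i m_i\, 2^{i-1} = 2^k - 1$. Third, combined with $\sum_i m_i = k$, the latter identity pins down $m_i = 1$ for all $i$, so the picks correspond exactly to a valid $k$-SUM solution picking one value from each set; the converse direction is immediate by construction.

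The main obstacle is the final combinatorial lemma: any $m_1, \ldots, m_k \in \mathbb{Z}_{\geq 0}$ with $\sum_i m_i = k$ and $\sum_i m_i\, 2^{i-1} = 2^k - 1$ must satisfy $m_i = 1$ for every $i$. To prove this, observe that if $m_j \geq 2$ for some $j \leq k-1$, then replacing $(m_j, m_{j+1})$ by $(m_j - 2,\, m_{j+1} + 1)$ preserves the weighted sum while strictly reducing $\sum_i m_i$; and the case $m_k \geq 2$ is impossible directly from $m_k\cdot 2^{k-1} \leq 2^k - 1 < 2\cdot 2^{k-1}$. Hence every $m_i \in \{0, 1\}$, and the equation $\sum_i m_i\, 2^{i-1} = 2^k - 1 = \sum_{i=1}^k 2^{i-1}$ then forces the unique binary representation $m_i = 1$. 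Together with the construction and case analyses above, this yields the reduction; all arithmetic operates on integers of bit-length $O(k + \log D)$, so the whole construction runs in time linear in the input size for constant $k$, matching the claim.
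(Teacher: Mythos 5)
Your reduction is correct, and it follows the same overall strategy as the paper's: pack a per-set signature and a cardinality-forcing gadget into each weight alongside the original value, then argue that any Subset Sum solution must take exactly one element per set. The differences are in the mechanics. The paper works directly with bit-blocks prepended to the binary representations (value bits, a $\lceil\log k\rceil$ buffer, a $k$-bit checklist, another buffer, and a $\lceil\log k\rceil$-bit \emph{counter} field at the top), and enforces $|S|=k$ by requiring the counter field of the sum to equal $k$. You instead enforce $|S|=k$ with a large additive offset $K$ and a two-sided sandwich argument ($k+1$ picks overshoot $T'$, $k-1$ picks undershoot it), which avoids the counter field and its overflow bookkeeping entirely; your mod-$\alpha$ splitting of the equation then cleanly separates the value part from the signature part. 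Notably, you also prove the one step the paper only asserts: that $\sum_i m_i 2^{i-1}=2^k-1$ together with $\sum_i m_i=k$ forces $m_i=1$ for all $i$ (the checklist positions can in principle carry into one another, so this does require the exchange argument you give). The only things worth making explicit are the standard normalizations that both write-ups lean on implicitly: one may assume $T\le kD$ (otherwise the $k$-SUM instance is trivially NO), which is what makes ``$K$ dominates $\alpha T$'' true, and $D\le T$ after discarding values exceeding the target, which is what makes the bound $T'\le Tg(k)$ (rather than $Dg(k)$) legitimate.
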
 

\begin{proof}
  The values of the \textsc{Subset Sum} instance are the union of the $k$ sets. However, we modify the weights and target as follows. At the front of the binary representation of the weights and the target, we add a buffer of $\lceil \log(k) \rceil$ zero bits to avoid overflow, then another $k$ bits constituting a ``checklist'', then in front of that another buffer of $\lceil \log(k) \rceil$ zero bits and finally another $\lceil\log(k)\rceil$ bits which contains a counter for the number of nodes of the solution.

  The target $T$ has the binary representation of $k$ in the counter bits and only ones in the checklist bits. Each weight has the binary representation of $1$ in the counter bits, ensuring that we take exactly $k$ weights. Furthermore, if the weight comes from the $i$-th set of the \textsc{$k$-sum} instances, its checklist bits is zero except for the $i$-th position.

  Now if one picks more than $2^{\lceil \log(k) \rceil}$ values, the counter at the front overflows the length of the target, so that selection cannot be a solution. Since $2^{\lceil \log(k) \rceil} < 2k$ and since there is a buffer of $\lceil \log(k) \rceil$ bits, the checklist cannot overflow into the counter. Hence any solution must pick exactly $k$ weights. Hence, the only way to achieve all ones in the checklist bits of a sum of $k$ weights is to pick exactly one weight from each of the $k$ sets. This completes the reduction.

  Since we add $3 \lceil \log(k) \rceil + k$ bits to the weights and the target, their value is multiplied by at most $2^k\cdot 2^3\cdot k^3$. Choosing $g(k) = 2^{k+3}k^3$, we obtain the bounds from the lemma.
\end{proof}

\subsection{Modifying the Weighted Lemma to Prove the Lower Bound}\label{apx:subset-sum-via-weighted-lemma}

To prove the theorem above, we modify the~\hyperref[lem:main-lemma]{Weighted Lemma} by adding a special case of parameters: All parameters are as before, but \(\beta = 1\) and \(r_2=0\). In this case, the preimages in the resulting instance \(\mathcal{I}'\) instead have size \(n^{k/r_1}\), and the maximum weight is \(W = \Theta(n^{(1+\varepsilon)k})\).

To prove this special case of the~\hyperref[lem:main-lemma]{Weighted Lemma}, we can almost use the reduction from its proof. We must simply purge all parts of the construction that relate to the edges part. That is, no construction of \(V_2\) in step 2\todo{this will be invalid after move to colored hyperclique}, no construction of \(S'_2\) in steps 3 and no construction of \(S_2\) in step 5. The reduction then yields an instance \(\mathcal{I}'\) as described above.

We now prove the theorem about \textsc{Subset Sum}. Essentially, the instances we get from the special case of the~\hyperref[lem:main-lemma]{Weighted Lemma} are \(k\)-\textsc{sum} instances that we can then reduce to \textsc{Subset Sum} via the lemma in Appendix~\ref{apx:k-sum-to-subset-sum} above.

\begin{proof}[Proof (of Theorem~\ref{thm-hyperclique-to-subsetsum})]
  Let \(h\) be given. Suppose there is an algorithm solving \textsc{Subset Sum} in time \(O(T^{1-\varepsilon}N^z)\) for some \(z\in\mathbb{N}\).
  We use the special case of the~\hyperref[lem:main-lemma]{Weighted Lemma} with
  \begin{itemize}
  \item some \(\varepsilon'\) chosen later,
  \item \(h' := h\), and
  \item some arbitrary \(r_1\in \mathbb{N}\) such that
    \begin{itemize}
    \item \(r_1 > \frac{\hat{c}}{h}\) (this ensures, again, that the
    running time \(O(n^{2h-1} + n^{\hat{c} k/(hr_1)})\) of the
    reduction is equal to \(O(n^{2h-1} + n^{k-\varepsilon})\) for some
    \(\varepsilon>0\), and can hence be ignored in the analysis), and
    \item \(r_1 > z\varepsilon'\) (this ensures that \(n^{zk/r_1} < n^{\varepsilon'k}\)).
  \end{itemize}
\end{itemize}

  This yields a \(k\in\mathbb{N}\) and a reduction algorithm \(\mathcal{A}\) with the properties from the description of the special case above. In particular, the reduction algorithm produces an \textsc{Exact Weight Colored Subgraph Isomorphism} where \(H\) consists only of isolated vertices with preimages of size \(O(n^{k/r_1})\) and maximum absolute weight \(W = \Theta(n^{(1+\varepsilon)k})\). We make all weights positive by adding a large number to each, such that the target is \(T = \Theta(n^{(1+\varepsilon)k})\). Note that this instance is also a \((r_1 + {hr_1 \choose h})\)-\textsc{sum} instance, with each set of numbers being the set of weights in a preimage.

  We now use the algorithm \(\mathcal{B}\) from Lemma~\ref{lemma-ksum-to-subsetsum} to convert this to a Subset Sum instance with \(N = O(n^{k/r_1})\) values in \([0,\Theta(n^{(1+\varepsilon')k})]\) and target \(T = \Theta(n^{(1+\varepsilon')k})\).
  
  We can now solve the instance in time \(O(n^{(1-\varepsilon)(1+\varepsilon')k}n^{zk/r_1}) = O(n^{((1-\varepsilon)(1+\varepsilon')+\varepsilon')k})\). Hence it suffices to choose \(\varepsilon'\) such that
  \begin{align*}
    (1-\varepsilon)(1+\varepsilon')+\varepsilon' < 1 \ \ \iff\ \ \varepsilon' < \frac{\varepsilon}{2-\varepsilon}
  \end{align*}
  Since \(\varepsilon \in (0,1)\) and hence \(\frac{\varepsilon}{2-\varepsilon} \in (0,1)\), this is always possible.

\end{proof}

\section{Hyperclique and Boolean \(k\)-Wise Matrix Product are Equivalent}\label{apx:hyperclique-equiv-boolean}

In this section, we show that \(k\)-uniform \((k+1)\)-\textsc{Hyperclique} has a fast algorithm if and only if the \textsc{\(k\)-Wise Matrix Product} problem has a fast algorithm. More formally:

\begin{theorem}
  The \(k\)-uniform \((k+1)\)-\textsc{Hyperclique} problem has an algorithm running in time \(O(n^{k+1-\varepsilon})\) (for some \(\varepsilon > 0\)) if and only if the \textsc{\(k\)-wise Matrix Product} problem has an algorithm running in time \(O(n^{k+1-\varepsilon'})\) (for some \(\varepsilon' > 0\)).
\end{theorem}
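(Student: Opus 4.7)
The plan is to prove the equivalence by establishing both directions. The forward implication, that a fast $(k+1)$-\textsc{Hyperclique} algorithm yields a fast \textsc{$k$-wise Matrix Product} algorithm, is already the content of Lemma~\ref{lem:hyperclique-to-k-wise-mp}, so I focus on the converse: showing that any $O(n^{k+1-\varepsilon'})$ algorithm for \textsc{Boolean $k$-wise Matrix Product} can be used to solve $k$-uniform $(k+1)$-\textsc{Hyperclique} in time $O(n^{k+1-\varepsilon})$ for some $\varepsilon>0$.

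As a preliminary step I reduce to the colored/partite case via standard color-coding: randomly partition $V(G)$ into $V_0, V_1, \ldots, V_k$; any fixed $(k+1)$-hyperclique becomes colorful with probability $(k+1)!/(k+1)^{k+1} = \Omega(1)$, so a constant number of repetitions suffices, and derandomization via $(n,k+1)$-splitters adds only a polylogarithmic factor. It then suffices to detect a colorful $(k+1)$-hyperclique in a $(k+1)$-partite $k$-uniform hypergraph.

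For the main encoding, I think of the vertex from $V_0$ in a would-be colorful hyperclique as the summation index $\ell$, and the vertices from $V_1, \ldots, V_k$ as output indices $i_1, \ldots, i_k$. A $(k+1)$-hyperclique requires all $\binom{k+1}{k} = k+1$ size-$k$ subsets of its vertex set to be hyperedges. For each $j \in [k]$, I build a 0/1 tensor $A^j$ of order $k$ whose $j$-th coordinate ranges over $V_0$ (taking value $\ell$) and whose remaining coordinates range over $V_m$ for $m \neq j$ (taking value $i_m$), defined by $A^j[\cdots] = 1$ iff the $k$-set $\{\ell\} \cup \{i_m : m \in [k], m \neq j\}$ is a hyperedge of $G$. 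Then by definition, $\MP_k(A^1, \ldots, A^k)[i_1, \ldots, i_k] = 1$ precisely when there is some $\ell \in V_0$ for which every one of the $k$ hyperedges missing one of $v_1, \ldots, v_k$ is present. The only remaining hyperedge is $\{v_1, \ldots, v_k\}$ (the one missing $v_0$), which depends solely on the output coordinates; I record this in a separate 0/1 tensor $B[i_1, \ldots, i_k]$, and the answer is YES iff some entry satisfies $\MP_k(A^1, \ldots, A^k)[i_1, \ldots, i_k] \land B[i_1, \ldots, i_k] = 1$.

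For the running time, constructing the tensors $A^1, \ldots, A^k$ and $B$ takes $O(n^k)$ time using a hash table over the $O(n^k)$ input hyperedges, the $k$-wise matrix product takes $O(n^{k+1-\varepsilon'})$ by assumption, and the final entrywise AND followed by an OR over $n^k$ entries takes $O(n^k)$. Combined with the constant-many (or polylogarithmically many, after derandomization) splitter iterations, the total time is $\widetilde{O}(n^{k+1-\varepsilon'})$, which is $O(n^{k+1-\varepsilon})$ for any $\varepsilon < \varepsilon'$. The only conceptually subtle point is handling the ``extra'' hyperedge missing $v_0$ outside of the matrix product; but since this hyperedge depends only on the output indices, it is a trivial post-processing step rather than a genuine obstacle. (Alternatively, one could fold $B$ into one of the $A^j$ tensors, but keeping them separate makes the reduction cleaner.)
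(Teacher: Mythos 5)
Your proposal is correct and follows essentially the same route as the paper: the direction you prove in detail (a fast Boolean $k$-wise matrix product algorithm implies a fast $k$-uniform $(k+1)$-\textsc{Hyperclique} algorithm) is exactly the paper's Lemma~\ref{hyperclique-to-boolean-mp} — same tensors $A^j$ with the $V_0$-vertex in the $j$-th (summation) slot, same separate check of the one hyperedge missing $v_0$ via an entrywise AND — and the other direction is legitimately delegated to Lemma~\ref{lem:hyperclique-to-k-wise-mp}, just as in the paper. The only cosmetic difference is that you pass to the $(k+1)$-partite case via color coding and splitters, whereas the paper uses the deterministic copy-the-vertex-set reduction; both are fine.
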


In the proofs of both directions, we will reduce from and to the \textsc{Colored Hyperclique} problem instead of the \textsc{Hyperclique} problem. This is because these problems are equivalent: In the one direction, we can use the reduction in~\hyperref[step1-unweighted-lemma]{step 1} in the proof of the~\hyperref[lem:unweighted-lemma]{Unweighted Lemma} (Section~\ref{sec:unweighted-uncol-subiso}). In the other direction, a trivial argument shows that simply throwing away the color homomorphism is enough.

Reducing from and to the \textsc{Colored Hyperclique} problem simplifies the proofs of both directions. Both are a straightforward generalization of the known proof that the standard boolean matrix product and the \textsc{triangle} problem\footnote{Given a 3-colored graph, check if it contains a triangle. Note that this is the colored \(k\)-uniform \((k+1)\)-\textsc{Hyperclique} problem for \(k=2\).} are subcubically equivalent\footnote{Meaning one problem has a \(O(n^{3-\varepsilon})\) algorithm if and only if the other has one.} with respect to combinatorial algorithms. This was proven in~\cite{williams2010subcubic} for the more general result of subcubic equivalence of the \((\min,+)\) matrix product\footnote{This is the standard matrix product, but addition is replaced by \(\min\) and multiplication is replaced by addition.} and the \textsc{Negative Triangle} problem\footnote{Given a 3-colored graph with edge weights, check if it contains a triangle of negative total weight.} and is a fundamental result in Fine-Grained Complexity Theory.

\subsection{From Hyperclique to Boolean \(k\)-Wise Matrix Product}\label{apx:hyperclique-to-boolean}

\begin{lemma}\label{hyperclique-to-boolean-mp}
  If the \textsc{\(k\)-wise Matrix Product} problem has an algorithm running in time \(O(n^{k+1-\varepsilon})\) (for some \(\varepsilon > 0\)), then the \(k\)-uniform \((k+1)\)-\textsc{Hyperclique} problem has an algorithm running in time \(O(n^{k+1-\varepsilon'})\) (for some \(\varepsilon' > 0\)).
\end{lemma}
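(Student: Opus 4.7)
The plan is to give a direct reduction from $k$-uniform $(k+1)$-\textsc{Hyperclique} to the \textsc{Boolean $k$-Wise Matrix Product} problem, generalizing the classical reduction from triangle detection to boolean matrix multiplication. As noted above, it suffices to reduce from the colored variant, since \textsc{Hyperclique} and \textsc{Colored Hyperclique} are equivalent up to a factor of $k+1$ in the number of vertices (by duplicating vertices across color classes), which is constant in our setting.

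So let $G_1$ be an instance of $k$-uniform colored $(k+1)$-\textsc{Hyperclique} with color classes $V_1, \ldots, V_{k+1}$ of size $n$ each. A colored hyperclique is a tuple $(v_1, \ldots, v_{k+1}) \in V_1 \times \cdots \times V_{k+1}$ such that every one of the $k+1$ $k$-subsets is a hyperedge. My idea is to let the sliding index $\ell$ of the $k$-wise matrix product play the role of $v_{k+1}$, and encode the $k$ hyperedge constraints that involve $v_{k+1}$ into the $k$ input tensors, one per constraint. Concretely, for each $j \in [k]$, I would define an order-$k$ tensor $A^j$ of dimensions $n \times \cdots \times n$ by
\[
  A^j[i_1, \ldots, i_{j-1}, \ell, i_{j+1}, \ldots, i_k] := \bigl[\{i_1, \ldots, i_{j-1}, i_{j+1}, \ldots, i_k, \ell\} \in E(G_1)\bigr],
\]
where $i_m \in V_m$ for $m \in [k] \setminus \{j\}$ and $\ell \in V_{k+1}$. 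Note that the $j$-th coordinate of $A^j$ does carry $\ell$ (a vertex of $V_{k+1}$), while all other coordinates carry vertices of $V_m$ from their respective classes, so $A^j$ exactly encodes the hyperedge constraint ``$\{v_m : m \neq j\} \cup \{v_{k+1}\} \in E(G_1)$.'' By the definition of the $k$-wise matrix product,
\[
  \MP_k(A^1, \ldots, A^k)[i_1, \ldots, i_k] = 1
\]
if and only if there exists $\ell \in V_{k+1}$ such that for every $j \in [k]$, the $k$-subset $\{i_1, \ldots, i_{j-1}, i_{j+1}, \ldots, i_k, \ell\}$ is a hyperedge. In other words, the output tells us for each $(v_1, \ldots, v_k) \in V_1 \times \cdots \times V_k$ whether all $k$ hyperedges involving $v_{k+1}$ can be simultaneously satisfied for some common $v_{k+1}$.

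After computing this product using the assumed $O(n^{k+1-\varepsilon})$ algorithm, the only constraint left to check is the remaining $k$-subset $\{v_1, \ldots, v_k\}$ (which does not involve $v_{k+1}$). I would therefore iterate over all $(v_1, \ldots, v_k) \in V_1 \times \cdots \times V_k$ and return YES iff there exists such a tuple with $\MP_k(A^1,\ldots,A^k)[v_1, \ldots, v_k] = 1$ and $\{v_1, \ldots, v_k\} \in E(G_1)$. Correctness is immediate in both directions: any $(k+1)$-hyperclique $(v_1, \ldots, v_{k+1})$ certifies both conditions with $\ell = v_{k+1}$, and conversely any $(v_1, \ldots, v_k)$ passing the check, together with its witness $\ell$, yields a $(k+1)$-hyperclique. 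The post-processing takes $O(n^k)$ time, and constructing the tensors takes $O(n^k)$ time, both dominated by the matrix product. The total running time is thus $O(n^{k+1-\varepsilon}) + O(n^k) = O(n^{k+1-\varepsilon'})$ for any $\varepsilon' < \varepsilon$.

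Since every step is essentially syntactic once the encoding is chosen, there is no real technical obstacle; the main conceptual point is simply to verify that the coordinate alignment between the $A^j$'s and the $k$-wise matrix product formula does indeed match up ``missing $v_j$, sliding $v_{k+1}$'' with ``missing $i_j$, sliding $\ell$'' for each $j$, which I would make explicit with a short bookkeeping argument.
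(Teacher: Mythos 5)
Your reduction is correct and is essentially the same as the paper's proof: both encode, for each $j\in[k]$, the hyperedge constraint omitting $v_j$ into the tensor $A^j$ with the sliding index playing the role of $v_{k+1}$, compute one boolean $k$-wise matrix product, and then check the remaining constraint $\{v_1,\ldots,v_k\}\in E(G_1)$ in $O(n^k)$ post-processing. The only difference is notational (explicit indices versus the paper's configuration-indexed tensors).
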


\begin{proof}
  First, recall that in the proof of theorem~\ref{upper-bound-detection} we defined how to index tensors via configurations. We re-use this definition here.

  Given a hyperclique instance \((G,f)\) where \(G\) is a \(k\)-uniform hypergraph and \(f: V(G) \to V(C_{k+1})\) is a color homomorphism to the \(k\)-uniform \((k+1)\)-hyperclique \(C_{k+1}\), we solve it via a single application of the \(k\)-wise matrix product. To do this, we create \(k\) tensors \(A^1, \ldots, A^k\) of order \(k\) with dimensions \(n\times \ldots \times n\). We initialize them with zeroes as entries. The tensor \(A^i\) will be indexed by configurations of \(V(C_{k+1}) \setminus \{i\}\) with ordering \((1, \ldots, i-1, k+1, i+1, \ldots, k)\). For each \(i\) and each configurations \(R\) of \(V(C_{k+1}) \setminus \{i\}\), we set \(A^i[R]\) to one if and only if \(R(1)\ldots R(i-1)R(i+1)\ldots R(k+1) \in E(G)\). We use truth values and 0/1 interchangeably to declutter notation.

  Now let \(A(R) := \MP_k(A^1, \ldots, A^k)\). Then for a configuration \(R\) of \(V(C_{k+1}) \setminus \{k+1\}\), we have that
  \begin{align*}
    A[R] &= \bigvee_{\ell \in [n]} A^1[\ell, R(2), \ldots, R(k)]\land \ldots \land A^k[R(1), \ldots, R(k), \ell]\\
                      &= \exists v' \in f^{-1}(k+1): \forall i \in [k]: A^i[R(1), \ldots, R(i-1), v', R(i+1), \ldots, R(k)]\\
                      &= \exists v' \in f^{-1}(k+1): \forall i \in [k]: R(1)\ldots R(i-1)R(i+1)\ldots R(v)v' \in E(G) \tag{\dag}
  \end{align*}
  
But now note that there is a hyperclique in \(G\) if and only if there exists a configuration \(R\) of \(V(C_{k+1})\setminus \{k+1\}\) such that \((\dag)\) is true and \(R(1)\ldots R(k) \in E(G)\). Hence we simply calculate \(A[R]\) via the fast \(k\)-wise matrix product algorithm, then check if such a configuration exists by iterating through every possible configuration. This takes time \(O(n^{k+1-\varepsilon} + n^k) = O(n^{k+1-\varepsilon})\).
\end{proof}

\subsection{From Boolean \(k\)-Wise Matrix Product to Hyperclique}\label{apx:boolean-to-hyperclique}

\begin{proposition}\label{prop:hyperclique-to-finding-hyperclique}
  Let \(T(n) = O(n^c)\) for some \(c\geq1\). If there is a \(T(n)\) time algorithm for the \(k\)-uniform \((k+1)\)-\textsc{Hyperclique} problem running in time \(T(n)\), then there is an \(O(T(n))\) time algorithm solving the \(k\)-uniform \((k+1)\)-\textsc{Hyperclique} problem which also outputs the vertices of the \((k+1)\)-hyperclique (if one exists).
\end{proposition}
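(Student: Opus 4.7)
The plan is to give a search-to-decision reduction using a partition-and-prune strategy that shrinks the vertex set geometrically, at the cost of only a constant-factor overhead over $T(n)$. Throughout, let $\mathcal{A}$ denote the decision algorithm of running time $T(n)$ and note that $k$ is a fixed constant.

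First I would run $\mathcal{A}$ once on $G$; if the answer is ``no'', output no. Otherwise, I would repeat the following loop until only $k+1$ vertices remain. Partition the current vertex set $V$ into $k+2$ groups $V_1, \ldots, V_{k+2}$ of sizes $\lceil |V|/(k+2)\rceil$ or $\lfloor |V|/(k+2)\rfloor$. For each $i \in [k+2]$, call $\mathcal{A}$ on the induced subhypergraph $G[V \setminus V_i]$. By the pigeonhole principle, since a solution hyperclique consists of only $k+1$ vertices but there are $k+2$ groups, at least one group $V_i$ contains no vertex of some fixed $(k+1)$-hyperclique, so $G[V \setminus V_i]$ still has a hyperclique and $\mathcal{A}$ returns ``yes'' for that $i$. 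Pick the first such $i$ and replace $V$ by $V \setminus V_i$. Once $|V| = k+1$, output $V$; this is a $(k+1)$-hyperclique, because the invariant that $G[V]$ still contains some hyperclique is maintained throughout.

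The key step is the running-time analysis. If $n_j$ denotes the number of vertices after $j$ iterations of the outer loop, then $n_{j+1} \leq n_j \cdot \tfrac{k+1}{k+2}$, so the iteration terminates after $O(\log n)$ rounds. Each iteration makes $k+2 = O(1)$ calls to $\mathcal{A}$, each on a graph of size $n_j$, and the restriction to an induced subhypergraph can be done in time linear in the input size, which is subsumed by $T(n_j) = O(n_j^c)$ (since $c \geq 1$ and the input is at most $n^k$ hyperedges, and any non-trivial $T$ dominates the construction cost). The total time is therefore
\begin{equation*}
\sum_{j=0}^{O(\log n)} O(T(n_j)) \;=\; \sum_{j=0}^{\infty} O\!\left(n^c \left(\tfrac{k+1}{k+2}\right)^{jc}\right) \;=\; O(n^c) \cdot \frac{1}{1 - \left(\tfrac{k+1}{k+2}\right)^c},
\end{equation*}
which is $O(n^c) = O(T(n))$ since $k$ and $c$ are constants and $\left(\tfrac{k+1}{k+2}\right)^c < 1$.

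The main obstacle is conceptual rather than technical: recognizing that we only need $k+2$ groups (not $k+1$) so that pigeonhole forces at least one group to miss the hyperclique entirely, and that this forces geometric rather than linear shrinkage. Everything else -- correctness via invariant maintenance, and the geometric-series bound on the total work -- is then routine.
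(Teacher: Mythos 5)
Your proof is correct, but it takes a genuinely different route from the paper's. The paper exploits the $(k+1)$-partite (colored) structure of the instance as it arises in the application: it halves each of the $k+1$ color classes, runs the decision algorithm on all $2^{k+1}$ tuples of halves, and recurses into one tuple that still contains a hyperclique, giving the recurrence $S(n) = 2^{k+1}T(n/2) + S(n/2)$. Your argument instead works directly on the uncolored vertex set: partition $V$ into $k+2$ groups so that pigeonhole guarantees some group is disjoint from a witness hyperclique, delete that group, and shrink by a factor of $\tfrac{k+1}{k+2}$ per round. Both reductions make a constant number of decision calls per level on geometrically shrinking instances, so both yield $O(T(n))$ for $T(n)=O(n^c)$, $c\ge 1$. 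Your version is arguably more elementary (only $k+2$ calls per round rather than $2^{k+1}$, and no reliance on a partite structure), and it does adapt to the colored setting where the paper actually invokes the proposition, since the group that misses the hyperclique cannot swallow an entire color class of the witness. One shared caveat: neither you nor the paper accounts carefully for the cost of materializing the subinstances (an induced subhypergraph on $n_j$ vertices can have $\Theta(n_j^k)$ hyperedges, which is not literally subsumed by $O(n_j^c)$ when $c<k$); this is harmless in the paper's application, where $c = k+1-\varepsilon$, but your parenthetical justification for it is slightly too glib.
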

\begin{proof}
  The proof is a straightforward binary-search-like algorithm. Let \(C_{k+1}\) be the \(k\)-uniform \((k+1)\)-hyperclique and define \(V(C_{k+1}) = \{1, \ldots, k+1\}\). For each \(i \in [k+1]\), we split the preimage \(f^{-1}(i)\) into two equal parts of size \(n/2\). Now for each of the \(2^{k+1}\) possible \((k+1)\)-tuples of halves from different preimages, we check if there exists a \((k+1)\)-hyperclique between these halves in time \(T(n/2)\). If no tuple has a hyperclique, we return no. If at least one tuple has a hyperclique, we recurse on one of them, with the new preimages being the halves selected in the tuple. The base case is reached if there is a single vertex in each preimage, which we then return.

  The running time of this recursive algorithm is \(S(n) = 2^{k+1}T(n/2) + S(n/2) = 2^{k+1}(T(n/2) + T(n/4) + T(n/8) + \ldots)\). Since \(T(n) = O(n^c)\) for \(c\geq 1\), we have \(S(n) = O(T(n))\).
\end{proof}

Hence if we can check if a hyperclique exists, we can also find it in essentially the same time. We now use this to prove the second direction, which we already stated in lemma~\ref{lem:hyperclique-to-k-wise-mp}. Recall that the statement of lemma~\ref{lem:hyperclique-to-k-wise-mp} is the following:

\begin{center}
  \emph{
    If the \(k\)-uniform \((k+1)\)-\textsc{Hyperclique} problem has an algorithm running in time \\ \(O(n^{k+1-\varepsilon})\) (for some \(\varepsilon > 0\)), then the \textsc{\(k\)-wise Matrix Product} problem \\ has an algorithm running in time \(O(n^{k+1-\varepsilon'})\) (for some \(\varepsilon' > 0\)).}
\end{center}


\begin{proof}
  In this proof, we again use the definition of indexing tensors via configurations from the proof of theorem~\ref{upper-bound-detection} (also used above).

  
  Let the input tensors \(A^1, \ldots, A^k\), each of order \(k\) and dimensions \(n\times \ldots \times n\) be given. Now consider the \(k\)-uniform \((k+1)\)-hyperclique \(C_{k+1}\) and define \(V(C_{k+1}) = \{1, \ldots, k+1\}\). We let \(G\) be the graph with \(n\) vertices per preimage of \(C_{k+1}\), with no hyperedges (yet). In the following, we construct the edges of \(G\) and the color homomorphism \(f: V(G) \to V(C_{k+1})\).

  For each \(i \in [k+1]\), we let the tensor \(A^i\) be indexed by configurations of \(V(C_{k+1})\setminus \{i\}\) with ordering \((1, \ldots, i-1, k+1, i+1, \ldots, k)\). Now for each configuration \(R\) of \(V(C_{k+1})\setminus \{i\}\), we add the hyperedge \(R(1)\ldots R(i-1)R(i+1)\ldots R(k+1)\) to \(G\) if and only if \(A^i[R] = 1\). Hence we are encoding the \(i\)-th input tensor in the hyperedges between the preimages of \(V(C_{k+1}) \setminus \{i\}\).

  Finally, in \(V(C_{k+1})\setminus \{k+1\}\), we encode the all-ones tensor. That is, for each configuration \(R\) of \(V(C_{k+1})\setminus \{k+1\}\), we add the hyperedge \(R(1)\ldots R(k)\) to \(G\).

  We now show how to take advantage of the fast algorithm for \(k\)-uniform \((k+1)\)-\textsc{Hyperclique} to calculate the \(k\)-wise matrix product by using it repeatedly on subinstances of this new hyperclique instance. Let \(A_{\text{res}}\) be the output tensor, initialized to all-zeroes. Now take \(g \in \mathbb{N}\) (chosen later). For each \(i \in [k+1]\), we split the preimage \(f^{-1}(i)\) into \(g\) parts (i.e.\ subsets of vertices) \((f^{-1}(i))_1, \ldots, (f^{-1}(i))_g\), each of size \(n/g\) up to rounding.

  For any \((k+1)\)-tuple \((i(1), \ldots, i(k+1)) \in [g]^{k+1}\), we do the following: \((\star)\) While there is a hyperclique between the parts \((f^{-1}(1))_{i(1)}, \ldots, (f^{-1}(k+1))_{i(k+1)}\) (which can be checked via the fast algorithm in time \(O((n/g)^{k+1-\varepsilon})\)), we find this hyperclique in time \(O((n/g)^{k+1-\varepsilon})\) via the algorithm from Proposition~\ref{prop:hyperclique-to-finding-hyperclique}. Let the vertices of the found hyperclique be \(j(1), \ldots, j(k+1)\). Then we set \(A_{\text{res}}[j(1), \ldots, j(k)]\) to one and delete the edge \(j(1)\ldots j(k)\) from \(G\).

  After having gone through all \((k+1)\)-tuples, we output \(A_{\text{res}}\).

  We argue correctness. We use truth values and 0/1 interchangeably. Let the correct output be \(A_{\text{cor}} := \MP_k(A^1, \ldots, A^k)\). We want to prove \(A_{\text{cor}} = A_{\text{res}}\). Let \(A_{\text{cor}}\) and \(A_{\text{res}}\) be indexed by configurations of \(V(C_{k+1})\setminus \{k+1\}\) with ordering \((1, \ldots, k)\). Then for a configuration \(R\) of \(V(C_{k+1})\setminus \{k+1\}\), we have that \(A_{\text{cor}}[R]\) also satisfies the equation \((\dag)\) from the proof of lemma~\ref{hyperclique-to-boolean-mp}.

  Now suppose \(A_{\text{res}}[R] = 1\). Then certainly \((\dag)\) is true, so \(A_{\text{cor}}[R] = 1\). Conversely, suppose that \(A_{\text{cor}}[R] = 1\) and hence that \((\dag)\) is true. Then when we will find a hyperclique for the tuple \((i(1), \ldots, i(k))\) such that \(\forall j: R(j) \in (f^{-1}(i))_{i(j)}\). When we find it, the hyperedge \(R(1)\ldots R(k)\) either still exists (in which case we set \(A_{\text{cor}}[R]\) to one) or it does not, in which case it must have been deleted in a previous iteration where we must have then already set \(A_{\text{cor}}[R]\) to one.

  Let us analyze the running time. For each successful check in the while loop \((\star)\), we delete an edge in \(E(G)\). Hence there are at most \(O(n^{k})\) successful checks. In each successful check, we take time \(O((n/g)^{k+1-\varepsilon})\) for the check itself and another \(O((n/g)^{k+1-\varepsilon})\) for the execution of the while loop content. As for unsuccessfull while loop checks, we have at most one per \((k+1)\) tuple of parts, so \(g^{k+1}\). Each of these also takes \(O((n/g)^{k+1-\varepsilon})\).

  Hence the overall running time is \(O(n^k(n/g)^{k+1-\varepsilon} + g^{k+1}(n/g)^{k+1-\varepsilon})\). Choosing \(g^{k+1} = n^k\) and hence \(g = n^{k/(k+1)}\), we have a running time of \(O(n^{k}n^{(1-k/(k+1))(k+1-\varepsilon)})\), which works out to \(O(n^{k+1-\varepsilon/(k+1)})\).
\end{proof}

\end{document}